\documentclass[11pt]{article}
\usepackage[T1]{fontenc}
\usepackage[utf8]{inputenc}
\usepackage[a4paper,margin=1in]{geometry}
\usepackage{array}
\usepackage{float}
\usepackage{graphicx}
\usepackage{placeins}
\usepackage{booktabs}
\usepackage{tabularx}
\usepackage{mathtools}
\usepackage{amsthm}
\usepackage{amssymb}
\usepackage[authoryear]{natbib}
\usepackage{algorithm}
\usepackage{algorithmic}
\usepackage{authblk}
\usepackage[inline]{enumitem}
\usepackage{xspace}
\usepackage{setspace}
\usepackage{etoolbox}
\usepackage[unicode=true,bookmarks=true,breaklinks=true,colorlinks=false]{hyperref}
\hypersetup{
  pdftitle={Scalable Pontryagin-Guided Adjoint-to-Control Recovery for Constrained Dynamic Portfolio Choice},
  pdfauthor={Jaegi Jeon, Jeonggyu Huh, Hyeng Keun Koo, and Byung Hwa Lim},
  pdfborder={0 0 1},
  linkbordercolor={0 0 1},
  citebordercolor={0 0.55 0},
  urlbordercolor={0 0 1}
}
\allowdisplaybreaks[2]
\newtheorem{theorem}{Theorem}
\newtheorem{proposition}{Proposition}

\newtheorem{assumption}{Assumption}

\newtheorem{remark}{Remark}
\newcommand{\sg}{\operatorname{sg}}
\newcommand{\DOL}{\mathsf D^{\mathrm{ol}}}
\newcommand{\OLBPTT}{OL-BPTT\xspace}

\newtheorem{apptheorem}{Theorem}[section]
\newtheorem{appproposition}[apptheorem]{Proposition}
\newtheorem{applemma}[apptheorem]{Lemma}
\newtheorem{appassumption}[apptheorem]{Assumption}
\newtheorem{appcorollary}[apptheorem]{Corollary}
\newtheorem{appremark}[apptheorem]{Remark}
\AtBeginEnvironment{table}{\singlespacing}
\AtBeginEnvironment{figure}{\singlespacing}
\AtBeginEnvironment{algorithm}{\singlespacing}

\title{Scalable Pontryagin-Guided Adjoint-to-Control Recovery for Constrained Dynamic Portfolio Choice}
\date{August 29, 2026}
\author[1]{Jaegi Jeon}
\author[2]{Jeonggyu Huh\thanks{Corresponding author: \texttt{jghuh@skku.edu}}}
\author[3]{Hyeng Keun Koo}
\author[4]{Byung Hwa Lim}
\affil[1]{\small Graduate School of Data Science, Chonnam National University, Gwangju, Republic of Korea}
\affil[2]{\small Department of Mathematics, Sungkyunkwan University, Suwon, Republic of Korea}
\affil[3]{\small Department of Financial Engineering, Ajou University, Suwon, Republic of Korea}
\affil[4]{\small Department of Fintech, SKK Business School, Sungkyunkwan University, Seoul, Republic of Korea}

\begin{document}
\setstretch{1.15}
\maketitle

\begin{abstract}
We study continuous-time multi-asset portfolio choice and consumption under
smooth pointwise constraints, including state-dependent feasible sets.  The
method separates dynamic information acquisition from local constrained
recovery.  A pointwise-feasible neural actor generates reference rollouts;
after training, its realized latent outputs are frozen and first- and
second-order adjoints are harvested from a fixed-latent open-loop
backpropagation-through-time graph.  Feedback therefore generates the
reference trajectory without restricting the adjoint formulation to Markov
controls.  Conditional on the harvested adjoints, deployment solves a local
generalized Pontryagin-Hamiltonian problem: quadratic-affine portfolio blocks
are recovered exactly by a quadratic program, while a log barrier approximates
more general regular KKT branches.  We establish local chart representations,
an OL-BPTT-to-adjoint correspondence retaining orthogonal martingale
residuals, and an end-to-end bound from reference value loss and numerical
errors to recovered-policy and local QP-gap errors.  Analytical
constant- and predictable-opportunity benchmarks validate the adjoints.
Common-input experiments show that recovery reduces residual PMP/KKT error
left by finite-budget direct policy optimization, including under a
state-dependent consumption cap and with up to 100 risky assets.  Scalability
concerns the constrained action block rather than dimension-free state-space
complexity.
\end{abstract}

{\small
{\em Keywords:} Constrained continuous-time portfolio optimization; Pontryagin's Maximum Principle (PMP); Adjoint-to-control recovery; Karush--Kuhn--Tucker (KKT) conditions; Open-loop backpropagation through time (OL-BPTT); Large-scale portfolio control
}

\section{Introduction}\label{sec:intro}

Dynamic portfolio choice has been central to financial economics since
\citet{samuelson1975lifetime} and
\citet{merton1969lifetime,merton1971optimum}; stochastic investment
opportunities were incorporated by \citet{merton1973intertemporal} and later
models such as \citet{kim1996dynamic,liu2007portfolio}.  Classical
frictionless specifications can admit closed-form or semi-analytic solutions
through martingale methods \citep{karatzas1987optimal}, while convex duality
extends this approach to important portfolio constraints
\citep{XuShreve1992a,XuShreve1992b,cvitanic1993hedging,karatzas1998methods}.
Explicit dynamic constrained policies nevertheless remain available mainly
under special structures.  Practical portfolios face short-sale, borrowing,
leverage, sector, and consumption limits.  We study smooth pointwise
restrictions for which the feasible control fiber $\mathcal U(t,s)$ is known
once the current state is fixed, although it may move with the state and its
active set may change.

This structure separates the dynamic and local tasks.  The dynamic problem is
to acquire the adapted first- and second-order adjoints over the deployment
region; conditional on those inputs, the control comparison at fixed $(t,s)$
is finite dimensional.  The separation is useful when the observable state is
low or moderate dimensional but the constrained portfolio vector is large.
Our scalability claim concerns this action block, not dimension-free
state-space complexity
\citep{campbell1999consumption,campbell2001should,campbell2003multivariate,
balduzzi1999transaction,lynch2000predictability,lynch2001portfolio,
brandt2005simulation,buraschi2010correlation,garlappi2010solving,jurek2011optimal}.

Deep backward stochastic differential equation (BSDE), physics-informed
neural network (PINN), policy-learning, and recent value-based jump-control
methods provide complementary approaches to stochastic control
\citep{han2018solving,e2017deep,raissi2019physics,han2016deep,hu2023recent,
cheridito2025deep}.
Hard-constrained PINNs (hPINNs), improved constrained PINNs (ICPINNs), and
activation-based direct policy optimization (DPO) can enforce common
pointwise constraints by construction
\citep{lu2021physics,tan2025improved}.  Pointwise feasibility, however, need not imply
that a finite-budget learned action satisfies the local Pontryagin maximum
principle (PMP) and Karush--Kuhn--Tucker (KKT) conditions.  Related approaches
include the stochastic-maximum-principle method of \citet{ji2022solving}, Deep
Hedging \citep{buehler2019deep}, and the Deep Galerkin Method (DGM)
\citep{sirignano2018dgm}.  In deterministic discrete-time reinforcement
learning, \citet{eberhard2025pontryagin} likewise replace Bellman recursion by
Pontryagin-based optimization of fixed open-loop action sequences.  Our
question is whether adjoints harvested from a pointwise-feasible learned policy
used to generate reference rollouts---the \emph{reference policy}---can remove
the remaining local optimization error of that frozen reference without
refitting a global policy.

Related computational motivation comes from the Pontryagin-guided direct
policy optimization (PG-DPO) framework of \citet{huh2025breaking}, where
ordinary backpropagation-through-time (BPTT) is used for neural-policy training
and BPTT-derived costate information is used to construct Pontryagin-guided
controls.  The fixed-latent/open-loop distinction formalized below is
introduced here.  In the present construction, a pointwise-feasible neural
Markov-form actor---the trainable time--state-to-control map---first generates a
reference rollout.  After training, its realized latent outputs are frozen
before differentiating the feasible coordinate map.  Fixed-latent open-loop
BPTT (\OLBPTT) then removes actor-state derivatives while retaining the
structural state derivatives of a moving feasible fiber.  The resulting
adjoints are associated with the realized adapted open-loop coordinate process;
the primitive admissible class is not restricted to Markov feedback controls.

Because risky weights enter the state diffusion, their first-order
Hamiltonian is affine and does not encode the diffusion-induced quadratic
curvature used in finite control comparison.  The second-order adjoint supplies that curvature.  Given
the estimated adjoint tuple, deployment solves a local generalized-Hamiltonian
problem.  We call the resulting deployment-time map from adjoints to a
recovered action the \emph{decoder}.  QP-PGDPO and B-PGDPO denote conditional
Stage~2 recovery maps, not additional actor-training procedures.  In the
quadratic-affine blocks studied numerically, the quadratic-program realization
is exact conditional on the supplied inputs; the barrier realization follows a
log-barrier central path toward the same local KKT solution on a regular smooth
branch.  The barrier is therefore an implementation of the recovery stage, not
its foundation.

The paper makes three contributions.
\begin{enumerate}[label=(\roman*)]
\item \emph{Reference-control adjoint acquisition.}  A local corner chart
reparameterizes progressively measurable controls on a regular feasible
patch.  Fixed-latent \OLBPTT converges to the chart-reduced first- and
second-order adjoints associated with the reference control while retaining
the orthogonal martingale residuals in the one-step Brownian projections.

\item \emph{Adjoint-to-control recovery.}  The adapted adjoint tuple defines a
solver-neutral local maximization.  We give an exact QP/KKT realization for
quadratic-affine portfolio blocks and a barrier approximation for general
smooth regular branches.  Under local quadratic growth and a common-state
stability bridge, reference loss $\varepsilon_{\rm ref}$ and aggregate
numerical error $\eta$ imply QP policy error
$O(\sqrt{\varepsilon_{\rm ref}}+\eta)$ and local QP gap
$O((\sqrt{\varepsilon_{\rm ref}}+\eta)^2)$.

\item \emph{Validation at scale.}  Analytical-rollout constant- and
predictable-opportunity benchmarks first isolate end-to-end adjoint-acquisition
error.  Common-input experiments---which hold fixed the reference rollout,
evaluation states, harvested adjoints, and Brownian continuations across the
recovery solvers---then test whether local recovery improves a frozen DPO
reference, whether the improvement persists across finite training budgets, and
whether the pipeline remains effective with predictable returns and a
state-dependent consumption fiber.  Additional hPINN-style and barrier audits
delimit the comparison and solver scope, with action dimensions up to 100 risky
assets.
\end{enumerate}

All theoretical conclusions are local to regular feasible patches, and QP
exactness is conditional on the supplied adjoints.  Smooth Markov identities
are used only for interpretation and analytical benchmarks.  Mathematical
symbols are defined at first use; Published Appendix~\ref{app:notation_summary}
summarizes the recurring notation and coordinate conventions
(Tables~\ref{tab:core_symbols} and~\ref{tab:discrete_symbols}).  The remainder
of the paper formulates the model in Section~\ref{sec:dp_multiasset}, develops
the chart-reduced adjoint systems and recovery map in
Section~\ref{sec:PMP_multiasset}, gives the numerical pipeline in
Section~\ref{sec:BPTT_multiasset}, and reports the experiments in
Section~\ref{sec:num_test}.  Core mathematical proofs, including the full fixed-latent OL-BPTT
consistency argument, are collected in the separately submitted Published
Appendix.  Classical open-loop/feedback specializations, quadratic-growth
scope results, implementation details, and supplementary diagnostics are
collected in the Online Supporting Information.

\section{Multi-Asset Portfolio Problems under Constraints}
\label{sec:dp_multiasset}

\subsection{Model and Feasible Control Fibers}
\label{sec:model_setup}

The model has the finite-dimensional state representation
\[
 \mathbf S_t=(X_t,Y_t)\in(0,\infty)\times\mathbb R^{d_Y},
 \qquad d_S:=1+d_Y,
\]
where $X_t$ is wealth and $Y_t$ is a vector of observable
investment-opportunity factors.  Work on a filtered probability space
$(\Omega,\mathcal F,\mathbb F,\mathbb P)$ carrying a
$d_W$-dimensional Brownian motion $\mathbf W$, where
$\mathbb F=(\mathcal F_t)_{0\le t\le T}$ is its augmented natural
filtration.  The coefficients below have a finite-dimensional Markovian
representation in $(t,\mathbf S_t)$.

There is one locally risk-free asset with rate $r(t,y)$ and $n$ risky assets
with excess-return vector
$\boldsymbol\alpha(t,y)\in\mathbb R^n$ and return loading
$\boldsymbol v(t,y)\in\mathbb R^{n\times d_W}$.  Define
\[
 \boldsymbol\Sigma(t,y)
 :=
 \boldsymbol v(t,y)\boldsymbol v(t,y)^\top
 \in\mathbb R^{n\times n},
\]
and assume that $\boldsymbol\Sigma(t,y)$ is uniformly positive definite on the
working region.  Throughout, $\mathbf1$ and $\boldsymbol0$ denote the
all-ones and zero vectors of the dimension implied by context.  Let
\[
 \boldsymbol\pi_t
 =
 (\pi_{1,t},\ldots,\pi_{n,t})^\top
 \in\mathbb R^n,
 \qquad
 \pi_{0,t}:=1-\mathbf1^\top\boldsymbol\pi_t
\]
denote the risky-asset weights and the residual risk-free weight,
respectively.  Let $C_t$ denote the consumption rate, and
define
\(
 \mathbf u_t:=(\boldsymbol\pi_t^\top,C_t)^\top,
 \qquad d_u:=n+1.
\)

The exogenous factor dynamics have drift
$\boldsymbol\beta_Y(t,y)\in\mathbb R^{d_Y}$ and diffusion
$\boldsymbol\sigma_Y(t,y)\in\mathbb R^{d_Y\times d_W}$.  The joint
wealth--factor dynamics are
\begin{equation}\label{eq:wealth_factor_dynamics}
\begin{aligned}
 dX_t
 &=
 \Bigl[
 X_t\{r(t,Y_t)
 +\boldsymbol\pi_t^\top\boldsymbol\alpha(t,Y_t)\}
 -C_t
 \Bigr]dt
 +X_t\boldsymbol\pi_t^\top
 \boldsymbol v(t,Y_t)\,d\mathbf W_t,
 \qquad X_0=x_0>0,
 \\
 dY_t
 &=
 \boldsymbol\beta_Y(t,Y_t)\,dt
 +\boldsymbol\sigma_Y(t,Y_t)\,d\mathbf W_t,
 \qquad Y_0=y_0.
\end{aligned}
\end{equation}
Because both equations are driven by $\mathbf W$, the model allows
instantaneous return--factor correlation.

The joint dynamics can be written compactly as
\[
 d\mathbf S_t
 =
 \boldsymbol b_S(t,\mathbf S_t;\mathbf u_t)\,dt
 +
 \boldsymbol\sigma_S(t,\mathbf S_t;\mathbf u_t)\,d\mathbf W_t,
\]
where $\boldsymbol b_S$ collects the drift terms in
\eqref{eq:wealth_factor_dynamics}, and
\[
 \boldsymbol\sigma_S(t,\mathbf s;\mathbf u)
 =
 \begin{pmatrix}
 x\boldsymbol\pi^\top\boldsymbol v(t,y)
 \\
 \boldsymbol\sigma_Y(t,y)
 \end{pmatrix}
 \in\mathbb R^{d_S\times d_W},
 \qquad
 \mathbf s=(x,y),\quad
 \mathbf u=(\boldsymbol\pi^\top,C)^\top.
\]
When $d_Y=0$, the factor equation and the lower block of
$\boldsymbol\sigma_S$ are omitted, so that $\mathbf S_t=X_t$ and $d_S=1$.

The objective is
\begin{equation}\label{eq:portfolio_objective_model}
J(\mathbf u)
=
\mathbb E\!\left[
\int_0^T e^{-\rho t}U(C_t)\,dt
+
K e^{-\rho T}U(X_T)
\right],
\end{equation}
where $\rho\ge0$, $K\ge0$, and
$U:(0,\infty)\to\mathbb R$ is $C^2$, strictly increasing, and strictly
concave.  If zero consumption belongs to the closure of the feasible set,
$U(0)$ is understood in the extended-value sense as
\(
U(0):=\lim_{c\downarrow0}U(c)\in[-\infty,\infty).
\)

All pointwise portfolio and consumption inequalities are collected in a
smooth constraint map
\[
\Gamma(t,\mathbf s;\mathbf u)
=
\bigl(
\Gamma_1(t,\mathbf s;\mathbf u),
\ldots,
\Gamma_{m_\Gamma}(t,\mathbf s;\mathbf u)
\bigr)^\top
\ge\boldsymbol0,
\qquad
\mathbf s=(x,y).
\]
For example, the borrowing-allowed no-short-sale benchmark imposes
\(
\boldsymbol\pi\ge\boldsymbol0
\)
with no upper bound on $\mathbf1^\top\boldsymbol\pi$, whereas the no-borrowing
models additionally impose
\(
1-\mathbf1^\top\boldsymbol\pi\ge0.
\)
Consumption bounds are represented by
\[
C-C_{\min}(t,\mathbf s)\ge0,
\qquad
C_{\max}(t,\mathbf s)-C\ge0,
\]
where $C_{\min}=0$ gives consumption nonnegativity and
$C_{\max}(t,\mathbf s)=\bar m x$ gives a proportional upper bound.

The feasible control fiber is
\begin{equation}\label{eq:feasible_control_fiber}
\mathcal U(t,\mathbf s)
=
\left\{
\mathbf u\in\mathbb R^{n+1}:
\Gamma(t,\mathbf s;\mathbf u)\ge\boldsymbol0
\right\}.
\end{equation}
The restrictions are primitive and do not depend on the value function.
When the feasible fiber $\mathcal U(t,\mathbf s)$ actually varies with the
state, as under $C\le\bar m x$, the local recovery problem holds
$\mathbf s$ fixed, whereas the fixed-latent state differentiation introduced
below retains the structural state dependence of the feasible chart.  A
control is admissible if it is progressively measurable, belongs to
$\mathcal U(t,\mathbf S_t)$ for $dt\otimes d\mathbb P$-almost every
$(t,\omega)$, the state equation admits a unique solution with positive
wealth, and the required integrability and moment conditions hold.  We call
any such progressively measurable process an \emph{open-loop control}; it may
depend on the full information history in $\mathcal F_t$.  A
\emph{Markov feedback rule} is a Borel measurable selector satisfying
\[
\varphi(t,\mathbf s)\in\mathcal U(t,\mathbf s).
\]
We call the induced process
$\mathbf u_t=\varphi(t,\mathbf S_t)$ an admissible Markov feedback control
when its closed-loop state equation satisfies the preceding admissibility
conditions.  The Markovian coefficient representation does not restrict the
admissible class to such feedback controls.

\begin{assumption}[Smooth Pointwise Constraint Geometry]
\label{ass:Gamma}
On the working domain:
\begin{enumerate}[label=(\roman*),leftmargin=*]
\item \emph{Smoothness.}
      Each $\Gamma_j$ is $C^1$ in time and jointly $C^2$ in
      $(\mathbf s,\mathbf u)$, with locally Lipschitz derivatives.
\item \emph{Nonemptiness.}
      The feasible fiber $\mathcal U(t,\mathbf s)$ is nonempty for every
      $(t,\mathbf s)$ in the working domain.
\end{enumerate}
\end{assumption}

Assumption~\ref{ass:Gamma} covers state-independent affine restrictions and
smooth mixed state--control constraints such as $C\le\bar mX$.
Throughout, $\Gamma$ contains pointwise control constraints and mixed
state--control constraints, but not pure state constraints independent of
$\mathbf u$.  The feasible fibers may have corners, as in nonnegativity and
polyhedral portfolio constraints; the local analysis is conducted on regular
active-set strata specified in Section~\ref{sec:PMP_multiasset}.  Nonsmooth
constraint functions, nonregular active-set intersections, reflected or
measure-valued multipliers, and singular controls are outside the present
framework.

\subsection{Open-Loop Value, Dynamic-Programming Benchmark, and Feedback Realization}
\label{sec:hjb_motivation}

This subsection records the dynamic-programming benchmark against which the
adjoint-based recovery below will be compared.  For the finite-dimensional
Markov state model above, while retaining the full open-loop admissible class,
define
\begin{equation}\label{eq:value_function}
V(t,x,y)
=
\sup_{\mathbf u}
\mathbb E\!\left[
\int_t^T e^{-\rho s}U(C_s)\,ds
+
K e^{-\rho T}U(X_T)
\,\Big|\,\mathbf S_t=(x,y)
\right],
\end{equation}
where the supremum is over admissible open-loop controls, not only Markov
feedback controls.  When the usual dynamic-programming and comparison
hypotheses hold, $V$ is characterized by the constrained
Hamilton--Jacobi--Bellman (HJB) equation in the viscosity sense; when $V$ is
smooth,
\begin{equation}\label{eq:HJB_constrained}
0= V_t +
\sup_{\mathbf u\in\mathcal U(t,\mathbf s)}
\Bigg\{
e^{-\rho t}U(C)
+
D_{\mathbf s}V^\top\boldsymbol b_S(t,\mathbf s;\mathbf u)
+
\frac12\operatorname{tr}\!\left[
\boldsymbol\sigma_S\boldsymbol\sigma_S^\top
D^2_{\mathbf s\mathbf s}V
\right]
\Bigg\},
\end{equation}
with all coefficients evaluated at $(t,\mathbf s;\mathbf u)$ and terminal
condition $V(T,x,y)=K e^{-\rho T}U(x)$.  The viscosity formulation is used
only for context.  Our verification statement is classical: if a
$C^{1,2}$ solution admits a Borel maximizer over the full feasible fiber and
the induced closed-loop equation is admissible, then the resulting feedback
attains \eqref{eq:value_function} among all admissible open-loop controls
\citep{fleming2006controlled,yong2012stochastic}.  This provides one optimal
feedback representative, not Markovianity of every optimizer.  A local-chart
selector alone does not verify the full value; a branchwise result requires a
separate branch-restricted value problem.

For the portfolio model, the consumption block is
$e^{-\rho t}U(C)-V_XC$.  The risky-weight block is
\begin{equation}\label{eq:hjb_factor_portfolio_block}
x
\left[
V_X\boldsymbol\alpha(t,y)
+
\boldsymbol\Sigma_{RY}(t,y)\nabla_yV_X
\right]^\top
\boldsymbol\pi
+
\frac12x^2V_{XX}
\boldsymbol\pi^\top\boldsymbol\Sigma(t,y)\boldsymbol\pi,
\end{equation}
where
$\boldsymbol\Sigma_{RY}:=\boldsymbol v\boldsymbol\sigma_Y^\top$
is the return--factor cross-covariance.  The mixed term is the intertemporal
hedging component and vanishes when $d_Y=0$.  With constant coefficients, an
unconstrained interior maximizer satisfies
\[
\boldsymbol\pi_t^*
=-\frac{V_X(t,x)}{xV_{XX}(t,x)}
\boldsymbol\Sigma^{-1}\boldsymbol\alpha,
\qquad V_{XX}(t,x)<0.
\]
For constant relative risk aversion (CRRA) utility,
$U(z)=z^{1-\gamma}/(1-\gamma)$, the prefactor is $1/\gamma$.

At a smooth constrained maximizer, the HJB block satisfies the usual KKT
conditions.  Thus, if the smooth value derivatives were available, they would
reduce the dynamic portfolio problem to a static constrained control block.
Our method replaces those unavailable derivatives by adapted adjoint
information harvested from a reference trajectory; HJB identities are used
only for comparison when a smooth Markov value function exists.

\section{Adjoint Systems and Local Control Recovery}
\label{sec:PMP_multiasset}

We now formulate the local comparison within the open-loop control problem.
The first-order stochastic Pontryagin maximum principle (PMP) yields an
infinitesimal condition under convex admissible perturbations.  For the finite
risky-weight recovery comparison used here, the second-order adjoint and
generalized Hamiltonian retain the diffusion-induced curvature generated by
finite control changes
\citep{bensoussan1982lectures,peng1990general,yong2012stochastic}.  We first
construct local feasible coordinates, then define the chart-reduced adjoint
systems and the solver-neutral recovery map; smooth Markov identities are
discussed only as an optional interpretation.

\subsection{Open-Loop Controls and Local Feasible Coordinates}
\label{sec:chart_open_loop_adjoints}

An admissible control is a progressively measurable process satisfying
$\mathbf u_t\in\mathcal U(t,\mathbf S_t)$ almost everywhere, together with
the state-equation, positivity, integrability, and moment conditions in
Section~\ref{sec:model_setup}.  Here \emph{open loop} means progressively
measurable with respect to $\mathbb F$; no Markov feedback form is imposed,
and the control may depend on all information in $\mathcal F_t$.

\subsubsection{Geometry of State-Dependent Feasible Controls}

Write $b=(t,\mathbf s)$ and
\[
\operatorname{Gr}(\mathcal U)
:=\{(b,\mathbf u):\mathbf u\in\mathcal U(b)\}.
\]
At a feasible point define the active set
\[
\mathcal I(b,\mathbf u):=\{j:\Gamma_j(b;\mathbf u)=0\}.
\]
The point is \emph{vertically regular} when the active control gradients are
linearly independent,
\[
\operatorname{rank}D_{\mathbf u}\Gamma_{\mathcal I(b,\mathbf u)}(b;\mathbf u)
=|\mathcal I(b,\mathbf u)|.
\]
For a fixed active set $I$, the corresponding smooth piece
\[
\mathcal U_I(b)
:=\{\mathbf u:\Gamma_i(b;\mathbf u)=0\ (i\in I),\ 
\Gamma_j(b;\mathbf u)>0\ (j\notin I)\}
\]
is locally a manifold at vertically regular points.  To include both an
active face and inward feasible directions, we use a corner chart rather than
a chart of the face alone.

Fix a vertically regular point $(b_0,\mathbf u_0)$, let
$I_0=\mathcal I(b_0,\mathbf u_0)$ and $r=|I_0|$.  After shrinking
neighborhoods, the parameterized inverse-function theorem gives an open
neighborhood $O_a\subset\mathbb R^{d_u}$ and the relatively open corner
domain
\[
\mathfrak A_{\rm cor}
=O_a\cap\bigl([0,\infty)^r\times\mathbb R^{d_u-r}\bigr),
\]
a time--state neighborhood $B_0$, and a fiber-preserving coordinate map
\[
\widehat\Psi(b,\mathbf a)
:=\bigl(b,\Psi(b,\mathbf a)\bigr)
:
B_0\times\mathfrak A_{\rm cor}
\longrightarrow
\operatorname{Gr}(\mathcal U)
\]
with a local inverse.  The active slacks may be taken as the first $r$
coordinates, so $a_i=0$ represents the active face and $a_i>0$ an inward
direction.  The map $\Psi$ is only a coordinate device.  For example,
$C=Xc$, $c\in[0,\bar m]$, parametrizes the full fiber under
$0\le C\le\bar mX$.

We localize one such chart to a compact patch
$\mathcal D_{\rm ch}\subset B_0$ along the reference or optimal trajectory,
write $\mathfrak A:=\mathfrak A_{\rm cor}$, and suppress the chart index.

\begin{assumption}[Local Corner-Chart Regularity and Adjoint Well-Posedness]
\label{ass:chart_pmp_regularity}
On each compact chart patch used for adjoint acquisition:
\begin{enumerate}[label=(\roman*),leftmargin=*]

\item \emph{Corner-chart regularity.}
The map $\widehat\Psi(t,\mathbf s,\mathbf a)
=(t,\mathbf s,\Psi(t,\mathbf s,\mathbf a))$ is one-to-one onto the selected
local feasible graph and has an inverse
$(t,\mathbf s,\mathbf u)\mapsto
(t,\mathbf s,\Phi(t,\mathbf s,\mathbf u))$ there.  The maps $\Psi$ and $\Phi$
are jointly Borel measurable.  The inverse coordinate map $\Phi$ is locally
Lipschitz on the selected graph, with a bounded Lipschitz constant on the
compact localization.  The control component $\Psi$ is $C^1$ in time, jointly
$C^2$ in $(\mathbf s,\mathbf a)$, and admits an extension with this regularity
to an open neighborhood of the corner domain.  Its derivatives through order
two are bounded on the compact localization.

\item \emph{Coefficient regularity.}
The primitive drift, diffusion, and running reward are $C^1$ in time and
jointly $C^2$ in state and control, while the terminal reward is $C^2$ in the
state.  The chart compositions of the drift, diffusion, and running reward
have locally Lipschitz derivatives through order two and satisfy the required
growth and moment bounds.  On the localization used for the adjoint analysis,
the wealth and consumption components remain in compact subsets of
$(0,\infty)$.

\item \emph{Well-posedness.}
The chart-reduced state equation and the corresponding first- and second-order
adjoint systems admit unique adapted solutions.

\end{enumerate}
\end{assumption}

\subsubsection{Open-Loop Processes in Local Feasible Coordinates}

The coordinate process $\mathbf a_t$ may be any progressively measurable
$\mathfrak A$-valued process satisfying the localization and integrability
conditions; it is not assumed to be a feedback map.  All state derivatives of
chart-composed coefficients are taken with $\mathbf a$ fixed.  Define
\[
\mathcal U_{\rm ch}(t,\mathbf s)
:=\Psi(t,\mathbf s,\mathfrak A)
\subset\mathcal U(t,\mathbf s).
\]

\begin{proposition}[Local chart representation of admissible open-loop controls]
\label{prop:local_open_loop_chart}
Suppose Assumption~\ref{ass:chart_pmp_regularity} holds on
$\mathcal D_{\rm ch}$, and let $\Phi$ be its local inverse on the chart image.

Let $\mathbf a$ be any progressively measurable $\mathfrak A$-valued process
for which the chart-reduced state equation is well posed, the trajectory
remains in $\mathcal D_{\rm ch}$, and the usual admissibility and integrability
conditions hold.  Then
\[
\mathbf u_t=\Psi(t,\mathbf S_t,\mathbf a_t)
\]
is a progressively measurable admissible control taking values in
$\mathcal U_{\rm ch}(t,\mathbf S_t)$.  Conversely, if $\mathbf u$ is a
primitive admissible open-loop control whose state remains in
$\mathcal D_{\rm ch}$ and such that
$\mathbf u_t\in\mathcal U_{\rm ch}(t,\mathbf S_t)$ almost everywhere, then
\[
\mathbf a_t=\Phi(t,\mathbf S_t,\mathbf u_t)
\]
is progressively measurable and satisfies
$\mathbf u_t=\Psi(t,\mathbf S_t,\mathbf a_t)$ almost everywhere.

Hence the chart is a reparameterization of the localized admissible open-loop
controls in the selected regular corner patch, including its boundary faces
and adjacent inward feasible directions; it does not impose a Markov-feedback
restriction.
\end{proposition}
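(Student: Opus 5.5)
The plan is to prove both directions by composing jointly Borel measurable maps with progressively measurable processes. Progressive measurability is preserved under such composition. Admissibility then follows because the chart is fiber-preserving.

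\emph{Forward direction.} Let $\mathbf a$ be progressively measurable with values in $\mathfrak A$, and let $\mathbf S$ solve the chart-reduced state equation with drift $\boldsymbol b_S(t,\mathbf s;\Psi(t,\mathbf s,\mathbf a))$ and diffusion $\boldsymbol\sigma_S(t,\mathbf s;\Psi(t,\mathbf s,\mathbf a))$. The steps are as follows.
\begin{enumerate}[label=(\alph*)]
\item Since $\mathbf S$ is continuous and adapted, it is progressively measurable. Hence, for each $t$, the map $(s,\omega)\mapsto(s,\mathbf S_s(\omega),\mathbf a_s(\omega))$ on $[0,t]\times\Omega$ is $\mathcal B([0,t])\otimes\mathcal F_t$-measurable.
\item $\Psi$ is jointly Borel by Assumption~\ref{ass:chart_pmp_regularity}(i). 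Composing it with the map in (a) gives progressive measurability of $\mathbf u_t=\Psi(t,\mathbf S_t,\mathbf a_t)$.
\item Because $\widehat\Psi$ maps into $\operatorname{Gr}(\mathcal U)$ and fixes $(t,\mathbf s)$, we get $\mathbf u_t\in\Psi(t,\mathbf S_t,\mathfrak A)=\mathcal U_{\rm ch}(t,\mathbf S_t)\subset\mathcal U(t,\mathbf S_t)$ almost everywhere. The trajectory stays in $\mathcal D_{\rm ch}\subset B_0$, which is the chart's domain.
\item Substituting $\mathbf u_t$ turns the chart-reduced equation into the primitive state equation \eqref{eq:wealth_factor_dynamics}. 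So $\mathbf S$ solves the primitive equation driven by $\mathbf u$. Uniqueness, positivity and the integrability conditions transfer from the hypotheses on $\mathbf a$.
\end{enumerate}
The one subtlety in (d) is uniqueness of the primitive solution under the open-loop control $\mathbf u$. Here $\mathbf u$ is a given process, not the feedback $\Psi(t,\cdot,\mathbf a_t)$. My approach is to read the clause ``the usual admissibility conditions hold'' in the statement as supplying this. Alternatively, within the localization, the coefficients are locally Lipschitz in $\mathbf s$ for fixed $\mathbf u$, which gives pathwise uniqueness up to the exit time of $\mathcal D_{\rm ch}$.

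\emph{Converse direction.} Let $\mathbf u$ be primitive admissible with $\mathbf S$ in $\mathcal D_{\rm ch}$ and $\mathbf u_t\in\mathcal U_{\rm ch}(t,\mathbf S_t)$ almost everywhere. The steps are as follows.
\begin{enumerate}[label=(\alph*)]
\item Set $\mathbf a_t=\Phi(t,\mathbf S_t,\mathbf u_t)$ on the full-measure set where $(t,\mathbf S_t,\mathbf u_t)$ lies in the chart image. Off that set, put $\mathbf a_t$ equal to a fixed point of $\mathfrak A$. That set is Borel in $(t,\mathbf s,\mathbf u)$, being the image under the injective Borel map $\widehat\Psi$ (Lusin--Souslin), or equivalently the preimage of the graph under the continuous extension.
\item Progressive measurability of $\mathbf a$ follows by the same composition argument as in the forward direction, now using joint Borel measurability of $\Phi$.
\item The inverse relation $\widehat\Psi\circ\widehat\Phi=\mathrm{id}$ on the selected graph gives $\Psi(t,\mathbf S_t,\mathbf a_t)=\mathbf u_t$ almost everywhere.
\item Consequently $\mathbf S$ also solves the chart-reduced equation with input $\mathbf a$, and $\mathbf a$ takes values in $\mathfrak A$.
\end{enumerate}

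\emph{Conclusion and expected obstacle.} The final sentence of the statement is an interpretation of the two directions: they give a bijection, up to $dt\otimes d\mathbb P$-null sets, between localized $\mathfrak A$-valued progressive processes and localized admissible controls with values in $\mathcal U_{\rm ch}$. No feedback structure enters, because only the composition with the adapted pair $(\mathbf S,\mathbf u)$ is used. The main obstacle I expect is the measurability bookkeeping: the Borel nature of the chart image, and handling the null exceptional set in the converse so that $\mathbf a$ is everywhere defined and progressive. I would resolve it using the explicit Borel measurability in Assumption~\ref{ass:chart_pmp_regularity}(i) and modification on a null set, which does not affect the state equation.
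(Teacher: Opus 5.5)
Your proposal is correct and follows the same route as the paper's proof. Progressive measurability comes from composing the jointly Borel maps $\Psi$ and $\Phi$ with the progressive pairs $(\mathbf S,\mathbf a)$ and $(\mathbf S,\mathbf u)$; feasibility and the representation identities come from the fiber-preserving chart and its inverse; and the remaining admissibility conditions, including uniqueness of the primitive state equation, are taken from the hypotheses. Your extra bookkeeping (Borel measurability of the chart image via Lusin--Souslin, and redefining $\mathbf a$ on the exceptional null set) makes explicit what the paper's three-line proof leaves implicit.
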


\begin{proof}
Progressive measurability follows from joint measurability of $\Psi$ and
$\Phi$; feasibility and the representation identities follow from the chart
and its inverse.  The remaining admissibility conditions are assumed in the
statement.
\end{proof}

The result is local: it does not cover controls leaving the patch or crossing
a nonregular active-set intersection, nor does it assert a global chart or a
Markov optimizer.

\subsection{First-Order Adjoint System and Its Variational Scope}
\label{sec:first_order_scope}

Let
\[
\ell(t,\mathbf s;\mathbf u):=e^{-\rho t}U(C)
\]
denote the running reward.  The primitive first-order Hamiltonian is
\begin{equation}\label{eq:first_order_hamiltonian}
\mathcal H(t,\mathbf s,\mathbf u,\lambda,\mathbf Z)
=
\ell(t,\mathbf s;\mathbf u)
+
\lambda^\top\boldsymbol b_S(t,\mathbf s;\mathbf u)
+
\left\langle
\mathbf Z,
\boldsymbol\sigma_S(t,\mathbf s;\mathbf u)
\right\rangle_F.
\end{equation}
Here $\langle\cdot,\cdot\rangle_F$ is the Frobenius inner product.

The fixed-coordinate (fixed-latent) chart compositions are
\[
\begin{aligned}
\bar\ell(t,\mathbf s,\mathbf a)
&:=
\ell\bigl(t,\mathbf s,\Psi(t,\mathbf s,\mathbf a)\bigr),\\
\bar{\boldsymbol b}(t,\mathbf s,\mathbf a)
&:=
\boldsymbol b_S
\bigl(t,\mathbf s;\Psi(t,\mathbf s,\mathbf a)\bigr),\\
\bar{\boldsymbol\sigma}(t,\mathbf s,\mathbf a)
&:=
\boldsymbol\sigma_S
\bigl(t,\mathbf s;\Psi(t,\mathbf s,\mathbf a)\bigr).
\end{aligned}
\]
The corresponding chart-reduced Hamiltonian is
\begin{equation}\label{eq:chart_reduced_hamiltonian}
\bar{\mathcal H}(t,\mathbf s,\mathbf a,\lambda,\mathbf Z)
=
\bar\ell
+
\lambda^\top\bar{\boldsymbol b}
+
\left\langle\mathbf Z,\bar{\boldsymbol\sigma}\right\rangle_F,
\end{equation}
where the barred quantities are evaluated at $(t,\mathbf s,\mathbf a)$.

For a progressively measurable coordinate process, set
$\mathbf u_t=\Psi(t,\mathbf S_t,\mathbf a_t)$.  State perturbations hold
$\mathbf a_t$ fixed, so the structural state dependence of $\Psi$ remains.
The chart-reduced state equation is
\[
d\mathbf S_t
=
\bar{\boldsymbol b}(t,\mathbf S_t,\mathbf a_t)\,dt
+
\bar{\boldsymbol\sigma}(t,\mathbf S_t,\mathbf a_t)\,d\mathbf W_t.
\]

The associated first-order adjoint is
$(\lambda_t,\mathbf Z_t)$, where
\[
\lambda_t\in\mathbb R^{d_S},
\qquad
\mathbf Z_t\in\mathbb R^{d_S\times d_W},
\]
and
\begin{equation}\label{eq:chart_reduced_first_adjoint}
d\lambda_t
=
-
D_{\mathbf s}\bar{\mathcal H}
(t,\mathbf S_t,\mathbf a_t,\lambda_t,\mathbf Z_t)\,dt
+
\mathbf Z_t\,d\mathbf W_t,
\end{equation}
with terminal condition
\[
\lambda_T
=
K e^{-\rho T}U'(X_T)e_X,
\qquad
e_X=(1,0,\ldots,0)^\top.
\]

At fixed geometric coordinate, the first state derivative is
\begin{equation}\label{eq:fixed_latent_first_hamiltonian_chain_rule}
D_{\mathbf s}\bar{\mathcal H}[h]
=
\mathcal H_{\mathbf s}[h]
+
\mathcal H_{\mathbf u}^\top\Psi_{\mathbf s}h,
\end{equation}
with the right-hand side evaluated at
$\mathbf u=\Psi(t,\mathbf s,\mathbf a)$ and $(\lambda,\mathbf Z)$ fixed.

\begin{remark}[Scope of the first-order condition]
\label{rem:first_order_scope}
If $\mathbf a^*$ is locally optimal for the chart-reduced open-loop problem
on the selected regular patch and the convex-amplitude segment toward a bounded
adapted comparator $\mathbf v$ remains admissible in the chart, then
\[
\left\langle
D_{\mathbf a}\bar{\mathcal H}
(t,\mathbf S_t^*,\mathbf a_t^*,\lambda_t^*,\mathbf Z_t^*),
\mathbf v_t-\mathbf a_t^*
\right\rangle\le0
\]
for almost every $(t,\omega)$, for each fixed comparator, with the
exceptional null set allowed to depend on that comparator.  This infinitesimal
condition remains valid with controlled diffusion, but is
neither a finite-control maximum condition nor a sufficient condition without
additional concavity or verification hypotheses
\citep{bensoussan1982lectures,cadenillas1995stochastic,yong2012stochastic}.
Online Supporting Information~\ref{app:first_order_variation} gives the precise statement.
\end{remark}

Consumption does not enter the diffusion, so under a convex feasible interval
and concave $U$, its scalar block
$e^{-\rho t}U(C)-\lambda^X C$ is characterized by first-order KKT
conditions.  Risky weights do enter the wealth diffusion; their first-order
Hamiltonian is affine and therefore does not encode the diffusion-induced
quadratic curvature required by the finite risky-weight comparison below.
Online Supporting Information~\ref{app:second_order_dropout} records the corresponding reduction
when the compared controls do not change the diffusion.

\subsection{Second-Order Adjoints and the Generalized Hamiltonian}
\label{sec:generalized_hamiltonian_portfolio}

The second-order adjoint supplies the diffusion curvature used in our finite
risky-weight recovery comparison
\citep{peng1990general,yong2012stochastic}.

Let $P_t\in\mathbb S_{\rm sym}^{d_S}$ denote the second-order adjoint and let
$\mathbf R_t=(R_t^1,\ldots,R_t^{d_W})$, with
$R_t^\ell\in\mathbb S_{\rm sym}^{d_S}$, denote its martingale coefficients.
For notational brevity, write
\[
\bar{\boldsymbol b}_t
:=
\bar{\boldsymbol b}(t,\mathbf S_t,\mathbf a_t),
\qquad
\bar{\boldsymbol\sigma}_t
:=
\bar{\boldsymbol\sigma}(t,\mathbf S_t,\mathbf a_t),
\]
and
\[
\bar{\mathcal H}_t
:=
\bar{\mathcal H}
(t,\mathbf S_t,\mathbf a_t,\lambda_t,\mathbf Z_t).
\]
Let $\bar{\boldsymbol\sigma}_t^{\,\ell}$ denote the $\ell$th column of
$\bar{\boldsymbol\sigma}_t$.  With the maximization convention, the
second-order adjoint satisfies
\begin{equation}\label{eq:chart_reduced_second_adjoint}
\begin{aligned}
-dP_t
={}&
\Bigg[
D^2_{\mathbf s\mathbf s}\bar{\mathcal H}_t
+
(D_{\mathbf s}\bar{\boldsymbol b}_t)^\top P_t
+
P_tD_{\mathbf s}\bar{\boldsymbol b}_t
+
\sum_{\ell=1}^{d_W}
(D_{\mathbf s}\bar{\boldsymbol\sigma}_t^{\,\ell})^\top
P_t
D_{\mathbf s}\bar{\boldsymbol\sigma}_t^{\,\ell}
\\
&\quad
+
\sum_{\ell=1}^{d_W}
\left[
(D_{\mathbf s}\bar{\boldsymbol\sigma}_t^{\,\ell})^\top R_t^\ell
+
R_t^\ell
D_{\mathbf s}\bar{\boldsymbol\sigma}_t^{\,\ell}
\right]
\Bigg]dt
-
\sum_{\ell=1}^{d_W}R_t^\ell\,dW_t^\ell,
\\
P_T
={}&
K e^{-\rho T}U''(X_T)e_Xe_X^\top.
\end{aligned}
\end{equation}

The second fixed-coordinate derivative retains the state dependence of the
feasible chart.  For directions $h,k\in\mathbb R^{d_S}$,
\begin{equation}\label{eq:fixed_latent_hamiltonian_chain_rule}
\begin{aligned}
D^2_{\mathbf s\mathbf s}\bar{\mathcal H}[h,k]
={}&
\mathcal H_{\mathbf s\mathbf s}[h,k]
+
\mathcal H_{\mathbf s\mathbf u}
[h,\Psi_{\mathbf s}k]
+
\mathcal H_{\mathbf u\mathbf s}
[\Psi_{\mathbf s}h,k]
\\
&+
\mathcal H_{\mathbf u\mathbf u}
[\Psi_{\mathbf s}h,\Psi_{\mathbf s}k]
+
\mathcal H_{\mathbf u}^\top
\Psi_{\mathbf s\mathbf s}[h,k].
\end{aligned}
\end{equation}
All right-hand-side derivatives are evaluated at
$\mathbf u=\Psi(t,\mathbf s,\mathbf a)$ with
$(\lambda,\mathbf Z)$ fixed.  Thus fixed-coordinate differentiation removes
actor derivatives but retains chart-state derivatives.  These adjoints are
defined for any admissible charted control and coincide with the corresponding
PMP adjoints when the control satisfies the stated local optimality and
maximum-principle hypotheses.

For the adjoints associated with a fixed charted control
$\mathbf u_t=\Psi(t,\mathbf S_t,\mathbf a_t)$, define the shifted martingale
adjoint by
\begin{equation}\label{eq:shifted_zeta}
\boldsymbol\zeta_t
:=
\mathbf Z_t-P_t\bar{\boldsymbol\sigma}_t
=
\mathbf Z_t
-
P_t\boldsymbol\sigma_S(t,\mathbf S_t;\mathbf u_t)
\in\mathbb R^{d_S\times d_W}.
\end{equation}
For a generic state--control--adjoint tuple, define the generalized
Hamiltonian by
\begin{equation}\label{eq:generalized_hamiltonian}
\begin{aligned}
\mathbb H
(t,\mathbf s;\mathbf u,\lambda,\boldsymbol\zeta,P)
:={}&
\ell(t,\mathbf s;\mathbf u)
+
\lambda^\top
\boldsymbol b_S(t,\mathbf s;\mathbf u)
\\
&+
\left\langle
\boldsymbol\zeta,
\boldsymbol\sigma_S(t,\mathbf s;\mathbf u)
\right\rangle_F
\\
&+
\frac12
\operatorname{tr}\!\left[
\boldsymbol\sigma_S(t,\mathbf s;\mathbf u)^\top
P
\boldsymbol\sigma_S(t,\mathbf s;\mathbf u)
\right].
\end{aligned}
\end{equation}
At an anchor control $\mathbf u^\circ$, write
$\boldsymbol\sigma(\mathbf u):=\boldsymbol\sigma_S(t,\mathbf s;\mathbf u)$
and $\boldsymbol\sigma^\circ:=\boldsymbol\sigma(\mathbf u^\circ)$.  With
$\boldsymbol\zeta=\mathbf Z-P\boldsymbol\sigma^\circ$,
\eqref{eq:generalized_hamiltonian} is equivalent, up to a control-independent
constant, to the standard second-order maximum-principle comparison obtained
by adding
$\tfrac12\langle P(\boldsymbol\sigma(\mathbf u)-\boldsymbol\sigma^\circ),
\boldsymbol\sigma(\mathbf u)-\boldsymbol\sigma^\circ\rangle_F$
to the first-order Hamiltonian.
The adjoints are chart-reduced, but recovery is expressed in the original
control coordinates at fixed $(t,\mathbf s)$.  It is local to the selected
branch unless concavity of the stagewise block and convexity of the feasible
set globalize the comparison.

The definition \eqref{eq:shifted_zeta} yields the anchoring identity
\begin{equation}\label{eq:hamiltonian_gradient_identity}
\nabla_{\mathbf u}\mathbb H
(t,\mathbf S_t;\mathbf u_t,
 \lambda_t,\boldsymbol\zeta_t,P_t)
=
\nabla_{\mathbf u}\mathcal H
(t,\mathbf S_t,\mathbf u_t,
 \lambda_t,\mathbf Z_t).
\end{equation}
At $\mathbf u=\mathbf u_t$, the shifted linear and quadratic diffusion terms
have cancelling first derivatives, while the quadratic term supplies the
risky-weight curvature.

For a locally optimal charted control, the second-order stochastic maximum
principle gives
\begin{equation}\label{eq:open_loop_max_condition}
\mathbb H
(t,\mathbf S_t^*;\mathbf u_t^*,
 \lambda_t^*,\boldsymbol\zeta_t^*,P_t^*)
\ge
\mathbb H
(t,\mathbf S_t^*;\mathbf u,
 \lambda_t^*,\boldsymbol\zeta_t^*,P_t^*)
\end{equation}
for every
$\mathbf u\in\mathcal U_{\rm ch}(t,\mathbf S_t^*)$
sufficiently close to $\mathbf u_t^*$, for
$dt\otimes d\mathbb P$-almost every $(t,\omega)$.

Partition the adjoints according to $\mathbf S=(X,Y)$ and write
$P^{YX}=(P^{XY})^\top$.  The part of
\eqref{eq:generalized_hamiltonian} that depends on the risky-asset weights is
\begin{equation}\label{eq:open_loop_portfolio_block}
\begin{aligned}
&
x
\left[
\lambda^X\boldsymbol\alpha(t,y)
+
\boldsymbol v(t,y)(\boldsymbol\zeta^X)^\top
+
\boldsymbol\Sigma_{RY}(t,y)P^{YX}
\right]^\top
\boldsymbol\pi
\\
&\qquad
+
\frac12
P^{XX}x^2
\boldsymbol\pi^\top
\boldsymbol\Sigma(t,y)
\boldsymbol\pi.
\end{aligned}
\end{equation}
Only the wealth row $\boldsymbol\zeta^X$ enters because the factor diffusion is
control independent; $\boldsymbol\Sigma_{RY}P^{YX}$ is the hedging term.
The separable consumption block is
$e^{-\rho t}U(C)-\lambda^X C$.  If $P^{XX}<0$ and
$\boldsymbol\Sigma(t,y)\succ0$, the risky-weight block is a strictly
concave QP under affine constraints, without requiring definiteness of the
full matrix $P$.  The mixed term vanishes when $d_Y=0$.

\subsection{Adjoint-to-Control Recovery}
\label{sec:constrained_local_recovery}

Fix $(t,\mathbf s)$ and an adapted adjoint input
\[
\vartheta
:=
(\lambda,\boldsymbol\zeta,P),
\]
and write
\[
\mathbb H_\vartheta(\mathbf u)
:=
\mathbb H(t,\mathbf s;\mathbf u,
\lambda,\boldsymbol\zeta,P).
\]
Let $\mathcal R(t,\mathbf s;\vartheta)$ denote the set of local maximizers of
$\mathbb H_\vartheta$ over the selected feasible branch of
$\mathcal U(t,\mathbf s)$.  A selected recovered control therefore satisfies
\[
\mathbf u^\star(\vartheta)
\in
\mathcal R(t,\mathbf s;\vartheta).
\]
In general, $\mathcal R$ is a local set-valued correspondence.  When
$\vartheta=\vartheta_t^*$ and $\mathbf s=\mathbf S_t^*$,
\eqref{eq:open_loop_max_condition} implies
$\mathbf u_t^*\in\mathcal R(t,\mathbf S_t^*;\vartheta_t^*)$; selecting that
local branch gives $\mathbf u^\star(\vartheta_t^*)=\mathbf u_t^*$.

Recovery estimates $\vartheta$ and solves the resulting static constrained
problem at each deployment point.  It is conditional on the supplied adjoints
and local unless concavity and convexity globalize the stagewise problem.
Full investment is already encoded by
$\pi_0=1-\mathbf1^\top\boldsymbol\pi$; other equalities may be added through
standard KKT multipliers.

For a selected local maximizer $\mathbf u^\star$, write
$\mathcal I^\star:=\mathcal I((t,\mathbf s),\mathbf u^\star)$ for the active
set defined in Section~\ref{sec:chart_open_loop_adjoints}.
A local KKT branch is called regular when its active set is locally constant
and the active control gradients
\[
\left\{
\nabla_{\mathbf u}\Gamma_j
(t,\mathbf s;\mathbf u^\star)
\right\}_{j\in\mathcal I^\star}
\]
satisfy the linear independence constraint qualification (LICQ).  At a smooth
local maximizer on such a branch, there exists a multiplier vector
$\nu^\star\ge\boldsymbol0$ satisfying the KKT necessary conditions
\begin{equation}\label{eq:local_recovery_kkt}
\begin{aligned}
\nabla_{\mathbf u}\mathbb H_\vartheta(\mathbf u^\star)
+
D_{\mathbf u}\Gamma
(t,\mathbf s;\mathbf u^\star)^\top\nu^\star
&=
\boldsymbol0,
\\
\Gamma(t,\mathbf s;\mathbf u^\star)
&\ge
\boldsymbol0,
\\
\nu_j^\star
\Gamma_j(t,\mathbf s;\mathbf u^\star)
&=
0,
\qquad
j=1,\ldots,m_\Gamma.
\end{aligned}
\end{equation}

At the optimal tuple, let $\nu^\star$ be the generalized-Hamiltonian KKT
multiplier.  By \eqref{eq:hamiltonian_gradient_identity}, it also satisfies the
first-order Hamiltonian stationarity relation, and hence
\begin{equation}\label{eq:moving_fiber_multiplier_identity}
D_{\mathbf s}\bar{\mathcal H}
=
D_{\mathbf s}\mathcal H
+
D_{\mathbf s}\Gamma^\top\nu^\star.
\end{equation}
At the selected KKT point, the fixed-coordinate second derivative is the
chart pullback of the Hessian of
$\mathcal H+\nu^{\star\top}\Gamma$, with $\nu^\star$ held fixed, including
the mixed terms in \eqref{eq:fixed_latent_hamiltonian_chain_rule}.  Among the
adjoint components, portfolio recovery uses only $\lambda^X$,
$\boldsymbol\zeta^X$, $P^{XX}$, and $P^{YX}$.

\subsection{Exact QP and Barrier Realizations}
\label{sec:recovery_realizations}

Exact QP recovery solves the quadratic-affine stagewise problem without
barrier bias.  The barrier realization targets the same local KKT
correspondence on a regular smooth branch and extends to more general smooth
geometry.

\subsubsection{Exact QP Realization of the Risky-Weight Block}

When the portfolio restrictions are affine and separable from the remaining
controls, let $\mathcal K(t,\mathbf s)$ denote the resulting convex feasible
set for the risky weights.  For $\mathbf s=(x,y)$, define
\begin{equation}\label{eq:factor_qp_coefficients}
\begin{aligned}
\mathbf g(t,\mathbf s;\vartheta)
&:=
x
\left[
\lambda^X\boldsymbol\alpha(t,y)
+
\boldsymbol v(t,y)(\boldsymbol\zeta^X)^\top
+
\boldsymbol\Sigma_{RY}(t,y)P^{YX}
\right],
\\
Q(t,\mathbf s;\vartheta)
&:=
-x^2P^{XX}\boldsymbol\Sigma(t,y).
\end{aligned}
\end{equation}
The exact risky-weight recovery is
\begin{equation}\label{eq:exact_qp_recovery}
\boldsymbol\pi^{\rm QP}(t,\mathbf s;\vartheta)
\in
\operatorname*{arg\,max}_{\boldsymbol\pi\in\mathcal K(t,\mathbf s)}
\left\{
\mathbf g(t,\mathbf s;\vartheta)^\top\boldsymbol\pi
-
\frac12
\boldsymbol\pi^\top
Q(t,\mathbf s;\vartheta)
\boldsymbol\pi
\right\}.
\end{equation}
For the borrowing-allowed no-short-sale model,
\[
\mathcal K(t,\mathbf s)=\mathbb R_+^n.
\]
The no-borrowing model additionally imposes
\[
\mathbf1^\top\boldsymbol\pi\le1.
\]
More general affine portfolio restrictions can be incorporated into
$\mathcal K(t,\mathbf s)$.

If $P^{XX}<0$ and $\boldsymbol\Sigma(t,y)\succ0$, then
$Q(t,\mathbf s;\vartheta)\succ0$, so
\eqref{eq:exact_qp_recovery} has a unique solution.  When the affine portfolio constraints are separable from the consumption
constraints, it exactly evaluates the risky-weight component of
$\mathcal R(t,\mathbf s;\vartheta)$ without barrier bias, conditional on the
supplied adjoints, and the scalar consumption block can be solved
independently.  When
$d_Y=0$, the mixed term $\boldsymbol\Sigma_{RY}P^{YX}$ is absent.

\subsubsection{Log-Barrier Realization on a Regular KKT Branch}

Define the strict interior by
\[
\mathcal U^\circ(t,\mathbf s)
:=
\left\{
\mathbf u\in\mathbb R^{n+1}:
\Gamma(t,\mathbf s;\mathbf u)>\boldsymbol0
\right\}.
\]
Explicit equality constraints, if present, are retained in the stagewise KKT
system or eliminated by parameterization; they are not included in the
logarithmic barrier.
For $\varepsilon_{\rm bar}>0$, let
$\mathbf u_{\varepsilon_{\rm bar}}$ denote a strict local maximizer, on the
selected branch, of
\begin{equation}\label{eq:barrier_stage_problem}
\mathbf u_{\varepsilon_{\rm bar}}
\in
\operatorname*{arg\,max}^{\rm loc}_{
\mathbf u\in\mathcal U^\circ(t,\mathbf s)}
\left\{
\mathbb H_\vartheta(\mathbf u)
+
\varepsilon_{\rm bar}
\sum_{j=1}^{m_\Gamma}
\log\Gamma_j(t,\mathbf s;\mathbf u)
\right\}.
\end{equation}
Define the associated barrier multipliers by
\[
\nu_{\varepsilon_{\rm bar},j}
:=
\frac{\varepsilon_{\rm bar}}
{\Gamma_j(t,\mathbf s;\mathbf u_{\varepsilon_{\rm bar}})}.
\]
They satisfy the perturbed complementarity conditions
\[
\nu_{\varepsilon_{\rm bar},j}
\Gamma_j(t,\mathbf s;\mathbf u_{\varepsilon_{\rm bar}})
=
\varepsilon_{\rm bar},
\qquad
j=1,\ldots,m_\Gamma.
\]
These solutions form the local central path, which approaches the selected
KKT point as $\varepsilon_{\rm bar}\downarrow0$ under the regularity below.
Finite-barrier bias is distinct from numerical solver error.

\begin{assumption}[Barrier/KKT Stagewise Regularity]
\label{ass:barrier_kkt}
Fix $(t,\mathbf s,\vartheta)$ and a selected KKT solution
$\mathbf u^\star(\vartheta)$.  Locally:
\begin{enumerate}[label=(\roman*),leftmargin=*]

\item \emph{Interior localization.}
The selected branch admits strict interior points, and the relevant local
barrier superlevel sets remain in a common compact neighborhood for all
sufficiently small $\varepsilon_{\rm bar}>0$.

\item \emph{Strong KKT regularity.}
At $\mathbf u^\star(\vartheta)$, LICQ, strict complementarity, and the strong
second-order sufficient condition hold.  Consequently, the bordered KKT
Jacobian is nonsingular.

\end{enumerate}
For the quadratic portfolio block, $P^{XX}<0$ and
$\boldsymbol\Sigma(t,y)\succ0$ provide the required curvature in the
risky-weight directions.  For a general nonlinear control block, the
corresponding reduced-Hessian condition is imposed directly.
\end{assumption}

For fixed $(t,\mathbf s,\vartheta)$, define the stagewise Lagrangian
\[
\mathcal L_\vartheta(\mathbf u,\nu)
:=
\mathbb H_\vartheta(\mathbf u)
+
\nu^\top\Gamma(t,\mathbf s;\mathbf u).
\]

\begin{proposition}[Local barrier--KKT approximation]
\label{prop:barrier_policy}
Suppose Assumptions~\ref{ass:Gamma},
\ref{ass:chart_pmp_regularity}, and
\ref{ass:barrier_kkt} hold.  Let
$\mathbf u_{\varepsilon_{\rm bar}}$ be the strict local barrier maximizer on
the central-path branch issuing from the selected KKT maximizer
$\mathbf u^\star$.  There exist positive constants
\[
C_{\rm cp},
\qquad
c_{\rm int},
\qquad
C_{\mathcal L},
\qquad
C_{H,1},
\qquad
C_{H,2},
\qquad
\varepsilon_0,
\]
depending on the fixed stagewise state and adjoint input, such that for
$0<\varepsilon_{\rm bar}\le\varepsilon_0$,
\begin{align}
\left\|
\mathbf u_{\varepsilon_{\rm bar}}
-
\mathbf u^\star
\right\|
&\le
C_{\rm cp}\varepsilon_{\rm bar},
\label{eq:barrier_control_error}
\\
\left|
\mathcal L_\vartheta(\mathbf u^\star,\nu^\star)
-
\mathcal L_\vartheta
(\mathbf u_{\varepsilon_{\rm bar}},\nu^\star)
\right|
&\le
C_{\mathcal L}\varepsilon_{\rm bar}^2,
\label{eq:barrier_lagrangian_gap}
\\
0
\le
\mathbb H_\vartheta(\mathbf u^\star)
-
\mathbb H_\vartheta(\mathbf u_{\varepsilon_{\rm bar}})
&\le
C_{H,1}\varepsilon_{\rm bar}
+
C_{H,2}\varepsilon_{\rm bar}^2.
\label{eq:barrier_hamiltonian_gap}
\end{align}
Moreover,
\begin{equation}\label{eq:barrier_slack_bound}
\Gamma_j
(t,\mathbf s;\mathbf u_{\varepsilon_{\rm bar}})
\ge
c_{\rm int}\varepsilon_{\rm bar},
\qquad
j=1,\ldots,m_\Gamma.
\end{equation}
If no inequality is active at $\mathbf u^\star$, the linear term in
\eqref{eq:barrier_hamiltonian_gap} vanishes.
\end{proposition}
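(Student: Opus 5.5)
The plan is to treat the barrier path as a smooth perturbation of the KKT system and apply the implicit function theorem to the perturbed-complementarity system. Introduce the map
\[
F(\mathbf u,\nu,\varepsilon)
:=
\begin{pmatrix}
\nabla_{\mathbf u}\mathbb H_\vartheta(\mathbf u)+D_{\mathbf u}\Gamma(t,\mathbf s;\mathbf u)^\top\nu\\
\bigl(\nu_j\Gamma_j(t,\mathbf s;\mathbf u)-\varepsilon\bigr)_{j=1}^{m_\Gamma}
\end{pmatrix}.
\]
Barrier stationarity of \eqref{eq:barrier_stage_problem}, combined with the definition $\nu_{\varepsilon_{\rm bar},j}=\varepsilon_{\rm bar}/\Gamma_j$, is exactly $F(\mathbf u_{\varepsilon_{\rm bar}},\nu_{\varepsilon_{\rm bar}},\varepsilon_{\rm bar})=0$. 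At $\varepsilon=0$ the KKT point $(\mathbf u^\star,\nu^\star)$ solves $F=0$ by \eqref{eq:local_recovery_kkt}.

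The Jacobian in $(\mathbf u,\nu)$ at that point is the bordered matrix with blocks $\nabla^2_{\mathbf u\mathbf u}\mathcal L_\vartheta$, $D_{\mathbf u}\Gamma^\top$, $\operatorname{diag}(\nu^\star)D_{\mathbf u}\Gamma$ and $\operatorname{diag}(\Gamma(\mathbf u^\star))$. Under LICQ, strict complementarity and the strong second-order condition of Assumption~\ref{ass:barrier_kkt}(ii), this matrix is nonsingular. For the inactive rows, $\Gamma_j>0$ and $\nu_j^\star=0$ decouple the block. For the active rows, $\nu_j^\star>0$, so the system reduces to the standard bordered KKT matrix.

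The $C^2$ regularity from Assumptions~\ref{ass:Gamma}--\ref{ass:chart_pmp_regularity} makes $F$ a $C^1$ map. The implicit function theorem then gives a $C^1$ curve $\varepsilon\mapsto(\mathbf u(\varepsilon),\nu(\varepsilon))$ on $[0,\varepsilon_0]$ with
\[
\|\mathbf u(\varepsilon)-\mathbf u^\star\|+\|\nu(\varepsilon)-\nu^\star\|\le C_{\rm cp}\varepsilon .
\]

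\textbf{Main obstacle: identification.} The curve produced by the implicit function theorem must be shown to coincide with the given strict local barrier maximizer $\mathbf u_{\varepsilon_{\rm bar}}$, and to lie in the strict interior. I would argue as follows:
\begin{itemize}
\item \emph{Interiority.} For active $j$, $\Gamma_j(\mathbf u(\varepsilon))=\varepsilon/\nu_j(\varepsilon)\ge\varepsilon/(2\nu^\star_j)$ once $\varepsilon$ is small, since $\nu_j^\star>0$. For inactive $j$, $\Gamma_j$ stays bounded below by continuity. Together these give \eqref{eq:barrier_slack_bound} with $c_{\rm int}=\min\{1/(2\max_j\nu^\star_j),\,1\}$ after shrinking $\varepsilon_0$.
\item \emph{Local maximality.} The barrier Hessian is $\nabla^2\mathcal L-\sum_j(\nu_j/\Gamma_j)\nabla\Gamma_j\nabla\Gamma_j^\top+\sum_j\nu_j\nabla^2\Gamma_j$. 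By a Finsler/Debreu argument, it is negative definite for small $\varepsilon$: the rank-one terms carry weight $\nu_j^2/\varepsilon\to\infty$ on the active normal directions, while SSOSC controls the tangent space. Hence $\mathbf u(\varepsilon)$ is a strict local maximizer.
\item \emph{Uniqueness.} By the uniqueness part of the implicit function theorem, together with the compact localization of Assumption~\ref{ass:barrier_kkt}(i) (any barrier critical point in the neighborhood must converge to the selected KKT point), the central-path branch issuing from $\mathbf u^\star$ is this curve.
\end{itemize}
This establishes \eqref{eq:barrier_control_error}.

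\textbf{Gap estimates.} For \eqref{eq:barrier_lagrangian_gap}, Taylor-expand $\mathcal L_\vartheta(\cdot,\nu^\star)$ around $\mathbf u^\star$, where its gradient vanishes. This gives $|\Delta\mathcal L|\le\tfrac12\sup\|\nabla^2\mathcal L\|\,\|\mathbf u_{\varepsilon}-\mathbf u^\star\|^2\le C_{\mathcal L}\varepsilon^2$. Then write
\[
\mathbb H_\vartheta(\mathbf u^\star)-\mathbb H_\vartheta(\mathbf u_\varepsilon)
=
[\mathcal L(\mathbf u^\star,\nu^\star)-\mathcal L(\mathbf u_\varepsilon,\nu^\star)]
+\nu^{\star\top}\Gamma(\mathbf u_\varepsilon),
\]
using $\nu^{\star\top}\Gamma(\mathbf u^\star)=0$. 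The last term involves only active $j$, and $\Gamma_j(\mathbf u_\varepsilon)\le\|\nabla\Gamma_j\|\,C_{\rm cp}\varepsilon$, which gives the linear term $C_{H,1}\varepsilon$. When no constraint is active, $\nu^\star=0$ and this term vanishes. Nonnegativity of the gap follows because $\mathbf u_\varepsilon$ is feasible and near $\mathbf u^\star$, and $\mathbf u^\star$ is a local maximizer of $\mathbb H_\vartheta$ on the branch; after shrinking $\varepsilon_0$, $\mathbf u_\varepsilon$ lies inside the neighborhood where that local maximality holds.
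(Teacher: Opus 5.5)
Your proposal is correct and follows the paper's proof essentially step by step. The shared steps are:
\begin{itemize}
\item the implicit function theorem applied to the perturbed KKT map, with the inactive block decoupled and the active rows reduced to the bordered KKT matrix;
\item the tangent/normal curvature argument, using $\nu_j^2/\varepsilon\to\infty$ on active normals, for strict barrier maximality;
\item slack bounds from $\Gamma_j=\varepsilon/\nu_{\varepsilon,j}$;
\item the split of the Hamiltonian gap into a quadratic Lagrangian Taylor term plus the complementarity term $\nu^{\star\top}\Gamma(\mathbf u_\varepsilon)$.
\end{itemize}

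The differences are minor. You bound the complementarity term via Lipschitz continuity of $\Gamma_j$ and the $C_{\rm cp}\varepsilon$ rate, whereas the paper uses $\nu_j^\star\varepsilon/\nu_{\varepsilon,j}$. Your identification step goes beyond what the paper does: the paper simply takes the implicit-function branch to be the central-path branch named in the statement. Finally, your barrier Hessian should read $\nabla^2_{\mathbf u\mathbf u}\mathcal L_\vartheta(\mathbf u,\nu)-\sum_j(\nu_j/\Gamma_j)\nabla\Gamma_j\nabla\Gamma_j^\top$, without the extra $\sum_j\nu_j\nabla^2\Gamma_j$, since that term is already contained in $\nabla^2\mathcal L_\vartheta$.
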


\begin{proof}
See Online Supporting Information~\ref{app:proof_barrier_policy}.
\end{proof}

Numerical realizations of the exact QP and log-barrier solvers are detailed in
Online Supporting Information~\ref{app:solver_details}.

\subsection{Feedback Optimizers and the Smooth Markov Case}
\label{sec:markov_reduction}

The open-loop formulation does not require a smooth value function or a
Markov optimizer.  Under the full-fiber classical verification conditions of
Section~\ref{sec:hjb_motivation}, a Markov feedback can represent the
open-loop value, but a local-chart maximizer alone cannot
\citep{fleming2006controlled,yong2012stochastic}.
Online Supporting Information~\ref{app:feedback_verification} states the corresponding classical
verification result formally.

For the additional smooth identification used for interpretation, suppose
that $V$ is sufficiently smooth on the state region corresponding to the
chart patch and that a Markov latent selector
$\mathbf a^*(t,\mathbf s)\in\mathfrak A$ represents a full-fiber HJB maximizer
on that region.  Along the resulting optimal trajectory, define
\[
\mathbf u^*(t,\mathbf s)
:=
\Psi(t,\mathbf s,\mathbf a^*(t,\mathbf s)),
\qquad
\mathbf u_t^*
:=
\mathbf u^*(t,\mathbf S_t^*),
\qquad
\boldsymbol\sigma_t^*
:=
\boldsymbol\sigma_S(t,\mathbf S_t^*;\mathbf u_t^*).
\]
The smooth maximum-principle/dynamic-programming identification then gives
\begin{equation}\label{eq:markov_adjoint_identification}
\lambda_t^*
=
D_{\mathbf s}V(t,\mathbf S_t^*),
\qquad
\mathbf Z_t^*
=
D^2_{\mathbf s\mathbf s}V(t,\mathbf S_t^*)
\boldsymbol\sigma_t^*,
\qquad
\boldsymbol\zeta_t^*
=
\left[
D^2_{\mathbf s\mathbf s}V(t,\mathbf S_t^*)
-
P_t^*
\right]
\boldsymbol\sigma_t^*.
\end{equation}
These identities follow from the smooth PMP--dynamic-programming relation,
It\^o's formula, and \eqref{eq:shifted_zeta}; they are not used to define the
chart-reduced adjoint systems or recovery map
\citep{zhou1991unified,yong2012stochastic}.  In general
$P_t^*\ne D^2_{\mathbf s\mathbf s}V(t,\mathbf S_t^*)$, so smoothness and
Markovianity do not imply $\boldsymbol\zeta_t^*=0$.
Online Supporting Information~\ref{app:smooth_pmp_dp_identities} records the corresponding
wealth-component and portfolio specializations.

\begin{remark}[Markovization]
\label{rem:markovization_scope}
A finite-dimensional Markov representation is exact only when an adapted
finite-dimensional statistic has closed controlled dynamics and carries all
coefficients, rewards, and constraints.  Otherwise state compression defines
a surrogate problem whose error is separate from the errors analyzed here.
Within the finite-dimensional state formulation of this paper, the
chart-reduced adjoint systems and \OLBPTT correspondence do not rely on a
Markov feedback optimizer or on smooth dynamic-programming identities.
\end{remark}

\section{From a Reference Policy to Chart-Reduced Adjoint Processes}
\label{sec:BPTT_multiasset}

A pointwise-feasible Markov-form actor generates reference rollouts.  After
training, its realized latent outputs are detached, continuation payoffs are
differentiated once and twice on the fixed-latent graph, and conditional
projection yields the adapted adjoints used by the exact local-QP or barrier
recovery stage.  Actor
derivatives are removed during harvesting, but structural chart derivatives
remain.  Figure~\ref{fig:computational_architecture} summarizes this
computational separation.

\begin{figure}[!htbp]
\centering
\includegraphics[width=0.92\textwidth]{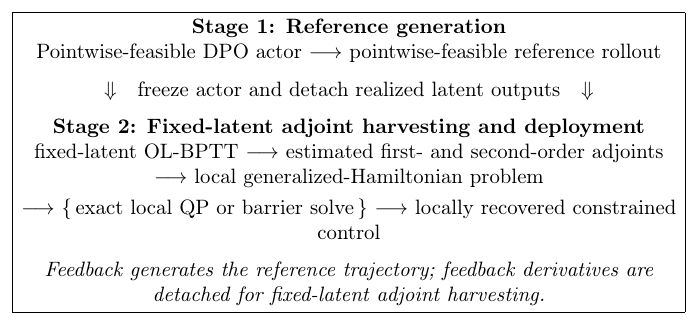}
\caption{Computational architecture.  The chart remains state dependent after
detachment, so its structural state derivatives are retained.}
\label{fig:computational_architecture}
\end{figure}

\subsection{Feasible Reference-Policy Training}
\label{sec:feedback_fixed_latent}

Let $\Theta=(\theta,\phi)$ parameterize the portfolio and consumption outputs,
and let $\varpi$ be a sampling law for initial time--state pairs
$(t_{\rm in},\mathbf s_{\rm in})$.  The reported implementation uses discounting relative to each sampled initial
time.  Accordingly, the Stage~1 objective is
\begin{equation}\label{eq:training_objective}
\begin{aligned}
J_{\rm train}(\Theta)
:=
\mathbb E_{(t_{\rm in},\mathbf s_{\rm in})\sim\varpi}
\mathbb E\!\left[
\int_{t_{\rm in}}^T
e^{-\rho(s-t_{\rm in})}U(C_\Theta(s,\mathbf S_s))\,ds
+
K e^{-\rho(T-t_{\rm in})}U(X_T)
\right].
\end{aligned}
\end{equation}
For a fixed initial pair, this is the calendar-time payoff of
Section~\ref{sec:model_setup} multiplied by the positive constant
$e^{\rho t_{\rm in}}$, so it has the same conditional optimizer.  When
$t_{\rm in}$ is randomized, the multiplication reweights the initial-time law;
reference-performance statements therefore refer to the averaged objective
actually used in training.
The state process starts from
$\mathbf S_{t_{\rm in}}=\mathbf s_{\rm in}$.  For Stage~1 only, we use the
Markov-form actor implemented in the experiments and generate the
pointwise-feasible Markov-form reference policy
\begin{equation}\label{eq:charted_feedback}
\mathbf u_\Theta(t,\mathbf s)
=
\Psi\bigl(t,\mathbf s,\mathbf a_\Theta(t,\mathbf s)\bigr).
\end{equation}
At training update $m$, fresh initial states and Brownian increments define a
Monte Carlo objective $\widehat J_{\rm train}^{(m)}$.  The policy gradient is
computed by ordinary BPTT through the closed-loop rollout graph and clipped in
Euclidean norm at one:
\[
 g_m
 :=
 \nabla_\Theta\!\left[-\widehat J_{\rm train}^{(m)}(\Theta_m)\right],
 \qquad
 \bar g_m
 :=
 \frac{g_m}{\max\{1,\|g_m\|\}}.
\]
The reported learned checkpoints use PyTorch Adam with learning rate
$10^{-4}$, default $(\beta_1,\beta_2)=(0.9,0.999)$ and numerical constant
$10^{-8}$, and no weight decay:
\[
 \Theta_{m+1}
 =
 \operatorname{Adam}_{10^{-4}}(\Theta_m,\bar g_m).
\]
In the tables, one such fresh-batch Adam update is called one \emph{epoch}.
Training uses 500 epochs except in the explicit 250/500/1,000-epoch budget
audit.  No Stage~2 adjoint or local-solver quantity enters this update.

Equation~\eqref{eq:charted_feedback} is only a reference-policy ansatz.  After
freezing the realized latent process, the adjoint calculation treats it as an
adapted open-loop coordinate process.  Any reference-to-target mismatch is
kept as $\delta_{\rm ref}$; Stage~2 does not feed back into training.  Smooth
output maps enforce pointwise feasibility, and
Published Appendix~\ref{app:fixed_latent_setup}
records their formulas, time normalization, and transition scheme.

\subsection{Freeze the Actor and Form Fixed-Latent Sensitivities}
\label{sec:pathwise_projection}

After training, the actor is frozen and its latent output is detached before
the feasible chart is applied:
\begin{equation}\label{eq:fixed_latent_control}
\bar{\mathbf a}_k
:=
\sg\!\left(
\mathbf a_\Theta(t_k,\mathbf S_k)
\right),
\qquad
\mathbf u_k^{\rm ref}
:=
\Psi(t_k,\mathbf S_k,\bar{\mathbf a}_k).
\end{equation}
Here $\sg$ denotes the stop-gradient (or detach) operator: it returns its
argument unchanged in the forward rollout but assigns it zero derivative in
the adjoint-harvesting graph.
For a state derivative initiated at time $t_k$, all later latent values
$\{\bar{\mathbf a}_\ell\}_{\ell\ge k}$ are held fixed.  Thus the closed-loop terms carried by
$D_{\mathbf s}\mathbf a_\Theta$ vanish, so the graph targets the realized
open-loop coordinate process rather than the closed-loop sensitivity obtained
by differentiating through the frozen feedback rule.  The structural
derivatives $D_{\mathbf s}\Psi$ and $D^2_{\mathbf s\mathbf s}\Psi$ of the
moving feasible fiber remain.  For example,
under the proportional chart
\[
C=Xc,
\]
detaching $c$ still leaves
\[
D_X^{\rm ol}C=c.
\]
For the theoretical sensitivity formulas below, we use calendar-value
normalization.  The relative-discount implementation differs by the
deterministic factor $e^{\rho t_{\rm in}}$, as detailed in
Published Appendix~\ref{app:fixed_latent_setup}.
Consider a rollout initialized at grid index $k_{\rm in}$.  For
$k\ge k_{\rm in}$, define the one-step and terminal payoffs by
\[
\mathfrak r_k
:=
e^{-\rho t_k}U(C_k)h,
\qquad
\mathfrak g_N
:=
K e^{-\rho T}U(X_N),
\]
and define the continuation payoff
\begin{equation}\label{eq:pathwise_continuation_payoff}
\mathcal J_k
:=
\sum_{\ell=k}^{N-1}\mathfrak r_\ell
+
\mathfrak g_N.
\end{equation}
Variable initial times may be implemented by padding trajectories to a common
grid and masking contributions before $k_{\rm in}$; this does not alter the
derivatives of $\mathcal J_k$ for $k\ge k_{\rm in}$.

The raw fixed-latent pathwise sensitivities are
\begin{equation}\label{eq:vector_raw_pathwise_sensitivities}
\widetilde\lambda_k
:=
\DOL_{\mathbf S_k}\mathcal J_k,
\qquad
\widetilde P_k
:=
\mathsf D_{\mathbf S_k\mathbf S_k}^{2,\rm ol}
\mathcal J_k.
\end{equation}
The Hessian $\widetilde P_k$ differentiates the realized continuation payoff
with future latent coordinates fixed.  Under the conditions of
Section~\ref{sec:olbptt_pmp_correspondence}, the standard interpolant of the
adapted projections $P_k=\mathbb E_k[\widetilde P_k]$ converges, as
$h\downarrow0$, to the second-order adjoint associated with the frozen
reference control, rather than to the Hessian of a reoptimized value
function.  The risky-weight decoder requires the
wealth-relevant blocks of $\lambda$ and $P$, together with the shifted wealth
row $\boldsymbol\zeta^X$ estimated below.  Accordingly, define
\[
\widetilde\lambda_k^X
:=
e_X^\top\widetilde\lambda_k,
\qquad
\widetilde P_k^{XX}
:=
e_X^\top\widetilde P_k e_X,
\]
and let $\widetilde P_k^{XY}$ denote the wealth--factor block of
$\widetilde P_k$.  This block is harvested directly rather than inferred from
a value Hessian or by imposing $\boldsymbol\zeta^X=\boldsymbol0$.

Published Appendix~\ref{app:fixed_latent_setup} records the exact chain rules; they use
neither stationarity nor optimality.

Let
\[
\mathbb E_k[\cdot]
:=
\mathbb E[\cdot\mid\mathcal F_{t_k}].
\]
The population adapted projections are
\begin{equation}\label{eq:adapted_discrete_adjoints}
\lambda_k
:=
\mathbb E_k[\widetilde\lambda_k],
\qquad
P_k
:=
\mathbb E_k[\widetilde P_k].
\end{equation}
Their numerical Monte Carlo or regression approximations are denoted by
$\widehat\lambda_k$ and $\widehat P_k$.

The intervalwise Brownian projections, standard in BSDE time
discretization \citep{bouchard2004discrete,zhang2004numerical}, retain their
orthogonal residuals:
\begin{equation}\label{eq:orthogonal_one_step_decomposition}
\begin{aligned}
\lambda_{k+1}
&=
\mathbb E_k[\lambda_{k+1}]
+\mathbf Z_k^h\Delta\mathbf W_k+\xi_{k+1},
\\
P_{k+1}
&=
\mathbb E_k[P_{k+1}]
+\sum_{\ell=1}^{d_W}R_k^{h,\ell}\Delta W_k^\ell
+\xi_{k+1}^P,
\end{aligned}
\end{equation}
where $\xi_{k+1}$ and $\xi_{k+1}^P$ have zero conditional mean and are
orthogonal to the span of the Brownian increment $\Delta\mathbf W_k$.  Published Appendix~\ref{app:martingale_projection_residuals}
controls their aggregate $L^2$ energy; no pointwise $O(h)$ order is imposed.

\subsection{\OLBPTT--Adjoint Correspondence}
\label{sec:olbptt_pmp_correspondence}

Adjoint/backpropagation correspondences are classical in deterministic
control and neural networks
\citep{dreyfus1962numerical,lecun1988theoretical}, while stochastic first- and
second-order adjoints, pathwise SDE sensitivities, and conditional Brownian
projections for BSDE discretization are also established
\citep{peng1990general,yong2012stochastic,li2020scalable,
bouchard2004discrete,zhang2004numerical}.  A recent complementary line uses
Pontryagin recursions for deterministic open-loop reinforcement learning by
optimizing fixed action sequences and estimating the required dynamics
Jacobians \citep{eberhard2025pontryagin}.  Our result is narrower in a different
direction: for a feedback-generated stochastic reference rollout, fixed-latent
\OLBPTT jointly targets the chart-reduced adapted first- and second-order
open-loop adjoints, while retaining moving-fiber state derivatives and
orthogonal martingale residuals.  For the continuous-time limit, following
\citet{zhang2004numerical}, we impose the standard piecewise-constant $L^2$
approximation property with aggregate squared error $O(h)$; the condition is
integrated in time, not pointwise.

\begin{assumption}[Regularity for the chart-reduced \OLBPTT--adjoint correspondence]
\label{ass:bptt_pmp_reg}
In addition to Assumptions~\ref{ass:Gamma} and
\ref{ass:chart_pmp_regularity}, assume:
\begin{enumerate}[label=(\roman*),leftmargin=*]

\item The rollout remains in a common compact chart patch, and the frozen
latent processes are uniformly square integrable.

\item The states and raw vector/matrix sensitivities are uniformly square
integrable, and the interval martingale densities have mesh-uniform
$L^2$ energy.

\item The frozen latent processes, charted controls, states, barred
coefficients, and their first two fixed-latent state derivatives converge in
the corresponding $L^2$ spaces.

\item For the interval martingale densities in
\eqref{eq:appendix_first_interval_density}--
\eqref{eq:appendix_second_interval_density} and their conditional
piecewise-constant projections in
\eqref{eq:discrete_martingale_coefficients}, there is a mesh-independent
constant $C_{\rm proj}$ such that, uniformly over the common localization and
the admissible initial index,
\begin{equation}\label{eq:bptt_z_h_regularization_assumption}
\begin{aligned}
\sum_{k=k_{\rm in}}^{N-1}
\mathbb E\int_{t_k}^{t_{k+1}}
\Bigg[
&
\|\mathcal Z_s^{\lambda,k+1}-\mathbf Z_k^h\|_F^2
\\
&+
\sum_{\ell=1}^{d_W}
\|\mathcal Z_s^{P,k+1,\ell}-R_k^{h,\ell}\|_F^2
\Bigg]ds
\le
C_{\rm proj}h.
\end{aligned}
\end{equation}

\end{enumerate}
\end{assumption}

Lemma~\ref{lem:fixed_latent_one_step_expansion} gives the required one-step
stochastic expansion and remainder bounds.

\begin{theorem}[Vector-state \OLBPTT sensitivities and chart-reduced adjoint correspondence]
\label{thm:bptt_pmp_constrained}
Suppose Assumptions~\ref{ass:Gamma},
\ref{ass:chart_pmp_regularity}, and
\ref{ass:bptt_pmp_reg} hold on a common compact chart patch.

\noindent\textbf{(a) Exact pathwise identities.}
The raw derivatives satisfy
\eqref{eq:bptt_chain_rule}--\eqref{eq:bptt_second_order} exactly.  Derivatives
through the actor output vanish, while the state derivatives of the feasible
chart remain.

\noindent\textbf{(b) Adapted discrete recursions.}
The conditional projections satisfy the discrete first- and second-adjoint
recursions \eqref{eq:first_projected_recursion}--
\eqref{eq:second_projected_recursion}, derived in
Published Appendix~\ref{app:proof_bptt_pmp}.  Both recursions retain the orthogonal residuals in
\eqref{eq:orthogonal_one_step_decomposition}.  The aggregate $L^2$ projection
and remainder conditions, rather than a pointwise order for either residual,
control their continuous-time limit.

\noindent\textbf{(c) Continuous-time limit.}
Let
$(\lambda^h,\mathbf Z^h,P^h,\mathbf R^h)$
denote the standard adapted interpolants of the discrete adjoints and
one-step Brownian coefficients, with
\[
\|\mathbf R\|_F^2
:=
\sum_{\ell=1}^{d_W}\|R^\ell\|_F^2.
\]
Then
\begin{equation}\label{eq:full_adjoint_bsde_convergence}
\begin{aligned}
&
\mathbb E
\sup_{t_{\rm in}\le t\le T}
\left(
\|\lambda_t^h-\lambda_t\|^2
+
\|P_t^h-P_t\|_F^2
\right)
\\
&\quad
+
\mathbb E
\int_{t_{\rm in}}^T
\left(
\|\mathbf Z_t^h-\mathbf Z_t\|_F^2
+
\|\mathbf R_t^h-\mathbf R_t\|_F^2
\right)dt
\longrightarrow0.
\end{aligned}
\end{equation}
The limit is the chart-reduced vector first-adjoint and matrix second-adjoint
system associated with the limiting fixed-latent reference control.  If that
reference control is optimal, the limits are the corresponding PMP adjoints for the open-loop control problem.

\noindent\textbf{(d) Wealth-relevant blocks.}
For $\mathbf S=(X,Y)$, the conditional projections of
$\widetilde\lambda^X$, $\widetilde P^{XX}$, and
$\widetilde P^{XY}$ converge to
$\lambda^X$, $P^{XX}$, and $P^{XY}$.  These are the blocks entering the
risky-weight coefficients in
\eqref{eq:factor_qp_coefficients}.  No identification of $P^{XY}$ with
$V_{XY}$ is required.
\end{theorem}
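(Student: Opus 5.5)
The proof proceeds in the order (a), (b), (c), (d). Part (a) is a deterministic computation on the frozen graph; (b) converts it into adapted recursions; (c) is a BSDE stability argument; (d) is a corollary.

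\emph{Part (a).} Fix a discretization with Euler step
$\mathbf S_{k+1}=\mathbf S_k+\bar{\boldsymbol b}(t_k,\mathbf S_k,\bar{\mathbf a}_k)h+\bar{\boldsymbol\sigma}(t_k,\mathbf S_k,\bar{\mathbf a}_k)\Delta\mathbf W_k$.
Here $\bar{\mathbf a}_k$ is the detached latent from \eqref{eq:fixed_latent_control}. Because $\sg$ has zero derivative, the map $\mathbf S_k\mapsto\mathbf S_{k+1}$ on the harvesting graph depends on $\mathbf S_k$ only through the barred coefficients at fixed $\mathbf a$. Its Jacobian is therefore
$A_k:=I+D_{\mathbf s}\bar{\boldsymbol b}_kh+\sum_\ell D_{\mathbf s}\bar{\boldsymbol\sigma}^{\,\ell}_k\Delta W_k^\ell$,
and by \eqref{eq:fixed_latent_first_hamiltonian_chain_rule} this Jacobian contains $\Psi_{\mathbf s}$ but no $D_{\mathbf s}\mathbf a_\Theta$.

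Since $\mathcal J_k=\mathfrak r_k+\mathcal J_{k+1}$, the chain rule gives the exact pathwise recursion
$\widetilde\lambda_k=D_{\mathbf s}\bar{\mathfrak r}_k+A_k^\top\widetilde\lambda_{k+1}$.
Differentiating once more gives
$\widetilde P_k=D^2_{\mathbf s\mathbf s}\bar{\mathfrak r}_k+A_k^\top\widetilde P_{k+1}A_k+\sum_i\widetilde\lambda_{k+1}^iD^2_{\mathbf s\mathbf s}(\text{step})^i$.
The second-order step term is evaluated through \eqref{eq:fixed_latent_hamiltonian_chain_rule}. The terminal values are $\widetilde\lambda_N=Ke^{-\rho T}U'(X_N)e_X$ and $\widetilde P_N=Ke^{-\rho T}U''(X_N)e_Xe_X^\top$. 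These are the identities \eqref{eq:bptt_chain_rule}--\eqref{eq:bptt_second_order}. They use no stationarity.

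\emph{Part (b).} Apply $\mathbb E_k$ to both recursions. Since $\widetilde\lambda_{k+1}$ may depend on $\mathcal F_T$, first replace it by $\mathbb E_{k+1}$ through the tower property. This is legitimate because $A_k$ is $\mathcal F_{t_{k+1}}$-measurable, so we may substitute $\lambda_{k+1}$ for $\widetilde\lambda_{k+1}$, and similarly $P_{k+1}$ for $\widetilde P_{k+1}$, inside the conditional expectation.

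Next insert the decomposition \eqref{eq:orthogonal_one_step_decomposition}. Products with $\Delta W_k^\ell$ produce $h\,\mathbf Z_k^h$-terms, via $\mathbb E_k[\Delta W\Delta W^\top]=hI$. The orthogonal residual $\xi_{k+1}$ contributes $\mathbb E_k[D_{\mathbf s}\bar{\boldsymbol\sigma}^{\,\ell\top}_k\xi_{k+1}\Delta W^\ell_k]$. This term vanishes by orthogonality only if $D_{\mathbf s}\bar{\boldsymbol\sigma}$ is $\mathcal F_{t_k}$-measurable, which it is under Euler. The residuals nevertheless persist in the recursion through quadratic products $\Delta W\Delta W^\top$ times residuals, and these are what Lemma~\ref{lem:fixed_latent_one_step_expansion} bounds.

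The result has the form
$\lambda_k=\mathbb E_k[\lambda_{k+1}]+h\,D_{\mathbf s}\bar{\mathcal H}(t_k,\mathbf S_k,\bar{\mathbf a}_k,\mathbb E_k\lambda_{k+1},\mathbf Z_k^h)+\rho_k^\lambda$,
and the matrix analogue for $P_k$ contains the drift terms of \eqref{eq:chart_reduced_second_adjoint}, including the $R$ cross-terms and $(D\bar\sigma)^\top P D\bar\sigma$. The remainders $\rho_k$ have aggregate $L^2$ size $o(1)$ after summation.

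\emph{Part (c).} This is a discrete-BSDE stability estimate in the style of \citet{bouchard2004discrete,zhang2004numerical}. Define the errors $\delta\lambda_k=\lambda_k-\lambda_{t_k}$ and $\delta P_k=P_k-P_{t_k}$. Write the continuous adjoint BSDE integrated over $[t_k,t_{k+1}]$ and subtract the recursion. Use the $L^2$ projection bound \eqref{eq:bptt_z_h_regularization_assumption} to replace the interval densities by $\mathbf Z^h$ and $\mathbf R^h$ at cost $C_{\rm proj}h$. Use Assumption~\ref{ass:bptt_pmp_reg}(iii) for convergence of coefficients and derivatives, and the compact-patch bounds of Assumption~\ref{ass:chart_pmp_regularity} for Lipschitz drivers. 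A discrete Gronwall plus Doob/BDG argument then yields \eqref{eq:full_adjoint_bsde_convergence}.

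The first adjoint is handled first. The $P$-equation is linear in $(P,\mathbf R)$ with a driver $D^2_{\mathbf s\mathbf s}\bar{\mathcal H}$ that depends on $(\lambda,\mathbf Z)$, so its convergence is fed by the already-established $(\lambda,\mathbf Z)$ convergence.

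The main obstacle is that the coefficients are only $L^2$-convergent rather than Lipschitz-uniform in $h$. The matrix equation also has multiplicative noise terms $R^\ell D\bar\sigma^\ell$, which require uniform integrability rather than plain boundedness. I would handle this with a truncation argument on the compact patch, using uniform square integrability from (i)--(ii) to control the tails.

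The optimality clause follows from uniqueness in Assumption~\ref{ass:chart_pmp_regularity}(iii). When the reference control is optimal, the chart-reduced adjoints coincide with the PMP adjoints, as noted after \eqref{eq:fixed_latent_hamiltonian_chain_rule}.

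\emph{Part (d).} This is immediate. Projection onto blocks is linear, so $e_X^\top(\cdot)$ and $e_X^\top(\cdot)e_Y$ are continuous maps, and convergence of the full vector and matrix implies convergence of the blocks $\lambda^X$, $P^{XX}$ and $P^{XY}$. No value function enters the argument, so no identification of $P^{XY}$ with $V_{XY}$ is needed.
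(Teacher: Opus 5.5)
Your proposal follows the paper's proof essentially step for step. It uses the exact chain rule on the detached graph, then the tower property via the $\mathcal F_{t_{k+1}}$-measurability of $A_k,B_k^j$, then insertion of the intervalwise Brownian projection with $\xi_{k+1},\xi_{k+1}^P$ retained, then collection of remainders, then sequential vector-then-matrix BSDE stability, and finally continuity of block extraction. The one refinement concerns the discretization: the paper's wealth transition is exponential--Euler rather than Euler, so by Lemma~\ref{lem:fixed_latent_one_step_expansion} the Jacobian $A_k$ itself carries a centered second-chaos term $\chi_{k+1}$, and its interaction with $\xi_{k+1}$ does not vanish already in the first-order recursion. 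The paper controls this term, and the analogous $\xi^P_{k+1}$ terms, by the conditional Cauchy--Schwarz bound $\mathbb E\|\mathbb E_k[\chi_{k+1}\xi_{k+1}]\|^2\le Ch^2\,\mathbb E\|\xi_{k+1}\|^2$ combined with the projection energy from Assumption~\ref{ass:bptt_pmp_reg}(iv), which gives $\sum_k h^{-1}\mathbb E\|\cdot\|^2\to0$; the one-step lemma alone does not suffice.
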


\begin{proof}
See Published Appendix~\ref{app:proof_bptt_pmp}.
\end{proof}

The recovery stage uses the harvested $\lambda$ and $P$ together with the
shifted wealth row $\boldsymbol\zeta^X$ estimated in
Section~\ref{sec:shifted_input_estimation}; $\mathbf R$ closes the
second-adjoint dynamics but is not a decoder input.

\subsection{Adjoint Estimation and Deployment}
\label{sec:shifted_input_estimation}

Conditional projection provides $\lambda^X$, $P^{XX}$, and $P^{XY}$ directly.
We estimate the remaining shifted row $\boldsymbol\zeta^X$ from the
one-step shock response rather than obtain it by imposing
$\boldsymbol\zeta^X=\boldsymbol0$ or a value-Hessian identity.

To make that decomposition explicit, let
\(
\vartheta^{\rm ref}
=(\lambda^{\rm ref},\boldsymbol\zeta^{\rm ref},P^{\rm ref})
\)
be the exact chart-reduced tuple associated with the frozen reference control.
By \eqref{eq:shifted_zeta}, its wealth row satisfies
\begin{equation}\label{eq:wealth_row_shift_decomposition}
\mathbf Z_t^{X,\rm ref}
=
e_X^\top P_t^{\rm ref}
\boldsymbol\sigma_S
(t,\mathbf S_t^{\rm ref};\mathbf u_t^{\rm ref})
+
\boldsymbol\zeta_t^{X,\rm ref}.
\end{equation}
The first term is the anchor and the second the residual component.  To reduce
future-shock noise, we use antithetic current increments, common future paths,
nested averaging, and regression of the odd paired response.  An even
$M_{\rm out}$ counts signed outer paths, hence $M_{\rm out}/2$ independent
pairs.  For
$i=1,\ldots,M_{\rm out}/2$, draw
$d_i\sim N(0,hI_{d_W})$, pair it with $-d_i$, and let
$\bar\lambda_{i,+}^X$ and $\bar\lambda_{i,-}^X$ denote the corresponding
nested-average estimates of the next-step conditional adjoints.  Define
\[
(D_k)_{i\cdot}:=d_i^\top,
\qquad
y_i:=\frac{\bar\lambda_{i,+}^X-\bar\lambda_{i,-}^X}{2},
\qquad
y:=(y_1,\ldots,y_{M_{\rm out}/2})^\top,
\]
so that $D_k\in\mathbb R^{(M_{\rm out}/2)\times d_W}$ and
$y\in\mathbb R^{M_{\rm out}/2}$.  The anchored residual estimator is
\begin{align}
c_k^{X,\rm anc}
&:=
\left[
e_X^\top\widehat P_k
\boldsymbol\sigma_S
(t_k,\mathbf S_k;\mathbf u_k^{\rm ref})
\right]^\top,
\notag\\
\widehat\zeta_{k,\rm col}^X
&:=
(D_k^\top D_k+\kappa_ZI)^{-1}D_k^\top
\bigl(y-D_kc_k^{X,\rm anc}\bigr),
\qquad
\widehat{\boldsymbol\zeta}_k^X
:=
(\widehat\zeta_{k,\rm col}^X)^\top,
\quad \kappa_Z\ge0,
\label{eq:anchored_shift_regression}
\\
\widehat{\mathbf Z}_k^X
&:=
\left(c_k^{X,\rm anc}+\widehat\zeta_{k,\rm col}^X\right)^\top.
\label{eq:anchored_Z_reconstruction}
\end{align}
Here $\kappa_Z$ is a ridge parameter; $\kappa_Z=0$ gives anchored ordinary
least squares when the design has full column rank.  Stage~2 uses
$\widehat{\boldsymbol\zeta}_k^X$ directly, while
$\widehat{\mathbf Z}_k^X$ is reconstructed only for validation.  Online Supporting Information~\ref{app:shifted_row_estimator}
gives implementation details.  Algorithm~\ref{algo:end_to_end_recovery}
summarizes the complete acquisition-and-recovery pipeline.

\begin{algorithm}[H]
\caption{Fixed-latent adjoint acquisition and constrained local recovery}
\label{algo:end_to_end_recovery}
\begin{algorithmic}[1]
\STATE Sample initial time--state pairs and Brownian increments, and train the
pointwise-feasible Markov-form reference policy by maximizing
\eqref{eq:training_objective}.
\STATE Freeze the actor and detach every latent output before applying the
state-dependent feasible chart.
\STATE Differentiate each continuation payoff once and twice with respect to
the full state to obtain
$(\widetilde\lambda_k,\widetilde P_k)$.
\STATE Approximate the conditional projections
$(\lambda_k,P_k)$ by
$(\widehat\lambda_k,\widehat P_k)$.
\STATE When required, estimate
$\widehat{\boldsymbol\zeta}_k^X$
using \eqref{eq:anchored_shift_regression}.
\STATE Substitute the estimated wealth-relevant tuple into the exact local-QP
or barrier recovery formulation of
Section~\ref{sec:recovery_realizations}.
\end{algorithmic}
\end{algorithm}

\subsection{End-to-End Recovery from Reference Performance}
\label{sec:main_end_to_end_recovery}

The harvested tuple belongs to the frozen reference control, not directly to
the optimizer.  The next result makes the resulting two-stage error chain
explicit.  Here $\delta_{\rm ol}$ collects time-discretization, conditional
Monte Carlo, and projection/regression errors in
$(\widehat\lambda,\widehat P)$; $\delta_\zeta$ is the shifted-row error; and
$\delta_{\rm qp}$ bounds the stationarity residual of the numerical QP solve.
Let $G_t^{\rm qp}$ denote the loss of the recovered risky weights in the
true-input local quadratic objective in
\eqref{eq:exact_qp_recovery}.  We use the process norms
$\mathbb S^2$, $\mathbb H^2$, and $L^{q,p}$ defined in
Published Appendix~\ref{app:notation_summary}.

For transparency, we record the three quantitative conditions that drive the
end-to-end rate.  On a localized comparison neighborhood
$\mathfrak U_0$, assume the quadratic-growth bound
\begin{equation}\label{eq:main_qg_condition}
J(\mathbf u^*)-J(\mathbf u)
\ge
\kappa_{\rm qg}\|\mathbf u-\mathbf u^*\|_{\mathbb H^2}^2,
\qquad \mathbf u\in\mathfrak U_0,
\end{equation}
the acquisition-error bounds
\begin{equation}\label{eq:main_acquisition_error_condition}
\| (\widehat\lambda,\widehat P)
-(\lambda^{\rm ref},P^{\rm ref})\|_{L^{2,2}}
\le C_{\rm ol}\delta_{\rm ol},
\qquad
\|\widehat{\boldsymbol\zeta}
-\boldsymbol\zeta^{\rm ref}\|_{L^{2,2}}
\le\delta_\zeta,
\end{equation}
and, at the common deployment-state process $\mathbf S$, the bridge
\begin{equation}\label{eq:main_common_state_bridge}
\|\vartheta^{\rm ref}(\cdot,\mathbf S_\cdot)
-\vartheta^*(\cdot,\mathbf S_\cdot)\|_{L^{2,2}}
\le
C_{\rm cs}\sqrt{\varepsilon_{\rm ref}/\kappa_{\rm qg}}.
\end{equation}
The remaining coefficient-jet, adjoint-boundedness, common-state, and
positive-definiteness conditions are stated precisely in Published
Appendices~C--D.

\begin{theorem}[End-to-end stability of QP-PGDPO]
\label{thm:end_to_end_qp_recovery}
Suppose Assumptions~\ref{ass:Gamma} and
\ref{ass:chart_pmp_regularity}, the displayed bounds
\eqref{eq:main_qg_condition}--\eqref{eq:main_common_state_bridge}, and the
localized regularity conditions of Published Appendices~C--D hold.  Let
$\mathbf u^{\rm ref}\in\mathfrak U_0$ and set
\[
\varepsilon_{\rm ref}
:=
J(\mathbf u^*)-J(\mathbf u^{\rm ref})
\ge0.
\]
Then, along the paired reference and optimal trajectories,
\begin{align}
\|\mathbf u^{\rm ref}-\mathbf u^*\|_{\mathbb H^2}
&\le
\sqrt{\frac{\varepsilon_{\rm ref}}{\kappa_{\rm qg}}},
\label{eq:main_reference_control_rate}
\\
\|\mathbf S^{\rm ref}-\mathbf S^*\|_{\mathbb S^2}
+
\|\vartheta^{\rm ref}-\vartheta^*\|_{L^{2,2}}
&\le
C_{\rm traj}
\sqrt{\frac{\varepsilon_{\rm ref}}{\kappa_{\rm qg}}}.
\label{eq:main_reference_adjoint_rate}
\end{align}
If, in addition, the uniform positive-definiteness, Lipschitz,
bounded-localization, and numerical-QP residual conditions of
Proposition~\ref{prop:qp_policy_gap} in the Published Appendix hold, define
\begin{equation}\label{eq:main_end_to_end_error}
\eta_{\rm e2e}
:=
\delta_{\rm ol}
+
\delta_\zeta
+
\delta_{\rm qp}
+
\sqrt{\frac{\varepsilon_{\rm ref}}{\kappa_{\rm qg}}}.
\end{equation}
Then, at common deployment states, the QP-PGDPO risky-weight output satisfies
\begin{align}
\|\widehat{\boldsymbol\pi}^{\rm QP}-\boldsymbol\pi^*\|_{L^{2,2}}
&\le
C_\pi\eta_{\rm e2e},
\label{eq:main_qp_policy_rate}
\\
\|G^{\rm qp}\|_{L^{1,1}}
&\le
C_G\eta_{\rm e2e}^2.
\label{eq:main_qp_gap_rate}
\end{align}
Consequently, with exact adjoint acquisition, shifted-row estimation, and QP
solution, the recovered risky-weight error is
$O(\sqrt{\varepsilon_{\rm ref}})$ and the integrated true local QP gap is
$O(\varepsilon_{\rm ref})$.  The constants are uniform only on the stated
localizations and may depend on portfolio dimension and covariance
conditioning.
\end{theorem}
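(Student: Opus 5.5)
The argument is a chain of three stability estimates: value gap to control gap, control gap to state/adjoint gap, and adjoint gap to QP-output gap. Each link is closed by a Lipschitz or strong-convexity argument.

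\emph{Step 1 (reference control rate).} Since $\mathbf u^{\rm ref}\in\mathfrak U_0$, apply the quadratic-growth bound \eqref{eq:main_qg_condition} with $\mathbf u=\mathbf u^{\rm ref}$. This gives $\kappa_{\rm qg}\|\mathbf u^{\rm ref}-\mathbf u^*\|_{\mathbb H^2}^2\le\varepsilon_{\rm ref}$, which is exactly \eqref{eq:main_reference_control_rate}.

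\emph{Step 2 (state and adjoint rates).} For \eqref{eq:main_reference_adjoint_rate} I would use the chart representation of Proposition~\ref{prop:local_open_loop_chart}. Both controls are charted on the common patch, so the inverse map $\Phi$, which is Lipschitz on the compact localization, converts the $\mathbb H^2$ control gap into an $\mathbb H^2$ latent gap.

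Standard SDE stability for the chart-reduced state equation then bounds $\|\mathbf S^{\rm ref}-\mathbf S^*\|_{\mathbb S^2}$ by a constant times this latent gap. This uses the Lipschitz barred coefficients of Assumption~\ref{ass:chart_pmp_regularity}(ii), a Burkholder--Davis--Gundy inequality and Gronwall.

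For the adjoints, I view \eqref{eq:chart_reduced_first_adjoint} and \eqref{eq:chart_reduced_second_adjoint} as linear BSDEs. Their coefficients, namely $D_{\mathbf s}\bar{\boldsymbol b}$, $D_{\mathbf s}\bar{\boldsymbol\sigma}$, $D^2_{\mathbf s\mathbf s}\bar{\mathcal H}$ and the terminal data, are Lipschitz functions of $(\mathbf S,\mathbf a)$ under the coefficient-jet conditions. The usual a priori BSDE stability estimate controls the $\mathbb S^2\times\mathbb H^2$ distance between the two solutions. It is bounded by the $L^2$ distance of the terminal conditions plus the $L^2$ distance of the driver generators evaluated along one solution.

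The shifted row $\boldsymbol\zeta=\mathbf Z-P\bar{\boldsymbol\sigma}$ is then handled by the product rule. Bounded adjoints on the localization are used there, since the product $P\bar{\boldsymbol\sigma}$ needs an $L^\infty$ factor. This is the step I expect to be the main obstacle. The linear BSDE drivers have products of adjoints with coefficient differences, so the stability estimate needs either bounded adjoints or a higher moment, and a Hölder step. This is where the adjoint-boundedness hypothesis of the appendices is consumed, and where the constant $C_{\rm traj}$ picks up its dependence on the localization.

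\emph{Step 3 (QP output rate).} At the common deployment states, split the input error by the triangle inequality:
\[
\|\widehat\vartheta-\vartheta^*\|_{L^{2,2}}\le\|\widehat\vartheta-\vartheta^{\rm ref}\|+\|\vartheta^{\rm ref}-\vartheta^*\|.
\]
The first term is at most $C_{\rm ol}\delta_{\rm ol}+\delta_\zeta$ by \eqref{eq:main_acquisition_error_condition}. The second is at most $C_{\rm cs}\sqrt{\varepsilon_{\rm ref}/\kappa_{\rm qg}}$ by the bridge \eqref{eq:main_common_state_bridge}.

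The QP coefficients $(\mathbf g,Q)$ in \eqref{eq:factor_qp_coefficients} are Lipschitz in $\vartheta$ on the bounded localization. Under uniform positive definiteness $Q\succeq q_0I$, the solution map of the strongly concave QP over the fixed convex set $\mathcal K$ is Lipschitz in $(\mathbf g,Q)$. To see this, compare the variational inequalities of the two problems and add them. This yields $q_0\|\boldsymbol\pi_1-\boldsymbol\pi_2\|\le\|\mathbf g_1-\mathbf g_2\|+\|Q_1-Q_2\|\,\|\boldsymbol\pi_2\|$.

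The optimal control satisfies $\boldsymbol\pi^*=\boldsymbol\pi^{\rm QP}(\vartheta^*)$ by \eqref{eq:open_loop_max_condition} and the uniqueness noted after \eqref{eq:exact_qp_recovery}. An inexact solve with stationarity residual $\delta_{\rm qp}$ adds $\delta_{\rm qp}/q_0$ by the same variational-inequality argument. Summing the pieces gives \eqref{eq:main_qp_policy_rate}; this is the content of Proposition~\ref{prop:qp_policy_gap}.

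Finally, for a strongly concave quadratic maximized over a convex set, the optimizer $\boldsymbol\pi^*$ satisfies the first-order variational inequality. Hence the true-input gap is at most $\tfrac12\|Q\|\,\|\widehat{\boldsymbol\pi}-\boldsymbol\pi^*\|^2$. Integrating in time and over $\omega$, and squaring the previous bound, gives \eqref{eq:main_qp_gap_rate}.

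The final consequence follows by setting $\delta_{\rm ol}=\delta_\zeta=\delta_{\rm qp}=0$.
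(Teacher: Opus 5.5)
Steps 1--2 and the policy-rate part of Step 3 follow the paper's route: quadratic growth, then forward and sequential BSDE stability using the essential bounds on $(\lambda,\mathbf Z,P,\mathbf R)$, then strong monotonicity of the QP optimality map. Your variational-inequality Lipschitz argument is equivalent to the paper's step. The genuine gap is the final QP-gap estimate.

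Expand around $\boldsymbol\pi^*$ with $d:=\widehat{\boldsymbol\pi}-\boldsymbol\pi^*$:
\[
G=-(\mathbf g-Q\boldsymbol\pi^*)^\top d+\tfrac12\, d^\top Q d .
\]
The variational inequality at $\boldsymbol\pi^*$, tested at $\widehat{\boldsymbol\pi}\in\mathcal K$, gives $(\mathbf g-Q\boldsymbol\pi^*)^\top d\le 0$. So the first term is \emph{nonnegative}, and you get the lower bound $G\ge\frac12 d^\top Qd$, not the upper bound you wrote. With an active constraint this linear term is genuinely first order. For example, take $\mathcal K=[0,\infty)$, $Q=q>0$ and $g<0$. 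Then $\pi^*=0$, and at $\widehat\pi=\delta$ one has $G=|g|\delta+\frac12 q\delta^2$. Distance alone therefore yields only $G=O(\eta_{\rm e2e})$, and the $O(\varepsilon_{\rm ref})$ conclusion is lost. As a sanity check, your argument would apply verbatim to the strictly interior barrier output, for which the paper notes the gap can retain linear terms when constraints are active.

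The missing idea is to use approximate stationarity at $\widehat{\boldsymbol\pi}$ itself, not at $\boldsymbol\pi^*$. Since $\widehat{\boldsymbol\pi}$ solves the estimated QP up to residual $r^{\rm qp}$, there is $\widehat v\in N_{\mathcal K}(\widehat{\boldsymbol\pi})$ with $\|\widehat Q\widehat{\boldsymbol\pi}-\widehat{\mathbf g}+\widehat v\|\le r^{\rm qp}$. Reuse the same $\widehat v$ in the true-input operator and set $w:=Q^*\widehat{\boldsymbol\pi}-\mathbf g^*+\widehat v$; then $\|w\|\le C(e^{\rm adj}+r^{\rm qp})$. Expanding around $\widehat{\boldsymbol\pi}$ and using $\widehat v^\top(\boldsymbol\pi^*-\widehat{\boldsymbol\pi})\le 0$ gives
\[
G\le -w^\top(\boldsymbol\pi^*-\widehat{\boldsymbol\pi})-\tfrac{\underline\lambda_Q}{2}\|\boldsymbol\pi^*-\widehat{\boldsymbol\pi}\|^2\le\frac{\|w\|^2}{2\underline\lambda_Q}\le C\bigl(e^{\rm adj}+r^{\rm qp}\bigr)^2 .
\]
This is exactly the paper's residual transfer to $\mathcal T^*$ combined with the strong-convexity subgradient bound. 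Integrating then gives the $L^{1,1}$ estimate.

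Three smaller corrections:
\begin{itemize}
\item In Step 2, $\Phi$ depends on the state, so the chart only gives $\|\Delta\mathbf a_t\|\le C(\|\Delta\mathbf S_t\|+\|\Delta\mathbf u_t\|)$, not a bound by the control gap alone. The state term must be absorbed by Gronwall, or the forward estimate done with the primitive coefficients in $(\mathbf S,\mathbf u)$ as the paper does.
\item The ``higher moment plus H\"older'' alternative would not preserve the $\sqrt{\varepsilon_{\rm ref}}$ rate, because quadratic growth controls $\Delta\mathbf u$ only in $\mathbb H^2$. The essential adjoint bounds are genuinely needed.
\item At common deployment states, $\boldsymbol\pi^*$ is defined as the exact QP solution for $\vartheta^*$. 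The maximum condition holds only along $\mathbf S^*$, so your identification through it is unavailable there, though it is also not needed.
\end{itemize}
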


\begin{proof}
Equations~\eqref{eq:main_reference_control_rate}--
\eqref{eq:main_reference_adjoint_rate} follow from
Theorem~\ref{thm:reference_performance_adjoint}.  The common-state bridge
replaces the recovery input mismatch by
$O(\sqrt{\varepsilon_{\rm ref}/\kappa_{\rm qg}})$.  Substitution into the
QP stability and quadratic-gap bounds of
Proposition~\ref{prop:qp_policy_gap} gives
\eqref{eq:main_qp_policy_rate}--\eqref{eq:main_qp_gap_rate}; the detailed
argument is recorded in Corollary~\ref{cor:qp_consistency}.
\end{proof}

The distinction between paired trajectories and common deployment states is
essential: the first follows from quadratic growth, whereas the second is an
additional coverage/stability requirement.  Under the regular barrier-branch conditions in
Assumption~\ref{ass:barrier_kkt} and Online Supporting Information
Assumption~\ref{ass:barrier_recovery_stability},
Online Supporting Information Theorem~\ref{thm:ppgdpo_gap_barrier}
similarly gives
\[
\|\widehat{\mathbf u}^{\rm bar}-\mathbf u^*\|_{L^{2,2}}
\lesssim
\delta_{\rm ol}
+\delta_\zeta
+\delta_{\rm bar}
+\varepsilon_{\rm bar}
+\sqrt{\frac{\varepsilon_{\rm ref}}{\kappa_{\rm qg}}}.
\]
Unlike the QP gap, the barrier Hamiltonian gap can retain linear terms when
constraints are active.

\section{Numerical Validation}\label{sec:num_test}

We compare pointwise-feasible DPO deployment with B-PGDPO and QP-PGDPO, which
use the same estimated wealth-relevant adjoints and differ only in the Stage~2
solver.  All adjoints come from the fixed-latent \OLBPTT graph, and every
recovery table uses the estimated shifted row.  Because the reported portfolio
blocks are quadratic-affine conditional on the adjoints, QP-PGDPO is the
common-input exact local-QP benchmark; the barrier runs audit its central-path
approximation.  Unless otherwise stated, the frozen Stage~1 DPO actor is the
\emph{reference policy}; an \emph{analytical target} comes from a closed form
or an ordinary differential equation (ODE), and a \emph{full-shift}
decoder uses $\widehat{\boldsymbol\zeta}^X$ (a zero-shift diagnostic sets only
this row to zero).  The experiments test adjoint accuracy, local recovery,
predictable returns, a moving consumption fiber, external neural comparisons,
and action dimensions up to $n=100$.

For runs with $\rho>0$, the implementation restarts each held-out initial
state with discounts relative to that state.  Numerical adjoints and local
Hamiltonian diagnostics are therefore reported in the corresponding
current-value normalization; we suppress a ``cv'' subscript on hatted
numerical quantities.  Calendar-value adjoints are obtained by multiplying by
$e^{-\rho t}$.  The exact QP/KKT control is unchanged by this common positive
scaling.  The reported barrier parameter $\varepsilon_{\rm bar}$ is likewise
stated in current-value units (its calendar-value counterpart at fixed $t$ is
$e^{-\rho t}\varepsilon_{\rm bar}$).

\subsection{Questions, Held-Out States, and Diagnostics}\label{sec:benchmark_scope}

All policy and KKT diagnostics are evaluated cross-sectionally at held-out
initial time--state points, independently of training minibatches and within
the declared compact ranges.  Paired method comparisons reuse common states
and Brownian continuations; shifted-input replication uses an independent
outer--inner bank.  Online Supporting Information~\ref{app:diagnostic_protocol} records the
sampling ranges and seeds.
Table~\ref{tab:market_calibration_summary} records the market and constraint calibrations.  The unconstrained affine-factor analytical
validation is in Online Supporting Information~\ref{app:unconstrained_predictable_return_validation}
(Table~\ref{tab:jun_liu_mixed_adjoint_validation}).

Let $N_{\rm test}$ denote the number of held-out diagnostic states.  For an
available target control $\mathbf u_i^*$, the coordinatewise root mean squared
error (RMSE) is
\[
{\rm RMSE}_u
=\left[\frac{1}{N_{\rm test}d_u}
\sum_{i=1}^{N_{\rm test}}
\|\widehat{\mathbf u}_i-\mathbf u_i^*\|_2^2\right]^{1/2}.
\]
For a generic analytical target $A_i$, the normalized root mean squared error
(nRMSE) is
\[
 {\rm nRMSE}(A)
 :=\frac{\left[N_{\rm test}^{-1}\sum_i
 \|\widehat A_i-A_i\|_2^2\right]^{1/2}}
 {\left[N_{\rm test}^{-1}\sum_i\|A_i\|_2^2\right]^{1/2}}.
\]
At state $i$, let $\widehat{\mathbb H}_i$ denote the unregularized generalized
Hamiltonian formed from the common estimated adjoint input.  For a deployment
rule $\mathsf m$, define
\[
 {\rm Gain}_i(\mathsf m)
 :=\widehat{\mathbb H}_i(\mathbf u_i^{\mathsf m})
   -\widehat{\mathbb H}_i(\mathbf u_i^{\rm DPO}),
 \qquad
 {\rm QPGap}_i(\mathsf m)
 :=\widehat{\mathbb H}_i(\mathbf u_i^{\rm QP})
   -\widehat{\mathbb H}_i(\mathbf u_i^{\mathsf m}).
\]
These are conditional local generalized-Hamiltonian diagnostics under the
common estimated adjoint input; they are not direct estimates of global
expected-utility improvement.  The tables report statewise averages of these
quantities.  The shifted-input signal-to-noise ratio (SNR) is also formed statewise,
\[
 {\rm SNR}_i
 :=\frac{\|\widehat{\boldsymbol\zeta}_i^X\|_2}
 {\|\widehat{\operatorname{se}}
   (\widehat{\boldsymbol\zeta}_i^X)\|_2},
 \qquad
 {\rm SNR}:=N_{\rm test}^{-1}\sum_i{\rm SNR}_i,
\]
The reported SNR averages these statewise ratios; it is a Monte Carlo
resolution diagnostic, not a formal test statistic.

Local optimality is summarized by the KKT--projected-gradient (KKT--PG)
residual of the unregularized estimated Hamiltonian, together with the gain and
QP gap.  The residual is the projected displacement after a curvature-scaled
ascent step and vanishes at a constrained KKT point; B-PGDPO is evaluated
without its barrier term.  Online Supporting Information~\ref{app:diagnostic_protocol} gives the
formula and the independent and identically distributed (IID), edge, and
out-of-distribution (OOD) ranges.  Without an analytical constrained policy,
these are conditional recovery diagnostics rather than ground-truth errors.
Table~\ref{tab:deployment_rule} summarizes the three deployment rules used
throughout the numerical comparison.

\begin{table}[!htbp]
\centering
\small
\setlength{\tabcolsep}{3pt}
\caption{Deployment rules.  All methods share the same activation- or
chart-based Stage~1 rollout; B-PGDPO and QP-PGDPO differ only in Stage~2.}
\label{tab:deployment_rule}
\begin{tabular}{lccc}
\toprule
Method & Stage~1 rollout & Stage~2 recovery & Barrier term \\
\midrule
DPO & activation/chart & none & no \\
B-PGDPO & activation/chart & log-barrier solve & yes \\
QP-PGDPO & activation/chart & exact KKT/QP solve & no \\
\bottomrule
\end{tabular}
\end{table}

All reported controls satisfy the pointwise constraints; the errors and
residuals measure conditional local optimality under the common estimated
adjoints.

\paragraph{Conditional averaging and evaluation budgets.}
At held-out point $i$, each adjoint-level estimate averages $M$ raw derivatives,
\[
 \widehat A_i=M^{-1}\sum_{m=1}^M\widetilde A_i^{(m)}.
\]
Here $M$ is the continuation count for adjoint levels.  The shifted-input
budget uses even $M_{\rm out}$ signed paths
($M_{\rm out}/2$ antithetic pairs) and $M_{\rm in}$ common inner
continuations per branch.  Table~\ref{tab:main_mc_budgets} gives the budgets.

\begin{table}[!htbp]
\centering
\scriptsize
\setlength{\tabcolsep}{2.5pt}
\caption{Main checkpoint and Monte Carlo evaluation budgets.  $N_s$ is the
number of held-out states, $M$ is the number of continuations per state used
to estimate the adjoint levels $\lambda$ and $P$, and
$M_{\rm out}\!\times\!M_{\rm in}$ is the shifted-input projection budget.
Here $M_{\rm out}$ counts signed outer paths; antithetic pairing therefore
provides $M_{\rm out}/2$ independent outer units.}
\label{tab:main_mc_budgets}
\begin{tabularx}{\textwidth}{@{}lrrlX@{}}
\toprule
Experiment & $N_s$ & $M$ & $M_{\rm out}\!\times\!M_{\rm in}$
& Checkpoint/grid and recovery specification \\
\midrule
Adjoint validation, Table~\ref{tab:adjoint_level_validation}
& 128 & 2,048 & $512\!\times\!8$
& Same $n=100$ checkpoint; DPO-rollout and analytical-rollout estimators \\
No-short-sale recovery, Table~\ref{tab:noshort_ablation}
& 128 & 8,192
& \shortstack{$512\!\times\!8$ ($n=10$)\\
                 $1{,}024\!\times\!8$ ($n=100$)}
& 500 epochs; B-PGDPO $\varepsilon_{\rm bar}=10^{-6}$ \\
Training-budget study, Table~\ref{tab:training_budget_multiseed}
& 128 & 128 & $512\!\times\!4$
& Three seeds; checkpoints at 250/500/1,000 epochs; B-PGDPO $3\times10^{-6}$ \\
Predictable-return recovery, Table~\ref{tab:liu_ko_factor_bpgdpo}
& 128 & 8,192
& \shortstack{$512\!\times\!8$ ($n=10,50$)\\
                 $1{,}024\!\times\!8$ ($n=100$)}
& Fixed simplex; common estimated adjoints for DPO/B/QP \\
Constant-opportunity cap, Table~\ref{tab:consumption_cap_qp}
& 128 & 8,192
& \shortstack{$512\!\times\!8$ ($n=2,10$)\\
                 $1{,}024\!\times\!8$ ($n=100$)}
& 500 epochs; exact portfolio QP plus scalar consumption clipping \\
Unconstrained predictable-return validation,
Table~\ref{tab:jun_liu_mixed_adjoint_validation}
& 128 & 8,192 & $512\!\times\!16$
& Analytical policy; 128 rollout steps; independent replication bank \\
Barrier/switching audits, Tables~\ref{tab:barrier_final_audit}--\ref{tab:switching_region_audit}
& 64/split & 8,192 & $1{,}024\!\times\!8$
& $n=100$; double-precision central path \\
\bottomrule
\end{tabularx}
\end{table}

The no-short-sale comparison uses $\varepsilon_{\rm bar}=10^{-6}$; the
multiseed and cap/factor audits use $3\times10^{-6}$.  These are Stage~2
solver settings, not trained hyperparameters.

\subsection{Are the Harvested Adjoints Accurate?}

The constant-opportunity experiments in this and the next two subsections use
the terminal-wealth-only criterion
\[
 \mathbb E[U(X_T)],
 \qquad
 U(x)=\frac{x^{1-\gamma}}{1-\gamma},
\]
with no intermediate consumption, $r=0.03$, $\gamma=2$, $T=1.5$, and
$\Delta t=0.075$.  The unconstrained constant-opportunity solution is the
classical benchmark of \citet{merton1969lifetime,merton1971optimum}; its
no-short-sale constrained analogue is characterized by
\citet{XuShreve1992a,XuShreve1992b}.  This benchmark supplies analytical wealth
derivatives and the constrained optimal policy.
The terminal CRRA reward and wealth-homogeneous fixed-latent controls give the
benchmark-specific pathwise factorization
\begin{equation}\label{eq:benchmark_fixed_latent_crra_identity}
 \widetilde{\mathcal J}_t^a(x)
 =\frac{x^{1-\gamma}}{1-\gamma}\widetilde F_t^a,
 \qquad
 \widetilde P_t^a=-\frac{\gamma}{x}\widetilde\lambda_t^a,
 \qquad
 P_t^a=-\frac{\gamma}{X_t}\lambda_t^a.
\end{equation}
Here $\widetilde F_t^a$ is the raw pathwise random scaling factor left after
the CRRA wealth term has been factored out.  It depends on the fixed latent
control path, the market coefficients, and the future Brownian continuation,
but not on the current wealth argument $x$; the tilde indicates that no
conditional averaging has yet been applied.
In this factor-free benchmark, substituting
$P^{XX}=-(\gamma/X)\lambda^X$ into the local QP and dividing by the positive
factor $X\lambda^X$ gives the equivalent objective
\begin{equation}\label{eq:merton_crra_decoder_invariance}
 \left[
 \boldsymbol\alpha
 +\boldsymbol v
 \left(\frac{\boldsymbol\zeta^X}{\lambda^X}\right)^{\!\top}
 \right]^{\!\top}\boldsymbol\pi
 -\frac{\gamma}{2}
 \boldsymbol\pi^\top\boldsymbol\Sigma\boldsymbol\pi.
\end{equation}
Thus the absolute first-adjoint level cancels from the risky-weight decoder;
when the shift is zero, any positive multiplicative error in $\lambda^X$ leaves
the QP target unchanged.  The constant-opportunity experiment is therefore a
structural consistency test of the fixed-latent identities, the estimated
shift-to-level ratio, and the local decoder, rather than a test of sensitivity
to the absolute first-adjoint level.  The consumption-cap experiment tests that
absolute level, and the factor experiments test the mixed adjoint input.

This identity is used only as a graph-contract check; the shifted input is
estimated independently by the anchored nested regression
(see Table~\ref{tab:z_estimator_ablation} for the estimator comparison).  We evaluate 128
held-out time--wealth states with 2,048 continuations per state.  In the ``analytical-rollout \OLBPTT'' row, the analytical constrained policy
generates the rollouts, but the same finite-grid, continuation, \OLBPTT, and
conditional-projection pipeline estimates the adjoints.  Only the rollout
source is analytical.

\begin{table}[!htbp]
\centering
\scriptsize
\setlength{\tabcolsep}{2pt}
\caption{Direct adjoint validation in the $n=100$ borrowing-allowed
no-short-sale constant-opportunity benchmark.  The analytical-rollout row uses
the analytical constrained policy for rollouts but estimates the adjoints
through the same \OLBPTT pipeline.  The error columns in both rows use the
analytical optimal-policy fields as targets.  The DPO-rollout row therefore
combines DPO-reference mismatch and numerical estimation, whereas the
analytical-rollout row isolates finite-grid, continuation, and projection
error.}
\label{tab:adjoint_level_validation}
\begin{tabularx}{\textwidth}{@{}l*{6}{>{\centering\arraybackslash}X}@{}}
\toprule
Estimator
& \shortstack{$\lambda^X$\\mean rel.}
& \shortstack{$\lambda^X$\\nRMSE}
& \shortstack{$P^{XX}$\\nRMSE}
& \shortstack{$\mathbf Z^X$\\nRMSE}
& \shortstack{mean $\|\widehat\zeta^X\|$\\$/\|\widehat Z^X\|$}
& \shortstack{Shift\\SNR} \\
\midrule
DPO-rollout \OLBPTT
& $0.4565\%$
& $0.6775\%$
& $0.7312\%$
& $55.8958\%$
& $0.529\%$
& $23.48$ \\
Analytical-rollout \OLBPTT
& $0.0139\%$
& $0.0312\%$
& $0.0347\%$
& $0.3264\%$
& $0.235\%$
& $1.18$ \\
\bottomrule
\end{tabularx}
\end{table}

The analytical-rollout row removes DPO-reference mismatch and therefore
isolates continuation, discretization, and projection error.  It validates all
three wealth-relevant objects: the first-adjoint level, the fixed-latent
curvature, and the Brownian coefficient.  The DPO-rollout $\mathbf Z^X$
discrepancy,
$55.8958\%$, is measured against the analytical optimal-policy coefficient and
is therefore not a pure projection-error statistic.  The DPO reference also
has a small but clearly resolved shifted component: its mean magnitude is about
$0.53\%$ of $\|\widehat Z^X\|$, with mean statewise SNR $23.5$.
Under the analytical rollout, by contrast, the estimated residual is comparable
to its Monte Carlo uncertainty.  This contrast shows that the estimator does
not impose a zero shift; it is not interpreted as a general monotone relation
between policy optimality and shift magnitude.  Equation
\eqref{eq:benchmark_fixed_latent_crra_identity} explains the close agreement
between the first- and second-level diagnostics but is not used to construct
$\widehat\zeta^X$.  The analytical homogeneous benchmarks for which a shift
ground truth is available have $\boldsymbol\zeta^{X,*}=\boldsymbol0$, but a
finite-budget DPO reference need not satisfy that identity; the DPO-rollout SNR
in Table~\ref{tab:adjoint_level_validation} illustrates this distinction.
Accordingly, these experiments validate that the reference-policy shift is
estimated rather than hard-coded and that the estimator approaches a known
zero target under an analytical rollout.  A benchmark with a genuinely
nonzero \emph{optimal} wealth-row shift remains outside the present numerical
validation.

\subsection{Does Local Recovery Improve the DPO Reference?}\label{sec:ssban}

Using the same terminal-wealth CRRA calibration, a two-layer width-200
LeakyReLU network with softplus
outputs enforces $u\ge0$.  Every reported result uses the intended orthant
$\mathcal K=\mathbb R_+^n$ with no cap on $\mathbf1^\top u$, so the
residual cash weight may be negative.  The analytical benchmark policy is the
orthant-constrained constant-opportunity CRRA policy characterized by
\citet{XuShreve1992a,XuShreve1992b}, namely the no-short-sale constrained
analogue of Merton's constant-opportunity allocation.
For each $n$, DPO, B-PGDPO, and QP-PGDPO in Table~\ref{tab:noshort_ablation} use the same Stage~1 reference,
held-out states, estimated adjoint tuple, and indexed Brownian banks.  Hence
DPO versus QP-PGDPO isolates the effect of replacing the finite-budget Stage~1
neural action by a conditional local solve, whereas QP-PGDPO versus
B-PGDPO isolates the finite central-path approximation.

\begin{table}[!htbp]
\centering
\scriptsize
\setlength{\tabcolsep}{3.5pt}
\caption{No-short-sale benchmark with borrowing allowed.  Policy RMSE is
measured against the analytical orthant-constrained constant-opportunity CRRA
policy characterized by Xu and Shreve; KKT--PG, gain, and QP gap use the full
estimated generalized Hamiltonian.}
\label{tab:noshort_ablation}
\begin{tabular}{rlrrrr}
\toprule
$n$ & Method
& Coordinatewise $u$-RMSE
& Mean KKT--PG
& $\mathbb H$ gain vs DPO
& $\mathbb H$ gap to QP \\
\midrule
10 & DPO
& $9.0657\times10^{-2}$
& $1.9777\times10^{-2}$
& $0$
& $4.260\times10^{-3}$ \\
10 & B-PGDPO
& $1.667\times10^{-3}$
& $5.810\times10^{-4}$
& $4.256\times10^{-3}$
& $4.77\times10^{-6}$ \\
10 & QP-PGDPO
& $1.99\times10^{-4}$
& $5.52\times10^{-18}$
& $4.260\times10^{-3}$
& $0$ \\
\addlinespace
100 & DPO
& $1.0852\times10^{-2}$
& $2.2022\times10^{-2}$
& $0$
& $3.895\times10^{-3}$ \\
100 & B-PGDPO
& $9.059\times10^{-3}$
& $3.571\times10^{-4}$
& $3.889\times10^{-3}$
& $6.26\times10^{-6}$ \\
100 & QP-PGDPO
& $2.86\times10^{-4}$
& $1.45\times10^{-17}$
& $3.895\times10^{-3}$
& $0$ \\
\bottomrule
\end{tabular}
\end{table}

DPO satisfies the pointwise constraints but leaves a substantial local
residual.  B-PGDPO reduces it
sharply and has positive pointwise gain at all 128 states in both dimensions;
at $n=100$ its mean QP gap is $6.26\times10^{-6}$.  QP-PGDPO solves the
estimated orthant block to numerical tolerance and remains within
$2.86\times10^{-4}$ coordinatewise RMSE of the analytical $n=100$ policy.
KKT--PG and Hamiltonian gap are the primary recovery diagnostics because the
full-shift input is estimated rather than analytically supplied.  The
$n=100$ B-PGDPO action RMSE can remain close to the DPO RMSE even when its
Hamiltonian gap to QP is small.  This is consistent with perturbed
complementarity: on an active branch,
$\Gamma_j(\mathbf u_{\varepsilon})=\varepsilon_{\rm bar}/\nu_{\varepsilon,j}$,
so a weakly active coordinate with a small multiplier can exhibit visible
Euclidean displacement at little local objective loss.  Policy distance and
barrier value gap are therefore complementary rather than interchangeable
diagnostics.

\FloatBarrier

\subsection{Finite-Budget Robustness}

The three-seed checkpoint analysis is reported in
Online Supporting Information~\ref{sec:training_budget} and
Table~\ref{tab:training_budget_multiseed}.  Most Stage~1 improvement occurs by
500 epochs, after which policy RMSE, KKT--PG, and the QP gap largely plateau.
At 1,000 epochs, QP-PGDPO has coordinatewise RMSE
$1.37\times10^{-4}$ and solves the estimated QP to numerical tolerance.
This is a finite-budget robustness diagnostic rather than a claim that no
other architecture or substantially larger training budget could improve the
DPO reference.
The decoder initialization diagnostic is reported in
Table~\ref{tab:initialization_diagnostic}.

\subsection{Does Recovery Handle Predictable Investment Opportunities?}
\label{sec:state-dependent}

We next let a scalar Ornstein--Uhlenbeck predictor drive affine excess returns
and correlate its shock with risky-return shocks.  This is an affine
predictable-return specification in the intertemporal portfolio tradition of
\citet{merton1973intertemporal,kim1996dynamic,liu2007portfolio}.  Its
unconstrained analytical structure overlaps the factor experiment in
\citet{huh2025breaking}; the distinct question here is whether the harvested
mixed and shifted adjoint inputs support \emph{constrained} local recovery.
The overlapping unconstrained analytical validation is therefore retained in
Online Supporting Information~\ref{app:unconstrained_predictable_return_validation} rather than
used as the principal evidence in this subsection.

The risky weights obey the fixed no-short/no-borrowing simplex
\[
 \mathcal K
 =\{\boldsymbol\pi\in\mathbb R^n:
       \boldsymbol\pi\ge0,\ \mathbf1^\top\boldsymbol\pi\le1\}.
\]
Thus the feasible fiber in this experiment does \emph{not} vary with time or
the predictor.  What varies with the factor $Y_t$ are the excess returns and
the resulting local-objective and adjoint inputs.  The separate experiment in
Section~\ref{sec:cons_cap} addresses a genuinely state-dependent feasible
fiber.

Conditional on the harvested adjoints, the risky-weight block is
\[
\max_{\boldsymbol\pi\in\mathcal K}
\left\{
\mathbf g(t,x,y)^\top\boldsymbol\pi
-\frac12\boldsymbol\pi^\top Q(t,x,y)\boldsymbol\pi
\right\},
\]
where
\[
\mathbf g
=x\left[
\widehat\lambda^X\boldsymbol\alpha(y)
+\boldsymbol\Sigma_{RY}(y)\widehat P^{YX}
+\boldsymbol v(y)(\widehat{\boldsymbol\zeta}^X)^\top
\right],
\qquad
Q=x^2[-\widehat P^{XX}]\boldsymbol\Sigma(y).
\]
Here $\boldsymbol\Sigma_{RY}$ is the risky-return/factor cross-covariance and
$\boldsymbol v$ is the risky-return diffusion loading in the Brownian basis
used for $\widehat{\boldsymbol\zeta}^X$.  The mixed block $P^{YX}$ is harvested
directly, while the shifted wealth row is estimated by
\eqref{eq:anchored_shift_regression}.  There is no closed-form constrained
policy target for this experiment.  Consequently QP-PGDPO is an exact solve
only of the local quadratic program induced by the supplied estimated
adjoints.  Its reported $10^{-17}$-scale KKT--PG residual is numerical zero---a
trace of floating-point solver and projection arithmetic---not a ground-truth
policy error of zero.

Each dimension in Table~\ref{tab:liu_ko_factor_bpgdpo} uses 128 held-out states and 8,192 continuations per state.
The $n=10,50$ checkpoints use 16 rollout steps and training batch size 1,024;
the $n=100$ checkpoint uses 12 steps and batch size 512.  B-PGDPO uses
$\varepsilon_{\rm bar}=3\times10^{-6}$, and all three rules use the same
Stage~1 checkpoint and harvested adjoint tuple.

\begin{table}[!htbp]
\centering
\scriptsize
\setlength{\tabcolsep}{2.7pt}
\caption{Constrained affine predictable-return recovery with the anchored
shifted input.  Residuals and gains are conditional on the common harvested
adjoint tuple; QP-PGDPO is not an analytical policy benchmark.}
\label{tab:liu_ko_factor_bpgdpo}
\begin{tabular}{rccccc rr}
\toprule
& \multicolumn{3}{c}{Mean KKT--PG residual}
& \multicolumn{2}{c}{$\mathbb H$ gain vs DPO}
& \multicolumn{2}{c}{Shift diagnostics} \\
\cmidrule(lr){2-4}\cmidrule(lr){5-6}\cmidrule(lr){7-8}
$n$
& DPO & B-PGDPO & QP-PGDPO
& B-PGDPO & QP-PGDPO
& $\|\widehat{\boldsymbol\zeta}^X\|/
   \|\widehat{\mathbf Z}^X\|$
& SNR \\
\midrule
10
& $1.6205\times10^{-2}$
& $2.61\times10^{-4}$
& $1.25\times10^{-17}$
& $5.482\times10^{-3}$
& $5.499\times10^{-3}$
& $0.319\%$ & $2.27$ \\
50
& $2.2639\times10^{-2}$
& $1.098\times10^{-3}$
& $3.76\times10^{-17}$
& $1.0977\times10^{-2}$
& $1.1096\times10^{-2}$
& $0.474\%$ & $2.21$ \\
100
& $2.4702\times10^{-2}$
& $2.006\times10^{-3}$
& $5.64\times10^{-17}$
& $1.5725\times10^{-2}$
& $1.5979\times10^{-2}$
& $0.308\%$ & $1.63$ \\
\bottomrule
\end{tabular}
\end{table}

QP-PGDPO has positive pointwise gain at every held-out state and solves the
estimated local QP to numerical tolerance.  B-PGDPO has positive mean gain at
all three dimensions, with a small finite-barrier gap along the central path.
These results
show that the recovery step remains effective when the local portfolio
coefficients and the mixed adjoint input depend on a predictor.  They do not
establish an analytical solution of the constrained factor problem.  The
independent-bank shift replication and the alternative frozen-feedback graph
diagnostics are reported in Online Supporting Information~\ref{app:lko_recovery_graph_audit}.

\FloatBarrier

\subsection{Does the Method Handle a State-Dependent Feasible Fiber?}\label{sec:cons_cap}

For $0\le C_t\le\bar mX_t$, the chart
\begin{equation}\label{eq:proportional_cap_chart}
 C_t=X_tc_t,\qquad c_t\in[0,\bar m]
\end{equation}
converts the moving bound to a fixed latent box; numerically, the lower endpoint
is $10^{-8}X_t$ rather than $0$ to avoid evaluating CRRA utility at zero.
The homothetic finite-horizon consumption--investment structure follows the
classical Merton framework \citep{merton1971optimum,karatzas1987optimal}.  The
specific proportional upper cap in \eqref{eq:proportional_cap_chart} is imposed
here directly through the scalar KKT condition; the analytical benchmark does
not rely on a separate closed-form theorem for this cap.  \OLBPTT holds $c_t$
fixed while differentiating the decoder, so $D_{X_t}^{\rm ol}C_t=c_t$.  At a
fixed state, consumption is the maximizer of the concave scalar KKT problem.  In the
reported CRRA calibration, $e^{\rho t}\lambda_t^X$ lies in the range of $U'$
and the maximizer has the clipping form
\begin{equation}\label{eq:proportional_cap_clipping}
 C_t^*=\min\!\left\{\max\!\left\{(U')^{-1}(e^{\rho t}\lambda_t^X),0\right\},
 \bar mX_t\right\}.
\end{equation}
Outside that range, the same scalar KKT problem selects the appropriate
endpoint rather than requiring an ordinary inverse of $U'$.

For the analytical benchmark, let $\tau:=T-t$.  The implementation uses the
same relative-discount (current-value) normalization as Stage~1, so define
\[
V^{\rm cv}(t,x)
:=e^{\rho t}V(t,x)
=a_M(\tau)\frac{x^{1-\gamma}}{1-\gamma}.
\]
After substituting the exact simplex-constrained risky allocation, $a_M$ is
obtained from the corresponding scalar CRRA coefficient ODE.  The analytical
actor uses that risky allocation and the proportional consumption rate
$\min\{\max\{a_M(\tau)^{-1/\gamma},0\},\bar m\}$.  In this subsection the
analytical benchmark fields use the same current-value normalization as the
numerical estimates:
\[
\lambda^{X,\mathrm{an}}=a_M(\tau)X^{-\gamma},\qquad
P^{XX,\mathrm{an}}=-\frac{\gamma}{X}\lambda^{X,\mathrm{an}},\qquad
\mathbf Z^{X,\mathrm{an}}
=P^{XX,\mathrm{an}}\boldsymbol\sigma_X^{\mathrm{an}},\qquad
\boldsymbol\zeta^{X,\mathrm{an}}=\boldsymbol0.
\]
The corresponding calendar-value fields are obtained by multiplying these
adjoint quantities by $e^{-\rho t}$.  Thus the scalar first-order condition can
be written equivalently as
$U'(C_t)=\lambda_t^{X,\rm cv}$ or
$e^{-\rho t}U'(C_t)=\lambda_t^X$; both give the clipping rule above.
The analytical-rollout row uses this actor only for rollout generation; the
same numerical \OLBPTT/projection/recovery pipeline estimates the adjoints,
shifted input, and recovered action.  The risky portfolio uses the full-shift
exact local QP.  The
comparison in Table~\ref{tab:consumption_cap_qp} uses 128 states and 8,192 continuations per state.

\begin{table}[!htbp]
\centering
\small
\setlength{\tabcolsep}{3pt}
\caption{Consumption-cap benchmark.  QP-PGDPO combines the full-shift exact
portfolio QP with scalar KKT clipping.  Panel C applies the same pipeline under
the analytical rollout policy to display the pipeline's finite-grid numerical
baseline.  Panel A reports RMSE against the analytical optimal controls,
whereas Panel B reports optimality residuals for the local problem formed from
the estimated adjoints.}
\label{tab:consumption_cap_qp}
\begin{tabular}{lcccc}
\toprule
\multicolumn{5}{c}{Panel A. RMSE against analytical KKT target} \\
\midrule
& \multicolumn{2}{c}{$u$-RMSE}
& \multicolumn{2}{c}{$C$-RMSE} \\
\cmidrule(lr){2-3}\cmidrule(lr){4-5}
$n$ & DPO & QP-PGDPO & DPO & QP-PGDPO \\
\midrule
2
& $5.0345\times10^{-2}$
& $2.318\times10^{-3}$
& $1.3108\times10^{-1}$
& $1.333\times10^{-3}$ \\
10
& $2.3940\times10^{-2}$
& $8.88\times10^{-4}$
& $1.2759\times10^{-1}$
& $1.285\times10^{-3}$ \\
100
& $8.450\times10^{-3}$
& $1.93\times10^{-4}$
& $1.2502\times10^{-1}$
& $1.795\times10^{-3}$ \\
\midrule
\multicolumn{5}{c}{Panel B. Local KKT--PG residuals} \\
\midrule
& \multicolumn{2}{c}{Portfolio KKT--PG}
& \multicolumn{2}{c}{Consumption KKT--PG} \\
\cmidrule(lr){2-3}\cmidrule(lr){4-5}
$n$ & DPO & QP-PGDPO & DPO & QP-PGDPO \\
\midrule
2
& $1.0673\times10^{-2}$
& $9.86\times10^{-18}$
& $4.3405\times10^{-1}$
& $6.55\times10^{-17}$ \\
10
& $8.292\times10^{-3}$
& $2.61\times10^{-17}$
& $4.1917\times10^{-1}$
& $2.31\times10^{-17}$ \\
100
& $1.0901\times10^{-2}$
& $1.07\times10^{-16}$
& $4.1299\times10^{-1}$
& $2.37\times10^{-17}$ \\
\midrule
\multicolumn{5}{c}{Panel C. Analytical-rollout pipeline at $n=100$} \\
\midrule
Estimator & $u$ RMSE & $C$ RMSE & Port. KKT--PG & Cons. KKT--PG \\
\midrule
Analytical-rollout pipeline
& $1.03\times10^{-4}$
& $2.788\times10^{-3}$
& $1.09\times10^{-16}$
& $1.14\times10^{-16}$ \\
\bottomrule
\end{tabular}
\end{table}

The full-shift portfolio recovery lowers RMSE against the analytical optimum
by more than an order of magnitude at every dimension.  At the same time it
solves the \emph{estimated-adjoint} KKT system to numerical tolerance.  These
are different diagnostics: QP-PGDPO is the conditional optimizer of that
estimated quadratic block, so its KKT residual can be essentially zero even
when adjoint and discretization errors leave a nonzero analytical-policy RMSE.
The displayed $10^{-17}$--$10^{-16}$ values are the floating-point residuals
of the QP solve and projected-gradient calculation, rather than substantive
KKT violations.
Consumption is particularly sensitive to the absolute first-adjoint level
through \eqref{eq:proportional_cap_clipping}.  At $n=100$, QP-PGDPO lowers its
RMSE from $1.250\times10^{-1}$ to $1.795\times10^{-3}$.  The
analytical-rollout pipeline gives $2.788\times10^{-3}$ because it still
estimates the adjoints on a finite grid with Monte Carlo continuation; it does
not insert the analytical adjoints.  The small reversal is therefore not
interpreted as DPO-rollout superiority.  Finite-barrier performance is
audited separately below.

\FloatBarrier

\subsection{External Neural Comparison and Computational Scope}
\label{sec:external_comparison}

The external comparison in Table~\ref{tab:hpin_external} asks a narrower question than a general method
ranking: does hard feasibility plus global value and partial differential
equation (PDE) residual training recover the same pointwise local KKT condition
as adjoint-to-control recovery?  The
adapted hPINN-style baseline trains a value network and a pointwise-feasible
control head
with HJB, terminal, and soft projected-KKT residuals.  Its simplex output
enforces the portfolio constraint directly, consistent with the fact that
hPINN and ICPINN constructions can encode hard constraints
\citep{lu2021physics,tan2025improved}.  It does not use a local QP/KKT decoder.

\begingroup
\begin{table}[!htbp]
\centering
\footnotesize
\setlength{\tabcolsep}{3pt}
\renewcommand{\arraystretch}{1.12}
\caption{Illustrative hPINN-style diagnostics at $n=100$.  The methods use the
same model calibrations, but their architectures, objectives, and tuning
budgets differ; the table is therefore not a general method ranking.  The
entry ``n/e'' denotes that B-PGDPO was not evaluated in the archived
supplementary affine-factor run.}
\label{tab:hpin_external}
\begin{tabularx}{\textwidth}{@{}>{\raggedright\arraybackslash}X
*{4}{>{\centering\arraybackslash}X}@{}}
\toprule
\multicolumn{5}{@{}l}{\textbf{Panel A. Mean KKT--PG residuals}} \\
\midrule
Benchmark & DPO & hPINN & B-PGDPO & QP-PGDPO \\
\midrule
Constrained affine predictable-return & $2.470\times10^{-2}$ & $3.78\times10^{-2}$ & $2.006\times10^{-3}$ & $5.64\times10^{-17}$ \\
Supplementary affine factor $(d_Y=3)$ & $4.233\times10^{-2}$ & $1.17\times10^{-1}$ & \text{n/e} & $1.53\times10^{-16}$ \\
\midrule
\multicolumn{5}{@{}l}{\textbf{Panel B. Auxiliary diagnostics}} \\
\midrule
Constrained affine predictable-return
& \multicolumn{2}{c}{hPINN gain $-8.63\times10^{-3}$}
& \multicolumn{2}{c}{QP-PGDPO gain $1.598\times10^{-2}$} \\
Supplementary affine factor $(d_Y=3)$
& \multicolumn{2}{c}{hPINN time $74.9$ min}
& \multicolumn{2}{c}{DPO+\OLBPTT+shift+QP time $5.02$ min} \\
\bottomrule
\end{tabularx}
\end{table}
\endgroup

In the constrained affine predictable-return diagnostic, the pointwise-feasible
hPINN head has mean KKT--PG
$3.78\times10^{-2}$ and negative Hamiltonian gain relative to DPO, whereas
QP-PGDPO solves the supplied-adjoint local QP to numerical tolerance and has
positive gain.  The affine-factor timing row is hardware- and
implementation-specific; it does not establish uniform speed superiority over
PINN methods.  It only records that this research implementation obtains the
reported local KKT accuracy without constructing a value-function grid.  The
architecture and timing decomposition are in
Online Supporting Information~\ref{app:external_comparison_details}
(with timings reported in Table~\ref{tab:timing_profile}).

Deep BSDE methods approximate a BSDE representation of a value or marginal
value and its Brownian integrand along simulated trajectories
\citep{han2018solving,e2017deep}.  They are not intrinsically unable to handle
pointwise control constraints: feasibility can be introduced through a
constrained control parameterization, while reflected or variational
formulations address different obstacle-type structures.  We do not report a
numerical Deep BSDE row because no directly matched constrained implementation
is used here.  A fair comparison would have to specify the constraint-handling
adaptation and match terminal data, time discretization, training and held-out
states, and computational budget on a low-dimensional common benchmark.  Its
omission is therefore not evidence of numerical inferiority.  The
stochastic-maximum-principle learning method of \citet{ji2022solving} is closer
conceptually, but the present architecture is distinguished by using a
reference rollout for fixed-latent first/second-adjoint harvesting and a
separate local KKT/QP or barrier recovery step.

Finally, the $n=100$ rows enlarge the constrained action vector while the state
remains low dimensional ($d_Y\le5$ in the timing audit).  The scalability
claim therefore concerns local control-block recovery, not high-dimensional
state-space PDE or BSDE approximation.

\FloatBarrier

\subsection{Barrier and Switching Audits}

Detailed central-path, barrier-sensitivity, and active-set-transition audits
are reported in Online Supporting Information~\ref{app:barrier_sensitivity}
and Tables~\ref{tab:barrier_final_audit}, \ref{tab:switching_region_audit},
and~\ref{tab:barrier_eps_audit}.
The split-wise audit checks numerical convergence and finite-barrier bias
against the exact QP, whereas the region-stratified audit asks whether the gain
is concentrated near active-set transitions.  At the common
$\varepsilon_{\rm bar}=3\times10^{-6}$ specification, every evaluated cap-model
state satisfies the stopping rule and has positive unregularized gain.  Mean
gaps to the exact local QP are about $2.22\times10^{-4}$ in the
constant-opportunity cap and $2.58$--$2.83\times10^{-4}$ in the affine-factor
cap; the near-switching cells retain positive pointwise gain.  These checks
validate the finite-barrier central path against the exact conditional QP,
not a distinct nonquadratic benchmark.

\section{Conclusion}\label{sec:conclusion}

We developed a solver-neutral adjoint-to-control framework for continuous-time
portfolio choice under smooth pointwise constraints.  A pointwise-feasible
Stage~1 DPO actor generates the default reference rollout; after detachment,
fixed-latent \OLBPTT harvests its chart-reduced adjoints and a local generalized
Hamiltonian recovers the constrained action.  The primitive formulation remains open loop and does not require an HJB solution or a Markov
admissible class.

The second-order adjoint supplies the curvature needed because risky weights
enter the diffusion.  QP-PGDPO solves the resulting quadratic-affine block
exactly conditional on the estimated adjoints, while B-PGDPO follows a smooth
finite-barrier approximation to the same local KKT problem.  The end-to-end
stability theorem converts local reference value loss and numerical
adjoint/QP errors into recovered-policy and local QP-gap bounds.  Analytical-rollout
constant- and predictable-opportunity benchmarks validate the full numerical
adjoint-acquisition pipeline.  Common-input tests show that recovery reduces the
generalized-Hamiltonian/KKT residual left by finite-budget DPO across training
budgets, predictor-dependent hedging inputs, a wealth-dependent consumption cap,
and up to 100 risky assets.  External neural and barrier audits delimit the
comparison and solver scope.

The conclusions remain local and conditional: charts cover regular patches,
QP exactness depends on the supplied adjoints, and the experiments establish
scalability in the action dimension rather than dimension-free state-space
complexity.  Nonsmooth fibers, nonregular switching geometry, pure state
constraints, and genuinely nonquadratic recovery remain for future work.

\section*{Funding}

This work was supported by National Research Foundation of Korea (NRF) grants
funded by the Korea government through the Ministry of Science and ICT (MSIT)
(RS-2025-00562904 and RS-2024-00355646).

\section*{Conflict of Interest}

The authors declare that they have no conflicts of interest.

\section*{Data Availability Statement}

No proprietary or third-party data were used in this study.  All numerical
data were generated synthetically from the models and parameter specifications
described in the article and its Supporting Information.  The source code,
configuration files, and validation scripts accompanying this study are
publicly available in the fixed GitHub release
\texttt{mf-revision-2026-08-29} at
\url{https://github.com/huhjeonggyu/pgdpo-adjoint-control-recovery/releases/tag/mf-revision-2026-08-29}.
The repository documents the scope of self-contained reruns and any additional
assets required for historical paper-scale reruns.

\section*{Ethics Statement}

Ethics approval and consent to participate are not applicable because this
study did not involve human participants, animals, or identifiable personal
data.

\clearpage
\begin{center}
{\Large\bfseries Published Appendix}\par
\medskip
{\small Core mathematical proofs: fixed-latent OL-BPTT correspondence,
reference-policy stability, and estimated-input QP recovery.}
\end{center}
\medskip
\appendix
\numberwithin{equation}{section}
\counterwithin{table}{section}
\counterwithin{figure}{section}
\counterwithin{algorithm}{section}
\section{Notation and Coordinate Conventions}
\label{app:notation_summary}

The state is $\mathbf S=(X,Y)$ and the reduced control is
$\mathbf u=(\boldsymbol\pi^\top,C)^\top$, with
$\pi_0=1-\mathbf1^\top\boldsymbol\pi$.  Superscripts $*$, $\star$,
${\rm ref}$, and ${\rm an}$ denote, respectively, an optimal/target object, a
selected local KKT solution, a frozen-reference object, and an analytical
benchmark.  Tildes and hats denote raw pathwise quantities and numerical
estimates; barred coefficients are chart-composed and differentiated at fixed
latent coordinate.

For $d\ge1$, $\mathbb S_{\rm sym}^d$ is the space of real symmetric
matrices, distinct from the process space $\mathbb S^2$ below.  We use
$\langle A,B\rangle_F=\operatorname{tr}(A^\top B)$ and $\|\cdot\|_F$,
$\DOL$ for fixed-latent differentiation, $\sg$ for stop-gradient,
$\Pi_A$ and $N_A(x)$ for Euclidean projection and the normal cone, and the
standard meanings of $\operatorname{diag}$, $\operatorname{dist}$,
$\lambda_{\max}$, $I$, and $\mathbb I_A$.

For an adapted vector- or matrix-valued process $\chi$ and a
Brownian-integrand process $\mathsf Z$, define
\[
\|\chi\|_{\mathbb S^2}
:=
\left(\mathbb E\sup_{0\le t\le T}\|\chi_t\|^2\right)^{1/2},
\qquad
\|\mathsf Z\|_{\mathbb H^2}
:=
\left(\mathbb E\int_0^T\|\mathsf Z_t\|^2dt\right)^{1/2},
\]
using the Frobenius norm for matrix-valued processes.  For
$1\le p,q\le\infty$, define
\begin{equation}\label{eq:mixed_norm_definition}
\|f\|_{L^{q,p}}
:=
\left(
\int_0^T
\left(\mathbb E\|f_t\|^p\right)^{q/p}dt
\right)^{1/q},
\end{equation}
with the usual essential-supremum modifications.

Tables~\ref{tab:core_symbols} and \ref{tab:discrete_symbols} list the
recurring notation; proof-specific symbols and recovery-error parameters are
defined where introduced.

\begin{table}[!htbp]
\centering
\footnotesize
\renewcommand{\arraystretch}{1.08}
\setlength{\tabcolsep}{4.5pt}
\caption{Model, chart, Hamiltonian, adjoint, and recovery notation.}
\label{tab:core_symbols}
\begin{tabularx}{\textwidth}{@{}p{4.4cm}X@{}}
\toprule
\textbf{Symbol} & \textbf{Meaning} \\
\midrule

$\mathbf S=(X,Y)$, $d_S=1+d_Y$
&
Wealth--factor state and its dimension.  The factor-free specialization has
$d_Y=0$ and $\mathbf S=X$.
\\

$\mathbf W$, $d_W$
&
Brownian motion and the number of Brownian risk drivers.
\\

$\boldsymbol\alpha$, $\boldsymbol v$,
$\boldsymbol\Sigma=\boldsymbol v\boldsymbol v^\top$
&
Risky-asset excess-return vector, return loading, and risky-return covariance
matrix.
\\

$\boldsymbol\Sigma_{RY}
=\boldsymbol v\boldsymbol\sigma_Y^\top$
&
Instantaneous return--factor cross-covariance matrix.
\\

$\boldsymbol b_S$, $\boldsymbol\sigma_S$
&
Joint state drift and
$d_S\times d_W$ state diffusion matrix.
\\

$\boldsymbol\pi$, $\pi_0$, $C$,
$\mathbf u=(\boldsymbol\pi^\top,C)^\top$
&
Risky-asset weights, residual risk-free weight, consumption rate, and the
reduced control vector; $\pi_0=1-\mathbf1^\top\boldsymbol\pi$.
\\

$\Gamma(t,\mathbf s;\mathbf u)$,
$\mathcal U(t,\mathbf s)$, $\nu$
&
Pointwise inequality map, feasible control fiber, and inequality-multiplier
vector.  The convention is $\Gamma\ge\boldsymbol0$.
\\

$\mathbf a\in\mathfrak A$, $\Psi$
&
Nonredundant latent coordinate, its state-independent local domain, and the
state-dependent feasible-fiber chart
$\mathbf u=\Psi(t,\mathbf s,\mathbf a)$.
\\

$\Theta=(\theta,\phi)$, $\varpi$,
$J$, $J_{\rm train}$
&
Policy parameters, initial time--state sampling law, the control objective,
and the initial-law-averaged policy-training objective.
\\

$\ell$, $\mathcal H$, $\bar{\mathcal H}$, $\mathbb H$
&
Running reward, primitive first-order Hamiltonian, fixed-latent
chart-reduced Hamiltonian, and second-order generalized Hamiltonian.
\\

$\lambda\in\mathbb R^{d_S}$,
$\mathbf Z\in\mathbb R^{d_S\times d_W}$
&
First-order adjoint and its Brownian martingale coefficient.
\\

$P\in\mathbb S_{\rm sym}^{d_S}$,
$\mathbf R=(R^1,\ldots,R^{d_W})$
&
Matrix second-order adjoint and its matrix-valued Brownian martingale
coefficients, with $R^\ell\in\mathbb S_{\rm sym}^{d_S}$.
\\

$\boldsymbol\zeta
=\mathbf Z-P\boldsymbol\sigma_S$
&
Shifted martingale adjoint used in the generalized Hamiltonian.
\\

\shortstack[l]{
$e_X$; $\lambda^X,\mathbf Z^X,\boldsymbol\zeta^X$\\
$P^{XX},P^{XY},P^{YX}$}
&
$e_X=(1,0,\ldots,0)^\top$ selects the wealth component or wealth row:
$\lambda^X=e_X^\top\lambda$,
$\mathbf Z^X=e_X^\top\mathbf Z$,
$\boldsymbol\zeta^X=e_X^\top\boldsymbol\zeta$, and
$P^{XX}=e_X^\top Pe_X$.
The mixed blocks satisfy $P^{YX}=(P^{XY})^\top$.
\\

$\vartheta=(\lambda,\boldsymbol\zeta,P)$,
$\widehat\vartheta$
&
Exact stagewise adjoint input and its numerical estimate.  In the portfolio
specialization, only the wealth-relevant components may be passed to the
solver.
\\

$\mathcal R(t,\mathbf s;\vartheta)$,
$\mathbf u^\star(\vartheta)$
&
Local set-valued recovery correspondence on the selected feasible branch and
a selected local KKT/recovery solution.
\\

$\mathbf g(t,\mathbf s;\vartheta)$,
$Q(t,\mathbf s;\vartheta)$,
$\mathcal K(t,\mathbf s)$
&
Linear coefficient, quadratic matrix, and convex risky-weight feasible set of
the local QP.  Under $P^{XX}<0$ and
$\boldsymbol\Sigma\succ0$, $Q\succ0$.
\\

\bottomrule
\end{tabularx}
\end{table}

\begin{table}[!htbp]
\centering
\footnotesize
\renewcommand{\arraystretch}{1.08}
\setlength{\tabcolsep}{4.5pt}
\caption{Discrete OL-BPTT sensitivities, projections, and shifted-input
estimation.}
\label{tab:discrete_symbols}
\begin{tabularx}{\textwidth}{@{}p{4.4cm}X@{}}
\toprule
\textbf{Symbol} & \textbf{Meaning} \\
\midrule

$h$, $t_k$, $\Delta\mathbf W_k$
&
Uniform time step, grid point, and Brownian increment over
$[t_k,t_{k+1}]$.
\\

$\sg$, $\bar{\mathbf a}_k$,
$\mathbf u_k^{\rm ref}$,
$\DOL_{\mathbf S_k}$
&
Stop-gradient operator, detached latent output, corresponding reference
control, and fixed-latent open-loop state derivative.
\\

$\mathcal M_\ell^{(m)}$, $\mathfrak r_k$,
$\mathfrak g_N$, $\mathcal J_k$
&
Activity mask for rollout $m$, one-step reward, terminal reward, and
continuation payoff.
\\

$\widetilde\lambda_k$, $\widetilde P_k$
&
Raw pathwise gradient and Hessian of the fixed-latent continuation payoff.
They depend on future simulated shocks.
\\

$A_k$, $B_k^j$
&
Fixed-latent state-transition Jacobian and Hessian of the $j$th component of
the next state.
\\

$\mathbb E_k[\cdot]$, $\lambda_k$, $P_k$
&
Conditional expectation given $\mathcal F_{t_k}$ and the population adapted
projections
$\lambda_k=\mathbb E_k[\widetilde\lambda_k]$,
$P_k=\mathbb E_k[\widetilde P_k]$.
\\

$\widehat\lambda_k$, $\widehat P_k$
&
Monte Carlo or regression approximations of the population projections
$\lambda_k$ and $P_k$.
\\

$\mathbf Z_k^h$, $R_k^{h,\ell}$
&
Intervalwise first-chaos projection coefficients associated with
$[t_k,t_{k+1}]$ for the first- and second-adjoint processes.
\\

$\xi_{k+1}$, $\xi_{k+1}^P$
&
One-step components with zero conditional mean that are orthogonal to the
span of the Brownian increment over the step.  They are retained explicitly in
the discrete adjoint recursions.
\\

$\mathfrak D_k^{(2),{\rm ol}}$,
$\mathfrak e_k^{(1)}$, $\mathfrak e_k^{(2)}$,
$\mathcal E_h$
&
Discrete second-adjoint driver, first- and second-order remainder arrays, and
their aggregate maximal-error criterion.
\\

$M_{\rm out}$, $M_{\rm in}$, $D_k$, $\kappa_Z$
&
Number of signed outer Brownian paths (an even number, forming
$M_{\rm out}/2$ independent antithetic pairs), number of inner
common-random-number continuations per branch, Brownian design matrix, and
ridge parameter in the shifted-input regression.
\\

$c_k^{X,{\rm anc}}$
&
Columnized wealth-row anchor,
\[
c_k^{X,{\rm anc}}
=
\left[
e_X^\top\widehat P_k
\boldsymbol\sigma_S
(t_k,\mathbf S_k;\mathbf u_k^{\rm ref})
\right]^\top.
\]
\\

$\widehat{\boldsymbol\zeta}_k^X$,
$\widehat{\mathbf Z}_k^X$
&
Estimated shifted wealth row and, when required, the reconstructed wealth-row
Brownian coefficient.
\\

\bottomrule
\end{tabularx}
\end{table}

Uniform chart and recovery statements are understood on the compact
localizations and regular KKT branches stated in the corresponding results.

\section{Technical Results for Fixed-Latent \OLBPTT}
\label{app:proof_bptt_pmp}

This published appendix section states the fixed-latent one-step expansion used in
Section~\ref{sec:olbptt_pmp_correspondence} and proves
Theorem~\ref{thm:bptt_pmp_constrained}.  The proof has two layers.  The
algebraic layer consists of the one-step expansion and the exact first- and
second-order pathwise chain rules.  The probabilistic layer consists of
conditional projection, control of the orthogonal projection residuals and
higher-order remainders, identification of the discrete adjoint drivers, and
stability of the limiting vector/matrix BSDE system.

\subsection{Fixed-Latent Setup and Variable Initial Times}
\label{app:fixed_latent_setup}

Work on the common compact chart patch from
Assumptions~\ref{ass:chart_pmp_regularity} and
\ref{ass:bptt_pmp_reg}.  On the uniform grid
\[
0=t_0<t_1<\cdots<t_N=T,
\qquad
h:=T/N,
\]
the frozen actor generates
\[
\bar{\mathbf a}_k
:=
\sg\!\left(
\mathbf a_\Theta(t_k,\mathbf S_k)
\right),
\qquad
\mathbf u_k^{\rm ref}
:=
\Psi(t_k,\mathbf S_k,\bar{\mathbf a}_k).
\]
For a state derivative initiated at time $t_k$, all later latent values
$\{\bar{\mathbf a}_\ell\}_{\ell\ge k}$ are held fixed.  Thus derivatives
through the actor output vanish, while the state derivatives of the chart
remain.

The experiments use smooth feasible output maps.  For example, full portfolio
weights under no borrowing and no short sales can be generated by
\[
\pi_i
=
\frac{\exp(o_i/\tau_{\rm sm})}
{\sum_{j=0}^n\exp(o_j/\tau_{\rm sm})},
\qquad i=0,\ldots,n,
\]
where $o_0,\ldots,o_n$ are network logits and $\tau_{\rm sm}>0$ is the
softmax temperature.  A gauge-fixed, nonredundant latent coordinate is used
for the geometric chart.  When borrowing is allowed but short sales are not, one may instead set
\[
\pi_i=\operatorname{softplus}(\widetilde o_i),
\quad i=1,\ldots,n,
\qquad
\pi_0=1-\sum_{i=1}^n\pi_i.
\]
Here $\operatorname{softplus}(z):=\log(1+e^z)$.  A smooth consumption box is
obtained from
\[
C
=
C_{\min}
+(C_{\max}-C_{\min})\frac{1+\tanh o_C}{2}.
\]
Here $o_C$ is the consumption logit.  These are implementation maps for
feasible reference generation; none is
assumed to be a global chart of the whole feasible graph.

On the portfolio grid, the wealth component is advanced by exponential--Euler
and the factor component by a model-appropriate transition:
\begin{equation}\label{eq:discrete_state_transition}
\begin{aligned}
X_{k+1}
={}&
X_k\exp\!\Bigg(
\left[
r_k+\boldsymbol\pi_k^\top\boldsymbol\alpha_k
-\frac12\boldsymbol\pi_k^\top\boldsymbol\Sigma_k\boldsymbol\pi_k
-\frac{C_k}{X_k}
\right]h
+\boldsymbol\pi_k^\top\boldsymbol v_k\Delta\mathbf W_k
\Bigg),
\\
Y_{k+1}
={}&
\Phi_Y(t_k,Y_k,\Delta\mathbf W_k).
\end{aligned}
\end{equation}
Here the coefficient subscript denotes evaluation at $(t_k,Y_k)$.
The map $\Phi_Y$ may be Euler--Maruyama or an exact transition when one is
available.  Together these updates define the fixed-latent transition
\[
\mathbf S_{k+1}
=
\Phi_h(t_k,\mathbf S_k,\mathbf u_k^{\rm ref},\Delta\mathbf W_k).
\]
Define its first and second state derivatives by
\[
A_k
:=
D_{\mathbf S_k}^{\rm ol}\mathbf S_{k+1},
\qquad
B_k^j
:=
D_{\mathbf S_k\mathbf S_k}^{2,\rm ol}S_{k+1}^j,
\qquad
j=1,\ldots,d_S.
\]
For completeness, consider a rollout $m$ with sampled initial index
$k_{\rm in}^{(m)}$.  Define the activity mask
\[
\mathcal M_\ell^{(m)}
:=
\mathbb I_{\{\ell\ge k_{\rm in}^{(m)}\}},
\]
the calendar-discounted one-step payoff
\[
\mathfrak r_\ell^{(m)}
:=
\mathcal M_\ell^{(m)}
e^{-\rho t_\ell}
U(C_\ell^{(m)})h,
\]
and the terminal payoff
\[
\mathfrak g_N^{(m)}
:=
K e^{-\rho T}U(X_N^{(m)}).
\]
For $k\ge k_{\rm in}^{(m)}$, the continuation payoff is
\begin{equation}\label{eq:appendix_masked_continuation}
\mathcal J_k^{(m)}
=
\sum_{\ell=k}^{N-1}
\mathfrak r_\ell^{(m)}
+
\mathfrak g_N^{(m)}.
\end{equation}
Leading entries before $k_{\rm in}^{(m)}$ are padding only and do not
contribute to the objective or its derivatives.  We suppress the rollout
index below.

The reported implementation restarts each diagnostic rollout with discounts
relative to its initial time.  For a rollout beginning at $t_{\rm in}$,
\[
\mathcal J_k^{\rm rel}
=
e^{\rho t_{\rm in}}\mathcal J_k,
\qquad
\widetilde\lambda_k^{\rm rel}
=
e^{\rho t_{\rm in}}\widetilde\lambda_k,
\qquad
\widetilde P_k^{\rm rel}
=
e^{\rho t_{\rm in}}\widetilde P_k.
\]
Thus relative-discount estimates are current-value-normalized versions of the
calendar adjoints at the restarted initial state; multiplying by
$e^{-\rho t_{\rm in}}$ converts them back to the calendar normalization used in
the theory.  The deterministic positive scaling does not change the local
portfolio maximizer, while the consumption first-order condition must be
written in the matching normalization, as in Section~\ref{sec:cons_cap}.

At the terminal index,
\[
\widetilde\lambda_N
=
K e^{-\rho T}U'(X_N)e_X,
\qquad
\widetilde P_N
=
K e^{-\rho T}U''(X_N)e_Xe_X^\top.
\]

Writing $\mathcal J_k=\mathfrak r_k+\mathcal J_{k+1}$, the exact
multivariate chain rules are
\begin{align}
\widetilde\lambda_k
&=
D_{\mathbf S_k}^{\rm ol}\mathfrak r_k
+A_k^\top\widetilde\lambda_{k+1},
\label{eq:bptt_chain_rule}
\\
\widetilde P_k
&=
D^2_{\mathbf S_k\mathbf S_k}\mathfrak r_k
+A_k^\top\widetilde P_{k+1}A_k
+\sum_{j=1}^{d_S}\widetilde\lambda_{k+1}^jB_k^j.
\label{eq:bptt_second_order}
\end{align}

\subsection{Fixed-Latent One-Step Expansion}
\label{app:fixed_latent_one_step}

\begin{applemma}[Fixed-latent one-step expansion]
\label{lem:fixed_latent_one_step_expansion}
Suppose Assumptions~\ref{ass:chart_pmp_regularity} and
\ref{ass:bptt_pmp_reg}(i)--(ii) hold.  Consider the transition scheme
\eqref{eq:discrete_state_transition}: exponential--Euler for wealth and either
Euler--Maruyama or the exact OU transition for the factor block.  For each
fixed-latent one-step object
\[
\mathcal X_k
\in
\left\{
\mathbf S_{k+1}-\mathbf S_k,\,
A_k-I,\,
B_k^1,\ldots,B_k^{d_S}
\right\},
\]
there exist uniformly bounded
$\mathcal F_{t_k}$-measurable coefficient arrays
$\mu_k$, $\Sigma_k^\ell$, and $C_k^{\ell m}$ such that
\begin{equation}\label{eq:appendix_one_step_decomposition}
\begin{aligned}
\mathcal X_k
={}&
\mu_k h
+
\sum_{\ell=1}^{d_W}
\Sigma_k^\ell\Delta W_k^\ell
+
\chi_{k+1}
+
\operatorname{Rem}_{k+1},
\\
\chi_{k+1}
:={}&
\sum_{\ell,m=1}^{d_W}
C_k^{\ell m}
\left(
\Delta W_k^\ell\Delta W_k^m
-
\delta_{\ell m}h
\right).
\end{aligned}
\end{equation}
The centered second-chaos term satisfies
\begin{equation}\label{eq:appendix_second_chaos_bounds}
\mathbb E_k[\chi_{k+1}]
=
0,
\qquad
\mathbb E_k[
\chi_{k+1}\Delta\mathbf W_k^\top
]
=
0,
\qquad
\mathbb E_k
\|\chi_{k+1}\|^2
\le
Ch^2.
\end{equation}
The higher-order remainder satisfies
\begin{equation}\label{eq:appendix_one_step_remainder_bounds}
\left\|
\mathbb E_k[
\operatorname{Rem}_{k+1}
]
\right\|
\le
Ch^2,
\qquad
\mathbb E_k
\|
\operatorname{Rem}_{k+1}
\|^2
\le
Ch^3.
\end{equation}
The bounds hold uniformly over the common compact chart patch, with the
Euclidean or Frobenius norm according to the object under consideration.
\end{applemma}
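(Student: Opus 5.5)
The plan is to write each one-step object as a smooth function of the Brownian increment with $\mathcal F_{t_k}$-measurable parameters, and then Taylor-expand in $(h,\Delta\mathbf W_k)$ to second order in $\Delta\mathbf W_k$ and first order in $h$. Because the latent value $\bar{\mathbf a}_k$ is frozen, the transition can be written as
\[
\mathbf S_{k+1}=\Phi_h(t_k,\mathbf S_k,\Psi(t_k,\mathbf S_k,\bar{\mathbf a}_k),\Delta\mathbf W_k)=:F_k(h,\Delta\mathbf W_k;\mathbf S_k).
\]
For the wealth component, $F_k=X_k\exp(\mathfrak m_k h+\boldsymbol\pi_k^\top\boldsymbol v_k\Delta\mathbf W_k)$, where $\mathfrak m_k$ is the bracketed drift. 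The factor block $\Phi_Y$ is either affine in $\Delta\mathbf W_k$ (Euler--Maruyama) or an exact OU transition; in the OU case the increment is exactly Gaussian, with coefficients of the form $e^{-\kappa h}$, and those expand in $h$ with smooth $\mathcal F_{t_k}$-measurable coefficients.

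The derivative objects are obtained by differentiating $F_k$ in $\mathbf s$ with $\bar{\mathbf a}_k$ held fixed. Since derivatives commute with the expansion in the noise, $A_k-I$ and $B_k^j$ are also explicit functions of $(h,\Delta\mathbf W_k)$. Their coefficients involve $\Psi_{\mathbf s}$, $\Psi_{\mathbf s\mathbf s}$ and the coefficient jets. By Assumption~\ref{ass:chart_pmp_regularity}(i)--(ii), these are bounded on the compact patch, and wealth and consumption stay in compact subsets of $(0,\infty)$, so $C_k/X_k$ and its state derivatives are bounded as well.

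Next comes the expansion itself. Take $\mathcal X_k=G(h,\Delta\mathbf W_k)$ with $G(0,0)=0$; for $A_k-I$ this holds because $D_{\mathbf s}F_k(0,0)=I$, and for $B_k^j$ because the second derivative vanishes at $(0,0)$. A Taylor expansion of $G$ to second order in $w$ and first order in $h$ about $(0,0)$ gives
\[
G=\partial_hG\,h+\partial_wG\,\Delta\mathbf W_k+\tfrac12\partial_{ww}G[\Delta\mathbf W_k,\Delta\mathbf W_k]+\mathrm{Rem}.
\]
The quadratic term is then split as $\tfrac12\partial_{w_\ell w_m}G\,(\Delta W^\ell\Delta W^m-\delta_{\ell m}h)+\tfrac12\operatorname{tr}\partial_{ww}G\,h$. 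The centered piece defines $C_k^{\ell m}$ and $\chi_{k+1}$, and the trace term is absorbed into $\mu_k$. The remainder collects terms of orders $h^2$, $h\,|\Delta\mathbf W_k|$ and $|\Delta\mathbf W_k|^3$, all multiplied by derivatives of $G$ evaluated at intermediate points.

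The second-chaos bounds in \eqref{eq:appendix_second_chaos_bounds} come from Gaussian moments conditional on $\mathcal F_{t_k}$. The first and second moments give $\mathbb E_k\chi_{k+1}=0$, odd moments give $\mathbb E_k[\chi_{k+1}\Delta\mathbf W_k^\top]=0$, and the fourth moments give the $Ch^2$ bound.

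For the remainder, odd moments again make $\mathbb E_k[h\Delta\mathbf W_k]$ and $\mathbb E_k[(\Delta\mathbf W_k)^{\otimes3}]$ vanish to leading order. I therefore plan to carry the expansion one order further, taking the third-order $w$ term explicitly with frozen coefficients and bounding the true fourth-order remainder. That leaves $\|\mathbb E_k\mathrm{Rem}\|\le C(h^2+\mathbb E_k|\Delta\mathbf W_k|^4)\le Ch^2$. For the second moment, every remainder term has order at least $h^{3/2}$ in $L^2$, which gives $\mathbb E_k\|\mathrm{Rem}\|^2\le Ch^3$.

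The main obstacle is that the exponential in the wealth transition is unbounded, so the Taylor remainder derivatives are evaluated at random intermediate points and are not bounded a priori. I would handle this by noting that the intermediate derivatives are of the form $X_k\times(\text{bounded polynomial})\times\exp(\theta\,\boldsymbol\pi_k^\top\boldsymbol v_k\Delta\mathbf W_k)$ with $\theta\in[0,1]$. Then Cauchy--Schwarz, together with the Gaussian exponential-moment bound $\mathbb E_k e^{c|\Delta\mathbf W_k|}\le C$ uniformly for $h\le T$, gives a uniform constant, since $\boldsymbol\pi_k$ is bounded on the patch. The remaining ingredient is that $X_k$ lies in a compact set, which is part of the localization in Assumption~\ref{ass:bptt_pmp_reg}(i). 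The same argument applied to the $\mathbf s$-derivatives yields uniformity of all constants over the patch.
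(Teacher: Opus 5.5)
Your proposal is correct and follows essentially the same route as the paper. You Taylor-expand the exponential--Euler map in the Brownian increment, split off the centered quadratic Gaussian term as $\chi_{k+1}$ (absorbing its trace into the drift), and use Gaussian odd and fourth moments for the chaos and remainder bounds. You treat the Euler/OU factor block as first chaos plus $O(h)$ coefficient corrections, and obtain $A_k-I$ and $B_k^j$ by differentiating in $\mathbf S_k$ with $\bar{\mathbf a}_k$ frozen.

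The differences are minor. You differentiate first and then expand jointly in $(h,w)$, whereas the paper expands first and then differentiates. You also make explicit the control of the unbounded exponential factor in the remainder, via Gaussian exponential moments and Cauchy--Schwarz; the paper leaves this implicit in its appeal to compact localization.
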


\begin{proof}
For the exponential--Euler wealth update in
\eqref{eq:discrete_state_transition}, write
\[
X_{k+1}
=
X_k e^{q_k},
\]
where
\begin{equation}\label{eq:appendix_wealth_log_increment}
q_k
=
\left[
r_k
+
\boldsymbol\pi_k^\top\boldsymbol\alpha_k
-
\frac12
\boldsymbol\pi_k^\top
\boldsymbol\Sigma_k
\boldsymbol\pi_k
-
\frac{C_k}{X_k}
\right]h
+
\boldsymbol\pi_k^\top
\boldsymbol v_k
\Delta\mathbf W_k.
\end{equation}
Here the risky weights and consumption are evaluated through the
fixed-latent chart.  In particular, their state derivatives retain the
dependence of $\Psi$ on $\mathbf S_k$, while
$\bar{\mathbf a}_k$ is held fixed.

The Taylor expansion
\[
e^{q_k}
=
1+q_k+\frac12q_k^2+\operatorname{Rem}(q_k)
\]
separates the predictable order-$h$ term, the first Brownian chaos, and the
centered quadratic term
\begin{equation}\label{eq:appendix_centered_wealth_chaos}
\left(
\boldsymbol\pi_k^\top
\boldsymbol v_k
\Delta\mathbf W_k
\right)^2
-
\boldsymbol\pi_k^\top
\boldsymbol\Sigma_k
\boldsymbol\pi_k\,h.
\end{equation}
The term in
\eqref{eq:appendix_centered_wealth_chaos}
has conditional mean zero, is orthogonal to each component of
$\Delta\mathbf W_k$, and has conditional second moment $O(h^2)$.

Products of an order-$h$ drift term with a first-chaos term, together with
the terms of order three and higher in the exponential expansion, have
conditional mean $O(h^2)$ and conditional second moment $O(h^3)$.  Uniform
bounds follow from the compact chart localization and the bounded derivatives
of the chart-composed coefficients.

The Euler--Maruyama factor update is affine in
$\Delta\mathbf W_k$.  The exact Ornstein--Uhlenbeck (OU) transition used in
the corresponding experiments is also a predictable term plus a linear Gaussian innovation and
admits the same first-chaos representation, with higher-order coefficient
corrections absorbed into the remainder.  Hence the factor transition
satisfies
\eqref{eq:appendix_one_step_decomposition}--
\eqref{eq:appendix_one_step_remainder_bounds}.

Finally, differentiate these expansions once and twice with respect to
$\mathbf S_k$, holding $\bar{\mathbf a}_k$ fixed.  Assumption
\ref{ass:chart_pmp_regularity} gives uniformly bounded derivatives of the
chart and chart-composed coefficients through order two.  The same chaos
decomposition and moment orders therefore hold for
$A_k-I$ and each $B_k^j$.  This proves the lemma.
\end{proof}

\subsection{Martingale Projections and Orthogonal Residuals}
\label{app:martingale_projection_residuals}

Under the augmented Brownian filtration, martingale representation gives
\begin{equation}\label{eq:appendix_first_interval_density}
\lambda_{k+1}
-
\mathbb E_k[\lambda_{k+1}]
=
\int_{t_k}^{t_{k+1}}
\mathcal Z_s^{\lambda,k+1}\,d\mathbf W_s
\end{equation}
for a matrix-valued interval density
$\mathcal Z^{\lambda,k+1}$, and
\begin{equation}\label{eq:appendix_second_interval_density}
P_{k+1}
-
\mathbb E_k[P_{k+1}]
=
\sum_{\ell=1}^{d_W}
\int_{t_k}^{t_{k+1}}
\mathcal Z_s^{P,k+1,\ell}\,dW_s^\ell
\end{equation}
for matrix-valued densities
$\mathcal Z^{P,k+1,\ell}$.

Their best
$\mathcal F_{t_k}$-measurable, intervalwise constant projections are
\begin{equation}\label{eq:discrete_martingale_coefficients}
\begin{aligned}
\mathbf Z_k^h
&=
\frac1h
\mathbb E_k\!\left[
\int_{t_k}^{t_{k+1}}
\mathcal Z_s^{\lambda,k+1}\,ds
\right]
=
\frac1h
\mathbb E_k\!\left[
\lambda_{k+1}\Delta\mathbf W_k^\top
\right],
\\
R_k^{h,\ell}
&=
\frac1h
\mathbb E_k\!\left[
\int_{t_k}^{t_{k+1}}
\mathcal Z_s^{P,k+1,\ell}\,ds
\right]
=
\frac1h
\mathbb E_k\!\left[
P_{k+1}\Delta W_k^\ell
\right].
\end{aligned}
\end{equation}

These are the standard intervalwise Brownian projections used in
time-discretization schemes for BSDE martingale integrands
\citep{bouchard2004discrete,zhang2004numerical}.  Here they are applied
jointly to the vector and matrix adjoints, and the components orthogonal to
the current Brownian increment are retained explicitly rather than discarded.

The orthogonal residuals are therefore
\begin{equation}\label{eq:appendix_first_projection_residual}
\xi_{k+1}
=
\int_{t_k}^{t_{k+1}}
\left(
\mathcal Z_s^{\lambda,k+1}
-
\mathbf Z_k^h
\right)d\mathbf W_s
\end{equation}
and
\begin{equation}\label{eq:appendix_second_projection_residual}
\xi_{k+1}^P
=
\sum_{\ell=1}^{d_W}
\int_{t_k}^{t_{k+1}}
\left(
\mathcal Z_s^{P,k+1,\ell}
-
R_k^{h,\ell}
\right)dW_s^\ell.
\end{equation}
They satisfy
\[
\mathbb E_k[\xi_{k+1}]
=
0,
\qquad
\mathbb E_k[
\xi_{k+1}\Delta\mathbf W_k^\top
]
=
0,
\]
with the analogous componentwise identities for $\xi_{k+1}^P$.

By It\^o isometry and
Assumption~\ref{ass:bptt_pmp_reg}(iv),
\begin{equation}\label{eq:appendix_projection_energy}
\sum_{k=k_{\rm in}}^{N-1}
\mathbb E
\left[
\|\xi_{k+1}\|^2
+
\|\xi_{k+1}^P\|_F^2
\right]
\le
C_{\rm proj}h.
\end{equation}
Since the residuals form martingale-difference arrays, Doob's inequality gives
\begin{equation}\label{eq:appendix_residual_maximal_bound}
\begin{aligned}
&
\mathbb E
\max_{k_{\rm in}\le m\le N-1}
\left\|
\sum_{k=k_{\rm in}}^m
\xi_{k+1}
\right\|^2
\\
&\quad
+
\mathbb E
\max_{k_{\rm in}\le m\le N-1}
\left\|
\sum_{k=k_{\rm in}}^m
\xi_{k+1}^P
\right\|_F^2
\longrightarrow0.
\end{aligned}
\end{equation}

A centered second-chaos term is orthogonal to constants and first Brownian
increments, but it need not be orthogonal to the residual
$\xi_{k+1}$.  This interaction must therefore be controlled rather than set
equal to zero.  By conditional Cauchy--Schwarz and
\eqref{eq:appendix_second_chaos_bounds},
\begin{equation}\label{eq:appendix_residual_chaos_bound}
\mathbb E
\left\|
\mathbb E_k[
\chi_{k+1}\xi_{k+1}
]
\right\|^2
\le
Ch^2
\mathbb E\|\xi_{k+1}\|^2,
\end{equation}
where products are understood componentwise or through the compatible
matrix contraction arising in the adjoint recursion.  Consequently,
\begin{equation}\label{eq:appendix_aggregate_residual_chaos}
\sum_{k=k_{\rm in}}^{N-1}
\frac1h
\mathbb E
\left\|
\mathbb E_k[
\chi_{k+1}\xi_{k+1}
]
\right\|^2
\le
Ch
\sum_{k=k_{\rm in}}^{N-1}
\mathbb E\|\xi_{k+1}\|^2
\longrightarrow0.
\end{equation}
The same estimate holds for the contractions involving
$\xi_{k+1}^P$.  Thus no pointwise order is assigned to either residual; only
their aggregate energy and their interactions with the one-step expansion are
used below.

For reference, set
\[
\bar\lambda_{k+1}:=\mathbb E_k[\lambda_{k+1}],
\qquad
\bar P_{k+1}:=\mathbb E_k[P_{k+1}],
\]
and evaluate the chart-reduced coefficients and Hamiltonian at
$(t_k,\mathbf S_k,\bar{\mathbf a}_k,
\bar\lambda_{k+1},\mathbf Z_k^h)$.  With all state derivatives understood in
the fixed-latent sense, define
\begin{equation}\label{eq:second_open_loop_driver}
\begin{aligned}
\mathfrak D_k^{(2),\rm ol}
:={}&
D^2_{\mathbf s\mathbf s}\bar{\mathcal H}_k
+(D_{\mathbf s}\bar{\boldsymbol b}_k)^\top\bar P_{k+1}
+\bar P_{k+1}D_{\mathbf s}\bar{\boldsymbol b}_k
\\
&+
\sum_{\ell=1}^{d_W}
(D_{\mathbf s}\bar{\boldsymbol\sigma}_k^\ell)^\top
\bar P_{k+1}D_{\mathbf s}\bar{\boldsymbol\sigma}_k^\ell
\\
&+
\sum_{\ell=1}^{d_W}
\left[
(D_{\mathbf s}\bar{\boldsymbol\sigma}_k^\ell)^\top R_k^{h,\ell}
+R_k^{h,\ell}D_{\mathbf s}\bar{\boldsymbol\sigma}_k^\ell
\right].
\end{aligned}
\end{equation}
Define the remainder arrays
$\{\mathfrak e_k^{(1)},\mathfrak e_k^{(2)}\}$ by the exact adapted
identities
\begin{align}
\lambda_k-\lambda_{k+1}
&=
D_{\mathbf s}\bar{\mathcal H}_k h
-\mathbf Z_k^h\Delta\mathbf W_k
-\xi_{k+1}+\mathfrak e_k^{(1)},
\label{eq:first_projected_recursion}
\\
P_k-P_{k+1}
&=
\mathfrak D_k^{(2),\rm ol}h
-\sum_{\ell=1}^{d_W}R_k^{h,\ell}\Delta W_k^\ell
-\xi_{k+1}^P+\mathfrak e_k^{(2)}.
\label{eq:second_projected_recursion}
\end{align}

\begin{applemma}[Discrete adjoint consistency]
\label{lem:discrete_adjoint_consistency}
Under Assumption~\ref{ass:bptt_pmp_reg} and the one-step bounds in
Lemma~\ref{lem:fixed_latent_one_step_expansion}, the remainders in
\eqref{eq:first_projected_recursion}--
\eqref{eq:second_projected_recursion} satisfy
\begin{equation}\label{eq:appendix_remainder_sufficient_condition}
\sum_{k=k_{\rm in}}^{N-1}
\frac1h
\mathbb E
\left[
\|\mathfrak e_k^{(1)}\|^2
+
\|\mathfrak e_k^{(2)}\|_F^2
\right]
\longrightarrow0.
\end{equation}
Consequently,
\begin{equation}\label{eq:aggregate_remainder_condition}
\begin{aligned}
\mathcal E_h
:={}&
\mathbb E\max_{k_{\rm in}\le m\le N-1}
\left\|\sum_{k=k_{\rm in}}^m\mathfrak e_k^{(1)}\right\|^2
\\
&+
\mathbb E\max_{k_{\rm in}\le m\le N-1}
\left\|\sum_{k=k_{\rm in}}^m\mathfrak e_k^{(2)}\right\|_F^2
\longrightarrow0.
\end{aligned}
\end{equation}
\end{applemma}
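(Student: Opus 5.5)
The plan is to compute the remainders $\mathfrak e_k^{(1)},\mathfrak e_k^{(2)}$ explicitly and show each is small per step. Then \eqref{eq:aggregate_remainder_condition} follows from \eqref{eq:appendix_remainder_sufficient_condition} by a Cauchy--Schwarz estimate. Three steps are involved.

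\textbf{Step 1: exact expression for $\mathfrak e_k^{(1)}$.} Take $\mathbb E_k$ of the chain rule \eqref{eq:bptt_chain_rule}. Since $A_k$ and $D^{\rm ol}_{\mathbf S_k}\mathfrak r_k$ depend only on $\mathcal F_{t_k}$ and $\Delta\mathbf W_k$, the tower property gives
\[
\lambda_k=D^{\rm ol}_{\mathbf S_k}\mathfrak r_k+\mathbb E_k[A_k^\top\lambda_{k+1}].
\]
Insert the expansion of Lemma~\ref{lem:fixed_latent_one_step_expansion} for $A_k-I$,
\[
A_k-I=\mu_kh+\sum_\ell\Sigma_k^\ell\Delta W_k^\ell+\chi_{k+1}+\operatorname{Rem}_{k+1},
\]
and the decomposition \eqref{eq:orthogonal_one_step_decomposition} for $\lambda_{k+1}$. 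The products then sort as follows.
\begin{itemize}
\item The term $\mathbb E_k[\lambda_{k+1}]$ combines with $\mathbb E_k[(\Sigma_k^\ell)^\top\Delta W^\ell_k\,\mathbf Z_k^h\Delta\mathbf W_k]$. This yields $h\sum_\ell(\Sigma_k^\ell)^\top\mathbf Z_k^{h,\ell}$, which matches $\langle\mathbf Z,D_{\mathbf s}\bar{\boldsymbol\sigma}\rangle$ because $\Sigma_k^\ell=D_{\mathbf s}\bar{\boldsymbol\sigma}^\ell_k+O(h)$.
\item The term $\mu_k^\top\bar\lambda_{k+1}h$ matches $(D_{\mathbf s}\bar{\boldsymbol b}_k)^\top\bar\lambda_{k+1}h$, with an $O(h^2)$ coefficient error coming from the exponential--Euler correction.
\item The term $D^{\rm ol}_{\mathbf s}\mathfrak r_k$ matches $D_{\mathbf s}\bar\ell_k\,h$ exactly.
\end{itemize}
After rewriting $\lambda_{k+1}$ on the left, $\mathfrak e_k^{(1)}$ collects these leftover pieces:
\begin{itemize}
\item an $O(h^2)$ predictable coefficient mismatch;
\item $\mathbb E_k[\chi_{k+1}^\top\xi_{k+1}]$, together with the analogous cross terms of $\chi_{k+1}$ with $\mathbf Z^h_k\Delta\mathbf W_k$, which vanish by \eqref{eq:appendix_second_chaos_bounds};
\item $\mathbb E_k[\operatorname{Rem}_{k+1}^\top\lambda_{k+1}]$;
\item $\mathbb E_k[(\mu_kh)^\top(\lambda_{k+1}-\bar\lambda_{k+1})]$, which is zero;
\item the first-chaos term interacting with $\xi_{k+1}$, which is zero by orthogonality to $\Delta\mathbf W_k$.
\end{itemize}
The second-order remainder $\mathfrak e_k^{(2)}$ is handled identically from \eqref{eq:bptt_second_order}. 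In that case the quadratic term $A_k^\top P_{k+1}A_k$ generates the products $\Sigma^\ell P\Sigma^\ell h$ and the $R^h$ contractions in \eqref{eq:second_open_loop_driver}. The term $\sum_j\widetilde\lambda^j_{k+1}B_k^j$, after projection, generates the $\lambda$- and $\mathbf Z$-weighted second derivatives that assemble $D^2_{\mathbf s\mathbf s}\bar{\mathcal H}_k$ via \eqref{eq:fixed_latent_hamiltonian_chain_rule}.

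\textbf{Step 2: bound each piece by $o(h)$ in aggregate squared energy.}
\begin{itemize}
\item Deterministic $O(h^2)$ mismatches contribute $\sum_k h^{-1}h^4=O(h^2)$.
\item Remainder terms use \eqref{eq:appendix_one_step_remainder_bounds}. Write $\lambda_{k+1}=\bar\lambda_{k+1}+(\lambda_{k+1}-\bar\lambda_{k+1})$. The predictable part gives $\|\mathbb E_k\operatorname{Rem}\|\,\|\bar\lambda\|\le Ch^2$. The centered part is bounded by conditional Cauchy--Schwarz, $(Ch^3)^{1/2}(\mathbb E_k\|\lambda_{k+1}-\bar\lambda_{k+1}\|^2)^{1/2}$, so its squared sum over $k$ divided by $h$ is $\le Ch^2\sum_k\mathbb E\|\lambda_{k+1}-\bar\lambda_{k+1}\|^2$. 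That sum is bounded by the mesh-uniform martingale energy in Assumption~\ref{ass:bptt_pmp_reg}(ii).
\item The $\chi\xi$ interaction is exactly \eqref{eq:appendix_aggregate_residual_chaos}.
\item The same bookkeeping applies to $P$, using square integrability of $\widetilde\lambda$ and $\widetilde P$. Higher products such as $\chi^\top P\chi$ have conditional mean $O(h^2)$ by the second-chaos bound.
\end{itemize}
Summing, \eqref{eq:appendix_remainder_sufficient_condition} follows.

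\textbf{Step 3: the maximal bound.} For any $m$, Cauchy--Schwarz gives
\[
\Bigl\|\sum_{k=k_{\rm in}}^m\mathfrak e_k\Bigr\|^2\le N\sum_{k=k_{\rm in}}^{N-1}\|\mathfrak e_k\|^2=T\sum_{k=k_{\rm in}}^{N-1}h^{-1}\|\mathfrak e_k\|^2,
\]
and this bound is uniform in $m$. Taking expectations gives $\mathcal E_h\to0$ directly from \eqref{eq:appendix_remainder_sufficient_condition}, with no Doob argument needed.

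I expect the main obstacle to be Step~1 for $\mathfrak e^{(2)}_k$. There I must verify that every $O(h)$ predictable term produced by the quadratic form and by the $B_k^j$ contraction exactly matches \eqref{eq:second_open_loop_driver}. In particular, the $\mathbf Z^h$-weighted part of $D^2_{\mathbf s\mathbf s}\bar{\mathcal H}$ must arise from $\mathbb E_k[\lambda_{k+1}^j(B_k^j)_{\rm first\ chaos}]$. The Itô-correction term of $e^{q_k}$ must also cancel against $\tfrac12\pi^\top\Sigma\pi h$ at the level of derivatives. I would organise this check by differentiating the continuous identity $X e^{q}$ symbolically and comparing coefficients of $h$ and $\Delta W^\ell$.
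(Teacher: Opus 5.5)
Your proposal is correct and follows essentially the paper's proof. You split each remainder into a coefficient-consistency piece, pure Taylor/$\operatorname{Rem}_{k+1}$ terms, and interactions with the orthogonal residuals (controlled by the conditional Cauchy--Schwarz bound on $\mathbb E_k[\chi_{k+1}\xi_{k+1}]$), bound each in $h^{-1}$-weighted aggregate energy, and then get the maximal bound from the same estimate $\max_m\|\sum_{k=k_{\rm in}}^m\mathfrak e_k\|^2\le \frac{T}{h}\sum_k\|\mathfrak e_k\|^2$. The only difference is that you control the coefficient-mismatch class by an explicit $O(h)$ coefficient rate read off the exponential--Euler/OU expansion instead of invoking Assumption~\ref{ass:bptt_pmp_reg}(iii); these mismatches are not deterministic, since they multiply $\bar\lambda_{k+1}$ and $\mathbf Z_k^h$, so that bound must use the square-integrability and mesh-uniform $\mathbb H^2$ energy from Assumption~\ref{ass:bptt_pmp_reg}(ii).
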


\begin{proof}
Insert the decompositions
\eqref{eq:appendix_one_step_decomposition} and
\eqref{eq:orthogonal_one_step_decomposition} into the exact conditional
first- and second-order chain rules.  After the displayed leading driver terms
are removed, each remainder admits a finite decomposition
\[
\mathfrak e_k^{(i)}
=
h\eta_{k,h}^{(i)}
+
\tau_{k,h}^{(i)}
+
\rho_{k,h}^{(i)},
\qquad i=1,2,
\]
with the following three sources.

First, $h\eta_{k,h}^{(i)}$ contains coefficient-consistency errors between the
leading coefficients of the discrete transition derivatives and the limiting
fixed-latent derivatives.  Assumption~\ref{ass:bptt_pmp_reg}(iii) gives
\begin{equation}\label{eq:appendix_consistency_error_class}
\sum_{k=k_{\rm in}}^{N-1}
h\mathbb E
\left[
\|\eta_{k,h}^{(1)}\|^2
+
\|\eta_{k,h}^{(2)}\|_F^2
\right]
\longrightarrow0.
\end{equation}

Second, $\tau_{k,h}^{(i)}$ contains the purely Taylor terms: conditional means
of the higher-order remainders, drift--drift products, and products involving a
centered second-chaos term but no orthogonal projection residual.  Conditional
centering and first-chaos orthogonality remove the nominal order-$h$ terms.
Using
\eqref{eq:appendix_second_chaos_bounds}--
\eqref{eq:appendix_one_step_remainder_bounds}, the mesh-uniform adjoint energy,
and conditional Cauchy--Schwarz gives
\begin{equation}\label{eq:appendix_taylor_error_class}
\sum_{k=k_{\rm in}}^{N-1}
\frac1h
\mathbb E
\left[
\|\tau_{k,h}^{(1)}\|^2
+
\|\tau_{k,h}^{(2)}\|_F^2
\right]
\le
Ch+o(1).
\end{equation}
For example, a conditional mean of a pure one-step remainder is $O(h^2)$,
whereas a term of the form
$\mathbb E_k[\operatorname{Rem}_{k+1}^\top
\mathbf Z_k^h\Delta\mathbf W_k]$ has squared norm bounded by
$Ch^4\|\mathbf Z_k^h\|_F^2$; summation is therefore controlled by the
interval $\mathbb H^2$ energy.  The matrix terms are handled identically with
$P$ and $\mathbf R$.

Third, $\rho_{k,h}^{(i)}$ contains all interactions with
$\xi_{k+1}$ or $\xi_{k+1}^P$.  Constant and first-chaos factors vanish by
conditional orthogonality.  The only potentially leading interactions are the
second-chaos contractions, controlled by
\eqref{eq:appendix_aggregate_residual_chaos}.  Interactions with a higher-order
remainder satisfy, for example,
\[
\mathbb E
\left\|
\mathbb E_k[
\operatorname{Rem}_{k+1}^\top\xi_{k+1}
]
\right\|^2
\le
Ch^3\mathbb E\|\xi_{k+1}\|^2,
\]
and analogously for $\xi_{k+1}^P$.  Together with
\eqref{eq:appendix_projection_energy}, this yields
\begin{equation}\label{eq:appendix_residual_error_class}
\sum_{k=k_{\rm in}}^{N-1}
\frac1h
\mathbb E
\left[
\|\rho_{k,h}^{(1)}\|^2
+
\|\rho_{k,h}^{(2)}\|_F^2
\right]
\longrightarrow0.
\end{equation}
Equations~\eqref{eq:appendix_consistency_error_class}--
\eqref{eq:appendix_residual_error_class} prove
\eqref{eq:appendix_remainder_sufficient_condition}.
Finally, for $i=1,2$,
\[
\max_{k_{\rm in}\le m\le N-1}
\left\|\sum_{k=k_{\rm in}}^m\mathfrak e_k^{(i)}\right\|^2
\le
\frac{T}{h}
\sum_{k=k_{\rm in}}^{N-1}\|\mathfrak e_k^{(i)}\|^2,
\]
with the Frobenius norm for $i=2$.  Hence
\eqref{eq:appendix_remainder_sufficient_condition} implies
\eqref{eq:aggregate_remainder_condition}.
\end{proof}

\subsection{Proof of Theorem~\texorpdfstring{
\ref{thm:bptt_pmp_constrained}}{}}
\label{app:proof_olbptt_pmp_theorem}

\begin{proof}[Proof of Theorem~\ref{thm:bptt_pmp_constrained}]

\textbf{Step 1: exact pathwise identities.}

For $k\ge k_{\rm in}$,
\[
\mathcal J_k
=
\mathfrak r_k
+
\mathcal J_{k+1}.
\]
Under fixed-latent differentiation,
$\mathcal J_{k+1}$ depends on $\mathbf S_k$ only through the transition
$\mathbf S_{k+1}$.  The multivariate chain rule therefore gives
\[
\widetilde\lambda_k
=
D_{\mathbf S_k}^{\rm ol}\mathfrak r_k
+
A_k^\top\widetilde\lambda_{k+1},
\]
and
\[
\widetilde P_k
=
D_{\mathbf S_k\mathbf S_k}^{2,\rm ol}\mathfrak r_k
+
A_k^\top\widetilde P_{k+1}A_k
+
\sum_{j=1}^{d_S}
\widetilde\lambda_{k+1}^j B_k^j.
\]
These are exactly
\eqref{eq:bptt_chain_rule} and
\eqref{eq:bptt_second_order}.  They use neither stationarity nor optimality.

Because $A_k$ and $B_k^j$ are
$\mathcal F_{t_{k+1}}$-measurable, the tower property gives
\[
\mathbb E_k[
A_k^\top\widetilde\lambda_{k+1}
]
=
\mathbb E_k[
A_k^\top\lambda_{k+1}
],
\]
\[
\mathbb E_k[
A_k^\top\widetilde P_{k+1}A_k
]
=
\mathbb E_k[
A_k^\top P_{k+1}A_k
],
\]
and
\[
\mathbb E_k[
\widetilde\lambda_{k+1}^jB_k^j
]
=
\mathbb E_k[
\lambda_{k+1}^jB_k^j
].
\]
Thus conditional projection converts the raw pathwise identities into exact
adapted one-step identities.

\medskip\noindent
\textbf{Step 2: first-adjoint driver.}

Write
\[
\bar\lambda_{k+1}
=
\mathbb E_k[\lambda_{k+1}]
\]
and use the decomposition
\[
\lambda_{k+1}
=
\bar\lambda_{k+1}
+
\mathbf Z_k^h\Delta\mathbf W_k
+
\xi_{k+1}.
\]
Insert this decomposition and the fixed-latent expansion of $A_k$ from
Lemma~\ref{lem:fixed_latent_one_step_expansion} into the adapted first-order
identity.

The predictable order-$h$ terms consist of the state derivative of the
one-step reward, the drift-Jacobian term multiplied by
$\bar\lambda_{k+1}$, and the diffusion-Jacobian term contracted with
$\mathbf Z_k^h$.  They combine into
\[
D_{\mathbf s}\bar{\mathcal H}_k h.
\]
The first-chaos term gives
\[
\mathbf Z_k^h\Delta\mathbf W_k,
\]
which is subtracted when
$\lambda_{k+1}-\bar\lambda_{k+1}$
is moved to the left-hand side.  The orthogonal component remains explicitly
as $\xi_{k+1}$.

Collect all nonleading terms in $\mathfrak e_k^{(1)}$.  They comprise
centered second-chaos and higher-order Taylor terms, coefficient-consistency
errors, and interactions with the projection residual.  Their cumulative
control is precisely the first-order part of
Lemma~\ref{lem:discrete_adjoint_consistency}.  Hence
\[
\lambda_k-\lambda_{k+1}
=
D_{\mathbf s}\bar{\mathcal H}_k h
-
\mathbf Z_k^h\Delta\mathbf W_k
-
\xi_{k+1}
+
\mathfrak e_k^{(1)},
\]
which proves
\eqref{eq:first_projected_recursion}.

\medskip\noindent
\textbf{Step 3: second-adjoint driver.}

Write
\[
\bar P_{k+1}
=
\mathbb E_k[P_{k+1}]
\]
and use
\[
P_{k+1}
=
\bar P_{k+1}
+
\sum_{\ell=1}^{d_W}
R_k^{h,\ell}\Delta W_k^\ell
+
\xi_{k+1}^P.
\]
Insert this decomposition and the one-step expansions of $A_k$ and $B_k^j$
into the adapted second-order identity.

The predictable order-$h$ terms in
\[
\mathbb E_k[
A_k^\top P_{k+1}A_k
]
\]
are
\[
(D_{\mathbf s}\bar{\boldsymbol b}_k)^\top
\bar P_{k+1}
+
\bar P_{k+1}
D_{\mathbf s}\bar{\boldsymbol b}_k,
\]
\[
\sum_{\ell=1}^{d_W}
(D_{\mathbf s}\bar{\boldsymbol\sigma}_k^\ell)^\top
\bar P_{k+1}
D_{\mathbf s}\bar{\boldsymbol\sigma}_k^\ell,
\]
and
\[
\sum_{\ell=1}^{d_W}
\left[
(D_{\mathbf s}\bar{\boldsymbol\sigma}_k^\ell)^\top
R_k^{h,\ell}
+
R_k^{h,\ell}
D_{\mathbf s}\bar{\boldsymbol\sigma}_k^\ell
\right].
\]
The last line arises from the product of the first-chaos component of the
transition Jacobian with the first-chaos projection of $P_{k+1}$.

The reward Hessian and the transition-Hessian contraction produce
\[
\begin{aligned}
D^2_{\mathbf s\mathbf s}\bar\ell_k
&+
\sum_{j=1}^{d_S}
\bar\lambda_{k+1}^j
D^2_{\mathbf s\mathbf s}\bar b_k^j
\\
&+
\sum_{j=1}^{d_S}
\sum_{\ell=1}^{d_W}
(\mathbf Z_k^h)^{j\ell}
D^2_{\mathbf s\mathbf s}
\bar\sigma_k^{j\ell}
=
D^2_{\mathbf s\mathbf s}\bar{\mathcal H}_k.
\end{aligned}
\]
Together, these terms give the discrete second-order driver
$\mathfrak D_k^{(2),\rm ol}$ in
\eqref{eq:second_open_loop_driver}.

All nonleading terms are collected in
$\mathfrak e_k^{(2)}$.  Lemma~\ref{lem:discrete_adjoint_consistency}
shows jointly for the two recursions that
\eqref{eq:appendix_remainder_sufficient_condition} and
\eqref{eq:aggregate_remainder_condition} hold.  Therefore,
\[
P_k-P_{k+1}
=
\mathfrak D_k^{(2),\rm ol}h
-
\sum_{\ell=1}^{d_W}
R_k^{h,\ell}\Delta W_k^\ell
-
\xi_{k+1}^P
+
\mathfrak e_k^{(2)},
\]
which proves
\eqref{eq:second_projected_recursion}.

\medskip\noindent
\textbf{Step 4: continuous-time limit.}

Define the piecewise-constant interpolants, for
$t\in[t_k,t_{k+1})$, by
\[
\lambda_t^h
:=
\lambda_k,
\qquad
P_t^h
:=
P_k,
\qquad
\mathbf Z_t^h
:=
\mathbf Z_k^h,
\qquad
R_t^{h,\ell}
:=
R_k^{h,\ell}.
\]
Summing
\eqref{eq:first_projected_recursion} and
\eqref{eq:second_projected_recursion}
over the grid gives approximate integral forms of the chart-reduced first-
and second-adjoint BSDEs.  By
\eqref{eq:appendix_residual_maximal_bound} and
\eqref{eq:aggregate_remainder_condition}, the cumulative orthogonal residuals
and collected remainders vanish in $\mathbb S^2$.

Assumption~\ref{ass:bptt_pmp_reg}(iii) gives convergence of the frozen latent
processes, charted controls, states, barred coefficients, and their first two
fixed-latent state derivatives.  The terminal data converge because the
terminal state remains in the common compact localization and the terminal
reward is $C^2$ there.

For completeness, apply BSDE stability sequentially.  Let
$\eta_h^{(1)}\to0$ collect the squared terminal error, the integrated
first-driver consistency error, and the squared $\mathbb S^2$ norm of the
cumulative residual and remainder perturbations.  The usual It\^o--Young
estimate for the first vector BSDE gives
\begin{equation}\label{eq:appendix_first_discrete_bsde_stability}
\|\lambda^h-\lambda\|_{\mathbb S^2}^2
+
\|\mathbf Z^h-\mathbf Z\|_{\mathbb H^2}^2
\le
C\eta_h^{(1)}
\longrightarrow0.
\end{equation}
Next let $\eta_h^{(2)}\to0$ denote the analogous terminal, coefficient, and
cumulative perturbation error for the matrix equation.  Its driver is Lipschitz
in $(P,\mathbf R)$ on the common localization and depends continuously on
$(\lambda,\mathbf Z)$.  The matrix-valued BSDE estimate and
\eqref{eq:appendix_first_discrete_bsde_stability} therefore give
\begin{equation}\label{eq:appendix_second_discrete_bsde_stability}
\|P^h-P\|_{\mathbb S^2}^2
+
\|\mathbf R^h-\mathbf R\|_{\mathbb H^2}^2
\le
C\left(
\eta_h^{(2)}
+
\|\lambda^h-\lambda\|_{\mathbb S^2}^2
+
\|\mathbf Z^h-\mathbf Z\|_{\mathbb H^2}^2
\right)
\longrightarrow0.
\end{equation}
Combining the two estimates yields
\[
\begin{aligned}
&
\mathbb E
\sup_{t_{\rm in}\le t\le T}
\left(
\|\lambda_t^h-\lambda_t\|^2
+
\|P_t^h-P_t\|_F^2
\right)
\\
&\quad
+
\mathbb E
\int_{t_{\rm in}}^T
\left(
\|\mathbf Z_t^h-\mathbf Z_t\|_F^2
+
\|\mathbf R_t^h-\mathbf R_t\|_F^2
\right)dt
\longrightarrow0.
\end{aligned}
\]
The limiting processes solve
\eqref{eq:chart_reduced_first_adjoint} and
\eqref{eq:chart_reduced_second_adjoint}
along the limiting fixed-latent reference control.  If that reference control
is optimal, the limiting processes are the corresponding PMP adjoints for the open-loop control problem.  This proves
\eqref{eq:full_adjoint_bsde_convergence}.

\medskip\noindent
\textbf{Step 5: wealth-relevant blocks.}

Conditional projection and the preceding convergence hold for the full vector
and matrix.  Premultiplication by $e_X^\top$, postmultiplication by $e_X$, and
extraction of the wealth--factor block are continuous linear operations.
Consequently,
\[
\widetilde\lambda^X
\longrightarrow
\lambda^X,
\qquad
\widetilde P^{XX}
\longrightarrow
P^{XX},
\qquad
\widetilde P^{XY}
\longrightarrow
P^{XY}
\]
after conditional projection and interpolation.  These are precisely the
blocks entering
\eqref{eq:factor_qp_coefficients}.  No identification of $P^{XY}$ with
$V_{XY}$ is used.

\end{proof}

\begin{appremark}[Filtration and recovery separation]
\label{rem:appendix_filtration_recovery}
The Brownian-filtration assumption ensures the martingale-representation
property used in
\eqref{eq:appendix_first_interval_density} and
\eqref{eq:appendix_second_interval_density}.  Under a larger filtration, the
adjoint systems and discrete projections must retain the corresponding
orthogonal martingale components.  No barrier term enters either adjoint
recursion; constrained recovery is applied only after the adapted adjoints
have been estimated.
\end{appremark}

\section{Reference-Policy Stability}
\label{app:proof_reference_performance}

\subsection{From Reference Performance to Adjoint Proximity}
\label{sec:performance_adjoint_bridge}

Theorem~\ref{thm:bptt_pmp_constrained} identifies the adjoints associated with
a fixed reference control, but it does not compare that reference with the
optimal control.  We provide this link under a local quadratic-growth
condition.  The result below supplies the performance-to-adjoint component of
the end-to-end guarantee stated as
Theorem~\ref{thm:end_to_end_qp_recovery} in the main manuscript.

We use the process norms $\mathbb S^2$, $\mathbb H^2$, and $L^{q,p}$
defined in Appendix~\ref{app:notation_summary}.

For every admissible control $\mathbf u$ in the localization below, let
\[
(\mathbf S^{\mathbf u},
 \lambda^{\mathbf u},
 \mathbf Z^{\mathbf u},
 P^{\mathbf u},
 \mathbf R^{\mathbf u})
\]
denote the corresponding state and exact chart-reduced adjoint systems.
Define
\begin{equation}\label{eq:control_indexed_shift}
\boldsymbol\zeta_t^{\mathbf u}
:=
\mathbf Z_t^{\mathbf u}
-
P_t^{\mathbf u}
\boldsymbol\sigma_S
(t,\mathbf S_t^{\mathbf u};\mathbf u_t),
\qquad
\vartheta^{\mathbf u}
:=
(\lambda^{\mathbf u},
 \boldsymbol\zeta^{\mathbf u},
 P^{\mathbf u}).
\end{equation}

\begin{appassumption}[Reference-policy quadratic growth and coefficient-jet regularity]
\label{ass:reference_qg}
Let $\mathbf u^*$ be a locally optimal admissible control.  There exist a
localized admissible comparison neighborhood
$\mathfrak U_0\subset\mathbb H^2$ of $\mathbf u^*$, understood relative to
a common compact chart localization, and constants
$\kappa_{\rm qg},C_{\rm jet},C_{\rm adj}>0$ such that:

\begin{enumerate}[label=(\roman*),leftmargin=*]

\item \emph{Common localization.}
Every $\mathbf u\in\mathfrak U_0$ and its state remain in the common chart
localization.  On this set, the chart-reduced drift, diffusion, and reward,
their fixed-latent state derivatives through order two, and the first two
derivatives of the terminal reward are bounded and Lipschitz with constant
$C_{\rm jet}$.  The required forward and adjoint moments are uniform over
$\mathfrak U_0$.  In addition, the exact adjoints satisfy the uniform
essential bound
\begin{equation}\label{eq:reference_uniform_adjoint_bound}
\sup_{\mathbf u\in\mathfrak U_0}
\left(
\|\lambda^{\mathbf u}\|_{L^\infty(dt\otimes d\mathbb P)}
+
\|\mathbf Z^{\mathbf u}\|_{L^\infty(dt\otimes d\mathbb P)}
+
\|P^{\mathbf u}\|_{L^\infty(dt\otimes d\mathbb P)}
+
\|\mathbf R^{\mathbf u}\|_{L^\infty(dt\otimes d\mathbb P)}
\right)
\le C_{\rm adj}.
\end{equation}

\item \emph{Local quadratic growth.}
For every $\mathbf u\in\mathfrak U_0$,
\begin{equation}\label{eq:reference_quadratic_growth}
J(\mathbf u^*)-J(\mathbf u)
\ge
\kappa_{\rm qg}
\|\mathbf u-\mathbf u^*\|_{\mathbb H^2}^2.
\end{equation}

\end{enumerate}
\end{appassumption}

\begin{appremark}[A smooth-Markov sufficient condition for the martingale-adjoint bound]
\label{rem:smooth_markov_z_bound}
The explicit bound on $\mathbf Z^{\mathbf u}$ in
\eqref{eq:reference_uniform_adjoint_bound} cannot in general be replaced by
finiteness of arbitrarily high moments: multiplication by an unbounded
$\mathbf Z^{\mathbf u}$ is not a bounded operation on $\mathbb H^2$.
The level adjoint $\lambda^{\mathbf u}$ is already bounded for the present
linear BSDE when its terminal value and coefficients are bounded; it is listed
in \eqref{eq:reference_uniform_adjoint_bound} for a uniform statement.

A standard sufficient setting for the nontrivial $\mathbf Z$ bound is a
smooth Markovian decoupling field for the adjoint BSDE.  More precisely,
suppose that, for every $\mathbf u\in\mathfrak U_0$, there is a deterministic
field $\Lambda^{\mathbf u}$ such that
\[
\lambda_t^{\mathbf u}
=
\Lambda^{\mathbf u}(t,\mathbf S_t^{\mathbf u}),
\qquad
\mathbf Z_t^{\mathbf u}
=
D_{\mathbf s}\Lambda^{\mathbf u}(t,\mathbf S_t^{\mathbf u})
\boldsymbol\sigma_S
(t,\mathbf S_t^{\mathbf u};\mathbf u_t),
\]
and that the fields $\Lambda^{\mathbf u}$ are uniformly $C^1$ in the state on
the compact localization.  Boundedness of the localized diffusion then gives
the required uniform essential bound on $\mathbf Z^{\mathbf u}$.  At a smooth
Markov optimum, the usual PMP--dynamic-programming identification gives
$\Lambda^*=D_{\mathbf s}V$ and hence
$\mathbf Z^*=D_{\mathbf s\mathbf s}^2V\,\boldsymbol\sigma_S$.
For a general nonstationary reference feedback, the decoupling field need not
coincide with the gradient of the closed-loop policy-evaluation function,
because differentiating that function also differentiates the feedback.  The
stated decoupling-field condition avoids this distinction.
\end{appremark}

\begin{apptheorem}[Performance-to-adjoint stability for a near-optimal reference]
\label{thm:reference_performance_adjoint}
Suppose Assumptions~\ref{ass:Gamma},
\ref{ass:chart_pmp_regularity}, and
\ref{ass:reference_qg} hold.  Let
$\mathbf u^{\rm ref}\in\mathfrak U_0$ and define
\[
\varepsilon_{\rm ref}
:=
J(\mathbf u^*)-J(\mathbf u^{\rm ref})
\ge0.
\]
Then
\begin{equation}\label{eq:reference_control_sqrt_rate}
\|\mathbf u^{\rm ref}-\mathbf u^*\|_{\mathbb H^2}
\le
\sqrt{
\frac{\varepsilon_{\rm ref}}
{\kappa_{\rm qg}}
}.
\end{equation}
Moreover, there exists a constant $C_{\rm stab}>0$, independent of
$\varepsilon_{\rm ref}$, such that
\begin{align}
&
\|\mathbf S^{\rm ref}-\mathbf S^*\|_{\mathbb S^2}
+
\|\lambda^{\rm ref}-\lambda^*\|_{\mathbb S^2}
+
\|\mathbf Z^{\rm ref}-\mathbf Z^*\|_{\mathbb H^2}
\notag
\\
&\quad
+
\|P^{\rm ref}-P^*\|_{\mathbb S^2}
+
\|\mathbf R^{\rm ref}-\mathbf R^*\|_{\mathbb H^2}
+
\|\boldsymbol\zeta^{\rm ref}
-\boldsymbol\zeta^*\|_{\mathbb H^2}
\le
C_{\rm stab}
\sqrt{
\frac{\varepsilon_{\rm ref}}
{\kappa_{\rm qg}}
}.
\label{eq:reference_full_adjoint_sqrt_rate}
\end{align}
In particular, for a constant $C_{\rm ref}>0$,
\begin{equation}\label{eq:reference_theta_sqrt_rate}
\|\vartheta^{\rm ref}-\vartheta^*\|_{L^{2,2}}
\le
C_{\rm ref}
\sqrt{
\frac{\varepsilon_{\rm ref}}
{\kappa_{\rm qg}}
}.
\end{equation}
\end{apptheorem}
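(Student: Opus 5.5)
The first bound \eqref{eq:reference_control_sqrt_rate} is immediate. Since $\mathbf u^{\rm ref}\in\mathfrak U_0$, the quadratic-growth inequality \eqref{eq:reference_quadratic_growth} gives $\kappa_{\rm qg}\|\mathbf u^{\rm ref}-\mathbf u^*\|_{\mathbb H^2}^2\le J(\mathbf u^*)-J(\mathbf u^{\rm ref})=\varepsilon_{\rm ref}$; taking square roots finishes it. Everything else will be reduced to showing that the map from a control to its state and adjoint processes is Lipschitz in $\mathbb H^2$ on $\mathfrak U_0$, with a constant depending only on $C_{\rm jet}$, $C_{\rm adj}$, $T$ and the uniform moments. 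We then combine that Lipschitz bound with \eqref{eq:reference_control_sqrt_rate}. I would prove this Lipschitz property in four stages, ordered along the triangular structure of the forward--backward system.

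\emph{Stage 1: forward state.} Write $\delta\mathbf S=\mathbf S^{\rm ref}-\mathbf S^*$. The primitive coefficients are Lipschitz in $(\mathbf s,\mathbf u)$ with constant $C_{\rm jet}$ on the common localization. A standard It\^o/BDG plus Gronwall argument on the primitive state equation then gives $\|\delta\mathbf S\|_{\mathbb S^2}\le C\|\mathbf u^{\rm ref}-\mathbf u^*\|_{\mathbb H^2}$. I would work with primitive controls rather than latent coordinates. If the chart-reduced form is preferred, the Lipschitz inverse $\Phi$ from Assumption~\ref{ass:chart_pmp_regularity} converts $\mathbb H^2$ control differences into latent differences with a bounded constant, so the two versions are equivalent.

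\emph{Stage 2: first adjoint.} Subtract the two linear BSDEs \eqref{eq:chart_reduced_first_adjoint}. The difference $(\delta\lambda,\delta\mathbf Z)$ solves a linear BSDE whose coefficients are bounded along the reference path. Its inhomogeneity is the difference of the fixed-latent state derivatives of the data, evaluated at the two state--control pairs and multiplied by the optimal adjoints $(\lambda^*,\mathbf Z^*)$. Two ingredients control this term: the Lipschitz property of the jets, and the essential bound \eqref{eq:reference_uniform_adjoint_bound}. The bound on $\mathbf Z^*$ is exactly what keeps the product $\mathbf Z^*\cdot(\text{coefficient difference})$ in $\mathbb H^2$ with norm $O(\|\delta\mathbf S\|+\|\delta\mathbf u\|)$, as Remark~\ref{rem:smooth_markov_z_bound} stresses. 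The terminal difference is $O(\|\delta X_T\|_{L^2})$ because the terminal reward is $C^2$ with Lipschitz second derivative. The standard a priori estimate for linear BSDEs then gives $\|\delta\lambda\|_{\mathbb S^2}+\|\delta\mathbf Z\|_{\mathbb H^2}\le C\|\delta\mathbf u\|_{\mathbb H^2}$.

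\emph{Stage 3: second adjoint.} Repeat the argument for \eqref{eq:chart_reduced_second_adjoint}. The forcing now contains products of jet differences with $P^*$, $\mathbf R^*$, $\lambda^*$ and $\mathbf Z^*$, all essentially bounded. It also contains $D^2_{\mathbf s\mathbf s}\bar{\mathcal H}$ differences that are linear in $(\delta\lambda,\delta\mathbf Z)$ with bounded coefficients, and these are controlled by Stage~2.

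\emph{Stage 4: shifted adjoint.} By \eqref{eq:control_indexed_shift},
\[
\delta\boldsymbol\zeta=\delta\mathbf Z-\delta P\,\boldsymbol\sigma_S^{\rm ref}-P^*\bigl(\boldsymbol\sigma_S^{\rm ref}-\boldsymbol\sigma_S^*\bigr).
\]
The diffusion is bounded and $P^*$ is essentially bounded, so the $\mathbb H^2$ norm of $\delta\boldsymbol\zeta$ is bounded by the previous stages. Finally, $\|\cdot\|_{L^{2,2}}$ coincides with the $\mathbb H^2$ norm and is dominated by $\sqrt T$ times the $\mathbb S^2$ norm. This gives \eqref{eq:reference_theta_sqrt_rate} from \eqref{eq:reference_full_adjoint_sqrt_rate}.

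I expect the main obstacle to be Stage~3. The second-adjoint driver involves $R^\ell$ multiplied by state-derivative differences of $\bar{\boldsymbol\sigma}$. Closing the estimate in $\mathbb H^2$ without losing integrability needs the $L^\infty$ bound on $\mathbf R^*$, not just moment bounds. The generator is also linear in $\mathbf R$ with a bounded but unsigned coefficient, so I would handle it with the standard It\^o--Young weighted estimate, using weight $e^{\beta t}$ with $\beta$ large, to absorb the $\delta\mathbf R$ terms. If the $L^\infty$ hypotheses were unavailable, this is where the argument would break, and one would need a BMO-type bound instead.
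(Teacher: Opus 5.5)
Your proposal is correct and follows essentially the same route as the paper. Quadratic growth gives the control rate. The triangular chain of estimates then runs as follows: forward-state stability by BDG and Gronwall; first-adjoint stability, then second-adjoint stability by the weighted It\^o--Young estimate, using the uniform $L^\infty$ bounds on $\lambda,\mathbf Z,P,\mathbf R$ and the Lipschitz inverse chart to convert latent differences into state and control differences; the same decomposition of $\Delta\boldsymbol\zeta$; and finally the $\sqrt T$ comparison of $L^{2,2}$ with $\mathbb S^2$.

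The only difference is cosmetic. You attach the essential bounds to the optimal adjoints $(\lambda^*,\mathbf Z^*,P^*,\mathbf R^*)$ when splitting the driver differences, whereas the paper attaches them to the reference adjoints. This is immaterial because the bound in Assumption~\ref{ass:reference_qg} is uniform over $\mathfrak U_0$.
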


\begin{proof}
See Appendix~\ref{app:proof_reference_performance}.
\end{proof}

\begin{appremark}[Scope of the quadratic-growth bridge]
\label{rem:reference_qg_scope}
The pointwise curvature conditions
$P^{XX}<0$ and $\boldsymbol\Sigma\succ0$ make the fixed-adjoint risky-weight
block strongly concave, but they do not by themselves imply
\eqref{eq:reference_quadratic_growth} for the full dynamic objective.
Quadratic growth is a coercivity condition for the stochastic second variation
on the admissible process space.  The neighborhood in
Assumption~\ref{ass:reference_qg} is relative to the stated localization; no
ambient $\mathbb H^2$ openness is asserted for a hard pathwise state constraint.

Theorem~\ref{thm:reference_performance_adjoint} is trajectory-wise and
integrated in time.  A uniform comparison of adjoint fields at identical
off-trajectory deployment states additionally requires quadratic growth for
the corresponding restarted problems or an appropriate
state-coverage/concentrability condition.  The same distinction applies when
\eqref{eq:training_objective} averages over a random initial law.
Online Supporting Information~\ref{app:merton_qg_verification} makes the
scope explicit: uniform unweighted $\mathbb H^2$ curvature fails in the
untruncated adapted CRRA problem, while a genuine untruncated
quadratic-growth verification is available on bounded deterministic control
classes.  The latter is a narrower nonvacuous example, not a verification for
the full adapted class.
\end{appremark}


\subsection{Proof of Theorem~\texorpdfstring{
\ref{thm:reference_performance_adjoint}}{}}
\label{app:reference_performance_proof}

The proof is stated for the objective $J$ appearing in
Assumption~\ref{ass:reference_qg}.  If $J$ is instead the objective averaged
over the random initial time--state law in
\eqref{eq:training_objective}, the same argument applies conditionally on the
initial pair and then after integration, provided that the localization and
stability constants below are uniform over its support.

For each
$\mathbf u\in\mathfrak U_0$, let
$\mathbf a^{\mathbf u}$ denote its unique latent representation on the common
chart patch, so that
\[
\mathbf u_t
=
\Psi
(t,\mathbf S_t^{\mathbf u},\mathbf a_t^{\mathbf u}).
\]
The embedding property in
Assumption~\ref{ass:chart_pmp_regularity}
and compact localization imply uniform local Lipschitz control of the latent
coordinates by the corresponding state and control variables.  In particular,
with
$\Delta\mathbf a:=\mathbf a^{\rm ref}-\mathbf a^*$,
\begin{equation}\label{eq:proof_reference_latent_stability}
\|\Delta\mathbf a_t\|
\le
C
\left(
\|\Delta\mathbf S_t\|
+
\|\Delta\mathbf u_t\|
\right)
\qquad
(dt\otimes d\mathbb P)\text{-a.e.}
\end{equation}

For compactness, write
\[
\Delta\mathbf u
:=
\mathbf u^{\rm ref}-\mathbf u^*,
\qquad
\Delta\mathbf S
:=
\mathbf S^{\rm ref}-\mathbf S^*,
\]
and use analogous notation for the adjoint processes.  All constants below
depend only on the common localization in
Assumption~\ref{ass:reference_qg} and may change from line to line.

\begin{proof}[Proof of Theorem~\ref{thm:reference_performance_adjoint}]

\textbf{Step 1: performance loss to control distance.}

Evaluating the quadratic-growth condition
\eqref{eq:reference_quadratic_growth}
at $\mathbf u^{\rm ref}$ gives
\[
\kappa_{\rm qg}
\|\Delta\mathbf u\|_{\mathbb H^2}^2
\le
J(\mathbf u^*)-J(\mathbf u^{\rm ref})
=
\varepsilon_{\rm ref}.
\]
Hence
\[
\|\Delta\mathbf u\|_{\mathbb H^2}
\le
\sqrt{
\frac{\varepsilon_{\rm ref}}
{\kappa_{\rm qg}}
},
\]
which proves
\eqref{eq:reference_control_sqrt_rate}.

\medskip\noindent
\textbf{Step 2: forward-state stability.}

Under a generic admissible control $\mathbf u\in\mathfrak U_0$, the state
satisfies
\[
d\mathbf S_t^{\mathbf u}
=
\boldsymbol b_S
(t,\mathbf S_t^{\mathbf u};\mathbf u_t)\,dt
+
\boldsymbol\sigma_S
(t,\mathbf S_t^{\mathbf u};\mathbf u_t)\,d\mathbf W_t.
\]
Subtracting the equations for
$\mathbf u^{\rm ref}$ and $\mathbf u^*$ and using the uniform Lipschitz bounds
on the common localization gives
\[
\begin{aligned}
&
\left\|
\boldsymbol b_S
(t,\mathbf S_t^{\rm ref};\mathbf u_t^{\rm ref})
-
\boldsymbol b_S
(t,\mathbf S_t^*;\mathbf u_t^*)
\right\|
\\
&\qquad
+
\left\|
\boldsymbol\sigma_S
(t,\mathbf S_t^{\rm ref};\mathbf u_t^{\rm ref})
-
\boldsymbol\sigma_S
(t,\mathbf S_t^*;\mathbf u_t^*)
\right\|_F
\\
&\le
C
\left(
\|\Delta\mathbf S_t\|
+
\|\Delta\mathbf u_t\|
\right).
\end{aligned}
\]
The Burkholder--Davis--Gundy inequality and Gronwall's lemma therefore yield
\begin{equation}\label{eq:proof_reference_state_stability}
\|\Delta\mathbf S\|_{\mathbb S^2}
\le
C
\|\Delta\mathbf u\|_{\mathbb H^2}.
\end{equation}

\medskip\noindent
\textbf{Step 3: first-adjoint stability.}

Let
\[
g(\mathbf s)
:=
K e^{-\rho T}U(x),
\qquad
\mathbf s=(x,y),
\]
denote the terminal reward.  For a generic control
$\mathbf u\in\mathfrak U_0$, define the chart-reduced first-adjoint driver by
\[
F_1^{\mathbf u}
(t,\lambda,\mathbf Z)
:=
D_{\mathbf s}\bar{\mathcal H}
\left(
t,
\mathbf S_t^{\mathbf u},
\mathbf a_t^{\mathbf u},
\lambda,
\mathbf Z
\right).
\]
The first-adjoint BSDE is
\[
-d\lambda_t^{\mathbf u}
=
F_1^{\mathbf u}
(t,\lambda_t^{\mathbf u},\mathbf Z_t^{\mathbf u})\,dt
-
\mathbf Z_t^{\mathbf u}\,d\mathbf W_t,
\qquad
\lambda_T^{\mathbf u}
=
D_{\mathbf s}g(\mathbf S_T^{\mathbf u}).
\]

For compact notation, set
\[
\begin{aligned}
c_t^{\mathbf u}
&:=
D_{\mathbf s}\bar\ell
(t,\mathbf S_t^{\mathbf u},\mathbf a_t^{\mathbf u}),
\\
A_t^{\mathbf u}
&:=
D_{\mathbf s}\bar{\boldsymbol b}
(t,\mathbf S_t^{\mathbf u},\mathbf a_t^{\mathbf u}),
\\
B_t^{\mathbf u,\ell}
&:=
D_{\mathbf s}\bar{\boldsymbol\sigma}^{\ell}
(t,\mathbf S_t^{\mathbf u},\mathbf a_t^{\mathbf u}).
\end{aligned}
\]
Then
\[
F_1^{\mathbf u}(t,\lambda,\mathbf Z)
=
c_t^{\mathbf u}
+
(A_t^{\mathbf u})^\top\lambda
+
\sum_{\ell=1}^{d_W}
(B_t^{\mathbf u,\ell})^\top\mathbf Z^\ell.
\]
Consequently, the driver difference has the exact decomposition
\begin{equation}\label{eq:proof_reference_first_driver_decomposition}
\begin{aligned}
&F_1^{\rm ref}
(t,\lambda_t^{\rm ref},\mathbf Z_t^{\rm ref})
-
F_1^*
(t,\lambda_t^*,\mathbf Z_t^*)
\\
={}&
\Delta c_t
+
(\Delta A_t)^\top\lambda_t^{\rm ref}
+
\sum_{\ell=1}^{d_W}
(\Delta B_t^\ell)^\top
(\mathbf Z_t^{\rm ref})^\ell
\\
&+
(A_t^*)^\top\Delta\lambda_t
+
\sum_{\ell=1}^{d_W}
(B_t^{*,\ell})^\top(\Delta\mathbf Z_t)^\ell.
\end{aligned}
\end{equation}
The coefficient-jet Lipschitz bounds and
\eqref{eq:proof_reference_latent_stability} give
\[
\|\Delta c_t\|
+
\|\Delta A_t\|_F
+
\sum_{\ell=1}^{d_W}\|\Delta B_t^\ell\|_F
\le
C
\left(
\|\Delta\mathbf S_t\|
+
\|\Delta\mathbf u_t\|
\right).
\]
Using the uniform essential bounds on
$\lambda^{\rm ref}$ and $\mathbf Z^{\rm ref}$ from
\eqref{eq:reference_uniform_adjoint_bound} in
\eqref{eq:proof_reference_first_driver_decomposition} therefore yields
\[
\begin{aligned}
&
\left\|
F_1^{\rm ref}
(t,\lambda_t^{\rm ref},\mathbf Z_t^{\rm ref})
-
F_1^*
(t,\lambda_t^*,\mathbf Z_t^*)
\right\|
\\
&\le
C
\left(
\|\Delta\mathbf S_t\|
+
\|\Delta\mathbf u_t\|
+
\|\Delta\lambda_t\|
+
\|\Delta\mathbf Z_t\|_F
\right).
\end{aligned}
\]
The terminal-gradient Lipschitz condition similarly gives
\[
\left\|
D_{\mathbf s}g(\mathbf S_T^{\rm ref})
-
D_{\mathbf s}g(\mathbf S_T^*)
\right\|
\le
C\|\Delta\mathbf S_T\|.
\]

Applying It\^o's formula to
$e^{\beta t}\|\Delta\lambda_t\|^2$, choosing $\beta$ sufficiently large, and
using Young's inequality and the standard BSDE estimate yield
\begin{equation}\label{eq:proof_reference_first_bsde_stability}
\|\Delta\lambda\|_{\mathbb S^2}
+
\|\Delta\mathbf Z\|_{\mathbb H^2}
\le
C
\left(
\|\Delta\mathbf S\|_{\mathbb S^2}
+
\|\Delta\mathbf u\|_{\mathbb H^2}
\right).
\end{equation}

\medskip\noindent
\textbf{Step 4: second-adjoint stability.}

For a generic control $\mathbf u$, let
\[
F_2^{\mathbf u}
(t,\lambda,\mathbf Z,P,\mathbf R)
\]
denote the matrix driver in
\eqref{eq:chart_reduced_second_adjoint},
evaluated at
\[
\left(
t,
\mathbf S_t^{\mathbf u},
\mathbf a_t^{\mathbf u},
\lambda,
\mathbf Z,
P,
\mathbf R
\right).
\]
Then
\[
-dP_t^{\mathbf u}
=
F_2^{\mathbf u}
\left(
t,
\lambda_t^{\mathbf u},
\mathbf Z_t^{\mathbf u},
P_t^{\mathbf u},
\mathbf R_t^{\mathbf u}
\right)dt
-
\sum_{\ell=1}^{d_W}
R_t^{\mathbf u,\ell}\,dW_t^\ell,
\]
with terminal condition
\[
P_T^{\mathbf u}
=
D_{\mathbf s\mathbf s}^2g
(\mathbf S_T^{\mathbf u}).
\]

Write
\[
\Xi_t^{\mathbf u}
:=
(\lambda_t^{\mathbf u},\mathbf Z_t^{\mathbf u},
 P_t^{\mathbf u},\mathbf R_t^{\mathbf u}).
\]
The difference of the two drivers is split into a coefficient perturbation and
an adjoint-variable perturbation:
\begin{equation}\label{eq:proof_reference_second_driver_split}
\begin{aligned}
F_2^{\rm ref}(t,\Xi_t^{\rm ref})
-
F_2^*(t,\Xi_t^*)
={}&
\left[
F_2^{\rm ref}(t,\Xi_t^{\rm ref})
-
F_2^*(t,\Xi_t^{\rm ref})
\right]
\\
&+
\left[
F_2^*(t,\Xi_t^{\rm ref})
-
F_2^*(t,\Xi_t^*)
\right].
\end{aligned}
\end{equation}
In the first bracket, every coefficient-jet difference is bounded by
\[
C
\left(
\|\Delta\mathbf S_t\|
+
\|\Delta\mathbf a_t\|
\right)
\le
C
\left(
\|\Delta\mathbf S_t\|
+
\|\Delta\mathbf u_t\|
\right)
\]
by \eqref{eq:proof_reference_latent_stability}.  These differences multiply
only bounded coefficient jets and the reference variables
$\lambda^{\rm ref}$, $\mathbf Z^{\rm ref}$,
$P^{\rm ref}$, and $\mathbf R^{\rm ref}$; the required bounds are exactly
\eqref{eq:reference_uniform_adjoint_bound}.  This includes the potentially
problematic terms
\[
\sum_{j,\ell}
(\mathbf Z_t^{\rm ref})^{j\ell}
\left(
D_{\mathbf s\mathbf s}^2\bar\sigma_{t,\rm ref}^{j\ell}
-
D_{\mathbf s\mathbf s}^2\bar\sigma_{t,*}^{j\ell}
\right).
\]
The second bracket in
\eqref{eq:proof_reference_second_driver_split} is Lipschitz in
$(\lambda,\mathbf Z,P,\mathbf R)$ because the second-adjoint driver is affine
in these variables on the common localization.  Hence
\[
\begin{aligned}
&
\left\|
F_2^{\rm ref}(t,\Xi_t^{\rm ref})
-
F_2^*(t,\Xi_t^*)
\right\|_F
\\
&\le
C
\Big(
\|\Delta\mathbf S_t\|
+
\|\Delta\mathbf u_t\|
+
\|\Delta\lambda_t\|
+
\|\Delta\mathbf Z_t\|_F
+
\|\Delta P_t\|_F
+
\|\Delta\mathbf R_t\|_F
\Big).
\end{aligned}
\]
The terminal-Hessian Lipschitz condition gives
\[
\left\|
D_{\mathbf s\mathbf s}^2g(\mathbf S_T^{\rm ref})
-
D_{\mathbf s\mathbf s}^2g(\mathbf S_T^*)
\right\|_F
\le
C\|\Delta\mathbf S_T\|.
\]

Applying the corresponding matrix-valued BSDE stability estimate and using
\eqref{eq:proof_reference_first_bsde_stability} yield
\begin{equation}\label{eq:proof_reference_second_bsde_stability}
\begin{aligned}
\|\Delta P\|_{\mathbb S^2}
+
\|\Delta\mathbf R\|_{\mathbb H^2}
\le
C
\Big(
&
\|\Delta\mathbf S\|_{\mathbb S^2}
+
\|\Delta\mathbf u\|_{\mathbb H^2}
\\
&
+
\|\Delta\lambda\|_{\mathbb S^2}
+
\|\Delta\mathbf Z\|_{\mathbb H^2}
\Big).
\end{aligned}
\end{equation}

\medskip\noindent
\textbf{Step 5: shifted-adjoint stability.}

Define
\[
\boldsymbol\sigma_t^{\rm ref}
:=
\boldsymbol\sigma_S
(t,\mathbf S_t^{\rm ref};\mathbf u_t^{\rm ref}),
\qquad
\boldsymbol\sigma_t^*
:=
\boldsymbol\sigma_S
(t,\mathbf S_t^*;\mathbf u_t^*).
\]
From
\eqref{eq:control_indexed_shift},
\[
\begin{aligned}
\Delta\boldsymbol\zeta_t
={}&
\Delta\mathbf Z_t
-
\Delta P_t\,
\boldsymbol\sigma_t^{\rm ref}
-
P_t^*
\left(
\boldsymbol\sigma_t^{\rm ref}
-
\boldsymbol\sigma_t^*
\right).
\end{aligned}
\]
The diffusion coefficient is bounded and Lipschitz on the common
localization, and $P^*$ is uniformly essentially bounded.  Hence
\[
\left\|
\boldsymbol\sigma^{\rm ref}
-
\boldsymbol\sigma^*
\right\|_{\mathbb H^2}
\le
C
\left(
\|\Delta\mathbf S\|_{\mathbb S^2}
+
\|\Delta\mathbf u\|_{\mathbb H^2}
\right),
\]
and therefore
\begin{equation}\label{eq:proof_reference_shift_stability}
\begin{aligned}
\|\Delta\boldsymbol\zeta\|_{\mathbb H^2}
\le
C
\Big(
&
\|\Delta\mathbf Z\|_{\mathbb H^2}
+
\|\Delta P\|_{\mathbb S^2}
\\
&
+
\|\Delta\mathbf S\|_{\mathbb S^2}
+
\|\Delta\mathbf u\|_{\mathbb H^2}
\Big).
\end{aligned}
\end{equation}

\medskip\noindent
\textbf{Step 6: combination of the estimates.}

Combining
\eqref{eq:proof_reference_state_stability}--
\eqref{eq:proof_reference_shift_stability}
gives
\[
\begin{aligned}
&
\|\Delta\mathbf S\|_{\mathbb S^2}
+
\|\Delta\lambda\|_{\mathbb S^2}
+
\|\Delta\mathbf Z\|_{\mathbb H^2}
+
\|\Delta P\|_{\mathbb S^2}
\\
&\qquad
+
\|\Delta\mathbf R\|_{\mathbb H^2}
+
\|\Delta\boldsymbol\zeta\|_{\mathbb H^2}
\le
C_{\rm stab}
\|\Delta\mathbf u\|_{\mathbb H^2}.
\end{aligned}
\]
Using
\eqref{eq:reference_control_sqrt_rate}
proves
\eqref{eq:reference_full_adjoint_sqrt_rate}.

Finally, for any level process $\chi$,
\[
\|\chi\|_{L^{2,2}}
\le
\sqrt T\,\|\chi\|_{\mathbb S^2},
\]
while
\[
\|\boldsymbol\zeta\|_{L^{2,2}}
=
\|\boldsymbol\zeta\|_{\mathbb H^2}.
\]
Since
\[
\vartheta
=
(\lambda,\boldsymbol\zeta,P),
\]
we obtain
\[
\|\vartheta^{\rm ref}-\vartheta^*\|_{L^{2,2}}
\le
C_{\rm ref}
\sqrt{
\frac{\varepsilon_{\rm ref}}
{\kappa_{\rm qg}}
},
\]
which proves
\eqref{eq:reference_theta_sqrt_rate}.

\end{proof}

\section{Estimated-Input QP Recovery}
\label{app:end_to_end_recovery_proofs}

This appendix isolates the QP branch of the recovery analysis.  Appendix~\ref{app:proof_reference_performance}
controls the reference and target adjoints along their paired trajectories.
Here all inputs are evaluated instead at a common deployment-state process,
and the resulting adjoint-input perturbation is propagated through the
strongly concave local QP.

\subsection{Common-State Adjoint Inputs and End-to-End QP Bounds}
\label{sec:end_to_end_recovery}

Fix a common deployment-state process $\mathbf S$ taking values in a compact
deployment region.  Let
\[
\widehat\vartheta_t
=
(\widehat\lambda_t,
 \widehat{\boldsymbol\zeta}_t,
 \widehat P_t)
\]
denote the estimated input,
$\vartheta_t^{\rm ref}$ the exact input associated with the frozen reference
control, and $\vartheta_t^*$ the target input.  In the portfolio
specialization, these symbols may be read as their wealth-relevant
restrictions.

\begin{appassumption}[Common-state adjoint-input errors]
\label{ass:qp_adjoint_input_stability}
Let
\[
\delta_{\rm ol}
:=
\delta_{\rm time}(h)
+
\delta_{\rm MC}(N_{\rm MC})
+
\delta_{\rm reg},
\]
where each component converges to zero under consistent refinement.  Assume
\[
\left\|
(\widehat\lambda,\widehat P)
-
(\lambda^{\rm ref},P^{\rm ref})
\right\|_{L^{q,p}}
\le
C_{\rm ol}\delta_{\rm ol},
\]
\[
\left\|
\widehat{\boldsymbol\zeta}
-
\boldsymbol\zeta^{\rm ref}
\right\|_{L^{q,p}}
\le
\delta_\zeta,
\qquad
\left\|
\vartheta^{\rm ref}
-
\vartheta^*
\right\|_{L^{q,p}}
\le
\delta_{\rm ref}.
\]
Define
\[
\delta_{\rm adj}
:=
\delta_{\rm ol}
+
\delta_\zeta
+
\delta_{\rm ref}.
\]
The estimated input and the exact and numerical QP solutions remain in common
compact localizations.
\end{appassumption}

By the triangle inequality, there is a constant $C_{\rm in}>0$ such that
\begin{equation}\label{eq:aggregate_adjoint_input_bound}
\|\widehat\vartheta-\vartheta^*\|_{L^{q,p}}
\le
C_{\rm in}\delta_{\rm adj}.
\end{equation}

Theorem~\ref{thm:reference_performance_adjoint} gives a trajectory-wise
$O(\sqrt{\varepsilon_{\rm ref}/\kappa_{\rm qg}})$ bound.  Since the QP
comparison is made at identical deployment states, the additional bridge is
stated explicitly.

\begin{appassumption}[Common-state reference-to-target bridge]
\label{ass:common_state_reference_bridge}
For the common deployment-state process $\mathbf S$ above, there exists a
constant $C_{\rm cs}<\infty$ such that, for $p=q=2$,
\begin{equation}\label{eq:common_state_reference_bridge}
\left\|
\vartheta^{\rm ref}(\cdot,\mathbf S_\cdot)
-
\vartheta^*(\cdot,\mathbf S_\cdot)
\right\|_{L^{2,2}}
\le
C_{\rm cs}
\sqrt{
\frac{\varepsilon_{\rm ref}}
{\kappa_{\rm qg}}
}.
\end{equation}
Equivalently,
$\delta_{\rm ref}\le
C_{\rm cs}\sqrt{\varepsilon_{\rm ref}/\kappa_{\rm qg}}$.
\end{appassumption}

Possible sufficient mechanisms are discussed in
Remark~\ref{rem:reference_qg_scope}: restarted-problem stability or a
suitable state-coverage/concentrability condition.  No such implication is
asserted without the corresponding additional regularity.

\subsection{QP-PGDPO Stability and End-to-End Rate}
\label{sec:qp_recovery}

Let
\[
\widehat{\mathbf g}_t
:=
\mathbf g(t,\mathbf S_t;\widehat\vartheta_t),
\qquad
\widehat Q_t
:=
Q(t,\mathbf S_t;\widehat\vartheta_t),
\]
where $\mathbf g$ and $Q$ are defined in
\eqref{eq:factor_qp_coefficients}.  Let
\[
\mathcal K_t
:=
\mathcal K(t,\mathbf S_t)
\]
be the convex risky-weight feasible set.

For a closed convex set $\mathcal K$, define the normal cone by
\[
N_{\mathcal K}(\boldsymbol\pi)
:=
\left\{
v:
\langle v,w-\boldsymbol\pi\rangle\le0
\text{ for all }w\in\mathcal K
\right\},
\qquad
\boldsymbol\pi\in\mathcal K.
\]
Let $\widehat{\boldsymbol\pi}_t^{\rm QP}$ be a numerical solution of the
estimated QP, and define its stationarity residual by
\[
r_t^{\rm qp}
:=
\operatorname{dist}\!\left(
0,
\widehat Q_t
\widehat{\boldsymbol\pi}_t^{\rm QP}
-
\widehat{\mathbf g}_t
+
N_{\mathcal K_t}
(\widehat{\boldsymbol\pi}_t^{\rm QP})
\right),
\qquad
\|r^{\rm qp}\|_{L^{q,p}}
\le
\delta_{\rm qp}.
\]

Let
$\boldsymbol\pi_t^*$
denote the exact QP solution obtained from
$\vartheta_t^*$ and define
\[
G_t^{\rm qp}
:=
\mathcal Q_{\rm loc}
(\boldsymbol\pi_t^*;\vartheta_t^*)
-
\mathcal Q_{\rm loc}
(\widehat{\boldsymbol\pi}_t^{\rm QP};\vartheta_t^*),
\]
where
\[
\mathcal Q_{\rm loc}
(\boldsymbol\pi;\vartheta)
:=
\mathbf g(\vartheta)^\top\boldsymbol\pi
-
\frac12
\boldsymbol\pi^\top Q(\vartheta)\boldsymbol\pi.
\]

\begin{appproposition}[Estimated-input stability of QP-PGDPO]
\label{prop:qp_policy_gap}
Suppose that $\mathcal K_t$ is a nonempty closed convex set and that, on the
input localization in
Assumption~\ref{ass:qp_adjoint_input_stability},
\[
Q(\vartheta)
\succeq
\underline\lambda_QI
\]
uniformly for some $\underline\lambda_Q>0$.  Suppose also that
$(\mathbf g,Q)$ are locally Lipschitz in $\vartheta$ and that the exact and
numerical QP solutions remain in a common bounded localization.  Then,
$dt\otimes d\mathbb P$-almost everywhere,
\begin{equation}\label{eq:qp_pointwise_policy_gap}
\|
\widehat{\boldsymbol\pi}_t^{\rm QP}
-
\boldsymbol\pi_t^*
\|
\le
C_{\rm qp}
\left(
e_t^{\rm adj}
+
r_t^{\rm qp}
\right),
\end{equation}
and therefore
\begin{equation}\label{eq:qp_policy_gap}
\|
\widehat{\boldsymbol\pi}^{\rm QP}
-
\boldsymbol\pi^*
\|_{L^{q,p}}
\le
C_{\rm qp}'
\left(
\delta_{\rm adj}
+
\delta_{\rm qp}
\right).
\end{equation}
Moreover,
\begin{equation}\label{eq:qp_pointwise_hamiltonian_gap}
0
\le
G_t^{\rm qp}
\le
C_Q
\left(
e_t^{\rm adj}
+
r_t^{\rm qp}
\right)^2.
\end{equation}
If $p,q\ge2$, then
\begin{equation}\label{eq:qp_hamiltonian_gap}
\|G^{\rm qp}\|_{L^{q/2,p/2}}
\le
C_Q'
\left(
\delta_{\rm adj}
+
\delta_{\rm qp}
\right)^2.
\end{equation}
\end{appproposition}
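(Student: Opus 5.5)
The proof treats the estimated QP as a perturbation of the target QP. First, the numerical solution is viewed as an exact solution of a tilted QP. Then the solution map of a strongly concave variational inequality over the fixed convex set $\mathcal K_t$ is Lipschitz in its data, which gives the policy bound. The quadratic gap bound then follows from strong concavity together with the first-order optimality of $\boldsymbol\pi_t^*$. Throughout, $e_t^{\rm adj}:=\|\widehat\vartheta_t-\vartheta_t^*\|$ is the pointwise input error, so that \eqref{eq:aggregate_adjoint_input_bound} gives $\|e^{\rm adj}\|_{L^{q,p}}\le C_{\rm in}\delta_{\rm adj}$.

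\textbf{Step 1 (pointwise policy bound).} By definition of $r_t^{\rm qp}$ there is a vector $\varepsilon_t$ with $\|\varepsilon_t\|=r_t^{\rm qp}$ such that
\[
\widehat{\mathbf g}_t+\varepsilon_t-\widehat Q_t\widehat{\boldsymbol\pi}^{\rm QP}_t\in N_{\mathcal K_t}(\widehat{\boldsymbol\pi}^{\rm QP}_t).
\]
At the target solution we have $\mathbf g^*_t-Q^*_t\boldsymbol\pi^*_t\in N_{\mathcal K_t}(\boldsymbol\pi^*_t)$. Monotonicity of the normal cone says that pairing the first inclusion with $\boldsymbol\pi^*-\widehat{\boldsymbol\pi}$ and the second with $\widehat{\boldsymbol\pi}-\boldsymbol\pi^*$ gives nonpositive terms. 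Adding the two inequalities gives
\[
(\widehat{\boldsymbol\pi}-\boldsymbol\pi^*)^\top Q^*(\widehat{\boldsymbol\pi}-\boldsymbol\pi^*)\le\langle \widehat{\mathbf g}-\mathbf g^*+\varepsilon-(\widehat Q-Q^*)\widehat{\boldsymbol\pi},\,\widehat{\boldsymbol\pi}-\boldsymbol\pi^*\rangle.
\]
The left side is at least $\underline\lambda_Q\|\widehat{\boldsymbol\pi}-\boldsymbol\pi^*\|^2$. On the right, local Lipschitz continuity of $(\mathbf g,Q)$ in $\vartheta$ on the compact input localization bounds $\|\widehat{\mathbf g}-\mathbf g^*\|+\|\widehat Q-Q^*\|$ by $L\,e_t^{\rm adj}$, and $\|\widehat{\boldsymbol\pi}\|$ is bounded by the common localization. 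Dividing by $\|\widehat{\boldsymbol\pi}-\boldsymbol\pi^*\|$ yields \eqref{eq:qp_pointwise_policy_gap} with $C_{\rm qp}=\max(L(1+\sup\|\widehat{\boldsymbol\pi}\|),1)/\underline\lambda_Q$. Taking $L^{q,p}$ norms and using the Minkowski inequality in both the $\omega$ and $t$ integrals gives \eqref{eq:qp_policy_gap}.

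\textbf{Step 2 (Hamiltonian gap).} Nonnegativity of $G_t^{\rm qp}$ holds because $\boldsymbol\pi^*_t$ maximizes $\mathcal Q_{\rm loc}(\cdot;\vartheta^*_t)$ over $\mathcal K_t$ and $\widehat{\boldsymbol\pi}^{\rm QP}_t\in\mathcal K_t$. For the upper bound, expand the quadratic exactly:
\[
G_t^{\rm qp}=-\langle\mathbf g^*-Q^*\boldsymbol\pi^*,\widehat{\boldsymbol\pi}-\boldsymbol\pi^*\rangle+\tfrac12(\widehat{\boldsymbol\pi}-\boldsymbol\pi^*)^\top Q^*(\widehat{\boldsymbol\pi}-\boldsymbol\pi^*).
\]
The first term is nonpositive by the normal-cone inclusion at $\boldsymbol\pi^*$. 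The second is at most $\tfrac12\lambda_{\max}(Q^*)\|\widehat{\boldsymbol\pi}-\boldsymbol\pi^*\|^2$, where $\lambda_{\max}(Q^*)$ is uniformly bounded on the localization. Combining this with Step 1 gives \eqref{eq:qp_pointwise_hamiltonian_gap} with $C_Q=\tfrac12\sup\lambda_{\max}(Q)\,C_{\rm qp}^2$.

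\textbf{Step 3 (integration).} For $p,q\ge2$,
\[
\|f^2\|_{L^{q/2,p/2}}=\|f\|_{L^{q,p}}^2,
\]
which can be checked directly from \eqref{eq:mixed_norm_definition}. Applying this with $f=e^{\rm adj}+r^{\rm qp}$ and then the triangle inequality yields \eqref{eq:qp_hamiltonian_gap}.

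The main obstacle is Step 1: the numerical solution is only approximately stationary, and the QP matrix $Q$ itself is perturbed, not just the linear term $\mathbf g$. The cross term $(\widehat Q-Q^*)\widehat{\boldsymbol\pi}$ is handled only because the solutions are assumed to remain in a bounded localization. The monotone-operator pairing above avoids any need to characterize the solution map explicitly.
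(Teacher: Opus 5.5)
Your Step 1 and the integration argument in Step 3 are sound. Step 1 is the paper's argument in unpacked form: adding the two normal-cone inequalities is exactly the residual transfer followed by strong monotonicity of $\mathcal T^*$.

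Step 2, however, contains a sign error that invalidates the upper bound on $G_t^{\rm qp}$. The inclusion $\mathbf g^*-Q^*\boldsymbol\pi^*\in N_{\mathcal K}(\boldsymbol\pi^*)$ together with $\widehat{\boldsymbol\pi}\in\mathcal K$ gives $\langle\mathbf g^*-Q^*\boldsymbol\pi^*,\widehat{\boldsymbol\pi}-\boldsymbol\pi^*\rangle\le0$. This is the very inequality you used in Step 1. So the first term $-\langle\mathbf g^*-Q^*\boldsymbol\pi^*,\widehat{\boldsymbol\pi}-\boldsymbol\pi^*\rangle$ of your expansion is \emph{nonnegative}, not nonpositive. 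It is what makes $G_t^{\rm qp}\ge0$, and it cannot be discarded. It is generically first order in $\|\widehat{\boldsymbol\pi}-\boldsymbol\pi^*\|$ whenever a constraint is active with a nonzero multiplier. For example, take $n=1$, $\mathcal K=[0,\infty)$, $Q^*=1$, $\mathbf g^*=-1$, so that $\boldsymbol\pi^*=0$. The feasible point $\widehat{\boldsymbol\pi}=\delta>0$ gives $G=\delta+\delta^2/2$. Hence your intermediate bound $G\le\frac12\lambda_{\max}(Q^*)\|\widehat{\boldsymbol\pi}-\boldsymbol\pi^*\|^2$ is false.

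The missing idea is that the linear term must be controlled by the approximate stationarity at $\widehat{\boldsymbol\pi}$, not by the exact stationarity at $\boldsymbol\pi^*$. In the example, approximate stationarity rules out $\widehat{\boldsymbol\pi}=\delta$ for small errors, because the true-input residual $\operatorname{dist}(0,Q^*\delta-\mathbf g^*+N_{\mathcal K}(\delta))$ equals $1+\delta$. In general, set $d:=\widehat{\boldsymbol\pi}-\boldsymbol\pi^*$ and $\eta:=\widehat{\mathbf g}-\mathbf g^*+\varepsilon_t-(\widehat Q-Q^*)\widehat{\boldsymbol\pi}$, with $\varepsilon_t$ your residual vector. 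The inclusion for $\widehat{\boldsymbol\pi}$ from your Step 1 gives $\langle\mathbf g^*-Q^*\boldsymbol\pi^*,d\rangle\ge d^\top Q^*d-\langle\eta,d\rangle$. Hence
$G_t^{\rm qp}\le\langle\eta,d\rangle-\frac12d^\top Q^*d\le\|\eta\|^2/(2\underline\lambda_Q)\le C(e_t^{\rm adj}+r_t^{\rm qp})^2$.

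This is precisely the paper's Step 3. There, $\eta\in\partial\phi^*(\widehat{\boldsymbol\pi})$ for the $\underline\lambda_Q$-strongly convex $\phi^*=\frac12\boldsymbol\pi^\top Q^*\boldsymbol\pi-\mathbf g^{*\top}\boldsymbol\pi+I_{\mathcal K}$. The subgradient estimate $\phi^*(\widehat{\boldsymbol\pi})-\phi^*(\boldsymbol\pi^*)\le\operatorname{dist}(0,\partial\phi^*(\widehat{\boldsymbol\pi}))^2/(2\underline\lambda_Q)$ then yields the quadratic gap. After this repair the constant becomes $C_Q=\frac{\underline\lambda_Q}{2}C_{\rm qp}^2$, with no $\lambda_{\max}(Q^*)$ needed. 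Your Step 3 then goes through unchanged.
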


\begin{proof}
See Appendix~\ref{app:proof_qp_stability}.
\end{proof}

\begin{appcorollary}[Consistency and end-to-end rate of QP-PGDPO]
\label{cor:qp_consistency}
For an exact QP solve,
$\delta_{\rm qp}=0$.  Hence
\[
\delta_{\rm ol},
\quad
\delta_\zeta,
\quad
\delta_{\rm ref}
\longrightarrow0
\]
implies
\[
\widehat{\boldsymbol\pi}^{\rm QP}
\longrightarrow
\boldsymbol\pi^*
\qquad
\text{in }L^{q,p},
\]
and, for $p,q\ge2$, the true local QP-gap process converges to zero in
$L^{q/2,p/2}$.

If Assumptions~\ref{ass:reference_qg} and
\ref{ass:common_state_reference_bridge} hold, then for $p=q=2$,
\begin{equation}\label{eq:qp_end_to_end_control_rate}
\|
\widehat{\boldsymbol\pi}^{\rm QP}
-
\boldsymbol\pi^*
\|_{L^{2,2}}
\le
\widetilde C_{\rm qp}
\left(
\delta_{\rm ol}
+
\delta_\zeta
+
\delta_{\rm qp}
+
\sqrt{
\frac{\varepsilon_{\rm ref}}
{\kappa_{\rm qg}}
}
\right),
\end{equation}
and
\begin{equation}\label{eq:qp_end_to_end_gap_rate}
\|G^{\rm qp}\|_{L^{1,1}}
\le
\widetilde C_Q
\left(
\delta_{\rm ol}
+
\delta_\zeta
+
\delta_{\rm qp}
+
\sqrt{
\frac{\varepsilon_{\rm ref}}
{\kappa_{\rm qg}}
}
\right)^2.
\end{equation}
Thus, when adjoint acquisition, shifted-input estimation, and the QP solve are
exact, the recovered risky-weight error is
$O(\sqrt{\varepsilon_{\rm ref}})$ and the integrated true local QP gap is
$O(\varepsilon_{\rm ref})$.
\end{appcorollary}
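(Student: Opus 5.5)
The corollary is a substitution argument built on Proposition~\ref{prop:qp_policy_gap}. The plan is to start from the pointwise bounds \eqref{eq:qp_pointwise_policy_gap} and \eqref{eq:qp_pointwise_hamiltonian_gap} rather than their integrated versions. This way the exponents can be tracked explicitly when we specialize to $p=q=2$.

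\textbf{Consistency statement.} An exact QP solve gives $\widehat{\boldsymbol\pi}^{\rm QP}_t\in\arg\max$ of the strongly concave estimated problem. The first-order optimality condition then reads $0\in\widehat Q_t\widehat{\boldsymbol\pi}_t-\widehat{\mathbf g}_t+N_{\mathcal K_t}(\widehat{\boldsymbol\pi}_t)$, so $r^{\rm qp}\equiv0$ and $\delta_{\rm qp}=0$.

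Next, \eqref{eq:qp_policy_gap} yields $\|\widehat{\boldsymbol\pi}^{\rm QP}-\boldsymbol\pi^*\|_{L^{q,p}}\le C'_{\rm qp}\delta_{\rm adj}$. Here $\delta_{\rm adj}=\delta_{\rm ol}+\delta_\zeta+\delta_{\rm ref}\to0$ by hypothesis, so the policy converges in $L^{q,p}$. For $p,q\ge2$, \eqref{eq:qp_hamiltonian_gap} gives $\|G^{\rm qp}\|_{L^{q/2,p/2}}\le C'_Q\delta_{\rm adj}^2\to0$.

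One point to check is that the constants stay fixed along the refinement. This holds because Assumption~\ref{ass:qp_adjoint_input_stability} keeps all inputs and solutions in common compact localizations, and the constants of Proposition~\ref{prop:qp_policy_gap} depend only on $\underline\lambda_Q$, the Lipschitz constants of $(\mathbf g,Q)$, and the diameter of those localizations.

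\textbf{Rates.} Set $p=q=2$ and invoke Assumption~\ref{ass:common_state_reference_bridge}. It gives $\delta_{\rm ref}\le C_{\rm cs}\sqrt{\varepsilon_{\rm ref}/\kappa_{\rm qg}}$, and hence
\[
\delta_{\rm adj}+\delta_{\rm qp}\le\max\{1,C_{\rm cs}\}\Bigl(\delta_{\rm ol}+\delta_\zeta+\delta_{\rm qp}+\sqrt{\varepsilon_{\rm ref}/\kappa_{\rm qg}}\Bigr).
\]
Substituting this into \eqref{eq:qp_policy_gap} gives \eqref{eq:qp_end_to_end_control_rate} with $\widetilde C_{\rm qp}=C'_{\rm qp}\max\{1,C_{\rm cs}\}$.

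For the gap, $L^{q/2,p/2}=L^{1,1}$ when $p=q=2$. So \eqref{eq:qp_hamiltonian_gap} applies directly and gives \eqref{eq:qp_end_to_end_gap_rate} with $\widetilde C_Q=C'_Q\max\{1,C_{\rm cs}\}^2$. If one prefers to work from the pointwise bound instead, the argument is
\[
\|G^{\rm qp}\|_{L^{1,1}}=\mathbb E\int_0^T G_t^{\rm qp}\,dt\le C_Q\,\|e^{\rm adj}+r^{\rm qp}\|_{L^{2,2}}^2,
\]
followed by Minkowski's inequality to bound $\|e^{\rm adj}\|_{L^{2,2}}$ by $C_{\rm in}\delta_{\rm adj}$, using \eqref{eq:aggregate_adjoint_input_bound}.

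Assumption~\ref{ass:reference_qg} enters only to make $\varepsilon_{\rm ref}$ and $\kappa_{\rm qg}$ meaningful and to guarantee that $\mathbf u^{\rm ref}\in\mathfrak U_0$. The trajectory-wise rate of Theorem~\ref{thm:reference_performance_adjoint} is not what gets used here; it is the common-state bridge that supplies the bound on $\delta_{\rm ref}$.

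\textbf{Final claim and main obstacle.} Setting $\delta_{\rm ol}=\delta_\zeta=\delta_{\rm qp}=0$ gives errors of order $O(\sqrt{\varepsilon_{\rm ref}})$ for the policy and $O(\varepsilon_{\rm ref})$ for the gap. The main point of care is to make sure $e_t^{\rm adj}$ in Proposition~\ref{prop:qp_policy_gap} is the pointwise input error $\|\widehat\vartheta_t-\vartheta_t^*\|$ evaluated at the \emph{same} deployment state $\mathbf S_t$. This is exactly why the common-state bridge, and not Theorem~\ref{thm:reference_performance_adjoint}, must be used: otherwise the triangle inequality \eqref{eq:aggregate_adjoint_input_bound} would mix quantities evaluated along different trajectories.
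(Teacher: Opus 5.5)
Your proposal is correct and follows essentially the paper's route: an exact solve gives $r^{\rm qp}\equiv0$, consistency is read off Proposition~\ref{prop:qp_policy_gap} with $\delta_{\rm adj}\to0$, and the rates come from replacing $\delta_{\rm ref}$ by $C_{\rm cs}\sqrt{\varepsilon_{\rm ref}/\kappa_{\rm qg}}$ through the common-state bridge and substituting into the policy and quadratic-gap bounds, using $L^{q/2,p/2}=L^{1,1}$ at $p=q=2$. Your point that the bridge, and not the paired-trajectory rate of Theorem~\ref{thm:reference_performance_adjoint}, controls the common-state input mismatch matches the paper's own emphasis.
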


\begin{appremark}[Interpretation of the QP end-to-end rate]
\label{rem:qp_rate_interpretation}
The recovery step does not generally improve the policy-distance order:
both the reference control and the recovered risky weights remain
$O(\sqrt{\varepsilon_{\rm ref}})$-close to their targets.  The
$O(\varepsilon_{\rm ref})$ conclusion concerns the integrated true local QP
value loss and follows from strong concavity and the squared stationarity
error.  It is not a pointwise Hamiltonian statement or a matching rate for the
unrecovered DPO action.
\end{appremark}


\subsection{Proof of Proposition~\texorpdfstring{
\ref{prop:qp_policy_gap}}{}}
\label{app:proof_qp_stability}

Fix a deployment point and suppress $(t,\omega)$ throughout.  Write
\[
\mathbf g^*
:=
\mathbf g(\vartheta^*),
\qquad
Q^*
:=
Q(\vartheta^*),
\]
and define the true-input optimality operator
\begin{equation}\label{eq:appendix_true_qp_operator}
\mathcal T^*(\boldsymbol\pi)
:=
Q^*\boldsymbol\pi
-
\mathbf g^*
+
N_{\mathcal K}(\boldsymbol\pi).
\end{equation}
The exact true-input solution satisfies
\[
\boldsymbol0
\in
\mathcal T^*(\boldsymbol\pi^*).
\]

\begin{proof}[Proof of Proposition~\ref{prop:qp_policy_gap}]

\textbf{Step 1: transfer of the estimated-input residual.}

Let
\[
\widehat{\mathbf g}
:=
\mathbf g(\widehat\vartheta),
\qquad
\widehat Q
:=
Q(\widehat\vartheta),
\]
and let
$\widehat{\boldsymbol\pi}^{\rm QP}\in\mathcal K$
be the numerical QP output.  By the definition of
$r^{\rm qp}$, for every $\epsilon>0$ there exists
$\widehat v\in N_{\mathcal K}
(\widehat{\boldsymbol\pi}^{\rm QP})$
such that
\[
\left\|
\widehat Q\widehat{\boldsymbol\pi}^{\rm QP}
-
\widehat{\mathbf g}
+
\widehat v
\right\|
\le
r^{\rm qp}+\epsilon.
\]
Using the same normal vector in the true-input operator gives
\[
\begin{aligned}
\operatorname{dist}
\left(
0,
\mathcal T^*
(\widehat{\boldsymbol\pi}^{\rm QP})
\right)
\le{}&
 r^{\rm qp}+\epsilon
+
\left\|
(Q^*-\widehat Q)
\widehat{\boldsymbol\pi}^{\rm QP}
\right\|
+
\left\|
\widehat{\mathbf g}-\mathbf g^*
\right\|.
\end{aligned}
\]
The common bounded localization and the local Lipschitz dependence of
$(\mathbf g,Q)$ on $\vartheta$ imply
\[
\left\|
(Q^*-\widehat Q)
\widehat{\boldsymbol\pi}^{\rm QP}
\right\|
+
\left\|
\widehat{\mathbf g}-\mathbf g^*
\right\|
\le
C e^{\rm adj}.
\]
Letting $\epsilon\downarrow0$ yields
\begin{equation}\label{eq:appendix_qp_true_residual_bound}
\operatorname{dist}
\left(
0,
\mathcal T^*
(\widehat{\boldsymbol\pi}^{\rm QP})
\right)
\le
C
\left(
 e^{\rm adj}
+
r^{\rm qp}
\right).
\end{equation}

\medskip\noindent
\textbf{Step 2: pointwise and process-level policy bounds.}

Because
$Q^*\succeq\underline\lambda_Q I$
and the normal-cone mapping is monotone,
$\mathcal T^*$ is
$\underline\lambda_Q$-strongly monotone.  Thus, for any
$w\in\mathcal T^*
(\widehat{\boldsymbol\pi}^{\rm QP})$,
\[
\underline\lambda_Q
\left\|
\widehat{\boldsymbol\pi}^{\rm QP}
-
\boldsymbol\pi^*
\right\|^2
\le
\left\langle
w,
\widehat{\boldsymbol\pi}^{\rm QP}
-
\boldsymbol\pi^*
\right\rangle.
\]
Taking the infimum over $w$ and using Cauchy--Schwarz give
\[
\left\|
\widehat{\boldsymbol\pi}^{\rm QP}
-
\boldsymbol\pi^*
\right\|
\le
\frac{1}{\underline\lambda_Q}
\operatorname{dist}
\left(
0,
\mathcal T^*
(\widehat{\boldsymbol\pi}^{\rm QP})
\right).
\]
Combining this estimate with
\eqref{eq:appendix_qp_true_residual_bound}
proves
\eqref{eq:qp_pointwise_policy_gap}.
Taking the $L^{q,p}$ norm and using
\eqref{eq:aggregate_adjoint_input_bound}
and
$\|r^{\rm qp}\|_{L^{q,p}}\le\delta_{\rm qp}$
yield
\eqref{eq:qp_policy_gap}.

\medskip\noindent
\textbf{Step 3: true-input QP gap.}

Let $I_{\mathcal K}$ be the convex indicator of $\mathcal K$ and define
\begin{equation}\label{eq:appendix_true_qp_convex_objective}
\phi^*(\boldsymbol\pi)
:=
\frac12
\boldsymbol\pi^\top
Q^*
\boldsymbol\pi
-
\mathbf g^{*\top}
\boldsymbol\pi
+
I_{\mathcal K}(\boldsymbol\pi).
\end{equation}
Then
\[
\partial\phi^*(\boldsymbol\pi)
=
\mathcal T^*(\boldsymbol\pi),
\]
and $\phi^*$ is
$\underline\lambda_Q$-strongly convex.  For any
$w\in\partial\phi^*
(\widehat{\boldsymbol\pi}^{\rm QP})$,
strong convexity gives
\[
\phi^*(\boldsymbol\pi^*)
\ge
\phi^*(\widehat{\boldsymbol\pi}^{\rm QP})
+
\left\langle
w,
\boldsymbol\pi^*
-
\widehat{\boldsymbol\pi}^{\rm QP}
\right\rangle
+
\frac{\underline\lambda_Q}{2}
\left\|
\widehat{\boldsymbol\pi}^{\rm QP}
-
\boldsymbol\pi^*
\right\|^2.
\]
Maximizing the resulting upper bound over the distance to
$\boldsymbol\pi^*$ yields the standard subgradient estimate
\[
0
\le
\phi^*(\widehat{\boldsymbol\pi}^{\rm QP})
-
\phi^*(\boldsymbol\pi^*)
\le
\frac{1}{2\underline\lambda_Q}
\operatorname{dist}
\left(
0,
\partial\phi^*
(\widehat{\boldsymbol\pi}^{\rm QP})
\right)^2.
\]
Both controls are feasible, so the left-hand side equals the true-input
concave-QP loss $G^{\rm qp}$.  Applying
\eqref{eq:appendix_qp_true_residual_bound}
therefore gives
\[
0
\le
G^{\rm qp}
\le
C_Q
\left(
 e^{\rm adj}
+
r^{\rm qp}
\right)^2,
\]
which proves
\eqref{eq:qp_pointwise_hamiltonian_gap}.

If $p,q\ge2$, then
\[
\bigl\||f|^2\bigr\|_{L^{q/2,p/2}}
=
\|f\|_{L^{q,p}}^2.
\]
Using the aggregate input and numerical-residual bounds gives
\[
\|G^{\rm qp}\|_{L^{q/2,p/2}}
\le
C_Q'
\left(
\delta_{\rm adj}
+
\delta_{\rm qp}
\right)^2,
\]
which is
\eqref{eq:qp_hamiltonian_gap}.

\end{proof}

\begin{appremark}[Uniformity and dimension dependence]
\label{rem:uniformity_norm}
The QP estimates are uniform only on the specified compact deployment
localization.  They require a uniform lower eigenvalue bound for $Q$ and
bounded local Lipschitz constants for $(\mathbf g,Q)$.  The constants may
depend on portfolio dimension and covariance conditioning.  The framework
avoids a state grid but does not imply dimension-free linear algebra.
\end{appremark}

\clearpage
\begin{center}
{\Large\bfseries Online Appendix / Supporting Information}\par
\medskip
{\small Classical open-loop/feedback specializations, quadratic-growth scope
results, barrier/KKT theory, implementation details, and supplementary
numerical audits.}
\end{center}
\medskip
\setcounter{section}{0}
\renewcommand{\thesection}{OA.\arabic{section}}
\renewcommand{\theHsection}{OA.\arabic{section}}
\renewcommand{\thesubsection}{\thesection.\arabic{subsection}}
\renewcommand{\theHsubsection}{\theHsection.\arabic{subsection}}
\renewcommand{\thesubsubsection}{\thesubsection.\arabic{subsubsection}}
\renewcommand{\theHsubsubsection}{\theHsubsection.\arabic{subsubsection}}
\numberwithin{equation}{section}
\counterwithin{table}{section}
\counterwithin{figure}{section}
\counterwithin{algorithm}{section}
\section{First-Order, Feedback, and Smooth-Markov Specializations}
\label{app:smooth_markov_details}

\subsection{First-Order Variation on a Regular Chart Patch}
\label{app:first_order_variation}

The first-order adjoint is meaningful beyond the special cases in which it
fully characterizes the control.  The following localized statement records
the exact role of convex-amplitude variations in the chart coordinates; see
\citet{bensoussan1982lectures}, \citet{cadenillas1995stochastic}, and
\citet[Chapter~3]{yong2012stochastic} for classical first-order stochastic
maximum principles.

\begin{appproposition}[Localized first-order variational condition]
\label{prop:first_order_chart_variation}
Suppose Assumption~\ref{ass:chart_pmp_regularity} holds.  Let
$\mathbf a^*$ be optimal for the chart-reduced open-loop problem on the
selected regular patch, and let $\mathbf v$ be a bounded progressively
measurable process.  Put
$\mathbf h:=\mathbf v-\mathbf a^*$ and suppose that there exist
$\varepsilon_0>0$ and a compact subset
$\mathfrak A_0\subset\mathfrak A$ such that
\[
\mathbf a_t^\varepsilon
:=
\mathbf a_t^*+\varepsilon\mathbf h_t
\in\mathfrak A_0,
\qquad
0\leq\varepsilon\leq\varepsilon_0,
\]
for $dt\otimes d\mathbb P$-almost every $(t,\omega)$.  Suppose each
$\mathbf a^\varepsilon$ generates an admissible state trajectory that remains
in the common compact chart patch.  Then
\begin{equation}\label{eq:first_order_integral_variation}
\mathbb E\!\left[
\int_0^T
\left\langle
D_{\mathbf a}\bar{\mathcal H}
(t,\mathbf S_t^*,\mathbf a_t^*,\lambda_t^*,\mathbf Z_t^*),
\mathbf v_t-\mathbf a_t^*
\right\rangle dt
\right]
\leq 0.
\end{equation}
If, for this fixed comparator $\mathbf v$, the localized admissible class is
also stable when $\mathbf h$ is multiplied by indicators of predictable
time--event sets, then
\begin{equation}\label{eq:first_order_pointwise_variation}
\left\langle
D_{\mathbf a}\bar{\mathcal H}
(t,\mathbf S_t^*,\mathbf a_t^*,\lambda_t^*,\mathbf Z_t^*),
\mathbf v_t-\mathbf a_t^*
\right\rangle
\leq0
\end{equation}
for $dt\otimes d\mathbb P$-almost every $(t,\omega)$.  Thus the exceptional
null set may depend on the chosen comparator $\mathbf v$.
\end{appproposition}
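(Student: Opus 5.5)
Because the perturbation $\mathbf a^\varepsilon=\mathbf a^*+\varepsilon\mathbf h$ is a convex-amplitude variation in the fixed latent coordinates, I will use the classical spike-free (convex) variational argument, carried out on the chart-reduced problem. By Proposition~\ref{prop:local_open_loop_chart}, $\mathbf u^\varepsilon_t=\Psi(t,\mathbf S^\varepsilon_t,\mathbf a^\varepsilon_t)$ is admissible, so optimality gives $\bar J(\mathbf a^\varepsilon)\le\bar J(\mathbf a^*)$ for $0\le\varepsilon\le\varepsilon_0$. The goal is to show that $\varepsilon\mapsto\bar J(\mathbf a^\varepsilon)$ is right-differentiable at $0$ with derivative equal to the left side of \eqref{eq:first_order_integral_variation}. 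That yields the integral inequality directly.

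\textbf{Variational equation.} I define the first variation $\mathbf y$ by the linear SDE
\[
d\mathbf y_t=\bigl(D_{\mathbf s}\bar{\boldsymbol b}_t\,\mathbf y_t+D_{\mathbf a}\bar{\boldsymbol b}_t\,\mathbf h_t\bigr)dt+\sum_{\ell}\bigl(D_{\mathbf s}\bar{\boldsymbol\sigma}^\ell_t\,\mathbf y_t+D_{\mathbf a}\bar{\boldsymbol\sigma}^\ell_t\,\mathbf h_t\bigr)dW^\ell_t,\qquad \mathbf y_{0}=0,
\]
with all coefficients evaluated along $(\mathbf S^*,\mathbf a^*)$. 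Assumption~\ref{ass:chart_pmp_regularity} supplies what is needed here: $C^2$ chart compositions with bounded derivatives on the compact localization, $\mathfrak A_0$ compact, and trajectories confined to the patch. From these, BDG and Gronwall give $\mathbb E\sup_t\|\mathbf S^\varepsilon_t-\mathbf S^*_t-\varepsilon\mathbf y_t\|^2=o(\varepsilon^2)$. A first-order Taylor expansion of $\bar\ell$ and $g$, together with dominated convergence using the moment bounds, then gives
\[
\tfrac{d}{d\varepsilon}\big|_{0^+}\bar J(\mathbf a^\varepsilon)=\mathbb E\Bigl[\int_0^T\bigl(D_{\mathbf s}\bar\ell_t\mathbf y_t+D_{\mathbf a}\bar\ell_t\mathbf h_t\bigr)dt+D_{\mathbf s}g(\mathbf S^*_T)\mathbf y_T\Bigr].
\]
The main technical obstacle is this remainder control. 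The chart introduces state dependence in $\Psi$, so the $o(\varepsilon)$ estimates have to use the fixed-latent derivatives with locally Lipschitz second derivatives, and the localization must be uniform in $\varepsilon$. I would handle this through the compactness of $\mathfrak A_0$ and the common patch.

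\textbf{Duality.} Next I apply Itô's product formula to $\langle\lambda^*_t,\mathbf y_t\rangle$, using the first-adjoint BSDE \eqref{eq:chart_reduced_first_adjoint} with $\lambda_T=D_{\mathbf s}g$. The $\mathbf y$-terms cancel against $D_{\mathbf s}\bar{\mathcal H}$. The cross-variation term $\langle\mathbf Z^*,D_{\mathbf a}\bar{\boldsymbol\sigma}\mathbf h\rangle_F$ combines with $\lambda^{*\top}D_{\mathbf a}\bar{\boldsymbol b}\mathbf h$ and $D_{\mathbf a}\bar\ell\,\mathbf h$ to give $D_{\mathbf a}\bar{\mathcal H}\,\mathbf h$. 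The stochastic integrals are true martingales by square integrability of $(\lambda^*,\mathbf Z^*,\mathbf y)$. This shows that the derivative equals $\mathbb E\int_0^T\langle D_{\mathbf a}\bar{\mathcal H}_t,\mathbf h_t\rangle dt\le0$, which is \eqref{eq:first_order_integral_variation}.

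\textbf{Localization.} For \eqref{eq:first_order_pointwise_variation}, I replace $\mathbf h$ by $\mathbb I_A\mathbf h$ for a predictable set $A\subset[0,T]\times\Omega$. The stability hypothesis makes the above argument applicable, so $\mathbb E\int\mathbb I_A\langle D_{\mathbf a}\bar{\mathcal H}_t,\mathbf h_t\rangle dt\le0$ for every such $A$. Choosing $A=\{\langle D_{\mathbf a}\bar{\mathcal H}_t,\mathbf h_t\rangle>0\}$ shows that this set is $dt\otimes d\mathbb P$-null. Since $A$ is built from the fixed $\mathbf v$, the exceptional null set depends on the comparator, as the statement records.
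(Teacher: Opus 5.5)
Your proposal follows the paper's proof: you use the same linear variational SDE for the convex-amplitude perturbation, the same It\^o duality with $\lambda^{*\top}\mathbf y$ to identify the one-sided derivative as $\mathbb E\int_0^T\langle D_{\mathbf a}\bar{\mathcal H}^*_t,\mathbf h_t\rangle\,dt$, and the same localization by predictable indicators. One detail needs adding. The progressively measurable process $G_t=\langle D_{\mathbf a}\bar{\mathcal H}^*_t,\mathbf h_t\rangle$ need not give a predictable set $\{G>0\}$, so you must first pass to a predictable version of $G$ up to a $dt\otimes d\mathbb P$-null set, as the paper does. The paper then uses the sets $\{G>1/n\}$, but your direct choice of $\{G>0\}$ works equally well once predictability is secured.
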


\begin{proof}
This is the standard convex-amplitude first-variation argument for a
controlled diffusion, applied to the chart-reduced coefficients; see, for
example, \citet[Chapter~3]{yong2012stochastic}.  We give the short argument to
make the scope explicit.  Let
$\mathbf Y_t:=\left.\frac{d}{d\varepsilon}
\mathbf S_t^\varepsilon\right|_{\varepsilon=0+}$.
The $C^1$ coefficient regularity and bounded derivatives on the common
compact localization imply the standard state differentiability, and
$\mathbf Y$ solves
\begin{equation}\label{eq:first_order_state_variation}
\begin{aligned}
d\mathbf Y_t
={}&
\left[
D_{\mathbf s}\bar{\boldsymbol b}_t^*\,\mathbf Y_t
+
D_{\mathbf a}\bar{\boldsymbol b}_t^*\,\mathbf h_t
\right]dt
\\
&+
\sum_{\ell=1}^{d_W}
\left[
D_{\mathbf s}\bar{\boldsymbol\sigma}_t^{*,\ell}\,\mathbf Y_t
+
D_{\mathbf a}\bar{\boldsymbol\sigma}_t^{*,\ell}\,\mathbf h_t
\right]dW_t^\ell,
\qquad
\mathbf Y_0=0,
\end{aligned}
\end{equation}
where a star means evaluation at
$(t,\mathbf S_t^*,\mathbf a_t^*)$.
Differentiating the objective at $0+$ gives
\begin{equation}\label{eq:first_order_objective_derivative}
\begin{aligned}
\left.\frac{d}{d\varepsilon}J(\mathbf a^\varepsilon)
\right|_{\varepsilon=0+}
={}&
\mathbb E\!\left[
K e^{-\rho T}U'(X_T^*)e_X^\top\mathbf Y_T
\right]
\\
&+
\mathbb E\!\left[
\int_0^T
\left[
\left\langle
D_{\mathbf s}\bar\ell_t^*,\mathbf Y_t
\right\rangle
+
\left\langle
D_{\mathbf a}\bar\ell_t^*,\mathbf h_t
\right\rangle
\right]dt
\right].
\end{aligned}
\end{equation}
It\^o's formula for
$\lambda_t^{*\top}\mathbf Y_t$, together with
\eqref{eq:chart_reduced_first_adjoint},
\eqref{eq:first_order_state_variation}, and the terminal condition for
$\lambda^*$, cancels the terms containing $\mathbf Y$.  Hence
\[
\left.\frac{d}{d\varepsilon}J(\mathbf a^\varepsilon)
\right|_{\varepsilon=0+}
=
\mathbb E\!\left[
\int_0^T
\left\langle
D_{\mathbf a}\bar{\mathcal H}_t^*,\mathbf h_t
\right\rangle dt
\right].
\]
Optimality of $\mathbf a^*$ makes this one-sided derivative nonpositive,
which proves \eqref{eq:first_order_integral_variation}.

For the pointwise statement, fix the comparator $\mathbf v$ and choose a
predictable version, up to $dt\otimes d\mathbb P$-null sets, of the integrable
process
\[
G_t
:=
\left\langle
D_{\mathbf a}\bar{\mathcal H}_t^*,\mathbf h_t
\right\rangle.
\]
Let $E$ be a predictable time--event set.  By the localization hypothesis, the
same argument applies to the direction $\mathbf 1_E\mathbf h$ and gives
\[
\mathbb E\!\left[
\int_0^T
\mathbf 1_E G_t\,dt
\right]
\leq0.
\]
For $n\ge1$, the set
\[
E_n
:=
\{(t,\omega):G_t>1/n\}
\]
is predictable.  Taking $E=E_n$ shows that each $E_n$ has
$dt\otimes d\mathbb P$-measure zero.  Since
$\{G>0\}=\bigcup_{n\ge1}E_n$, the process $G$ is nonpositive
$dt\otimes d\mathbb P$-almost everywhere for this fixed $\mathbf v$.
\end{proof}

Proposition~\ref{prop:first_order_chart_variation} allows the diffusion to
depend on the control.  Its conclusion is nevertheless infinitesimal: it
does not assert that the primitive first-order Hamiltonian is maximized over
finite control changes, that the condition is sufficient for dynamic
optimality, or that a variation may cross a nonregular intersection or leave
the selected chart patch.  Standard sufficient stochastic maximum principles
require additional concavity-type hypotheses; see, for example,
\citet{zhou1996sufficient}.  Those hypotheses are not imposed by the core
recovery results of this paper.

\subsection{When the Second-Order Correction Drops Out}
\label{app:second_order_dropout}

There is a simple reduction when the compared controls do not change the
diffusion.  Fix the optimal reference control $\mathbf u^*$ and its adjoints,
and let $\mathcal C$ be a set of feasible controls in the selected comparison
region.  If
\[
\boldsymbol\sigma_S(t,\mathbf s;\mathbf u)
=
\boldsymbol\sigma_S(t,\mathbf s;\mathbf u^*)
\qquad\text{for every }\mathbf u\in\mathcal C,
\]
then the shifted linear-diffusion term and the quadratic $P$-term in
\eqref{eq:generalized_hamiltonian} are constant over $\mathcal C$.  Hence
\begin{equation}\label{eq:control_independent_diffusion_reduction}
\begin{aligned}
&\mathbb H(t,\mathbf s;\mathbf u,\lambda^*,\boldsymbol\zeta^*,P^*)
-\mathbb H(t,\mathbf s;\mathbf u^*,\lambda^*,\boldsymbol\zeta^*,P^*)
\\
&\qquad=
\mathcal H(t,\mathbf s,\mathbf u,\lambda^*,\mathbf Z^*)
-\mathcal H(t,\mathbf s,\mathbf u^*,\lambda^*,\mathbf Z^*).
\end{aligned}
\end{equation}
Thus the second-order correction has no effect on that control comparison.
This applies blockwise to consumption in the present model because consumption
does not enter the diffusion.  When the feasible consumption section is a
convex interval---as under the separable consumption constraints used in the
numerical models---concavity of $U$ therefore makes the scalar first-order KKT
condition sufficient for the conditional scalar stage problem.

The reduction does not apply to the risky weights.  The control-dependent
component of the wealth-diffusion row is
\[
x\boldsymbol\pi^\top\boldsymbol v(t,y).
\]
Finite changes in $\boldsymbol\pi$ therefore produce the quadratic and hedging terms in
\eqref{eq:open_loop_portfolio_block}.  Although the first-order variational
condition remains necessary along feasible convex directions, it does not by
itself provide this finite-comparison curvature.  A first-order condition can
instead become sufficient for the full dynamic problem under a separate
sufficient-maximum-principle theorem with the required joint concavity and
admissibility hypotheses; no such global concavity claim is made here.

\subsection{Feedback Verification against the Open-Loop Class}
\label{app:feedback_verification}

The next result states the clean classical setting in which using a Markov
feedback is without loss at the level of the value.  It is a verification
result, not a claim that all optimal open-loop controls are Markov.

\begin{appproposition}[Feedback realization of the open-loop value]
\label{prop:feedback_realizes_open_loop_value}
Suppose the model data and feasible correspondence are Markov in
$(t,\mathbf s)$ and that $\mathcal U$ has a measurable graph.  Suppose further
that $V\in C^{1,2}$ solves \eqref{eq:HJB_constrained} with the terminal
condition in Section~\ref{sec:hjb_motivation} and with growth and integrability
conditions sufficient for the verification argument.  Assume that the HJB
supremum over the full feasible fiber is attained by a Borel measurable
selector
\[
\widehat{\mathbf u}(t,\mathbf s)
\in
\mathcal U(t,\mathbf s),
\]
and that the closed-loop state equation induced by this selector has an
admissible solution on the underlying filtered probability space.  Then, for
each initial pair in the classical verification domain,
\(
\widehat{\mathbf u}_t=\widehat{\mathbf u}(t,\mathbf S_t)
\)
attains the value in \eqref{eq:value_function} among all admissible open-loop
controls.
\end{appproposition}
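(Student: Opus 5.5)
This is the classical verification argument, run against an arbitrary progressively measurable admissible control rather than only against Markov feedbacks. Fix an initial pair $(t,\mathbf s)$ in the verification domain and an arbitrary admissible open-loop control $\mathbf u$ with state $\mathbf S^{\mathbf u}$. Let $\tau_n$ be a localizing sequence of stopping times, for instance exit times from compact subsets of $(0,\infty)\times\mathbb R^{d_Y}$ capped at $T$. On $[t,\tau_n]$ the stochastic integrals are true martingales. Apply It\^o's formula to $V(s,\mathbf S_s^{\mathbf u})$ on $[t,\tau_n]$ and add $\int_t^{\tau_n}\ell(s,\mathbf S_s^{\mathbf u};\mathbf u_s)\,ds$. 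The drift term is
\[
V_t+\ell+D_{\mathbf s}V^\top\boldsymbol b_S+\tfrac12\operatorname{tr}[\boldsymbol\sigma_S\boldsymbol\sigma_S^\top D^2_{\mathbf s\mathbf s}V]
\]
evaluated at $(s,\mathbf S_s^{\mathbf u};\mathbf u_s)$. Because $\mathbf u_s\in\mathcal U(s,\mathbf S^{\mathbf u}_s)$ for a.e. $(s,\omega)$ and \eqref{eq:HJB_constrained} holds pointwise with the supremum over the full fiber, this drift is $\le0$ almost everywhere. The key point is that the HJB inequality is a deterministic pointwise statement in $(s,\mathbf s,\mathbf u)$. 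Consequently, no Markov structure of $\mathbf u$ is used, only its pointwise feasibility. Taking expectations gives
\[
V(t,\mathbf s)\ge\mathbb E\Big[\int_t^{\tau_n}\ell\,ds+V(\tau_n,\mathbf S^{\mathbf u}_{\tau_n})\Big].
\]

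The second step passes $n\to\infty$. This is the main obstacle, and it is exactly what the assumed growth and integrability conditions are meant to handle. I would argue as follows. For the running reward, split $U=U^+-U^-$. Apply monotone convergence to the negative part and dominated convergence, from the moment conditions in the admissibility definition, to the positive part. For the terminal term, I would use a uniform integrability hypothesis on $\{V(\tau_n,\mathbf S^{\mathbf u}_{\tau_n})\}_n$, such as a polynomial growth bound on $V$ together with $\sup_s$ moments of $\mathbf S^{\mathbf u}$. Combined with continuity of $V$ and the terminal condition $V(T,\cdot)=Ke^{-\rho T}U$, this gives convergence to $\mathbb E[Ke^{-\rho T}U(X^{\mathbf u}_T)]$. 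If only a one-sided bound is available, as for CRRA with $\gamma>1$ where $U<0$, I would use Fatou's lemma in the direction that preserves $V\ge J$. The result is $V(t,\mathbf s)\ge J_t(\mathbf u)$ for every admissible open-loop $\mathbf u$.

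The third step treats the selector. The closed-loop control $\widehat{\mathbf u}_s=\widehat{\mathbf u}(s,\widehat{\mathbf S}_s)$ is progressively measurable, since it is the composition of a Borel map with a continuous adapted process. By hypothesis it is admissible, so it belongs to the open-loop class. Because $\widehat{\mathbf u}$ attains the supremum pointwise, the same It\^o computation holds with equality in the drift. The same localization and limit argument then gives $V(t,\mathbf s)=J_t(\widehat{\mathbf u})$. Here the limit must be justified as an equality, which requires uniform integrability rather than Fatou, and this is where the integrability of the closed-loop solution is used. Together with the second step, this shows that $\widehat{\mathbf u}$ attains the supremum in \eqref{eq:value_function} over all admissible open-loop controls. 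The measurable-graph assumption is used only to guarantee that the feasibility constraint $\mathbf u_s\in\mathcal U(s,\mathbf S_s)$ defines a measurable condition, so that the a.e. statements above are meaningful.
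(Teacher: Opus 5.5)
Your route is the paper's own proof. The paper also uses the following argument:
\begin{enumerate}[label=(\roman*)]
\item The HJB inequality is pointwise over the full fiber, so the drift of $V(s,\mathbf S^{\mathbf u}_s)+\int_t^s\ell\,dr$ is nonpositive for \emph{any} feasible progressively measurable control.
\item It\^o's formula, localization, and the assumed integrability then give $J(\mathbf u)\le V$.
\item For the Borel selector the drift vanishes, which gives attainment.
\end{enumerate}
You fill in details the paper leaves implicit. One of them is wrong, and it fails in exactly the paper's calibration ($\gamma=2$).

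\textbf{The Fatou direction in your second step.} Write $Y_n:=\int_t^{\tau_n}\ell\,ds+V(\tau_n,\mathbf S^{\mathbf u}_{\tau_n})$, so that $V(t,\mathbf s)\ge\mathbb E[Y_n]$. Fatou gives $\mathbb E[\lim_n Y_n]\le\liminf_n\mathbb E[Y_n]\le V(t,\mathbf s)$ only when the $Y_n$ are bounded \emph{below} by an integrable variable. That is the case $U\ge0$, i.e.\ CRRA with $\gamma<1$. For $\gamma>1$ you have $U<0$ and $V<0$, so the $Y_n$ are bounded \emph{above}. Fatou then only yields $\limsup_n\mathbb E[Y_n]\le\mathbb E[\lim_n Y_n]$, which cannot deliver $V\ge J(\mathbf u)$; a local supermartingale bounded above need not be a supermartingale. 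In that case you genuinely need uniform integrability of $V(\tau_n,\mathbf S^{\mathbf u}_{\tau_n})^-$, for instance $\mathbb E\sup_s(X^{\mathbf u}_s)^{1-\gamma}<\infty$ built into the admissible class. This is exactly what the proposition's ``integrability sufficient for verification'' has to supply.

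\textbf{Your third step needs less than you claim.} Applying your second step to the admissible $\widehat{\mathbf u}$ already gives $J(\widehat{\mathbf u})\le V$. Only $V\le J(\widehat{\mathbf u})$ remains, and for $\gamma>1$ the bound $Y_n\le0$ gives it by reverse Fatou, without full uniform integrability. So for $\gamma>1$ the roles of Fatou and uniform integrability are swapped relative to what you wrote.

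\textbf{A smaller point on localization.} When borrowing is allowed, $\boldsymbol\pi$ is unbounded. Exit times of the state from compact sets then do not by themselves make $\int D_{\mathbf s}V^\top\boldsymbol\sigma_S\,d\mathbf W$ a true martingale. Add $\inf\{s\ge t:\int_t^s\|D_{\mathbf s}V^\top\boldsymbol\sigma_S\|^2dr\ge n\}$ to $\tau_n$, and stop at $T-1/n$ if $V$ is $C^{1,2}$ only on $[0,T)$.
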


\begin{proof}
For any admissible open-loop control, the HJB equation makes the drift of the
value process plus the running reward nonpositive.  It\^o's formula,
localization, and the stated integrability therefore give an objective no
larger than $V$.  For the measurable maximizing selector the HJB inequality
holds with equality, and the closed-loop control attains $V$.
\end{proof}

The viscosity characterization mentioned in
Section~\ref{sec:hjb_motivation} is not used in this proposition.  Moreover,
if a selector maximizes only over
$\mathcal U_{\rm ch}(t,\mathbf s)$ while $V$ solves the full-fiber HJB, the
verification equality need not hold for that selector.  A genuine branchwise
verification statement requires the branch-restricted value function and the
HJB equation whose supremum is taken over
$\mathcal U_{\rm ch}(t,\mathbf s)$.  No branchwise optimality is inferred here
merely from a local selector for the full-value HJB integrand.

Under weaker nonsmooth hypotheses, existence of a Markov control requires
additional measurable-selection, compactness, and controlled-martingale-
problem arguments; such results may produce weak or relaxed Markov controls
rather than a strong deterministic feedback
\citep{kurtz1998existence}.  If relevant path history admits a
finite-dimensional adapted statistic with closed controlled dynamics, the
proposition may be applied to the augmented state.  Without such a statistic,
no finite-dimensional feedback reduction is asserted.

\subsection{Smooth PMP--Dynamic-Programming Portfolio Identities}
\label{app:smooth_pmp_dp_identities}

Under the optional smooth Markov hypotheses of
Section~\ref{sec:markov_reduction}, premultiplication by $e_X^\top$ extracts
the wealth component of a vector and the wealth row of a matrix.  Applying it
to \eqref{eq:markov_adjoint_identification} gives the following identities
\citep{zhou1991unified,yong2012stochastic}:
\begin{equation}\label{eq:markov_wealth_row_identification}
\begin{aligned}
\lambda_t^{X,*}
:=
e_X^\top\lambda_t^*
&=
V_X(t,\mathbf S_t^*),
\\
\mathbf Z_t^{X,*}
:=
e_X^\top\mathbf Z_t^*
&=
e_X^\top
D^2_{\mathbf s\mathbf s}V(t,\mathbf S_t^*)
\boldsymbol\sigma_t^*,
\\
\boldsymbol\zeta_t^{X,*}
:=
e_X^\top\boldsymbol\zeta_t^*
&=
e_X^\top
\left[
D^2_{\mathbf s\mathbf s}V(t,\mathbf S_t^*)
-
P_t^*
\right]
\boldsymbol\sigma_t^*.
\end{aligned}
\end{equation}
Thus, the wealth row of $\mathbf Z_t^*$ is determined by the value Hessian in
the smooth Markov representation, whereas the wealth row of $P_t^*$ is
obtained from the second-order PMP adjoint system.  Their difference in the
state-diffusion directions gives the shifted wealth-row input used in
\eqref{eq:open_loop_portfolio_block}.

At a Markov optimum that is interior with respect to the risky-weight
constraints, the first-order condition for
\eqref{eq:open_loop_portfolio_block}, together with
\eqref{eq:markov_adjoint_identification}, reduces to
\[
0
=
x\left[
V_X\boldsymbol\alpha(t,y)
+
\boldsymbol\Sigma_{RY}(t,y)\nabla_yV_X
\right]
+
x^2V_{XX}\boldsymbol\Sigma(t,y)\boldsymbol\pi^*,
\]
where all derivatives of $V$ in this display are evaluated at $(t,x,y)$.
Hence, when $V_{XX}(t,x,y)<0$,
\begin{equation}\label{eq:interior_markov_portfolio}
\boldsymbol\pi^*(t,x,y)
=
-\frac{1}{xV_{XX}(t,x,y)}
\boldsymbol\Sigma(t,y)^{-1}
\left[
V_X(t,x,y)\boldsymbol\alpha(t,y)
+
\boldsymbol\Sigma_{RY}(t,y)\nabla_yV_X(t,x,y)
\right].
\end{equation}
Substituting \eqref{eq:markov_adjoint_identification} into
\eqref{eq:open_loop_portfolio_block} shows that the terms involving $P_t^*$
cancel.  Thus, \eqref{eq:interior_markov_portfolio} does not require
$P_t^*=D^2_{\mathbf s\mathbf s}V(t,\mathbf S_t^*)$.  When $d_Y=0$, the
intertemporal-hedging term disappears and the formula reduces to the
corresponding no-factor optimal portfolio rule.

\section{Quadratic-Growth Scope in the Terminal-Wealth Merton Model}
\label{app:merton_qg_verification}

The quadratic-growth condition used by the end-to-end theorem is a substantive
local coercivity assumption on the admissible process space; it is not implied
merely by pointwise negativity of $P^{XX}$ or positive definiteness of the local
portfolio covariance.  This section records two complementary benchmark facts.
Uniform curvature in the unweighted adapted $\mathbb H^2$ metric generally
fails in the untruncated geometric-Brownian Merton model, whereas genuine
untruncated quadratic growth holds on bounded convex classes of deterministic
controls.  The calculations are specific to the terminal-wealth benchmark and
are not asserted for the joint portfolio--consumption problem or for general
state-dependent feasible fibers.

Consider the constant-coefficient Merton model without intermediate
consumption, with CRRA utility
\[
U(x)=\frac{x^{1-\gamma}}{1-\gamma},\qquad \gamma>1,
\]
wealth dynamics
\[
\frac{dX_t^{\boldsymbol\pi}}{X_t^{\boldsymbol\pi}}
=
\left(r+\boldsymbol\pi_t^\top\boldsymbol\alpha\right)dt
+
\boldsymbol\pi_t^\top\boldsymbol v\,d\mathbf W_t,
\qquad X_0=x_0>0,
\]
and objective
\[
J_{\rm M}(\boldsymbol\pi)
:=
\mathbb E\!\left[K e^{-\rho T}U(X_T^{\boldsymbol\pi})\right],
\qquad K>0.
\]
Here $r$, $\boldsymbol\alpha$, and $\boldsymbol v$ are constant, and
\[
\boldsymbol\Sigma:=\boldsymbol v\boldsymbol v^\top
\succeq \underline\sigma I
\]
for some $\underline\sigma>0$.

\begin{appproposition}[Failure of uniform unweighted adapted-$\mathbb H^2$ curvature]
\label{prop:merton_no_adapted_h2_qg}
Let $\boldsymbol\pi^*$ be a bounded deterministic stationary point of the
unconstrained constant-coefficient Merton problem.  Suppose that for some
$t_0\in(0,T)$ the baseline wealth $X_{t_0}^{\boldsymbol\pi^*}$ has unbounded
upper support.  Then
\[
\inf_{\substack{h\text{ bounded predictable}\\
                 \|h\|_{\mathbb H^2}=1}}
\left\{
-D^2J_{\rm M}(\boldsymbol\pi^*)[h,h]
\right\}
=
0.
\]
Consequently, no ambient adapted $\mathbb H^2$ neighborhood of
$\boldsymbol\pi^*$ can satisfy a uniform strong-concavity bound, and the local
quadratic-growth condition in the unweighted $\mathbb H^2$ norm cannot hold at
this stationary point.
\end{appproposition}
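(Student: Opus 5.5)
The plan is to avoid computing the second variation through the first- and second-order variational equations. Instead I would use an exact verification-type identity available in the constant-coefficient CRRA Merton model. For a bounded deterministic stationary point $\boldsymbol\pi^*$ of the unconstrained problem, the first-order condition forces $\boldsymbol\pi^*_t=\gamma^{-1}\boldsymbol\Sigma^{-1}\boldsymbol\alpha$. The associated function
\[
V(t,x)=f(t)\frac{x^{1-\gamma}}{1-\gamma},\qquad f>0 \text{ continuous on }[0,T],
\]
with $f$ the usual exponential factor, solves the HJB equation of Section~\ref{sec:hjb_motivation} with terminal condition $Ke^{-\rho T}U(x)$. Because the HJB integrand is an exact quadratic in $\boldsymbol\pi$, It\^o's formula applied to $V(t,X_t^{\boldsymbol\pi})$, with the standard localization for bounded $\boldsymbol\pi$, should give the exact identity
\[
J_{\rm M}(\boldsymbol\pi)-J_{\rm M}(\boldsymbol\pi^*)
=
-\frac{\gamma}{2}\,
\mathbb E\!\int_0^T f(t)\,(X_t^{\boldsymbol\pi})^{1-\gamma}\,
(\boldsymbol\pi_t-\boldsymbol\pi^*_t)^\top\boldsymbol\Sigma(\boldsymbol\pi_t-\boldsymbol\pi^*_t)\,dt .
\]
Here I use $x^2V_{xx}=-\gamma f x^{1-\gamma}$.

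Next I would set $\boldsymbol\pi=\boldsymbol\pi^*+\varepsilon h$ with $h$ bounded and predictable, divide by $\varepsilon^2$, and let $\varepsilon\to0$. Since $X^{\boldsymbol\pi^*+\varepsilon h}\to X^{\boldsymbol\pi^*}$ and the negative powers are uniformly integrable for bounded controls (they are lognormal-type with bounded exponents), dominated convergence should give
\[
-D^2J_{\rm M}(\boldsymbol\pi^*)[h,h]
=
\gamma\,\mathbb E\!\int_0^T f(t)\,(X_t^*)^{1-\gamma}\,h_t^\top\boldsymbol\Sigma h_t\,dt .
\]
This is a weighted $\mathbb H^2$ form, and its weight $(X_t^*)^{1-\gamma}$ degenerates where wealth is large, because $\gamma>1$.

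Now fix $t_0$ from the hypothesis, a unit vector $e$, $\delta\in(0,T-t_0)$, and $M>0$. I would take
\[
h^M:=c_M\,\mathbf 1_{\{X_{t_0}^*>M\}}\mathbf 1_{(t_0,t_0+\delta]}(t)\,e .
\]
This $h^M$ is bounded and predictable, and $c_M$ is chosen so that $\|h^M\|_{\mathbb H^2}=1$; this is possible because $\mathbb P(X^*_{t_0}>M)>0$ by the unbounded-support hypothesis. For $t\in(t_0,t_0+\delta]$ one has $X_t^*=X_{t_0}^*\,L_{t_0,t}$, where $L_{t_0,t}$ is a lognormal factor independent of $\mathcal F_{t_0}$ with $\mathbb E[L_{t_0,t}^{1-\gamma}]\le C$ uniformly. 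Independence then gives
\[
-D^2J_{\rm M}(\boldsymbol\pi^*)[h^M,h^M]
\le
\gamma\,\|f\|_\infty\,\lambda_{\max}(\boldsymbol\Sigma)\,C\,
\frac{\mathbb E\!\left[(X_{t_0}^*)^{1-\gamma}\mathbf 1_{\{X_{t_0}^*>M\}}\right]}{\mathbb P(X_{t_0}^*>M)}
\le C' M^{1-\gamma}\longrightarrow0 .
\]
Nonnegativity of the form gives infimum $0$. For the consequence, local quadratic growth with constant $\kappa_{\rm qg}$ on an $\mathbb H^2$ neighborhood, applied along $\boldsymbol\pi^*+\varepsilon h$ for small $\varepsilon$, would force $-D^2J_{\rm M}(\boldsymbol\pi^*)[h,h]\ge 2\kappa_{\rm qg}\|h\|_{\mathbb H^2}^2$, which contradicts the infimum being $0$.

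The main obstacle is the rigorous passage from the exact identity to the second derivative. It requires moment bounds for $(X^{\boldsymbol\pi^*+\varepsilon h})^{1-\gamma}$ that are uniform in small $\varepsilon$, and it requires that the local-martingale term in the verification identity is a true martingale. I would handle both using boundedness of $h$ and $\boldsymbol\pi^*$: then $X^{\boldsymbol\pi}$ is a stochastic exponential with bounded integrand, so all of its real powers have bounded moments via a Novikov-type bound.
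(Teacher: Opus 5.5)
Your proof is correct. It takes a genuinely different route to the second variation; the concentration step is the same as the paper's.

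\textbf{The paper's route.} The paper differentiates the exact log-wealth representation of $Ke^{-\rho T}U(X_T)$ twice. This gives
\[
-D^2J_{\rm M}(\boldsymbol\pi^*)[h,h]=\frac{Ke^{-\rho T}}{\gamma-1}\,\mathbb E\bigl[(X_T^{\boldsymbol\pi^*})^{1-\gamma}\{(\gamma-1)^2A(h)^2+(\gamma-1)B(h)\}\bigr],
\]
where $A(h)$ is a drift-plus-stochastic-integral term and $B(h)=\int_0^T h_t^\top\boldsymbol\Sigma h_t\,dt$. The paper then conditions on $\mathcal F_{t_0}$. It uses that the continuation factor and the Gaussian polynomial in $A(h^M)$ are independent of $\mathcal F_{t_0}$ and have $M$-independent moments.

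\textbf{Your route.} You identify $\boldsymbol\pi^*$ with the Merton fraction and complete the square in the HJB generator. This gives an exact global identity: $J_{\rm M}(\boldsymbol\pi^*)-J_{\rm M}(\boldsymbol\pi)$ equals a wealth-weighted $\mathbb H^2$ distance with weight $\tfrac{\gamma}{2}f(t)(X_t^{\boldsymbol\pi})^{1-\gamma}$. The two second-variation formulas agree. Under the measure tilted by $(X_T^*)^{1-\gamma}$, $A(h)$ becomes a pure stochastic integral. The martingale property of $f(t)(X_t^*)^{1-\gamma}$ then converts the paper's terminal weight into your running weight. Your direction $h^M$, the conditioning at $t_0$, and the $M^{1-\gamma}$ bound match the paper's.

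\textbf{What your approach buys.}
\begin{itemize}
\item The degenerate weight is displayed explicitly, so it is clear that the natural coercive metric is the wealth-weighted one rather than the unweighted $\mathbb H^2$ norm.
\item The $A(h)^2$ term disappears.
\item The quadratic-growth contradiction follows from the second difference quotient alone. You do not need to justify differentiating twice under the expectation.
\end{itemize}

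\textbf{What it costs.}
\begin{itemize}
\item It relies on the first-order characterization $\boldsymbol\pi^*_t=\gamma^{-1}\boldsymbol\Sigma^{-1}\boldsymbol\alpha$ for a.e.\ $t$. You assert this rather than derive it. It does follow by computing $DJ_{\rm M}$ in deterministic directions with the same exponential tilting.
\item It needs an explicit smooth HJB solution and a true-martingale verification. Both are tied to the unconstrained constant-coefficient CRRA structure.
\end{itemize}
By contrast, the paper's bound on $-D^2J_{\rm M}[h^M,h^M]$ holds at any bounded deterministic control. Stationarity enters the paper's proof only in the final quadratic-growth contradiction.
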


\begin{proof}
For a bounded predictable direction $h$, define
\[
\begin{aligned}
A_{\boldsymbol\pi^*}(h)
&:=
\int_0^T
h_t^\top
\left(
\boldsymbol\alpha-\boldsymbol\Sigma\boldsymbol\pi_t^*
\right)dt
+
\int_0^T h_t^\top\boldsymbol v\,d\mathbf W_t,
\\
B(h)
&:=
\int_0^T h_t^\top\boldsymbol\Sigma h_t\,dt.
\end{aligned}
\]
Differentiating the exact log-wealth representation twice gives
\begin{equation}\label{eq:appendix_merton_adapted_second_variation}
\begin{aligned}
-D^2J_{\rm M}(\boldsymbol\pi^*)[h,h]
={}&
\frac{K e^{-\rho T}}{\gamma-1}
\mathbb E\!\left[
(X_T^{\boldsymbol\pi^*})^{1-\gamma}
\right.
\\
&\left.\hspace{13mm}\times
\left\{
(\gamma-1)^2
A_{\boldsymbol\pi^*}(h)^2
+
(\gamma-1)B(h)
\right\}
\right].
\end{aligned}
\end{equation}

Choose $\delta>0$ with $t_0+\delta\le T$ and a deterministic unit vector $e$.
For $M>0$, set
\[
E_M
:=
\{X_{t_0}^{\boldsymbol\pi^*}>M\},
\qquad
p_M:=\mathbb P(E_M),
\]
and, whenever $p_M>0$, define the predictable direction
\[
h_t^M
:=
\frac{\mathbf1_{E_M}\mathbf1_{(t_0,t_0+\delta]}(t)}
{\sqrt{\delta p_M}}e.
\]
Then $\|h^M\|_{\mathbb H^2}=1$, and
\[
\begin{aligned}
A_{\boldsymbol\pi^*}(h^M)
={}&
\frac{\mathbf1_{E_M}}{\sqrt{\delta p_M}}
\left[
\int_{t_0}^{t_0+\delta}
 e^\top
(\boldsymbol\alpha-\boldsymbol\Sigma\boldsymbol\pi_t^*)\,dt
+
\int_{t_0}^{t_0+\delta}e^\top\boldsymbol v\,d\mathbf W_t
\right],
\\
B(h^M)
={}&
\frac{\mathbf1_{E_M}}{\delta p_M}
\int_{t_0}^{t_0+\delta}e^\top\boldsymbol\Sigma e\,dt.
\end{aligned}
\]
Because $\boldsymbol\pi^*$ is deterministic, the multiplicative continuation
factor from $t_0$ to $T$ and the Gaussian polynomial factors in the preceding
display have conditional moments bounded independently of $M$.  Conditioning
on $\mathcal F_{t_0}$ in
\eqref{eq:appendix_merton_adapted_second_variation} therefore yields
\[
-D^2J_{\rm M}(\boldsymbol\pi^*)[h^M,h^M]
\le
C
\frac{
\mathbb E\!\left[
\mathbf1_{E_M}
(X_{t_0}^{\boldsymbol\pi^*})^{1-\gamma}
\right]
}{p_M}
\le
C M^{1-\gamma}.
\]
Since $\gamma>1$ and $p_M>0$ for arbitrarily large $M$, the right-hand side
tends to zero.  Finally, suppose local quadratic growth with constant
$\kappa>0$ held at the stationary point.  Apply it along
$\boldsymbol\pi^*+\theta h$ for a bounded predictable direction $h$, divide by
$\theta^2$, and let $\theta\downarrow0$.  The first directional derivative
vanishes by stationarity, while the second directional derivative is given by
\eqref{eq:appendix_merton_adapted_second_variation}; hence
$-D^2J_{\rm M}(\boldsymbol\pi^*)[h,h]\ge
2\kappa\|h\|_{\mathbb H^2}^2$, a contradiction.
\end{proof}

The failure above uses event-dependent directions.  It does not arise on a
deterministic control class, where an exact untruncated verification is
available.

\begin{appproposition}[Untruncated quadratic growth on deterministic control classes]
\label{prop:merton_deterministic_qg}
Let
$\mathfrak U_{\rm M}^{\rm det}\subset
L^2([0,T];\mathbb R^n)$ be a bounded convex set of deterministic risky-weight
controls.  Then $J_{\rm M}$ is strongly concave on
$\mathfrak U_{\rm M}^{\rm det}$.  In particular, every maximizer
$\boldsymbol\pi^*$ on a convex local neighborhood satisfies
\[
J_{\rm M}(\boldsymbol\pi^*)-J_{\rm M}(\boldsymbol\pi)
\ge
\kappa_{\rm M}^{\rm det}
\|\boldsymbol\pi-\boldsymbol\pi^*\|_{L^2(0,T)}^2
\]
for some $\kappa_{\rm M}^{\rm det}>0$.
\end{appproposition}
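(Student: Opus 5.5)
The plan is to exploit the fact that for a deterministic risky-weight path the terminal wealth is exactly lognormal. This reduces $J_{\rm M}$ to a monotone concave function of a deterministic quadratic functional, and strong concavity then follows from a second-variation bound along segments. First, for $\boldsymbol\pi\in\mathfrak U_{\rm M}^{\rm det}$, I will write
\[
\log X_T^{\boldsymbol\pi}=\log x_0+\int_0^T\Bigl(r+\boldsymbol\pi_t^\top\boldsymbol\alpha-\tfrac12\boldsymbol\pi_t^\top\boldsymbol\Sigma\boldsymbol\pi_t\Bigr)dt+\int_0^T\boldsymbol\pi_t^\top\boldsymbol v\,d\mathbf W_t .
\]
The stochastic integral is Gaussian with variance $\int_0^T\boldsymbol\pi_t^\top\boldsymbol\Sigma\boldsymbol\pi_t\,dt$. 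Taking the Gaussian moment generating function at exponent $1-\gamma$ should give the closed form
\[
J_{\rm M}(\boldsymbol\pi)=\varphi\bigl(F(\boldsymbol\pi)\bigr),\qquad
F(\boldsymbol\pi):=\int_0^T\Bigl(r+\boldsymbol\pi_t^\top\boldsymbol\alpha-\tfrac{\gamma}{2}\boldsymbol\pi_t^\top\boldsymbol\Sigma\boldsymbol\pi_t\Bigr)dt,\qquad
\varphi(F):=-\frac{Ke^{-\rho T}x_0^{1-\gamma}}{\gamma-1}e^{-(\gamma-1)F}.
\]
This uses $\tfrac12(1-\gamma)^2-\tfrac12(1-\gamma)=-\tfrac{\gamma}{2}(1-\gamma)$ to merge the Itô correction and the variance term.

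Second, I will record the elementary properties of the two pieces.
\begin{itemize}
\item $F$ is a continuous quadratic functional on $L^2([0,T];\mathbb R^n)$. Its second derivative is $D^2F[h,h]=-\gamma\int_0^T h_t^\top\boldsymbol\Sigma h_t\,dt\le-\gamma\underline\sigma\|h\|_{L^2}^2$.
\item Since $\mathfrak U_{\rm M}^{\rm det}$ is bounded, say $\|\boldsymbol\pi\|_{L^2}\le B$, Cauchy--Schwarz gives $F(\boldsymbol\pi)\le rT+\|\boldsymbol\alpha\|\sqrt T B=:F_{\max}$. It is also bounded below on this set.
\item $\varphi$ is $C^2$ with $\varphi'(F)=Ke^{-\rho T}x_0^{1-\gamma}e^{-(\gamma-1)F}>0$ and $\varphi''(F)=-(\gamma-1)\varphi'(F)<0$, using $\gamma>1$.
\end{itemize}
Hence $\varphi'\ge c_0:=Ke^{-\rho T}x_0^{1-\gamma}e^{-(\gamma-1)F_{\max}}>0$ uniformly on the class.

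Third, for $\boldsymbol\pi,\boldsymbol\pi'\in\mathfrak U_{\rm M}^{\rm det}$ I will set $g(\theta):=J_{\rm M}(\boldsymbol\pi+\theta(\boldsymbol\pi'-\boldsymbol\pi))$ on $[0,1]$, which stays in the class by convexity. With $h=\boldsymbol\pi'-\boldsymbol\pi$, the chain rule gives
\[
g''(\theta)=\varphi''(F)\bigl(DF[h]\bigr)^2+\varphi'(F)D^2F[h,h]\le-c_0\gamma\underline\sigma\|h\|_{L^2}^2 .
\]
The first term is dropped because $\varphi''<0$. This is exactly strong concavity with modulus $\kappa:=c_0\gamma\underline\sigma$.

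Finally, for a maximizer $\boldsymbol\pi^*$ on a convex neighborhood $\mathcal N$, take $\boldsymbol\pi\in\mathcal N$ and $g$ along the segment from $\boldsymbol\pi^*$ to $\boldsymbol\pi$. Maximality gives the one-sided derivative $g'(0+)\le0$. Integrating the second-derivative bound twice then yields
\[
J_{\rm M}(\boldsymbol\pi^*)-J_{\rm M}(\boldsymbol\pi)\ge\tfrac{\kappa}{2}\|\boldsymbol\pi-\boldsymbol\pi^*\|_{L^2(0,T)}^2,
\]
so the claim holds with $\kappa_{\rm M}^{\rm det}=\kappa/2$.

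The main obstacle is the uniform positive lower bound on $\varphi'$. This is exactly where boundedness of the deterministic class is needed, and where the contrast with Proposition~\ref{prop:merton_no_adapted_h2_qg} appears: there, event-dependent directions let the effective weight $(X_{t_0})^{1-\gamma}$ degenerate. I also have to justify that the directional derivatives of $J_{\rm M}$ are those of the closed form. That follows because the closed form is an explicit smooth function of finitely many $L^2$-continuous polynomial functionals.
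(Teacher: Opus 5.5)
Your proposal is correct and is essentially the paper's own argument. The shared steps are the Gaussian closed form for $J_{\rm M}$, the chain-rule second variation with the squared first-derivative term discarded by sign, and a uniform positive lower bound on the prefactor. Your $c_0$ is $(\gamma-1)\inf(-J_{\rm M})$ up to your choice of $F_{\max}$, so your modulus $c_0\gamma\underline\sigma$ matches the paper's $m_J\gamma(\gamma-1)\underline\sigma$. Integrating along the segment from the maximizer then gives the same $\kappa_{\rm M}^{\rm det}$.

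As a side remark, boundedness of the class is not the crux you describe. The bound $F(\boldsymbol\pi)\le rT+\frac{T}{2\gamma}\boldsymbol\alpha^\top\boldsymbol\Sigma^{-1}\boldsymbol\alpha$ holds for every deterministic $\boldsymbol\pi\in L^2$. Hence the lower bound on $\varphi'$, and with it strong concavity, holds on the whole deterministic class.
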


\begin{proof}
For deterministic $\boldsymbol\pi$,
\begin{equation}\label{eq:appendix_merton_deterministic_closed_form}
\begin{aligned}
J_{\rm M}(\boldsymbol\pi)
={}&
\frac{K e^{-\rho T}x_0^{1-\gamma}}{1-\gamma}
\exp\!\Bigg(
(1-\gamma)rT
+
(1-\gamma)
\int_0^T
\boldsymbol\alpha^\top\boldsymbol\pi_t\,dt
\\
&\hspace{34mm}
+
\frac{\gamma(\gamma-1)}2
\int_0^T
\boldsymbol\pi_t^\top
\boldsymbol\Sigma
\boldsymbol\pi_t\,dt
\Bigg).
\end{aligned}
\end{equation}
Let $\mathcal A(\boldsymbol\pi)$ denote the exponent in
\eqref{eq:appendix_merton_deterministic_closed_form}.  Direct differentiation
gives
\begin{equation}\label{eq:appendix_merton_deterministic_hessian}
D^2J_{\rm M}(\boldsymbol\pi)[h,h]
=
J_{\rm M}(\boldsymbol\pi)
\left[
\left(D\mathcal A(\boldsymbol\pi)[h]\right)^2
+
\gamma(\gamma-1)
\int_0^T h_t^\top\boldsymbol\Sigma h_t\,dt
\right].
\end{equation}
The objective is negative for $\gamma>1$.  Since
$\mathfrak U_{\rm M}^{\rm det}$ is bounded in $L^2$, the exponent is bounded
below there, and hence
\[
m_J
:=
\inf_{\boldsymbol\pi\in\mathfrak U_{\rm M}^{\rm det}}
\left(-J_{\rm M}(\boldsymbol\pi)\right)
>
0.
\]
Using
$\boldsymbol\Sigma\succeq\underline\sigma I$ in
\eqref{eq:appendix_merton_deterministic_hessian} yields
\[
D^2J_{\rm M}(\boldsymbol\pi)[h,h]
\le
-m_J\gamma(\gamma-1)\underline\sigma
\|h\|_{L^2(0,T)}^2.
\]
Thus $J_{\rm M}$ is strongly concave on the deterministic class.  Integrating
the Hessian bound along the segment from a maximizer to
$\boldsymbol\pi$ gives the stated quadratic-growth inequality, for example
with
$\kappa_{\rm M}^{\rm det}
=\tfrac12m_J\gamma(\gamma-1)\underline\sigma$.
\end{proof}

Proposition~\ref{prop:merton_no_adapted_h2_qg} shows that the local
quadratic-growth assumption used in the end-to-end theorem cannot be inferred
from the standard untruncated Merton model on the full adapted, unweighted
$\mathbb H^2$ space.  Proposition~\ref{prop:merton_deterministic_qg} supplies
a genuine untruncated, but narrower, nonvacuous example.  Neither calculation
establishes quadratic growth for intermediate consumption, control-coupled
factor dynamics, or general state-dependent feasible fibers.

\section{Barrier--KKT Theory and Estimated-Input B-PGDPO Stability}
\label{app:barrier_theory}

This section collects the theory for the secondary log-barrier realization.
The main manuscript retains the solver-neutral recovery map and the statement
of the local barrier--KKT approximation; the central-path and estimated-input
proofs are recorded here so that the Published Appendix can focus on the two
flagship QP/adjoint theorems.

\subsection{Local Barrier--KKT Approximation}
\label{app:proof_barrier_policy}

This subsection proves Proposition~\ref{prop:barrier_policy}.  Numerical
implementation details are given separately in
Online Supporting Information~\ref{app:solver_details}: the Newton--conjugate-gradient
(Newton--CG) system, fraction-to-the-boundary rule, line search, and stopping
criterion.

Fix the stagewise state and adjoint input
$(t,\mathbf s,\vartheta)$, where
\[
\vartheta=(\lambda,\boldsymbol\zeta,P),
\]
and suppress these arguments throughout.  Write
\[
\mathbb H_\vartheta(\mathbf u)
:=
\mathbb H(t,\mathbf s;\mathbf u,
\lambda,\boldsymbol\zeta,P),
\qquad
\Gamma(\mathbf u)
:=
\Gamma(t,\mathbf s;\mathbf u).
\]
The control is represented in the reduced coordinates
\[
\mathbf u=(\boldsymbol\pi^\top,C)^\top.
\]
The risk-free weight is
$\pi_0=1-\mathbf1^\top\boldsymbol\pi$, and all pointwise restrictions,
including the no-borrowing condition when imposed, are included among the
components of $\Gamma(\mathbf u)\ge\boldsymbol0$.  Hence no separate
full-investment equality or equality multiplier is needed.

Although the feasible fiber may depend on the state, the state is fixed in
this stagewise problem.  Accordingly, the KKT map below contains only control
derivatives.  State derivatives of the feasible chart enter the adjoint
system, not the fixed-state recovery problem.

For
\[
z:=(\mathbf u,\nu)
\]
and $\varepsilon\ge0$, define the perturbed KKT map
\begin{equation}\label{eq:appendix_central_path_map}
\mathcal F_{\rm cp}(z;\varepsilon)
:=
\begin{pmatrix}
\nabla_{\mathbf u}\mathbb H_\vartheta(\mathbf u)
+
D_{\mathbf u}\Gamma(\mathbf u)^\top\nu
\\[1mm]
\operatorname{diag}(\Gamma(\mathbf u))\nu
-
\varepsilon\mathbf1
\end{pmatrix}.
\end{equation}
For $\varepsilon>0$, a solution satisfying
\[
\Gamma(\mathbf u)>\boldsymbol0,
\qquad
\nu>\boldsymbol0
\]
is equivalent to a stationary point of the logarithmic-barrier problem in
\eqref{eq:barrier_stage_problem}, with
\[
\nu_j
=
\frac{\varepsilon}{\Gamma_j(\mathbf u)}.
\]
At $\varepsilon=0$,
\eqref{eq:appendix_central_path_map} reduces to the KKT stationarity and
complementarity system.

For later use, define the stagewise Lagrangian
\begin{equation}\label{eq:appendix_stagewise_lagrangian}
\mathcal L_\vartheta(\mathbf u,\nu)
:=
\mathbb H_\vartheta(\mathbf u)
+
\nu^\top\Gamma(\mathbf u).
\end{equation}

\begin{proof}[Proof of Proposition~\ref{prop:barrier_policy}]

Let
\[
z^\star
:=
(\mathbf u^\star,\nu^\star)
\]
denote the selected KKT solution from
Assumption~\ref{ass:barrier_kkt}.  Set
\[
\varepsilon:=\varepsilon_{\rm bar}
\]
for notational brevity.

\medskip\noindent
\textbf{Step 1: local central-path branch.}

At $(z^\star,0)$,
\[
\mathcal F_{\rm cp}(z^\star;0)=0.
\]
The Jacobian of
\eqref{eq:appendix_central_path_map}
with respect to $z=(\mathbf u,\nu)$ is
\begin{equation}\label{eq:appendix_central_path_jacobian}
D_z\mathcal F_{\rm cp}(z;\varepsilon)
=
\begin{pmatrix}
\nabla_{\mathbf u\mathbf u}^2
\mathcal L_\vartheta(\mathbf u,\nu)
&
D_{\mathbf u}\Gamma(\mathbf u)^\top
\\[1mm]
\operatorname{diag}(\nu)
D_{\mathbf u}\Gamma(\mathbf u)
&
\operatorname{diag}(\Gamma(\mathbf u))
\end{pmatrix}.
\end{equation}

Partition the constraints into the active and inactive sets at
$\mathbf u^\star$:
\[
\mathcal I^\star
:=
\{j:\Gamma_j(\mathbf u^\star)=0\},
\qquad
\mathcal I^{\star c}
:=
\{1,\ldots,m_\Gamma\}\setminus\mathcal I^\star.
\]
For $j\in\mathcal I^{\star c}$,
\[
\Gamma_j(\mathbf u^\star)>0,
\qquad
\nu_j^\star=0,
\]
so the inactive multiplier block of
\eqref{eq:appendix_central_path_jacobian}
is nonsingular.  For $j\in\mathcal I^\star$, strict complementarity gives
\[
\nu_j^\star>0.
\]
Eliminate the inactive block, and rescale the active complementarity rows by
$\operatorname{diag}(\nu_{\mathcal I^\star}^\star)^{-1}$.  The remaining
Jacobian then has the same nonsingularity structure as the bordered KKT matrix
\[
\begin{pmatrix}
\nabla_{\mathbf u\mathbf u}^2
\mathcal L_\vartheta(\mathbf u^\star,\nu^\star)
&
D_{\mathbf u}\Gamma_{\mathcal I^\star}
(\mathbf u^\star)^\top
\\[1mm]
D_{\mathbf u}\Gamma_{\mathcal I^\star}
(\mathbf u^\star)
&
0
\end{pmatrix}.
\]
LICQ and the strong second-order sufficient condition therefore imply that
\[
D_z\mathcal F_{\rm cp}(z^\star;0)
\]
is nonsingular.

The Implicit Function Theorem gives a unique local $C^1$ branch
\[
\varepsilon
\longmapsto
z_\varepsilon
=
(\mathbf u_\varepsilon,\nu_\varepsilon)
\]
satisfying
\[
\mathcal F_{\rm cp}(z_\varepsilon;\varepsilon)=0,
\qquad
z_0=z^\star,
\]
for all sufficiently small $\varepsilon\ge0$.  Hence, after reducing
$\varepsilon_0$ if necessary,
\begin{equation}\label{eq:appendix_central_path_rate}
\|z_\varepsilon-z^\star\|
\le
C_{\rm cp}\varepsilon,
\end{equation}
and in particular,
\[
\|\mathbf u_\varepsilon-\mathbf u^\star\|
\le
C_{\rm cp}\varepsilon.
\]
This proves
\eqref{eq:barrier_control_error}.

\medskip\noindent
\textbf{Step 2: strict local maximality and slack bounds.}

The Hessian of the barrier objective at
$\mathbf u_\varepsilon$ can be written as
\begin{equation}\label{eq:appendix_barrier_hessian}
\begin{aligned}
\nabla_{\mathbf u\mathbf u}^2
\Bigg[
\mathbb H_\vartheta(\mathbf u)
+
\varepsilon
\sum_{j=1}^{m_\Gamma}
\log\Gamma_j(\mathbf u)
\Bigg]_{\mathbf u=\mathbf u_\varepsilon}
={}&
\nabla_{\mathbf u\mathbf u}^2
\mathcal L_\vartheta
(\mathbf u_\varepsilon,\nu_\varepsilon)
\\
&-
\sum_{j=1}^{m_\Gamma}
\frac{\varepsilon}
{\Gamma_j(\mathbf u_\varepsilon)^2}
\nabla_{\mathbf u}\Gamma_j(\mathbf u_\varepsilon)
\nabla_{\mathbf u}\Gamma_j(\mathbf u_\varepsilon)^\top.
\end{aligned}
\end{equation}
To make the definiteness argument explicit, let
\[
G_*
:=
D_{\mathbf u}\Gamma_{\mathcal I^\star}(\mathbf u^\star),
\qquad
G_\varepsilon
:=
D_{\mathbf u}\Gamma_{\mathcal I^\star}(\mathbf u_\varepsilon).
\]
By LICQ, every direction $d$ admits a uniformly stable decomposition
$d=d_T+d_N$, where
$d_T\in\ker G_*$ and
$d_N\in\operatorname{range}(G_*^\top)$.  The strong second-order sufficient
condition gives, for some $\kappa_T>0$,
\[
d_T^\top
\nabla_{\mathbf u\mathbf u}^2
\mathcal L_\vartheta(\mathbf u^\star,\nu^\star)
d_T
\le
-\kappa_T\|d_T\|^2.
\]
Since
$(\mathbf u_\varepsilon,\nu_\varepsilon)
\to(\mathbf u^\star,\nu^\star)$, continuity of the Lagrangian Hessian transfers
this estimate, with $\kappa_T/2$, to
$\nabla_{\mathbf u\mathbf u}^2
\mathcal L_\vartheta(\mathbf u_\varepsilon,\nu_\varepsilon)$ on the fixed
subspace $\ker G_*$ for all sufficiently small $\varepsilon$.
Along the central path,
\[
\frac{\varepsilon}{\Gamma_j(\mathbf u_\varepsilon)^2}
=
\frac{\nu_{\varepsilon,j}^2}{\varepsilon}
\ge
\frac{c_\nu}{\varepsilon},
\qquad j\in\mathcal I^\star,
\]
for a uniform $c_\nu>0$.  Continuity of the active-gradient matrix and LICQ
therefore give negative curvature of order $\varepsilon^{-1}$ on the normal
component.  The Lagrangian Hessian is uniformly bounded; its tangent--normal
cross terms are absorbed by Young's inequality, while
$G_\varepsilon d_T=O(\varepsilon)\|d_T\|$ because
$\mathbf u_\varepsilon-\mathbf u^\star=O(\varepsilon)$.  Consequently,
after reducing $\varepsilon_0$, there exist $c_T,c_N>0$ such that
\[
d^\top
\nabla_{\mathbf u\mathbf u}^2
\Bigg[
\mathbb H_\vartheta
+
\varepsilon\sum_j\log\Gamma_j
\Bigg](\mathbf u_\varepsilon)d
\le
-c_T\|d_T\|^2
-
\frac{c_N}{\varepsilon}\|d_N\|^2.
\]
Thus $\mathbf u_\varepsilon$ is a strict local maximizer of the barrier
problem for every sufficiently small $\varepsilon>0$.

For every active constraint $j\in\mathcal I^\star$,
\[
\Gamma_j(\mathbf u_\varepsilon)
=
\frac{\varepsilon}{\nu_{\varepsilon,j}},
\qquad
\nu_{\varepsilon,j}
\longrightarrow
\nu_j^\star>0.
\]
Thus there exists a constant $c_{\rm act}>0$ such that
\[
\Gamma_j(\mathbf u_\varepsilon)
\ge
c_{\rm act}\varepsilon,
\qquad
j\in\mathcal I^\star.
\]
For every inactive constraint $j\in\mathcal I^{\star c}$,
continuity and
$\Gamma_j(\mathbf u^\star)>0$
give a constant $c_{\rm inact}>0$ such that
\[
\Gamma_j(\mathbf u_\varepsilon)
\ge
c_{\rm inact}
\]
for all sufficiently small $\varepsilon$.  Reducing $\varepsilon_0$ and
choosing a common constant $c_{\rm int}>0$ therefore yields
\[
\Gamma_j(\mathbf u_\varepsilon)
\ge
c_{\rm int}\varepsilon,
\qquad
j=1,\ldots,m_\Gamma.
\]
This proves
\eqref{eq:barrier_slack_bound}.

\medskip\noindent
\textbf{Step 3: Lagrangian and Hamiltonian gaps.}

At the true multipliers,
\[
\nabla_{\mathbf u}
\mathcal L_\vartheta
(\mathbf u^\star,\nu^\star)
=
0.
\]
Because the Lagrangian Hessian is locally bounded, Taylor's theorem and
\eqref{eq:appendix_central_path_rate}
give
\[
\left|
\mathcal L_\vartheta
(\mathbf u^\star,\nu^\star)
-
\mathcal L_\vartheta
(\mathbf u_\varepsilon,\nu^\star)
\right|
\le
C_{\mathcal L}\varepsilon^2.
\]
This proves
\eqref{eq:barrier_lagrangian_gap}.

Both $\mathbf u^\star$ and $\mathbf u_\varepsilon$ are feasible for the
original inequalities, and $\mathbf u^\star$ is a strict local constrained
maximizer.  Since $\mathbf u_\varepsilon\to\mathbf u^\star$,
\[
0
\le
\mathbb H_\vartheta(\mathbf u^\star)
-
\mathbb H_\vartheta(\mathbf u_\varepsilon)
\]
for all sufficiently small $\varepsilon$.

KKT complementarity gives
\[
\nu^{\star\top}\Gamma(\mathbf u^\star)=0.
\]
Using
\eqref{eq:appendix_stagewise_lagrangian},
\[
\begin{aligned}
\mathbb H_\vartheta(\mathbf u^\star)
-
\mathbb H_\vartheta(\mathbf u_\varepsilon)
={}&
\mathcal L_\vartheta
(\mathbf u^\star,\nu^\star)
-
\mathcal L_\vartheta
(\mathbf u_\varepsilon,\nu^\star)
\\
&+
\nu^{\star\top}
\Gamma(\mathbf u_\varepsilon).
\end{aligned}
\]
Only active multipliers contribute to the final term.  Therefore,
\[
\nu^{\star\top}\Gamma(\mathbf u_\varepsilon)
=
\sum_{j\in\mathcal I^\star}
\nu_j^\star
\frac{\varepsilon}{\nu_{\varepsilon,j}}
\le
C\varepsilon.
\]
Combining this estimate with the quadratic Lagrangian bound gives
\[
0
\le
\mathbb H_\vartheta(\mathbf u^\star)
-
\mathbb H_\vartheta(\mathbf u_\varepsilon)
\le
C_{H,1}\varepsilon
+
C_{H,2}\varepsilon^2.
\]
This proves
\eqref{eq:barrier_hamiltonian_gap}.

If no inequality is active at $\mathbf u^\star$, then
\[
\nu^\star=\boldsymbol0.
\]
The complementarity term vanishes, and the Hamiltonian gap is bounded solely
by the quadratic Lagrangian term:
\[
0
\le
\mathbb H_\vartheta(\mathbf u^\star)
-
\mathbb H_\vartheta(\mathbf u_\varepsilon)
\le
C\varepsilon^2.
\]
Hence the linear term in
\eqref{eq:barrier_hamiltonian_gap}
vanishes.

\end{proof}

\begin{appremark}[Degeneracy and active-set switching]
\label{rem:barrier_degeneracy}
Strict complementarity is used to obtain the smooth central-path branch and
the $O(\varepsilon_{\rm bar})$ control rate.  Without strict complementarity,
the linear rate need not hold and can deteriorate to
$O(\sqrt{\varepsilon_{\rm bar}})$ in degenerate cases.  At an active-set
switch, a constraint may satisfy
\[
\Gamma_j(\mathbf u^\star)=0,
\qquad
\nu_j^\star=0,
\]
which can destroy strong regularity of a single KKT branch.  The local
estimates in Proposition~\ref{prop:barrier_policy} are therefore not asserted
uniformly across such switching points.
\end{appremark}

\subsection{Estimated-Input B-PGDPO Stability}
\label{app:barrier_estimated_input}
\label{sec:ppgdpo}

Fix a common deployment-state process $\mathbf S$ in a compact deployment
region and a selected regular barrier branch.  Let
$\widehat\vartheta$, $\vartheta^{\rm ref}$, and $\vartheta^*$ denote,
respectively, the estimated, exact-reference, and target adjoint inputs,
all evaluated at the same $(t,\mathbf S_t)$.  Assume
\[
\left\|
(\widehat\lambda,\widehat P)
-
(\lambda^{\rm ref},P^{\rm ref})
\right\|_{L^{q,p}}
\le C_{\rm ol}\delta_{\rm ol},
\qquad
\left\|
\widehat{\boldsymbol\zeta}
-
\boldsymbol\zeta^{\rm ref}
\right\|_{L^{q,p}}
\le\delta_\zeta,
\]
\[
\left\|
\vartheta^{\rm ref}-\vartheta^*
\right\|_{L^{q,p}}
\le\delta_{\rm ref},
\qquad
\delta_{\rm adj}
:=
\delta_{\rm ol}+\delta_\zeta+\delta_{\rm ref}.
\]
Thus, for a constant $C_{\rm in}$,
\begin{equation}\label{eq:barrier_aggregate_adjoint_input_bound}
\|\widehat\vartheta-\vartheta^*\|_{L^{q,p}}
\le C_{\rm in}\delta_{\rm adj}.
\end{equation}
Under the common-state bridge in
Assumption~\ref{ass:common_state_reference_bridge},
$\delta_{\rm ref}$ may be replaced by the square-root
reference-performance rate.

\begin{appassumption}[Uniform barrier-branch regularity]
\label{ass:barrier_recovery_stability}
There exist an open neighborhood $\mathcal N$ of the compact range of
$\vartheta^*$, a closed convex set $\mathcal N_0$ satisfying
\[
\operatorname{range}(\vartheta^*)
\subset
\operatorname{int}(\mathcal N_0),
\qquad
\mathcal N_0\Subset\mathcal N,
\]
a number $\varepsilon_0>0$, a uniform radius $r_{\rm KKT}>0$, and
constants $\underline\nu,\underline\kappa,C_{\rm KKT},C_{\rm res}>0$
such that, for every $\vartheta\in\mathcal N_0$ and
$0<\varepsilon_{\rm bar}\le\varepsilon_0$, the selected KKT solution and
local central path remain on a common regular branch.  The active multipliers
are bounded below by $\underline\nu$, the reduced Hessian is bounded above by
$-\underline\kappa I$ on the feasible tangent space, and the KKT and
perturbed central-path Jacobians have inverse norm at most $C_{\rm KKT}$
in the uniform tube.  Moreover,
\begin{equation}\label{eq:uniform_barrier_residual_error_bound}
\|z-z_{\rm cp}(\varepsilon_{\rm bar};\vartheta)\|
\le
C_{\rm res}
\|\mathcal F_{\rm bar}(z;\vartheta,\varepsilon_{\rm bar})\|
\end{equation}
throughout that tube.  The estimated input is projected onto $\mathcal N_0$,
and the numerical outputs remain in the tube.
\end{appassumption}

For notational compactness, define
\[
\mathbb H_t(\mathbf u;\vartheta)
:=
\mathbb H(t,\mathbf S_t;\mathbf u,\vartheta),
\qquad
\Gamma_t(\mathbf u)
:=
\Gamma(t,\mathbf S_t;\mathbf u).
\]
The inequality-only perturbed KKT map is
\[
\mathcal F_{\rm bar}
(\mathbf u,\nu;\vartheta,\varepsilon)
:=
\begin{pmatrix}
\nabla_{\mathbf u}
\mathbb H_t(\mathbf u;\vartheta)
+
D_{\mathbf u}\Gamma_t(\mathbf u)^\top\nu
\\[1mm]
\operatorname{diag}
(\Gamma_t(\mathbf u))\nu
-
\varepsilon\mathbf1
\end{pmatrix}.
\]
Let
$(\widehat{\mathbf u}_t^{\rm bar},
  \widehat\nu_t^{\rm bar})$
be a strictly feasible numerical approximation to the estimated-input central
path from \eqref{eq:barrier_stage_problem}.  Define
\[
r_t^{\rm bar}
:=
\left\|
\mathcal F_{\rm bar}
(\widehat{\mathbf u}_t^{\rm bar},
 \widehat\nu_t^{\rm bar};
 \widehat\vartheta_t,
 \varepsilon_{\rm bar})
\right\|,
\qquad
\|r^{\rm bar}\|_{L^{q,p}}
\le
\delta_{\rm bar}.
\]
The numerical output satisfies
\[
\Gamma_t(\widehat{\mathbf u}_t^{\rm bar})
>
\boldsymbol0,
\qquad
\widehat\nu_t^{\rm bar}
>
\boldsymbol0.
\]

Let
\[
e_t^{\rm adj}
:=
\|\widehat\vartheta_t-\vartheta_t^*\|,
\]
and let $\mathbf u_t^*$ denote the selected local maximizer corresponding to
$\vartheta_t^*$ at the common deployment state.  Define
\[
\mathbb H_t^*(\mathbf u)
:=
\mathbb H_t(\mathbf u;\vartheta_t^*),
\qquad
G_t^{\rm bar}
:=
\mathbb H_t^*(\mathbf u_t^*)
-
\mathbb H_t^*(\widehat{\mathbf u}_t^{\rm bar}).
\]

\begin{apptheorem}[Estimated-input policy-gap bound for B-PGDPO]
\label{thm:ppgdpo_gap_barrier}
Suppose Assumptions~\ref{ass:barrier_kkt} and
\ref{ass:barrier_recovery_stability} hold uniformly on the deployment region.
For sufficiently small errors, the numerical output remains in the local
maximality neighborhood of $\mathbf u_t^*$.  There exist constants, uniform on
the deployment region, such that
$dt\otimes d\mathbb P$-almost everywhere,
\begin{equation}\label{eq:barrier_pointwise_control_gap}
\|
\widehat{\mathbf u}_t^{\rm bar}
-
\mathbf u_t^*
\|
\le
C_{\rm bar}
\left(
\varepsilon_{\rm bar}
+
e_t^{\rm adj}
+
r_t^{\rm bar}
\right),
\end{equation}
and
\begin{equation}\label{eq:barrier_pointwise_hamiltonian_gap}
\begin{aligned}
0
\le
G_t^{\rm bar}
\le{}&
C_{H,1}\varepsilon_{\rm bar}
+
C_{H,2}
\left(
e_t^{\rm adj}+r_t^{\rm bar}
\right)
\\
&+
C_{H,3}
\left(
\varepsilon_{\rm bar}
+
e_t^{\rm adj}
+
r_t^{\rm bar}
\right)^2.
\end{aligned}
\end{equation}
Consequently,
\begin{equation}\label{eq:barrier_process_control_gap}
\|
\widehat{\mathbf u}^{\rm bar}
-
\mathbf u^*
\|_{L^{q,p}}
\le
C_{\rm bar}'
\left(
\varepsilon_{\rm bar}
+
\delta_{\rm adj}
+
\delta_{\rm bar}
\right).
\end{equation}
If $p,q\ge2$, then
\begin{equation}\label{eq:barrier_process_hamiltonian_gap}
\|G^{\rm bar}\|_{L^{q/2,p/2}}
\le
C_H
\left[
\varepsilon_{\rm bar}
+
\delta_{\rm adj}
+
\delta_{\rm bar}
+
\left(
\varepsilon_{\rm bar}
+
\delta_{\rm adj}
+
\delta_{\rm bar}
\right)^2
\right].
\end{equation}
If no inequality is active at $\mathbf u_t^*$, the linear terms in the
Hamiltonian-gap estimate vanish, and the pointwise Hamiltonian gap is quadratic
in $\varepsilon_{\rm bar}+e_t^{\rm adj}+r_t^{\rm bar}$.  The control-distance
bound \eqref{eq:barrier_pointwise_control_gap} remains first order.
\end{apptheorem}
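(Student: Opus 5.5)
The argument is pointwise in $(t,\omega)$ and splits the error through the exact-input central path. Fix a deployment point and write $z_{\rm cp}(\varepsilon;\vartheta)=(\mathbf u_\varepsilon(\vartheta),\nu_\varepsilon(\vartheta))$ for the regular central-path solution supplied by Assumption~\ref{ass:barrier_recovery_stability}, with $z^*=(\mathbf u_t^*,\nu_t^*)=z_{\rm cp}(0;\vartheta_t^*)$. The decomposition is
\[
\widehat{\mathbf u}^{\rm bar}_t-\mathbf u_t^*
=\bigl[\widehat{\mathbf u}^{\rm bar}_t-\mathbf u_{\varepsilon}(\widehat\vartheta_t)\bigr]
+\bigl[\mathbf u_{\varepsilon}(\widehat\vartheta_t)-\mathbf u_{\varepsilon}(\vartheta_t^*)\bigr]
+\bigl[\mathbf u_{\varepsilon}(\vartheta_t^*)-\mathbf u_t^*\bigr].
\]
\begin{itemize}
\item The first bracket is at most $C_{\rm res}r_t^{\rm bar}$ by the residual error bound \eqref{eq:uniform_barrier_residual_error_bound}.
\item For the second bracket, the estimated input is projected onto the convex set $\mathcal N_0$. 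The implicit function theorem then gives a $C^1$ branch $\vartheta\mapsto z_{\rm cp}(\varepsilon;\vartheta)$ whose derivative $-(D_z\mathcal F_{\rm bar})^{-1}D_\vartheta\mathcal F_{\rm bar}$ is uniformly bounded, since the inverse Jacobian norm is at most $C_{\rm KKT}$ and $\mathcal F_{\rm bar}$ is affine in $\vartheta$ with bounded coefficients on the compact localization. Integrating along the segment in $\mathcal N_0$ bounds this bracket by $Ce_t^{\rm adj}$.
\item The third bracket is at most $C_{\rm cp}\varepsilon_{\rm bar}$ by Proposition~\ref{prop:barrier_policy}, whose constant is now uniform because of the uniform tube.
\end{itemize}
These three bounds give \eqref{eq:barrier_pointwise_control_gap}. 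For small errors, the same bound keeps $\widehat{\mathbf u}^{\rm bar}_t$ inside the local maximality neighborhood.

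For the Hamiltonian gap I would follow Step~3 of the proof of Proposition~\ref{prop:barrier_policy}, but with the true input $\vartheta_t^*$. Set $\mathcal L^*(\mathbf u,\nu)=\mathbb H_t^*(\mathbf u)+\nu^\top\Gamma_t(\mathbf u)$. Complementarity at $z^*$ gives
\[
G_t^{\rm bar}=\bigl[\mathcal L^*(\mathbf u_t^*,\nu_t^*)-\mathcal L^*(\widehat{\mathbf u}_t^{\rm bar},\nu_t^*)\bigr]+\nu_t^{*\top}\Gamma_t(\widehat{\mathbf u}_t^{\rm bar}).
\]
\begin{itemize}
\item The Lagrangian difference has zero gradient at $\mathbf u_t^*$ and a bounded Hessian. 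It is therefore at most $C\|\widehat{\mathbf u}^{\rm bar}_t-\mathbf u_t^*\|^2$, which produces the $C_{H,3}(\cdot)^2$ term.
\item In the complementarity term only active indices contribute. For these, the slack of the exact estimated-input central point is $\varepsilon/\nu_{\varepsilon,j}(\widehat\vartheta_t)\le\varepsilon_{\rm bar}/\underline\nu$. The numerical slack differs from it by at most $\|\nabla\Gamma\|\,C_{\rm res}r_t^{\rm bar}$.
\item This step is the main obstacle. One has to see why $e_t^{\rm adj}$ can appear linearly rather than only quadratically. The reason is that an active slack at the estimated point is not $O(\varepsilon)$ relative to $\vartheta^*$; it moves linearly with the input through the control perturbation. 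Bounding $\Gamma_j$ by Lipschitz continuity, $\Gamma_j(\widehat{\mathbf u})\le\Gamma_j(\mathbf u_\varepsilon(\vartheta^*))+C\|\widehat{\mathbf u}-\mathbf u_\varepsilon(\vartheta^*)\|$, gives $C_{H,1}\varepsilon_{\rm bar}+C_{H,2}(e^{\rm adj}_t+r_t^{\rm bar})$, exactly matching the stated form.
\item Nonnegativity holds because $\widehat{\mathbf u}^{\rm bar}_t$ is strictly feasible and lies in the local maximality neighborhood of the strict constrained maximizer $\mathbf u_t^*$.
\item If no inequality is active, then $\nu_t^*=0$. The complementarity term vanishes and only the quadratic bound remains, while \eqref{eq:barrier_pointwise_control_gap} stays first order.
\end{itemize}

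The process-level bounds follow by taking $L^{q,p}$ norms and using the triangle inequality. The aggregate input bound \eqref{eq:barrier_aggregate_adjoint_input_bound}, $\|r^{\rm bar}\|_{L^{q,p}}\le\delta_{\rm bar}$, and the fact that $\varepsilon_{\rm bar}$ is constant give \eqref{eq:barrier_process_control_gap}. For the gap with $p,q\ge2$, the linear terms are controlled in $L^{q/2,p/2}$ by their $L^{q,p}$ norms, at the cost of a factor depending on $T$ and the bounded localization. The quadratic term uses $\||f|^2\|_{L^{q/2,p/2}}=\|f\|_{L^{q,p}}^2$. Together these give \eqref{eq:barrier_process_hamiltonian_gap}.
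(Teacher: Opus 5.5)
Your proposal is correct and follows the paper's proof essentially step for step. The paper uses the same three-term split through $z_{\rm cp}(\varepsilon_{\rm bar};\widehat\vartheta)$ and $z_{\rm cp}(\varepsilon_{\rm bar};\vartheta^*)$ (residual bound, implicit-function sensitivity along a segment in $\mathcal N_0$, Proposition~\ref{prop:barrier_policy}), the same Lagrangian-plus-complementarity decomposition of $G_t^{\rm bar}$ with the Lipschitz bound measured from $\mathbf u_{\rm cp}(\varepsilon_{\rm bar};\vartheta^*)$, and the same embedding and squaring argument at the process level.

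One small inconsistency: your ``main obstacle'' remark contradicts your own second bullet. On the common regular branch of Assumption~\ref{ass:barrier_recovery_stability}, the active slacks at the estimated-input central point are $O(\varepsilon_{\rm bar})$. That route would give $\nu_t^{*\top}\Gamma_t(\widehat{\mathbf u}_t^{\rm bar})\le C(\varepsilon_{\rm bar}+r_t^{\rm bar})$. So the linear $e_t^{\rm adj}$ term is just a byproduct of the cruder Lipschitz route you and the paper use, not something the argument forces; either route suffices for the stated bound.
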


\begin{appcorollary}[Consistency of B-PGDPO]
\label{cor:ppgdpo_consistency}
If
\[
\delta_{\rm ol},
\quad
\delta_\zeta,
\quad
\delta_{\rm ref},
\quad
\delta_{\rm bar},
\quad
\varepsilon_{\rm bar}
\longrightarrow0,
\]
then
\[
\widehat{\mathbf u}^{\rm bar}
\longrightarrow
\mathbf u^*
\qquad
\text{in }L^{q,p}.
\]
For $p,q\ge2$, the true local Hamiltonian-gap process converges to zero in
$L^{q/2,p/2}$.  For $p=q=2$, under
Assumption~\ref{ass:common_state_reference_bridge}, the term $\delta_{\rm ref}$
may be replaced by the square-root reference-performance rate.
\end{appcorollary}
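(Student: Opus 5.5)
The corollary follows from Theorem~\ref{thm:ppgdpo_gap_barrier}; the only real work is justifying that the theorem's process-level bounds apply with constants that do not depend on the vanishing error parameters. First, I will check that the constants $C_{\rm bar}'$ and $C_H$ in \eqref{eq:barrier_process_control_gap}--\eqref{eq:barrier_process_hamiltonian_gap} depend only on the uniform quantities of Assumption~\ref{ass:barrier_recovery_stability}: $\underline\nu$, $\underline\kappa$, $C_{\rm KKT}$, $C_{\rm res}$, $r_{\rm KKT}$, and the compact deployment region. They must not depend on $\delta_{\rm ol}$, $\delta_\zeta$, $\delta_{\rm ref}$, $\delta_{\rm bar}$ or $\varepsilon_{\rm bar}\le\varepsilon_0$. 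Given this, \eqref{eq:barrier_process_control_gap} gives $\|\widehat{\mathbf u}^{\rm bar}-\mathbf u^*\|_{L^{q,p}}\le C_{\rm bar}'(\varepsilon_{\rm bar}+\delta_{\rm adj}+\delta_{\rm bar})\to0$, because $\delta_{\rm adj}=\delta_{\rm ol}+\delta_\zeta+\delta_{\rm ref}\to0$. For $p,q\ge2$, the right-hand side of \eqref{eq:barrier_process_hamiltonian_gap} is $s+s^2$ with $s\to0$, so $G^{\rm bar}\to0$ in $L^{q/2,p/2}$. Its nonnegativity comes from the pointwise bound \eqref{eq:barrier_pointwise_hamiltonian_gap}.

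The main obstacle is the hypothesis ``for sufficiently small errors'' in Theorem~\ref{thm:ppgdpo_gap_barrier}. Smallness in $L^{q,p}$ does not make $e_t^{\rm adj}$ or $r_t^{\rm bar}$ small at every $(t,\omega)$. I would deal with this using the localization. Because the estimate is projected onto $\mathcal N_0$, $e_t^{\rm adj}$ is uniformly bounded. Because the numerical outputs stay in the tube, $\|\widehat{\mathbf u}_t^{\rm bar}-\mathbf u_t^*\|\le D$ and $|G_t^{\rm bar}|\le D'$ for uniform constants $D,D'$.

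I would then fix a threshold $\tau>0$ below which the pointwise estimates \eqref{eq:barrier_pointwise_control_gap}--\eqref{eq:barrier_pointwise_hamiltonian_gap} hold. On the set where $e_t^{\rm adj}+r_t^{\rm bar}\ge\tau$, the trivial bounds give
\[
\|\widehat{\mathbf u}_t^{\rm bar}-\mathbf u_t^*\|\le (D/\tau)\,(e_t^{\rm adj}+r_t^{\rm bar}),
\qquad
|G_t^{\rm bar}|\le (D'/\tau)\,(e_t^{\rm adj}+r_t^{\rm bar}).
\]
So the pointwise bounds hold everywhere after enlarging the constants. The Hamiltonian gap keeps its linear form, which is allowed here because \eqref{eq:barrier_pointwise_hamiltonian_gap} already contains linear terms. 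Taking $L^{q,p}$ norms, and using Minkowski's inequality together with $\|f^2\|_{L^{q/2,p/2}}=\|f\|_{L^{q,p}}^2$, reproduces the process bounds. The nonnegativity of $G_t^{\rm bar}$ on the large-error set is not guaranteed, so I would state convergence of $|G^{\rm bar}|$.

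Finally, for $p=q=2$, Assumption~\ref{ass:common_state_reference_bridge} gives $\delta_{\rm ref}\le C_{\rm cs}\sqrt{\varepsilon_{\rm ref}/\kappa_{\rm qg}}$. Substituting this into $\delta_{\rm adj}$ in the bounds above gives the square-root reference-performance form, exactly as in Corollary~\ref{cor:qp_consistency}.
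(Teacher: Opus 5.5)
Your proposal is correct and follows the route the paper intends: the corollary comes directly from the process-level bounds of Theorem~\ref{thm:ppgdpo_gap_barrier} with $\delta_{\rm adj}=\delta_{\rm ol}+\delta_\zeta+\delta_{\rm ref}\to0$, and for $p=q=2$ you then substitute the common-state bridge for $\delta_{\rm ref}$. Your truncation step, which uses the boundedness supplied by $\mathcal N_0$ and the tube to turn the ``sufficiently small errors'' proviso into globally valid linear bounds, is a sound detail the paper leaves implicit; stating convergence of $|G^{\rm bar}|$ costs nothing, because the $L^{q/2,p/2}$ norm only sees $|G^{\rm bar}|$.
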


\label{app:proof_ppgdpo}

Fix a deployment point $(t,\omega)$ and suppress it throughout.  Write
\[
\mathbb H_\vartheta(\mathbf u)
:=
\mathbb H
(t,\mathbf S_t;\mathbf u,\vartheta),
\qquad
\Gamma(\mathbf u)
:=
\Gamma(t,\mathbf S_t;\mathbf u).
\]
For
\[
z
=
(\mathbf u,\nu),
\]
define the inequality-only perturbed KKT map
\begin{equation}\label{eq:appendix_estimated_barrier_map}
\mathcal F_{\rm bar}
(z;\vartheta,\varepsilon)
:=
\begin{pmatrix}
\nabla_{\mathbf u}
\mathbb H_\vartheta(\mathbf u)
+
D_{\mathbf u}\Gamma(\mathbf u)^\top\nu
\\[1mm]
\operatorname{diag}
(\Gamma(\mathbf u))\nu
-
\varepsilon\mathbf1
\end{pmatrix}.
\end{equation}
Let
\[
z_{\rm cp}(\varepsilon;\vartheta)
=
\left(
\mathbf u_{\rm cp}(\varepsilon;\vartheta),
\nu_{\rm cp}(\varepsilon;\vartheta)
\right)
\]
denote the exact central-path solution on the selected regular branch, and
let
\[
\widehat z^{\rm bar}
=
\left(
\widehat{\mathbf u}^{\rm bar},
\widehat\nu^{\rm bar}
\right)
\]
denote the numerical primal--dual output.  The true KKT pair is denoted by
\[
z^*
=
(\mathbf u^*,\nu^*).
\]

\begin{proof}[Proof of Theorem~\ref{thm:ppgdpo_gap_barrier}]

\textbf{Step 1: numerical residual and central-path sensitivity.}

By definition,
\[
\left\|
\mathcal F_{\rm bar}
\left(
\widehat z^{\rm bar};
\widehat\vartheta,
\varepsilon_{\rm bar}
\right)
\right\|
=
r^{\rm bar}.
\]
By construction, the projected input $\widehat\vartheta$ belongs to
$\mathcal N_0$, and the numerical output lies in the uniform tube from
Assumption~\ref{ass:barrier_recovery_stability}.  Applying the residual
error bound \eqref{eq:uniform_barrier_residual_error_bound} therefore gives
\begin{equation}\label{eq:appendix_barrier_residual_bound}
\left\|
\widehat z^{\rm bar}
-
z_{\rm cp}
(\varepsilon_{\rm bar};\widehat\vartheta)
\right\|
\le
C_{\rm res}r^{\rm bar}.
\end{equation}

For the input sensitivity, set
\[
\vartheta_s
:=
(1-s)\vartheta^*+s\widehat\vartheta,
\qquad 0\le s\le1.
\]
The inclusion
$\operatorname{range}(\vartheta^*)\subset\operatorname{int}(\mathcal N_0)$,
convexity of $\mathcal N_0$, and projection of $\widehat\vartheta$ ensure that
the entire segment lies in $\mathcal N_0$.  Differentiating
\[
\mathcal F_{\rm bar}
\left(
z_{\rm cp}(\varepsilon_{\rm bar};\vartheta_s);
\vartheta_s,
\varepsilon_{\rm bar}
\right)
=
0
\]
with respect to $s$ gives
\[
\frac{d}{ds}z_{\rm cp}
(\varepsilon_{\rm bar};\vartheta_s)
=
-
\left[D_z\mathcal F_{\rm bar}\right]^{-1}
D_\vartheta\mathcal F_{\rm bar}
(\widehat\vartheta-\vartheta^*).
\]
The inverse-Jacobian and input-derivative bounds are uniform on the tube, so
integration over $s\in[0,1]$ yields
\begin{equation}\label{eq:appendix_barrier_input_sensitivity}
\left\|
z_{\rm cp}
(\varepsilon_{\rm bar};\widehat\vartheta)
-
z_{\rm cp}
(\varepsilon_{\rm bar};\vartheta^*)
\right\|
\le
C_{\rm sens}
\|\widehat\vartheta-\vartheta^*\|
=
C_{\rm sens}e^{\rm adj}.
\end{equation}
Finally, Proposition~\ref{prop:barrier_policy} gives
\begin{equation}\label{eq:appendix_barrier_central_path_bias}
\left\|
\mathbf u_{\rm cp}
(\varepsilon_{\rm bar};\vartheta^*)
-
\mathbf u^*
\right\|
\le
C_{\rm cp}\varepsilon_{\rm bar}.
\end{equation}

Combining
\eqref{eq:appendix_barrier_residual_bound}--
\eqref{eq:appendix_barrier_central_path_bias}
and retaining the primal components yields
\[
\left\|
\widehat{\mathbf u}^{\rm bar}
-
\mathbf u^*
\right\|
\le
C_{\rm bar}
\left(
\varepsilon_{\rm bar}
+
e^{\rm adj}
+
r^{\rm bar}
\right).
\]
This proves
\eqref{eq:barrier_pointwise_control_gap}.

Taking the $L^{q,p}$ norm and using
\eqref{eq:barrier_aggregate_adjoint_input_bound}
together with
\[
\|r^{\rm bar}\|_{L^{q,p}}
\le
\delta_{\rm bar}
\]
gives
\[
\left\|
\widehat{\mathbf u}^{\rm bar}
-
\mathbf u^*
\right\|_{L^{q,p}}
\le
C_{\rm bar}'
\left(
\varepsilon_{\rm bar}
+
\delta_{\rm adj}
+
\delta_{\rm bar}
\right),
\]
which is
\eqref{eq:barrier_process_control_gap}.

\medskip\noindent
\textbf{Step 2: true-input Hamiltonian gap.}

Define the true-input Hamiltonian and Lagrangian by
\[
\mathbb H^*(\mathbf u)
:=
\mathbb H_{\vartheta^*}(\mathbf u),
\qquad
\mathcal L^*(\mathbf u)
:=
\mathbb H^*(\mathbf u)
+
\nu^{*\top}\Gamma(\mathbf u).
\]
The KKT stationarity condition gives
\[
\nabla_{\mathbf u}\mathcal L^*(\mathbf u^*)
=
0.
\]
Since
$\widehat{\mathbf u}^{\rm bar}$
is feasible and remains in the local maximality neighborhood,
\[
G^{\rm bar}
=
\mathbb H^*(\mathbf u^*)
-
\mathbb H^*
(\widehat{\mathbf u}^{\rm bar})
\ge
0.
\]

Local boundedness of the Lagrangian Hessian and Taylor's theorem give
\begin{equation}\label{eq:appendix_true_lagrangian_gap}
\left|
\mathcal L^*(\mathbf u^*)
-
\mathcal L^*
(\widehat{\mathbf u}^{\rm bar})
\right|
\le
C
\left\|
\widehat{\mathbf u}^{\rm bar}
-
\mathbf u^*
\right\|^2.
\end{equation}
By complementarity at the true KKT point,
\[
\nu^{*\top}\Gamma(\mathbf u^*)=0,
\]
and therefore
\begin{equation}\label{eq:appendix_true_hamiltonian_decomposition}
\begin{aligned}
G^{\rm bar}
={}&
\mathcal L^*(\mathbf u^*)
-
\mathcal L^*
(\widehat{\mathbf u}^{\rm bar})
\\
&+
\nu^{*\top}
\Gamma(\widehat{\mathbf u}^{\rm bar}).
\end{aligned}
\end{equation}

Set
\[
\mathbf u_{\rm cp}^*
:=
\mathbf u_{\rm cp}
(\varepsilon_{\rm bar};\vartheta^*).
\]
For every active constraint
$j\in\mathcal I^*$,
the true-input central path satisfies
\[
\Gamma_j(\mathbf u_{\rm cp}^*)
=
\frac{\varepsilon_{\rm bar}}
{\nu_{{\rm cp},j}
(\varepsilon_{\rm bar};\vartheta^*)}.
\]
Strict complementarity and the uniform localization imply that the
denominators are bounded away from zero.  Since inactive true multipliers
vanish,
\begin{equation}\label{eq:appendix_true_complementarity_bias}
\nu^{*\top}\Gamma(\mathbf u_{\rm cp}^*)
\le
C\varepsilon_{\rm bar}.
\end{equation}

The residual and input-sensitivity estimates give
\[
\left\|
\widehat{\mathbf u}^{\rm bar}
-
\mathbf u_{\rm cp}^*
\right\|
\le
C
\left(
e^{\rm adj}
+
r^{\rm bar}
\right).
\]
Using local Lipschitz continuity of $\Gamma$ and boundedness of $\nu^*$,
\begin{equation}\label{eq:appendix_numerical_complementarity_bound}
\nu^{*\top}
\Gamma(\widehat{\mathbf u}^{\rm bar})
\le
C
\left(
\varepsilon_{\rm bar}
+
e^{\rm adj}
+
r^{\rm bar}
\right).
\end{equation}

Combining
\eqref{eq:appendix_true_lagrangian_gap},
\eqref{eq:appendix_true_hamiltonian_decomposition},
\eqref{eq:appendix_numerical_complementarity_bound}, and
\eqref{eq:barrier_pointwise_control_gap}
yields
\[
\begin{aligned}
0
\le
G^{\rm bar}
\le{}&
C_{H,1}\varepsilon_{\rm bar}
+
C_{H,2}
\left(
e^{\rm adj}
+
r^{\rm bar}
\right)
\\
&+
C_{H,3}
\left(
\varepsilon_{\rm bar}
+
e^{\rm adj}
+
r^{\rm bar}
\right)^2.
\end{aligned}
\]
This proves
\eqref{eq:barrier_pointwise_hamiltonian_gap}.

\medskip\noindent
\textbf{Step 3: process-level Hamiltonian bound.}

Suppose $p,q\ge2$.  On the finite measure space
$[0,T]\times\Omega$,
\[
L^{q,p}
\hookrightarrow
L^{q/2,p/2}.
\]
Moreover,
\[
\bigl\||f|^2\bigr\|_{L^{q/2,p/2}}
=
\|f\|_{L^{q,p}}^2.
\]
Minkowski's inequality,
\eqref{eq:barrier_aggregate_adjoint_input_bound}, and the numerical-residual bound
therefore give
\[
\begin{aligned}
\|G^{\rm bar}\|_{L^{q/2,p/2}}
\le
C_H
\Big[
&
\varepsilon_{\rm bar}
+
\delta_{\rm adj}
+
\delta_{\rm bar}
\\
&
+
\left(
\varepsilon_{\rm bar}
+
\delta_{\rm adj}
+
\delta_{\rm bar}
\right)^2
\Big],
\end{aligned}
\]
which is
\eqref{eq:barrier_process_hamiltonian_gap}.

\medskip\noindent
\textbf{Step 4: inactive-constraint case.}

If no inequality is active at $\mathbf u^*$, then
\[
\nu^*
=
\boldsymbol0.
\]
The point $\mathbf u^*$ is an interior stationary point of
$\mathbb H^*$, so
\[
\nabla_{\mathbf u}\mathbb H^*(\mathbf u^*)
=
0.
\]
Consequently,
\[
0
\le
G^{\rm bar}
\le
C
\left\|
\widehat{\mathbf u}^{\rm bar}
-
\mathbf u^*
\right\|^2
\le
C
\left(
\varepsilon_{\rm bar}
+
e^{\rm adj}
+
r^{\rm bar}
\right)^2.
\]
Thus all linear terms vanish, and the corresponding process-level bound is
quadratic.

\end{proof}

\section{Numerical Solver and Implementation Details}
\label{app:solver_details}

This appendix records the numerical realizations used for B-PGDPO and
QP-PGDPO.  The analysis in Appendices~\ref{app:proof_barrier_policy} and
\ref{app:end_to_end_recovery_proofs} is stated in reduced control coordinates
\[
\mathbf u
=
(\boldsymbol\pi^\top,C)^\top,
\qquad
\pi_0
=
1-\mathbf1^\top\boldsymbol\pi.
\]
Thus no explicit full-investment equality or equality multiplier is used
below.  When no borrowing is imposed, the inequality
$1-\mathbf1^\top\boldsymbol\pi\ge0$
is included as a component of $\Gamma$.

\subsection{Anchored Shifted Wealth-Row Estimation}
\label{app:shifted_row_estimator}

This subsection records the nested antithetic construction summarized in
Section~\ref{sec:shifted_input_estimation}.  Fix a diagnostic state
$\mathbf S_k=\mathbf s$ and reference action $\mathbf u_k^{\rm ref}$.  The
implementation records an even number $M_{\rm out}$ of signed outer paths.
For $i=1,\ldots,M_{\rm out}/2$, draw
\[
d_i\sim N(0,hI_{d_W})
\]
and set $d_i^+:=d_i$ and $d_i^-:=-d_i$.  The index $i$ therefore labels one
of the $M_{\rm out}/2$ independent antithetic pairs, while
$\varsigma\in\{+,-\}$ labels one of its two current-increment branches.  For
each $j=1,\ldots,M_{\rm in}$, draw a future
Brownian continuation after $t_{k+1}$ and reuse that same continuation for
the $+$ and $-$ branches.  We write
$\widetilde\lambda_{k+1}^{X,[i,\varsigma,j]}$ for the raw fixed-latent wealth
adjoint obtained by OL-BPTT from the next state generated with
$d_i^\varsigma$ and the $j$th future continuation.  The bracketed tuple is
only a simulation label, not an additional state component or derivative
index.  The two branchwise nested averages and their odd response are
\begin{equation}\label{eq:nested_next_adjoint_average}
\begin{aligned}
\bar\lambda_{i,\varsigma}^X
&:=
\frac1{M_{\rm in}}
\sum_{j=1}^{M_{\rm in}}
\widetilde\lambda_{k+1}^{X,[i,\varsigma,j]},
&&
\varsigma\in\{+,-\},
\\
y_i
&:=
\frac{\bar\lambda_{i,+}^X-\bar\lambda_{i,-}^X}{2}.
\end{aligned}
\end{equation}
Let $D_k$ be the matrix whose $i$th row is $d_i^\top$, and let $y$ collect
the responses $y_i$.  The columnized structural anchor is
\begin{equation}\label{eq:shift_anchor_definition}
c_k^{X,\rm anc}
:=
\left[
e_X^\top\widehat P_k
\boldsymbol\sigma_S
(t_k,\mathbf S_k;\mathbf u_k^{\rm ref})
\right]^\top
\in\mathbb R^{d_W}.
\end{equation}
Thus $y-D_kc_k^{X,\rm anc}$ is the response used in the residual regression
\eqref{eq:anchored_shift_regression}.  When $\kappa_Z=0$, $D_k$ has full
column rank, and $\widehat P_k$ is fixed, this regression is algebraically
equivalent to estimating $\mathbf Z^X$ by ordinary least squares (OLS) and
then subtracting the anchor.  Nested
conditional averaging reduces future-shock noise, and common random numbers
promote cancellation of the paired continuation errors.  Antithetic
differencing cancels the constant and even-order responses to the current
Brownian increment, leaving its antisymmetric response, whose leading term
under the local smooth expansion is the linear Brownian slope.

\subsection{Barrier Newton--CG Realization}
\label{app:barrier_newton_cg}

Fix $(t,\mathbf s,\vartheta)$ and define the barrier objective
\begin{equation}\label{eq:appendix_barrier_objective}
\mathcal B_\varepsilon(\mathbf u;\vartheta)
:=
\mathbb H(t,\mathbf s;\mathbf u,\vartheta)
+
\varepsilon
\sum_{j=1}^{m_\Gamma}
\log\Gamma_j(t,\mathbf s;\mathbf u).
\end{equation}
At a strictly feasible iterate, let
\[
\mathbf g_{\rm bar}
:=
\nabla_{\mathbf u}
\mathcal B_\varepsilon(\mathbf u;\vartheta).
\]
The Hessian--vector operator is
\begin{equation}\label{eq:hvp_general}
\begin{aligned}
\mathcal B_{\rm bar}[v]
:={}&
\nabla_{\mathbf u\mathbf u}^2
\mathbb H(t,\mathbf s;\mathbf u,\vartheta)v
\\
&+
\varepsilon
\sum_{j=1}^{m_\Gamma}
\left[
\frac{
\nabla_{\mathbf u\mathbf u}^2
\Gamma_j(t,\mathbf s;\mathbf u)v
}{
\Gamma_j(t,\mathbf s;\mathbf u)
}
-
\frac{
\nabla_{\mathbf u}\Gamma_j(t,\mathbf s;\mathbf u)
\nabla_{\mathbf u}\Gamma_j(t,\mathbf s;\mathbf u)^\top v
}{
\Gamma_j(t,\mathbf s;\mathbf u)^2
}
\right].
\end{aligned}
\end{equation}
A Newton ascent direction satisfies
\begin{equation}\label{eq:appendix_newton_direction}
\mathcal B_{\rm bar}[d]
=
-\mathbf g_{\rm bar}.
\end{equation}
On the selected local maximal branch,
$-\mathcal B_{\rm bar}$ is positive definite for sufficiently small barrier
parameter.  We therefore solve the equivalent positive-definite system
\[
-\mathcal B_{\rm bar}[d]
=
\mathbf g_{\rm bar}
\]
by conjugate gradients, using Hessian--vector products rather than forming a
dense Hessian.

For a nonnegativity component
$\Gamma_i(\mathbf u)=\pi_i$, the corresponding portfolio contribution is
\begin{equation}\label{eq:hvp_nonneg}
-\frac{\varepsilon}{\pi_i^2}v_i.
\end{equation}
Affine constraints have no second-derivative term; their barrier curvature is
therefore the negative rank-one term in
\eqref{eq:hvp_general}.  If portfolio and consumption constraints are
coupled, the same system is formed in the complete control coordinates
$(\boldsymbol\pi,C)$.

\subsection{Strict Feasibility and Globalization}
\label{app:barrier_globalization}

Let $d$ be an ascent direction.  For affine inequalities, the
fraction-to-the-boundary predictor is
\begin{equation}\label{eq:general_fraction_boundary}
\alpha_{\max}
=
\min\left\{
1,
\min_{j:\,D\Gamma_j(\mathbf u)[d]<0}
-
\tau_{\rm fb}
\frac{\Gamma_j(\mathbf u)}{D\Gamma_j(\mathbf u)[d]}
\right\},
\qquad
\tau_{\rm fb}\in(0,1).
\end{equation}
For nonlinear constraints, this predictor is shortened until the exact slacks
at $\mathbf u+\alpha d$ are all positive.  Starting from $\alpha_{\max}$,
backtracking decreases the step until the Armijo sufficient-increase condition
\begin{equation}\label{eq:appendix_barrier_armijo}
\mathcal B_\varepsilon
(\mathbf u+\alpha d;\vartheta)
\ge
\mathcal B_\varepsilon
(\mathbf u;\vartheta)
+
c_{\rm A}\alpha
\mathbf g_{\rm bar}^\top d,
\qquad
c_{\rm A}\in(0,1),
\end{equation}
holds.  The accepted step therefore preserves strict interiority and yields
monotone ascent of the barrier objective.

The reported barrier runs start from a model-specific strict interior point
obtained by the analytic-center or phase-I initialization used for that
constraint set.  They terminate according to the squared Newton decrement
\begin{equation}\label{eq:appendix_squared_newton_decrement}
\delta_{\rm N}^2
:=
\mathbf g_{\rm bar}^\top d,
\end{equation}
with a prescribed numerical tolerance.  The perturbed KKT residual
$r^{\rm bar}$ from Section~\ref{sec:ppgdpo} is evaluated separately as a
solver audit.

When the consumption and portfolio blocks are separable, the unregularized
consumption maximizer is the solution of
\[
\max_{C\in[C_{\min}(t,\mathbf s),C_{\max}(t,\mathbf s)]}
\left\{e^{-\rho t}U(C)-\lambda^X C\right\}.
\]
When $e^{\rho t}\lambda^X$ lies in the range of $U'$, this solution has the
clipping form
\begin{equation}\label{eq:appendix_consumption_clipping}
C^*
=
\min\left\{
\max\left\{
(U')^{-1}(e^{\rho t}\lambda^X),
C_{\min}(t,\mathbf s)
\right\},
C_{\max}(t,\mathbf s)
\right\}.
\end{equation}
Otherwise the scalar KKT condition selects the appropriate endpoint; no
positivity assumption on $\lambda^X$ is needed for concavity or KKT
sufficiency itself.
Under a barrier realization, the scalar stationarity equation is
\[
e^{-\rho t}U'(C)
-
\lambda^X
+
\frac{\varepsilon_{
\rm bar}}{C-C_{\min}}
-
\frac{\varepsilon_{
\rm bar}}{C_{\max}-C}
=
0,
\]
and may be solved by the same damped Newton and line-search logic.  These
displays use the calendar normalization of Sections~\ref{sec:model_setup}--
\ref{sec:PMP_multiasset}.  In the reported $\rho>0$ implementation, the same
stage problem is written in current-value units: $e^{-\rho t}U$ and
$\lambda^X$ are replaced by $U$ and the current-value adjoint, and
$\varepsilon_{\rm bar}$ is quoted in the correspondingly scaled current-value
units.

\subsection{QP Realization and Residual Audit}
\label{app:qp_solver_details}

For fixed estimated input, QP-PGDPO solves the strongly concave risky-weight
problem in \eqref{eq:exact_qp_recovery}.  Equivalently, it minimizes
\[
\frac12
\boldsymbol\pi^\top
\widehat Q
\boldsymbol\pi
-
\widehat{\mathbf g}^\top
\boldsymbol\pi
\]
over the convex feasible set $\mathcal K$.  When
$\widehat Q\succ0$, the solution is unique and does not depend on a supplied
initialization.  Numerical accuracy is measured by the normal-cone residual
$r^{\rm qp}$ defined in Section~\ref{sec:qp_recovery}.  The implementation is
a direct constrained QP solve; it is not batched or otherwise optimized in the
reported timing comparison.

\section{Robustness and Supplementary Numerical Diagnostics}
\label{app:numerical_diagnostics}

Unless otherwise stated, all adjoint-dependent diagnostics below use the
fixed-latent \OLBPTT graph and the anchored shifted-input estimator.  The
B-PGDPO results use the barrier globalization and squared-Newton-decrement
stopping rule described in Online Supporting Information~\ref{app:solver_details}.  DPO,
B-PGDPO, and QP-PGDPO are compared at common diagnostic states and, when
simulation is required, with common indexed Brownian paths.

\subsection{Calibration, Diagnostic Definitions, and Sampling Protocol}
\label{app:diagnostic_protocol}

Table~\ref{tab:market_calibration_summary} records the high-dimensional market
and constraint calibrations.  Each coefficient generator is deterministic at
the displayed seed, and the same realized coefficients are reused across
methods, budgets, and held-out-state splits.  In particular, expected returns,
volatilities, and correlations vary across assets, so equal weights are not an
analytical benchmark in the $n=100$ experiments.

\begin{table}[!htbp]
\centering
\scriptsize
\setlength{\tabcolsep}{3pt}
\renewcommand{\arraystretch}{1.12}
\caption{High-dimensional calibration and coefficient generation.  The
one-factor analytical validation has its separate closed-form calibration in
Online Supporting Information~\ref{app:unconstrained_predictable_return_validation}.}
\label{tab:market_calibration_summary}
\begin{tabularx}{\textwidth}{@{}>{\raggedright\arraybackslash}p{0.19\textwidth}
>{\raggedright\arraybackslash}p{0.29\textwidth}X@{}}
\toprule
Block & Dynamics and constraints & Coefficient construction \\
\midrule
Constant opportunity, no short
& $n=10,100$; $r=0.03$, $\rho=0$, $\gamma=2$, $K=1$,
  $T=1.5$, $N=20$; $\boldsymbol\pi\ge\boldsymbol0$ with borrowing allowed;
  $X\in[0.1,3]$.
& Seed 42.  The generator starts with volatilities drawn from
  $[0.18,0.28]$, a $0.50$ equicorrelation matrix plus symmetric Gaussian
  jitter of standard deviation $0.06$ projected to the positive-definite cone,
  and excess returns drawn from $[0.05,0.22]$, with one quarter replaced by
  draws from $[-0.03,0.02]$.  It then adjusts the excess-return vector to
  produce a nonsymmetric CRRA target with at least one negative unconstrained
  component and total risky mass in $[0.40,0.75]$; a covariance ridge of
  $2\times10^{-3}$ times the mean variance is added. \\
\addlinespace
Constant opportunity with consumption cap
& $n=2,10,100$; $r=0.03$, $\rho=0.1$, $\gamma=2$, $K=1$,
  $T=1$; $N=16$ for $n=2,10$ and $N=12$ for $n=100$;
  $\boldsymbol\pi\ge\boldsymbol0$, $\mathbf1^\top\boldsymbol\pi\le1$, and
  $10^{-8}X\le C\le0.7X$; $X\in[0.3,3]$.
& Seed 42.  Asset volatilities and excess returns are evenly spaced over
  $[0.18,0.30]$ and $[0.015,0.075]$, respectively.  The correlation matrix
  has off-diagonal baseline $0.25$ plus symmetric Gaussian jitter of standard
  deviation $0.035$, followed by positive-definite projection; the covariance
  ridge is $10^{-4}$ times the mean variance. \\
\addlinespace
Affine factor, simplex and cap
& $n=10,50,100$, one factor; $r=0.03$, $\gamma=2$, $K=1$,
  $T=1$; $N=16$ for $n=10,50$ and $N=12$ for $n=100$;
  $\boldsymbol\pi\ge\boldsymbol0$, $\mathbf1^\top\boldsymbol\pi\le1$.
  The no-consumption runs use $\rho=0$; cap runs use $\rho=0.1$ and
  $10^{-8}X\le C\le0.7X$.
& Seed 11.  Asset volatilities and return intercepts are evenly spaced over
  $[0.18,0.30]$ and $[0.015,0.075]$.  The factor loading of expected returns
  alternates in sign with magnitude evenly spaced over $[0.05,0.22]$.
  The factor parameters are $\kappa=0.7$, $\theta=0$, and volatility $0.12$.
  The joint asset--factor correlation starts from asset baseline $0.25$ with
  Gaussian jitter $0.035$ and asset--factor Gaussian draws of standard
  deviation $0.20$, followed by positive-definite projection and the same
  $10^{-4}$ covariance ridge. \\
\bottomrule
\end{tabularx}
\end{table}

\paragraph{Covariance/loading and curvature audit.}
The simulator and the local QP use the same covariance/loading convention.
For the constant-opportunity generators, the Brownian loading is a factor of
the ridged covariance.  For the affine-factor generators, the
factor-orthogonal asset-noise component is recolored so that
$\boldsymbol\Sigma=\boldsymbol v\boldsymbol v^\top$ while the
asset--factor cross-covariance $\boldsymbol\Sigma_{RY}$ is preserved.
In a 23-job covariance/loading-consistent revalidation spanning the
reported primary and supplementary recovery configurations, the smallest
pointwise values were
$\min(-\widehat P^{XX})=6.18\times10^{-2}$ and
$\min\lambda_{\min}(\widehat Q)=9.07\times10^{-3}$.  The configured
$10^{-10}$ curvature correction was never activated; every full-input QP
converged and no fallback was used.

The lower bound $10^{-8}X$ in the cap rows is a numerical floor used to avoid
evaluating CRRA utility at zero; the theoretical feasible lower bound remains
$C\ge0$.  The learned actors use two hidden layers of width 200 with LeakyReLU
activations.  As stated in Section~\ref{sec:feedback_fixed_latent}, the training
implementation uses 500 PyTorch-Adam epochs at learning rate $10^{-4}$, except
for the explicit budget experiment; one epoch is one freshly resampled Monte
Carlo minibatch followed by one Adam update, with the policy-gradient norm
clipped at one.

The default evaluation seed is 12345.  For a diagnostic state
$\mathbf s_i$, graph $g$, and Brownian continuation $\omega_{im}$, the
conditional Monte Carlo estimate of a raw pathwise adjoint quantity is
\begin{equation}\label{eq:appendix_conditional_mc_average}
\widehat A_i^g
=
\frac1M
\sum_{m=1}^M
\widetilde A^g(\mathbf s_i,\omega_{im}).
\end{equation}
Method comparisons use common diagnostic states and indexed Brownian paths.
Shifted-input estimation and its replication audit use independent
outer--inner Brownian banks.  The graph audit reports
\begin{equation}\label{eq:appendix_graph_nrmse}
d_{\rm nRMSE}(A)
=
\frac{
\left[
N_s^{-1}
\sum_i
\|\widehat A_i^{\rm ol}-\widehat A_i^{\rm fb}\|_2^2
\right]^{1/2}
}{
\left[
N_s^{-1}
\sum_i
\|\widehat A_i^{\rm ol}\|_2^2
\right]^{1/2}
}.
\end{equation}
This quantity measures a discrepancy between two conditional estimates, not
an error relative to analytical truth.  State and continuation sub-batches
only partition the same fixed estimator.

For a $d_\pi$-dimensional risky-weight vector, define
\[
{\rm RMSE}_{\pi}
=
\left[
\frac1{N_{\rm test}d_\pi}
\sum_{i=1}^{N_{\rm test}}
\|\widehat{\boldsymbol\pi}_i
-\boldsymbol\pi_i^{\rm ref}\|_2^2
\right]^{1/2}.
\]
At state $i$, let
\[
\mathcal Q_i(\boldsymbol\pi)
=
\mathbf g_i^\top\boldsymbol\pi
-
\frac12
\boldsymbol\pi^\top Q_i\boldsymbol\pi,
\qquad
s_i^{\rm pg}
=
\lambda_{\max}(Q_i)^{-1}.
\]
Here $\lambda_{\max}(Q_i)$ is the largest eigenvalue of $Q_i$, and
$\Pi_{\mathcal K_i}$ denotes Euclidean projection onto $\mathcal K_i$.
The projected-gradient residual is
\[
\mathcal G_i^{\rm pg}(\boldsymbol\pi)
=
\frac{
\boldsymbol\pi
-
\Pi_{\mathcal K_i}
\left\{
\boldsymbol\pi
+s_i^{\rm pg}
(\mathbf g_i-Q_i\boldsymbol\pi)
\right\}
}{s_i^{\rm pg}},
\qquad
{\rm KKT\mbox{-}PG}
=
\frac1{N_{\rm test}}
\sum_i
\|\mathcal G_i^{\rm pg}(\boldsymbol\pi_i)\|_\infty.
\]
Here
$\mathcal K_i=\mathbb R_+^n$
for borrowing-allowed no-short-sale recovery, whereas no borrowing adds
$\mathbf1^\top\boldsymbol\pi\le1$.  B-PGDPO is evaluated with the
unregularized generalized Hamiltonian.  For consumption, the numerical
diagnostics use the current-value normalization described in
Section~\ref{sec:num_test}, so the corresponding scalar gradient is
\[
g_i^C
=
U'(C_i)
-
\widehat\lambda_i^X,
\]
with projection onto the admissible consumption interval.  Multiplying both
terms by $e^{-\rho t_i}$ gives the equivalent calendar-value gradient.

The statewise shifted-input signal-to-noise ratio is the norm of
$\widehat{\boldsymbol\zeta}_i^X$ divided by the norm of its estimated Monte
Carlo standard-error vector.  The reported SNR is the average of these
statewise ratios, not the ratio of the corresponding sample means.  It is a
Monte Carlo resolution diagnostic rather than a formal hypothesis-test
statistic.  Replication error is the $L^2$ distance between estimates obtained
from the estimation and independent holdout banks.

Independent and identically distributed (IID) states are independent draws
from the training law.  The no-short-sale
run uses $X\in[0.1,3.0]$ and $T-t\in[0,1.5]$; the cap and factor runs use
$X\in[0.3,3.0]$ and $T-t\in[0,1]$.  The default scalar-factor range is
\[
Y
\in
\left[
\bar y
-2.5\frac{\sigma_Y}{\sqrt{2\kappa}},
\bar y
+2.5\frac{\sigma_Y}{\sqrt{2\kappa}}
\right].
\]
For a wealth-edge split, half of the observations (up to one observation
when $N_s$ is odd) are drawn uniformly from the bottom $5\%$ of the configured
wealth interval and the remainder from the top $5\%$, while the other state
coordinates retain their canonical draws.  A factor-edge split retains the
canonical wealth and time draws and assigns the factor coordinates in the
alternating pattern $-1.5,+1.5$.  In a stress split, wealth alternates between
the two configured endpoints and $\tau$ alternates in two-observation blocks
between $\tau_{\min}=\max\{T/100,10^{-4}\}$ and $T$; when factors are present
they again alternate between $-1.5$ and $+1.5$.  The factor-edge and stress
samples are deliberate out-of-distribution diagnostics, not a uniform
extrapolation guarantee.

\subsection{Shifted-Input Estimator Ablation}
\label{app:z_estimator_ablation}

Table~\ref{tab:z_estimator_ablation} compares the candidate
Brownian-coefficient estimators under analytical policies.

\begin{table}[!htbp]
\centering
\small
\caption{Brownian-coefficient estimation under analytical policies.  The
selected estimator is the anchored nested OLS control-variate estimator used
throughout the reported recovery results.  The entry ``n/a'' indicates that
no analytical shifted-adjoint RMSE target is reported for that diagnostic.}
\label{tab:z_estimator_ablation}
\begin{tabularx}{\textwidth}{@{}Xccc@{}}
\toprule
Model & \shortstack{Raw-moment\\$\mathbf Z^X$ nRMSE}
& \shortstack{Anchored OLS\\$\mathbf Z^X$ nRMSE}
& \shortstack{Anchored\\$\boldsymbol\zeta^X$ RMSE} \\
\midrule
Merton exact & $59.95\%$ & $0.326\%$ & $2.30\times10^{-4}$ \\
Constant-opportunity cap exact & $43.87\%$ & $1.313\%$ & $6.16\times10^{-4}$ \\
Affine exact canonical & $8.812\%$ & $0.260\%$ & \text{n/a} \\
\bottomrule
\end{tabularx}
\end{table}

Raw and merely centered covariance estimates remain noisy because the
next-step pathwise adjoint contains substantial future-shock variation.
Nested conditional averaging, antithetic common random numbers, and the
anchor
\[
e_X^\top P\,
\boldsymbol\sigma_S
\]
reduce $\mathbf Z^X$ error by one to two orders of magnitude at the same
analytical targets.  This is why the anchored estimator is used in all
reported recovery tables.

\subsection{Unconstrained Predictable-Return Adjoint Validation}
\label{app:unconstrained_predictable_return_validation}

The main-text predictable-return experiment emphasizes the constrained
recovery contribution.  For completeness, this subsection retains the
unconstrained analytical factor validation.  The calculation is also a
continuity check with the factor-model experiments of
\citet{huh2025breaking}.  We use a one-factor affine special case of
\citet{liu2007portfolio}:
\[
 dY_t=\kappa(\theta-Y_t)dt+\nu\,dB_t^Y,
 \qquad
 \frac{d\mathcal P_t}{\mathcal P_t}
 =\{r+a_0+a_1Y_t\}dt+\sigma\,dB_t^R,
 \qquad
 d\langle B^R,B^Y\rangle_t=\varrho\,dt.
\]
Here $\mathcal P_t$ denotes the risky-asset price.  With terminal CRRA utility
and $\tau=T-t$, the exact value is
\[
 V(t,x,y)=\frac{x^{1-\gamma}}{1-\gamma}
 \exp\!\left\{A(\tau)+B(\tau)y+\frac12D(\tau)y^2\right\},
\]
where $(A,B,D)$ solve the associated Riccati system.  Writing
$\Phi=A+By+\tfrac12Dy^2$, the analytical targets are
\begin{equation}\label{eq:liu_analytic_mixed_targets}
 V_X=x^{-\gamma}e^\Phi,
 \qquad
 V_{XX}=-\gamma x^{-\gamma-1}e^\Phi,
 \qquad
 V_{XY}=V_X\{B(\tau)+D(\tau)y\},
\end{equation}
and the exact risky weight is
\begin{equation}\label{eq:liu_analytic_policy}
 \pi^*(t,y)
 =\frac{a_0+a_1y+\varrho\sigma\nu\{B(\tau)+D(\tau)y\}}
        {\gamma\sigma^2}.
\end{equation}
The smooth Markov identity also supplies
\begin{equation}\label{eq:liu_analytic_Z_target}
 \mathbf Z^{X,*}
 =e_X^\top
D^2_{\mathbf s\mathbf s}V(t,X_t,Y_t)\,\sigma_S^*(t,X_t,Y_t).
\end{equation}
The fixed-latent estimator harvests $P^{XX}$ and $P^{XY}$ directly.  The CRRA
wealth scaling gives $P^{XX}=-(\gamma/X)\lambda^X=V_{XX}$ in this benchmark,
whereas the $P^{XY}$--$V_{XY}$ comparison is retained only as a numerical
diagnostic.  The shifted input is estimated by
\eqref{eq:anchored_shift_regression}, and $\mathbf Z^X$ is reconstructed through
\eqref{eq:anchored_Z_reconstruction}.

Both calibrations use $r=0.03$, $\theta=0.2$, $\nu=0.3$, $\sigma=0.2$, and
$a_1=0.2$.  The canonical tuple $(\gamma,T,\kappa,\varrho,a_0)$ is
$(2,1,2,0.1,0)$; the predeclared hedging-relevant tuple is
$(3,1.5,0.5,-0.5,0.02)$.  Each row uses 128 held-out states, 8,192 antithetic
continuations, 128 rollout steps, and an independent shifted-input replication
bank.

\begin{table}[!htbp]
\centering
\scriptsize
\setlength{\tabcolsep}{2pt}
\caption{Unconstrained predictable-return analytical diagnostics.  Entries
are normalized $L^2$ errors over 128 held-out states.  The
$P^{XY}$--$V_{XY}$ column is a numerical comparison rather than a general
identification.}
\label{tab:jun_liu_mixed_adjoint_validation}
\begin{tabularx}{\textwidth}{@{}l*{6}{>{\centering\arraybackslash}X}@{}}
\toprule
Calibration
& \shortstack{$\lambda^X$\\nRMSE}
& \shortstack{$P^{XX}$\\nRMSE}
& \shortstack{$P^{XY}$--$V_{XY}$\\nRMSE}
& \shortstack{$\mathbf Z^X$\\nRMSE}
& \shortstack{Shift\\SNR}
& \shortstack{Decoded-policy\\RMSE} \\
\midrule
Canonical
& $0.1161\%$
& $0.1345\%$
& $0.3179\%$
& $0.2601\%$
& $0.92$
& $2.465\times10^{-3}$ \\
Hedging-relevant
& $0.1826\%$
& $0.1693\%$
& $0.3502\%$
& $0.9863\%$
& $0.79$
& $8.429\times10^{-3}$ \\
\bottomrule
\end{tabularx}
\end{table}

The residual shift is estimated rather than imposed.  Its mean norm and mean
standard error are respectively $6.25\times10^{-4}$ and
$7.53\times10^{-4}$ in the canonical calibration, and
$4.33\times10^{-3}$ and $5.65\times10^{-3}$ in the hedging-relevant
calibration.  Thus the residual is not resolved above its Monte Carlo
uncertainty in either exact-policy experiment, while $\mathbf Z^X$ is
recovered below $1\%$ nRMSE.  The reported SNR is the average of the statewise
norm ratios defined in Section~\ref{sec:benchmark_scope}, rather than the ratio
of the two displayed means.  This benchmark-specific observation does not
imply a general zero-shift identity or a monotone relation between policy
optimality and shift magnitude.  The hedging-relevant calibration has a
$V_{XY}$ signal about seven times larger than the canonical one; omitting its
hedging component changes the policy by RMSE $5.465\times10^{-2}$,
substantially exceeding the full-input decoded-policy RMSE
$8.429\times10^{-3}$.

\subsection{Frozen-Graph and Shift Diagnostics for Predictable-Return Recovery}
\label{app:lko_recovery_graph_audit}

Section~\ref{sec:state-dependent} reports the constrained affine-factor setup,
budgets, and primary recovery table.  Two supplementary diagnostics clarify
how those results should be read.  First, at $n=100$, three of 128 states have
small negative B-PGDPO gains, with minimum $-5.54\times10^{-5}$, even though
the mean barrier gain is positive.  This is finite central-path bias; the exact
QP gain is positive at every state.  Second, the estimated shift is small
relative to $\mathbf Z^X$ but reproducible: the mean independent-bank
replication errors are $5.81\times10^{-4}$, $8.07\times10^{-4}$, and
$6.64\times10^{-4}$ for $n=10,50,100$, respectively.

\paragraph{Alternative frozen-feedback graph.}
Using common states and Brownian banks, we compare the fixed-latent estimates
with an alternative graph that freezes the network parameters but retains the
actor's state derivatives.  The nRMSE discrepancies in
\eqref{eq:appendix_graph_nrmse} are $0.108\%$ for $\lambda^X$,
$0.168\%$ for $P^{XX}$, $40.9\%$ for $P^{XY}$, $0.127\%$ for
$\mathbf Z^X$, and $62.4\%$ for $\boldsymbol\zeta^X$.  Thus the two
graphs produce similar scalar wealth levels in this calibration but materially
different mixed and shifted blocks.  These are graph-to-graph discrepancies,
not errors relative to analytical truth.

\subsection{Decoder Initialization}
\label{app:local_diagnostics}

We report only constraint-matched decoder comparisons.  Once the estimated
adjoint input determines the local QP coefficients, the exact QP output is
independent of a supplied initialization.  Table~\ref{tab:initialization_diagnostic}
also reports a separate 2,000-iteration projected-gradient-descent (PGD)
check from four
initializations.

\begin{table}[!htbp]
\centering
\caption{Decoder initialization diagnostic under the anchored shifted-input
adjoints.}
\label{tab:initialization_diagnostic}
\begin{tabular}{lcccc}
\toprule
Model & $N$ & Exact QP uses init & Exact max diff & PGD mean diff \\
\midrule
Affine predictable-return & 64 & no & $0$ & $6.20\times10^{-16}$ \\
Constant-opportunity cap & 64 & no & $0$ & $5.97\times10^{-16}$ \\
Affine-factor cap & 32 & no & $0$ & $7.10\times10^{-16}$ \\
\bottomrule
\end{tabular}
\end{table}

The four initializations are the DPO output, equal risky weights, a
cash-only/zero-risky allocation, and a random feasible allocation.  The
maximum PGD differences from exact QP are below $1.77\times10^{-15}$ in all
three models.

\subsection{Finite-Budget and Seed Robustness of the DPO Reference}\label{sec:training_budget}\label{app:training_diagnostics}

This is a Stage~1 training-budget diagnostic for the DPO reference,
not a training comparison among the three deployment rules.  B-PGDPO and
QP-PGDPO are deterministic Stage~2 recovery maps applied after each DPO
checkpoint is fixed; neither recovery solver is trained.  Three training seeds
define three DPO training paths while the market calibration and coefficient
realization are held fixed.  For each seed, the 250-, 500-, and 1,000-epoch
entries are checkpoints along one continued optimization path, rather than
three separately restarted trainings.  All nine checkpoints are evaluated on
the same 128 held-out states and indexed Brownian continuation arrays.
\begin{table}[!htbp]
\centering
\footnotesize
\setlength{\tabcolsep}{4pt}
\caption{Full three-seed training-budget results for the $n=100$
borrowing-allowed no-short-sale benchmark.}
\label{tab:training_budget_multiseed}
\begin{tabular}{lrrrr}
\toprule
\multicolumn{5}{l}{\textbf{Panel A. DPO across training budgets}}\\
\midrule
Epochs & Policy RMSE & KKT--PG & $\mathbb H$ gap to QP & Across-seed RMSE s.d. \\
\midrule
250 & $0.013561$ & $0.043879$ & $0.022969$ & $0.000851$ \\
500 & $0.011251$ & $0.024108$ & $0.004467$ & $0.000171$ \\
1,000 & $0.011190$ & $0.023945$ & $0.004501$ & $0.000118$ \\
\midrule
\multicolumn{5}{l}{\textbf{Panel B. Deployment rules after 1,000 epochs}}\\
\midrule
Method & Policy RMSE & KKT--PG & $\mathbb H$ gain vs DPO & $\mathbb H$ gap to QP \\
\midrule
DPO & $1.1190\times10^{-2}$ & $2.3945\times10^{-2}$ & $0$ & $4.501\times10^{-3}$ \\
B-PGDPO & $9.291\times10^{-3}$ & $8.062\times10^{-4}$ & $4.484\times10^{-3}$ & $1.729\times10^{-5}$ \\
QP-PGDPO & $1.374\times10^{-4}$ & $2.31\times10^{-13}$ & $4.501\times10^{-3}$ & $0$ \\
\bottomrule
\end{tabular}
\end{table}

Most DPO improvement occurs by 500 epochs; thereafter the policy RMSE, KKT--PG,
and QP gap largely plateau.  At 1,000 epochs, B-PGDPO has coordinatewise policy
RMSE $9.29\times10^{-3}$, KKT--PG $8.06\times10^{-4}$, and QP gap
$1.73\times10^{-5}$.  QP-PGDPO has coordinatewise RMSE
$1.37\times10^{-4}$ and solves the estimated QP to numerical tolerance.  The
result is a finite-budget robustness statement, not a claim that no other
architecture or substantially larger budget could improve DPO.

The fixed-latent CRRA identity is a graph-contract check rather than an
optimization-budget diagnostic.  Recovery removes nearly all of the local QP
gap left by the finite DPO checkpoints.

\subsection{Barrier Central-Path, Parameter-Sensitivity, and Switching-Region Audit}
\label{sec:barrier_audit}
\label{app:barrier_sensitivity}

For the quadratic-affine blocks above, QP-PGDPO is the preferred exact local
solver conditional on the estimated adjoints.  Here B-PGDPO audits the
central-path approximation and solver modularity; its intended role is more
general smooth geometry, not greater efficiency than QP.  We audit it in the
$n=100$
constant-opportunity and affine-factor cap models using 64 states per split
and 8,192 continuations per state.  The reported
$\varepsilon_{\rm bar}=3\times10^{-6}$ runs use double precision, an analytic
center, and at most 120 Newton iterations.  If $\mathbf g_{\rm bar}$ is the
barrier-objective gradient and $d$ the Newton ascent direction, the squared
Newton decrement is
$\delta_{\rm N}^2=\mathbf g_{\rm bar}^{\top}d$ as in
\eqref{eq:appendix_squared_newton_decrement}.  Its half is the improvement
predicted by the local quadratic Newton model; the run stops when
$\delta_{\rm N}^2/2<10^{-8}$.  This criterion audits convergence to the
finite-$\varepsilon_{\rm bar}$ barrier solution, not the separate barrier bias
relative to the unregularized QP.  All barrier values use the same states,
adjoint inputs, and Brownian banks as their exact local-QP benchmarks.

\begin{table}[!htbp]
\centering
\scriptsize
\setlength{\tabcolsep}{2.5pt}
\caption{Final $n=100$ B-PGDPO cap-model barrier audit at
$\varepsilon_{\rm bar}=3\times10^{-6}$.  Gain and QP gap use the
unregularized full estimated generalized Hamiltonian.}
\label{tab:barrier_final_audit}
\begin{tabular}{llrrrr}
\toprule
Model & Split
& Portfolio KKT--PG
& $\mathbb H$ gain vs DPO
& $\mathbb H$ gap to QP
& Converged \\
\midrule
Constant-opportunity cap & IID
& $2.610\times10^{-3}$ & $2.5730\times10^{-2}$ & $2.22\times10^{-4}$ & $1.000$ \\
Constant-opportunity cap & Wealth edge
& $3.543\times10^{-3}$ & $5.0067\times10^{-2}$ & $2.22\times10^{-4}$ & $1.000$ \\
Constant-opportunity cap & Stress
& $4.027\times10^{-3}$ & $2.0401\times10^{-1}$ & $2.22\times10^{-4}$ & $1.000$ \\
\addlinespace
Affine-factor cap & IID
& $2.597\times10^{-3}$ & $5.3598\times10^{-2}$ & $2.58\times10^{-4}$ & $1.000$ \\
Affine-factor cap & Wealth edge
& $3.207\times10^{-3}$ & $1.1618\times10^{-1}$ & $2.58\times10^{-4}$ & $1.000$ \\
Affine-factor cap & Factor edge
& $1.504\times10^{-3}$ & $2.5182\times10^{-1}$ & $2.83\times10^{-4}$ & $1.000$ \\
Affine-factor cap & Stress
& $1.293\times10^{-3}$ & $6.9836\times10^{-1}$ & $2.83\times10^{-4}$ & $1.000$ \\
\bottomrule
\end{tabular}
\end{table}

Every evaluated cap-model state satisfies the stopping rule and has positive
unregularized joint gain.  Mean gaps to exact QP are approximately
$2.22\times10^{-4}$ in the constant-opportunity-cap splits and
$2.58$--$2.83\times10^{-4}$ in the factor-cap splits.  The barrier-parameter sensitivity reported below shows their approximately linear decline with $\varepsilon_{\rm bar}$.  This validates the central path against an exact
local-QP benchmark, not a separate nonquadratic benchmark.

\paragraph{Pointwise diagnostics near active-set transitions.}
\label{sec:switching_audit}

Using the common full-input QP benchmark, define the normalized upper slack
\[
 s_{\rm up}:=\frac{C_{\max}-C^{\rm QP}}{C_{\max}-C_{\min}}.
\]
We classify states as upper-active, near-switching
($0<s_{\rm up}\le0.05$), or far-interior.  Because every method uses the same partition, comparisons do not relabel states
endogenously.  The near-switching cells are a pointwise stress test, not a
trajectory-through-switch experiment or evidence of uniform regularity on the
switching surface.

\begin{table}[!htbp]
\centering
\scriptsize
\setlength{\tabcolsep}{3pt}
\caption{Region-stratified B-PGDPO recovery at
$\varepsilon_{\rm bar}=3\times10^{-6}$.  Counts aggregate nonempty IID, edge,
and stress cells.  Gains and QP gaps use the unregularized generalized
Hamiltonian under the common estimated adjoints; ``Negative frac.'' is the
fraction of states with negative gain.}
\label{tab:switching_region_audit}
\begin{tabular}{llrrrrr}
\toprule
Model & QP-benchmark region & $N$
& Mean $\mathbb H$ gain
& Minimum gain
& Mean gap to QP
& Negative frac. \\
\midrule
Constant-opportunity cap & Upper-active
& 95 & $6.7480\times10^{-2}$ & $2.19\times10^{-4}$ & $2.22\times10^{-4}$ & $0$ \\
Constant-opportunity cap & Near-switching
& 8 & $2.913\times10^{-3}$ & $5.21\times10^{-4}$ & $2.21\times10^{-4}$ & $0$ \\
Constant-opportunity cap & Far-interior
& 89 & $1.2892\times10^{-1}$ & $2.255\times10^{-3}$ & $2.22\times10^{-4}$ & $0$ \\
\addlinespace
Affine-factor cap & Upper-active
& 132 & $1.5740\times10^{-1}$ & $4.57\times10^{-4}$ & $2.72\times10^{-4}$ & $0$ \\
Affine-factor cap & Near-switching
& 20 & $1.2240\times10^{-1}$ & $4.494\times10^{-3}$ & $2.68\times10^{-4}$ & $0$ \\
Affine-factor cap & Far-interior
& 104 & $4.6589\times10^{-1}$ & $6.405\times10^{-3}$ & $2.69\times10^{-4}$ & $0$ \\
\bottomrule
\end{tabular}
\end{table}

All regions have positive pointwise gain and similar mean QP gaps.

\paragraph{Barrier-parameter sensitivity.}
Table~\ref{tab:barrier_eps_audit} reports the finite-barrier sensitivity on
common IID states and fixed adjoint inputs.
\begin{table}[!htbp]
\centering
\scriptsize
\setlength{\tabcolsep}{3pt}
\caption{IID barrier-parameter sensitivity at $n=100$ under the anchored
shifted-input adjoints (64 held-out states, 8,192 continuations per state).
Gains and QP gaps use the unregularized generalized Hamiltonian; ``Neg. share''
is the fraction of states with negative gain.}
\label{tab:barrier_eps_audit}
\begin{tabular}{lrrrrrr}
\toprule
Model & $\varepsilon_{\rm bar}$ & Mean gain & Min. gain & Neg. share & Gap to QP & Converged \\
\midrule
Constant-opportunity cap & $10^{-5}$ & $2.5235\times10^{-2}$ & $-1.78\times10^{-4}$ & $0.1719$ & $7.17\times10^{-4}$ & $1.000$ \\
Constant-opportunity cap & $3\times10^{-6}$ & $2.5730\times10^{-2}$ & $2.91\times10^{-4}$ & $0$ & $2.22\times10^{-4}$ & $1.000$ \\
Constant-opportunity cap & $10^{-6}$ & $2.5877\times10^{-2}$ & $4.35\times10^{-4}$ & $0$ & $7.48\times10^{-5}$ & $1.000$ \\
\addlinespace
Affine-factor cap & $10^{-5}$ & $5.3009\times10^{-2}$ & $1.33\times10^{-4}$ & $0$ & $8.47\times10^{-4}$ & $1.000$ \\
Affine-factor cap & $3\times10^{-6}$ & $5.3598\times10^{-2}$ & $6.09\times10^{-4}$ & $0$ & $2.58\times10^{-4}$ & $1.000$ \\
Affine-factor cap & $10^{-6}$ & $5.3769\times10^{-2}$ & $7.55\times10^{-4}$ & $0$ & $8.64\times10^{-5}$ & $1.000$ \\
\bottomrule
\end{tabular}
\end{table}

The same fixed states and adjoint inputs are reused throughout.  The ratios of
QP gap to $\varepsilon_{\rm bar}$ are $71.7$, $74.0$, and $74.8$ in the
constant-opportunity cap and $84.7$, $85.9$, and $86.4$ in the affine-factor
cap.  This near-linear scaling is consistent with the sum of
$O(\varepsilon_{\rm bar})$ complementarity contributions from many
simultaneously active nonnegativity constraints; it is not an identification
of the abstract constant in Proposition~\ref{prop:barrier_policy}.  At $10^{-5}$,
the constant-opportunity-cap mean gain remains positive but $17.19\%$ of
states have a small negative pointwise gain.  The common $3\times10^{-6}$
specification has positive gain at every IID state.

Using the common QP reference, the constant-opportunity-cap aggregate region counts are
$(95,8,89)$ for upper-active, near-switching, and far-interior states,
respectively; the corresponding affine-factor counts are $(132,20,104)$.
Stress samples contain active and far-interior states but no observations in
the $5\%$ near-switching band.  All reported regional joint gains are
positive.

\subsection{External Neural and Runtime Details}
\label{app:external_comparison_details}

Table~\ref{tab:hpin_external} reports the main hPINN-style diagnostic.  The
baseline trains a value network and feasible control head using HJB, terminal,
and soft projected-KKT residuals, without local QP/KKT recovery.  The matched
DPO and analytical KKT comparisons remain primary.

The supplementary affine-factor model uses diagonal OU predictors
$Y\in\mathbb R^{d_Y}$ and
\[
\boldsymbol\mu(Y)-r\mathbf1
=
\operatorname{diag}(\boldsymbol\sigma_{\rm asset})
A_{\rm load}Y.
\]
The hPINN and DPO-based entries use the same model calibration; the latter
use the anchored shifted-input estimator.  Table~\ref{tab:timing_profile}
reports the corresponding runtime decomposition.

\begin{table}[!htbp]
\centering
\small
\setlength{\tabcolsep}{3.5pt}
\caption{Representative timing profile for the supplementary $n=100$
affine-factor benchmark.  The independent holdout column is an audit cost,
not part of a single deployment pass.  Times are wall-clock minutes.}
\label{tab:timing_profile}
\begin{tabular}{lccccc}
\toprule
$d_Y$ & DPO warm-up & \OLBPTT+shift+QP & Primary total & Holdout audit & QP KKT \\
\midrule
1 & $1.883$ & $2.334$ & $4.217$ & $2.286$ & $2.65\times10^{-16}$ \\
3 & $2.225$ & $2.791$ & $5.016$ & $2.758$ & $1.53\times10^{-16}$ \\
5 & $2.200$ & $3.482$ & $5.682$ & $3.361$ & $1.70\times10^{-16}$ \\
\bottomrule
\end{tabular}
\end{table}

The timing numbers are not a method ranking: architectures, objectives, and
tuning budgets differ.  The QP implementation is neither batched nor
optimized.

\bibliographystyle{apalike}
\bibliography{gf_bib}
\end{document}